\tikzset{
	edge/.style={very thick, gray},
	medge/.style={decorate,very thick,decoration={snake}},
	aedge/.style={very thick,dashed,black},
	dedge/.style={thick,->},
	availedge/.style={thick,blue},
	vertex/.style={shape=circle,thick,draw,node distance=3em}
}
\DeclareMathOperator*{\argmin}{arg\,min}
\newcommand{\glue}{\mathsf{Glue}}
\newcommand{\contractvsglue}{\mathsf{\text{Contract-vs-Glue}}}
\newcommand{\eps}{\varepsilon}
\newcommand{\alg}{\mathsf{ALG}}
\newcommand{\Deg}{\mathsf{deg}}
\newcommand{\opt}{\mathsf{opt}}
\newcommand{\OPT}{\mathsf{OPT}}
\newcommand{\ecss}{2\text{-ECSS}}
\newcommand{\savings}{\mathsf{savings}}
\newcommand{\credit}{\mathsf{credit}}
\newcommand{\newcredit}{\mathsf{newcredit}}
\newcommand{\load}{\mathsf{load}}
\newcommand{\reducea}{\mathsf{Reduce}}
\newcommand{\divideT}{\mathsf{Divide}}
\newcommand{\dividea}{\mathsf{Divide0}}
\newcommand{\divideb}{\mathsf{Divide1}}
\newcommand{\dividec}{\mathsf{Divide2}}
\newcommand{\divided}{\mathsf{Divide3}}
\newcommand{\dividee}{\mathsf{Divide4}}
\newcommand{\dividef}{\mathsf{Divide5}}
\newcommand{\combine}{\mathsf{Combine}}
\newcommand{\combinea}{\mathsf{Combine0}}
\newcommand{\combineb}{\mathsf{Combine1}}
\newcommand{\combinec}{\mathsf{Combine2}}
\newcommand{\combined}{\mathsf{Combine3}}
\newcommand{\combinee}{\mathsf{Combine4}}
\newcommand{\combinef}{\mathsf{Combine5}}
\newtheorem{theorem}{Theorem}
\newtheorem{lemma}{Lemma}
\newtheorem{definition}{Definition}
\newtheorem{proposition}{Proposition}
\newtheorem{fact}{Fact}
\newtheorem{remark}{Remark}
\theoremstyle{remark}
\title{Matching Augmentation via Simultaneous Contractions}
\author{
Mohit Garg\thanks{Department of Computer Science and Automation, Indian Institute of Science, Bengaluru, India.  \ mohitgarg@iisc.ac.in} 
\and 
Felix Hommelsheim\thanks{Faculty of Mathematics and Computer Science, University of Bremen, Germany. \{fhommels,nmegow\}@uni-bremen.de}  
\and 
Nicole Megow\footnotemark[2]
}
\date{\today}
\begin{document}

\maketitle
\begin{abstract}

We consider the matching augmentation problem (MAP), where a matching of a graph needs to be extended into a $2$-edge-connected spanning subgraph by adding the minimum number of edges to it. We present a polynomial-time algorithm with an approximation ratio of $13/8 = 1.625$ improving upon an earlier $5/3$-approximation. The improvement builds on a new $\alpha$-approximation preserving reduction for any $\alpha\geq 3/2$ from arbitrary MAP instances to well-structured instances that do not contain certain forbidden structures like parallel edges, small separators, and contractible subgraphs. We further introduce, as key ingredients, the technique of repeated simultaneous contractions and provide improved lower bounds for instances that cannot be contracted.
\end{abstract}

\section{Introduction}
In the  \texttt{Matching Augmentation Problem~(MAP)}, we are given an undirected graph $G$, where each edge $e\in E(G)$ has a weight in $\{0,1\}$, and all the zero-weight edges form a matching. The task is to compute a minimum weight $2$-edge-connected spanning subgraph ($\ecss$) of $G$, which is a connected graph $(V(G),F)$ with $F \subseteq E(G)$ that remains connected on deleting an arbitrary edge.

\texttt{MAP} is a fundamental problem in the field of {\em survivable network design} and is known to be MAX-SNP-hard with several better-than-$2$ approximation algorithms~\cite{CheriyanCDZ21,CheriyanDGKN20,BamasDS22}. 
Prior to this work, the best-known approximation ratio for \texttt{MAP} was $\frac 5 3$, achieved by Cheriyan et al.~\cite{CheriyanCDZ21}.

Both~\cite{CheriyanCDZ21,CheriyanDGKN20} provide combinatorial algorithms for \texttt{MAP}, where the approximation ratios are achieved by comparing the outputs against the minimum-cardinality $2$-edge-cover ($D_2$). A {\em $2$-edge-cover} of an undirected graph is a spanning subgraph in which each node has a degree of at least~$2$, but it may not be connected. Thus, computing a $D_2$ is a relaxation of \texttt{MAP}. In contrast to solving \texttt{MAP}, a $D_2$ can be computed exactly in polynomial time by extending Edmonds' matching algorithm~\cite{edmonds1965paths}. 
A weaker approximation result for \texttt{MAP} by Bamas et al.~\cite{BamasDS22} follows a very different approach: the output is compared against another lower bound on an optimal \texttt{MAP} solution, obtained by solving a linear programming relaxation, 
the so-called Cut LP. The integrality gap of the Cut LP is at least~$\frac{4}{3}$~\cite{BamasDS22}.

\medskip 
\noindent 
\textbf{Our result.} 
We present a polynomial-time algorithm for \texttt{MAP} with an approximation ratio of $\frac{13}{8}=1.625$, improving the previous best ratio of $5/3$. 

\begin{theorem}\label{thm:main}
	There is a polynomial-time $\frac{13}{8}$-approximation algorithm for \texttt{MAP}.
\end{theorem}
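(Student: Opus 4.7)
The plan is to split the proof of Theorem~\ref{thm:main} into a reduction phase and an algorithmic phase. First, I would invoke the $\alpha$-approximation preserving reduction announced in the abstract with $\alpha = 13/8$ (which satisfies $\alpha \ge 3/2$) to transform an arbitrary input into a \emph{well-structured} instance that contains no parallel edges, no small separators, and no contractible subgraphs. Because the reduction preserves the approximation factor, it suffices to design a polynomial-time algorithm that achieves ratio $13/8$ on well-structured instances and then lift the output back through the reduction.

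For the algorithmic phase I would develop a recursive divide-and-combine procedure centred on the technique of repeated \emph{simultaneous contractions}. In the divide step, the algorithm identifies a family of pairwise disjoint substructures---typically short cycles or bounded-length paths built from matching and non-matching edges---that can be contracted in parallel. The contracted graph is a smaller MAP instance on which the algorithm is applied recursively. In the combine step, each contracted region is re-expanded and patched with a small, carefully chosen set of extra edges so that the resulting spanning subgraph is again $2$-edge-connected. Several contraction patterns would need to be handled case by case, each paired with a matching rule for reconstructing a feasible solution on the expanded region.

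The analysis would compare the algorithm's output against the minimum-cardinality $2$-edge-cover $D_2$, a polynomial-time computable lower bound on $\OPT$. Using a credit (amortization) scheme, each simultaneous contraction should release enough savings against $D_2$ to pay for the edges added in the corresponding combine step, so that inductively $|\alg(G)| \le \tfrac{13}{8}\,|D_2(G)| \le \tfrac{13}{8}\,|\OPT(G)|$ on every well-structured instance. The main obstacle, and the reason the abstract singles it out as a separate contribution, is the base case: when no simultaneous contraction applies, the $D_2$ bound alone is too weak to yield a $13/8$ ratio, so an improved lower bound must be proved for such irreducible well-structured instances. I expect this to be the most delicate step, as it requires a structural classification of non-contractible graphs showing that any feasible $2$-ECSS must strictly exceed $|D_2|$ by an amount sufficient to close the gap between the prior $5/3$ ratio and the target $13/8$. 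Once this local lower bound is in hand, combining it with the credit scheme and the initial reduction completes the proof.
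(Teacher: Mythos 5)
Your plan captures the broad outline (reduction to structured instances, simultaneous contractions, an improved lower bound for a hard base case), but it misses the pivotal algorithmic idea and therefore, as stated, does not constitute a workable proof.

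The central difficulty the paper confronts is that the algorithm \emph{cannot decide in polynomial time} whether a given collection of small substructures is $(\tfrac{13}{8},4,k)$-simultaneously contractible, because that property quantifies over all $2$-ECSS solutions. Your divide step is phrased as if the algorithm first ``identifies a family of pairwise disjoint substructures\,\ldots\,that can be contracted in parallel'' and only falls into a base case ``when no simultaneous contraction applies.'' That would require exactly the kind of contractibility test that is unavailable. The paper's resolution is the $\contractvsglue$ dichotomy: one \emph{always} contracts the small components of a carefully constructed \emph{special configuration} and recurses (call this solution $S_1$), \emph{also always} runs the gluing algorithm of Cheriyan et al.\ on that same configuration (call this $S_2$), and returns $\argmin\{\|S_1\|,\|S_2\|\}$; the case distinction (contractible or not) is made only in the \emph{analysis}, where either Lemma~\ref{lem:simultaneousContract} bounds $\|S_1\|$ or the improved lower bound Lemma~\ref{lem:lowerbound} bounds $\|S_2\|$. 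Without running both branches, your scheme would fail whenever the contracted subgraph was not in fact contractible.

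Two further substantive points. First, the substructures being contracted are not arbitrary short cycles in $G$: they are precisely the small components of a \emph{special configuration}, obtained from a $D_2$ via canonicalization, bridge covering, and the elimination of medium components, good cycles, open $3$-augmenting paths, and small-to-medium/large merges. These intermediate steps are not cosmetic; they create the restricted structure on which the improved lower bound (no good cycles, no open $3$-augmenting paths, etc.) crucially depends, and your proposal omits them entirely. Second, the bound you state, $\|\alg(G)\|\le\tfrac{13}{8}\|D_2(G)\|$, is not what is proved nor would it suffice: the induction actually establishes $\|\alg(G)\|\le f(G)=\max\{\tfrac{13}{8}\opt(G)-2,\opt(G)\}$, where the ``$-2$'' slack is consumed in the separator-based $\divideT/\combine$ steps of the preprocessing, and the argument in the non-contractible case compares $\opt^L$ with $d_2^L$ (Proposition~\ref{prop:key}) and lower-bounds $\opt^R$ via inside and crossing edges---a decomposition tied to the large/small partition of the special configuration, not a global comparison against $\|D_2\|$.
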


This improvement builds on a new $\alpha$-approximation preserving reduction for any $\alpha\geq 3/2$ from arbitrary \texttt{MAP} instances to well-structured instances that do not contain certain forbidden structures like parallel edges, small separators, and contractible subgraphs. We further introduce, as key ingredients, the technique of repeated simultaneous contractions and provide improved lower bounds for instances that cannot be contracted.

\medskip 
\noindent 
\textbf{Further related work.} 
\texttt{MAP} sits between the minimum unweighted $\ecss$ and the minimum weighted $\ecss$ problems.
For the minimum weighted $\ecss$ problem, improving known $2$-approximations~\cite{Jain01,AgrawalKR95,WilliamsonGMV95,KhullerV94} is a major open problem. Whereas for the unweighted case, in a recent breakthrough, Garg et al.~\cite{GargGJ22} provided a $1.326$-approximation, improving the earlier $\frac 4 3$-approximations~\cite{HunkenschroderV19,SeboV14}.

Research on the minimum weighted $\ecss$ problem assuming that the set of zero-weight edges in the input graph has a certain structure such as
forest, spanning tree, matching, or disjoint paths has received a lot of attention recently. A general variant is the \texttt{Forest Augmentation Problem (FAP)}, where the edges in the input graph have $0$/$1$ edge weights. For \texttt{FAP}, only recently, Grandoni et al.~\cite{GrandoniAT22} obtained a $1.9973$-approximation, breaking the approximation barrier of $2$.
A famous special case of FAP is the unweighted \texttt{Tree Augmentation Problem (TAP)} where the zero-weight edges in the input graph form a single connected component. In a long line of research, several better-than-$2$ approximations have been achieved for \texttt{TAP}~\cite{Adjiashvili19,CecchettoTZ21,CheriyanG18,CheriyanG18a,CohenN13,EvenFKN09,Fiorini0KS18,GrandoniKZ18,Nutov21,KortsarzN18,KortsarzN16,Nagamochi03,TraubZ21,TraubZ22}, culminating in a $1.393$-approximation by Cecchetto et al.~\cite{CecchettoTZ21}.

Notice that 
\texttt{MAP} is somewhat orthogonal to \texttt{TAP} in terms of connectivity as it has many small connected components as input instead of a single big one. Understanding both the extreme cases well, \texttt{TAP} and \texttt{MAP}, seems promising for making further progress for \texttt{FAP}.  

\medskip
\noindent
\textbf{Organization of the rest of the paper.}
Section~\ref{sec:techOverview} contains preliminaries and a high-level overview of our work along with some important definitions.
Section~\ref{sec:mainPreprocessing} and Section~\ref{sec:mainAlgorithm} consist of the description of our reduction and algorithm, respectively, along with the corresponding theorem and lemma statements which we prove in the appendix.
Using these theorems and lemmas, in Section~\ref{sec:mainTheoremProof} we prove Theorem~\ref{thm:main}. 
In Section~\ref{sec:conclusion}, we conclude with final remarks, pointing out the bottleneck for improving our algorithm. 
 
 We have included detailed proofs of various lemmas in Appendices~\ref{sec:running-time}-\ref{sec:Glue}, which makes our write-up lengthy. 
 A lot of material, especially in Appendices~\ref{sec:running-time}, \ref{sec:bridgeCovering}, and~\ref{sec:Glue} are standard, but formally necessary; the new innovations are mainly contained in Appendices~\ref{sec:preprocessing}, \ref{sec:ComputingSpecialConfiguration}, and~\ref{sec:lower-bound}.
 While some proofs admit a case analysis, no single proof has too many cases. 
 We have tried to keep the exposition reader-friendly and make the proofs easily verifiable at the expense of making the write-up a bit lengthy; a terser style might have saved some pages.

\section{Technical overview}\label{sec:techOverview}
\subsection{Preliminaries}
We use standard notation for graphs. We consider weighted undirected graphs where each edge has a weight of either $0$ or $1$. A {\em MAP instance}  consists of a graph $G$ such that the zero-weight edges of $G$ form a matching, and the task is to compute a minimum weight
$2$-edge-connected spanning subgraph ($\ecss$) of $G$, which is a connected subgraph $(V(G),F)$ which remains connected on deletion of an arbitrary edge.
Without loss of generality, we may assume that the input graph $G$ is $2$-edge-connected; this can be checked in polynomial time by testing for each edge whether its deletion results in a disconnected graph. 

Given a set of edges $F \subseteq E(G)$, $||F||$ denotes the weight of $F$, i.e., the number of unit edges in $F$. With slight abuse of notation, we denote the weight of a subgraph $H$ of $G$ by $||H||$. Thus, $|| H || = ||E(H)||$.
Given a MAP instance $G$, let $\OPT(G)$ represent a $2$-edge-connected spanning subgraph of $G$ of minimum total weight $\opt(G):=||\OPT(G)||$. When $G$ is clear from the context we sometimes omit $G$.

Whenever we speak of components of a graph we refer to its {\em connected} components.

\subsection{Algorithmic template and the previous $\frac 5 3$-approximation}

The algorithm and analysis of Cheriyan et al.~\cite{CheriyanCDZ21}, for obtaining a $\frac 5 3$-approximate solution for MAP, exemplifies the general template for obtaining combinatorial algorithms for $2$-edge-connected spanning subgraphs used in several works\cite{HunkenschroderV19,EvenFKN09,KortsarzN16,CheriyanDGKN20}. We first explain this template, by giving an overview of the algorithm of Cheriyan et al.~\cite{CheriyanCDZ21}, and
then in the next subsection highlight our approach where we alter this template to achieve our improvement.

 The algorithm consists of two parts. The first part is a preprocessing step which constitutes a $\frac 5 3$-approximation preserving reduction from arbitrary MAP instances to well-structured instances that do not contain certain forbidden structures. This is a key element of their work, which helps them to improve upon an earlier $\frac 7 4$-approximation by getting rid of certain hard instances. 

In the second part, they handle instances that do not contain any of the forbidden structures through a {\it discharging scheme}. Their algorithm starts by computing a minimum $2$-edge-cover, $D_2$ (in polynomial time). Additionally, all the zero-edges are added to the $D_2$, so that the edges not in the $D_2$ are all unit-edges.  Now, since a $\ecss$ is a $2$-edge-cover, $||D_2||$ lower bounds $\opt$, the weight of the minimum weight $\ecss$. To output a $(1+c)$-approximate solution (for $c=\frac 2 3$), one has $(1+c)||D_2||$ {\it charge} to work with. 
This charge is used to buy the edges of the $D_2$ and a charge of $c$ is distributed to each of the unit edges of the $D_2$. Now, they incrementally transform this $D_2$ into a $\ecss$ by adding edges to it. For each edge that is added, a charge of $1$ is used up from the available charge (which is taken from nearby edges), i.e., their $D_2$ incrementally evolves into a $\ecss$ at the expense of {\it discharging}. This is an oversimplified view of their actual algorithm. In reality, sometimes they even delete edges in the process which results in gaining charge.

We briefly describe the two steps involved in transforming the $D_2$ into a $\ecss$, namely {\it bridge covering} and {\it gluing}. Note that a $D_2$ can have several connected components. Some of these components can be $2$-edge-connected, whereas some might have bridges (i.e., deleting those edges will result in increasing the number of components). The first step is to {\it cover} all the bridges one by one. Given a bridge, they add edges so that the bridge becomes part of a cycle; as a side effect, multiple components might merge into one. At the end of the bridge-covering step, their graph has only $2$-edge-connected components.
They ensure that after using up the charge for buying the edges in the process, each component with at least $3$ unit-edges has at least a charge of $2$ leftover. 
Components with exactly $2$ unit-edges (cycles of lengths $3$ and $4$) 
keep the initial charge of $2c = \frac 4 3$. 

Next, in the gluing step, the components are merged into a single component using the leftover charge in the components resulting in a feasible solution. To 
see how this might be done, momentarily assume that all components have at least $3$ unit-edges, i.e., having a leftover charge of at least $2$. Here, one can simply contract each component into a single node, find a cycle in the contracted graph, and buy all the edges in that cycle. 
This will result in merging all the components corresponding to the nodes in the cycle into a single component.
To be able to repeat such merges, we need to ensure that we have a leftover charge of at least $2$ in this newly formed component. 
For a cycle of length $k\geq 2$, initially there was a charge of at least $2k$ in the corresponding components, and we need to buy exactly $k$ edges. 
Thus, after the merge, the leftover charge in the new component is at least $2k-k=k\geq 2$, maintaining the charge invariant.
Repeating this process eventually leads to a feasible solution.
Unfortunately, this idea is not guaranteed to work if there are components with $2$ unit-edges; a cycle with $2$ nodes corresponding to such components will have a total charge of $2\times \frac 4 3 =\frac 8 3 $, and after buying the $2$ edges in the cycle, we will be left with a charge of only $\frac 2 3$. On repeating such merges, the graph will run out of charge before the gluing finishes. 
To handle such small components one needs to delete edges to gain charge.

\subsection{Highlights of our approach and innovations for the $\frac{13}{8}$-approximation}
Our approach follows the same broad framework explained above with some key innovations.
Formal definitions will be given in later sections.

\paragraph*{Preprocessing.} For all $\alpha \geq \frac 3 2$, we provide an $\alpha$-approximation preserving reduction from arbitrary MAP instances to {\it structured graphs}. Our list of forbidden structures subsumes the one 
by Cheriyan et al.~\cite{CheriyanCDZ21} and consists of parallel edges, cut vertices, small separators, and {\it contractible subgraphs}.

Contractible subgraphs are a general form of contractible cycles as considered in~\cite{HunkenschroderV19}. 
A $2$-edge-connected subgraph $H$ of a graph $G$ is contractible if each $\ecss(G)$ includes at least $\frac 1 \alpha||H||$ unit-edges from $G[V(H)]$. 
Since we are interested in only an $\alpha$-approximate solution, we may contract  $V(H)$ into a single node, solve the problem on the contracted graph, and add the edges of~$H$ to the solution without 
any loss in the approximation ratio. As an example, suppose a $6$-cycle in $G$ of weight 6 has 2 antipodal vertices that have 
degree 2 in $G$. Then, $\OPT(G)$ must include the $4$ edges incident on these two vertices. As the cycle costs 6, and $\OPT(G)$ is guaranteed to pick at least weight $4$ from the subgraph induced on the vertices of the cycle, for a $\frac 3 2$-approximation, it suffices to buy all 
edges of this cycle, contract it and solve the reduced problem. We can detect all contractible subgraphs with constant-size vertex sets in polynomial time and remove them during preprocessing. Interestingly, some intricate structures considered in~\cite{CheriyanCDZ21} are simply contractible~subgraphs.

We further exclude several small separators, which is crucial for our bridge covering and gluing steps as we have less charge at our disposal. 
Given a separator, we split the graph into two or three parts, recursively solve the problem on the smaller parts, and then combine the solutions arguing that the approximation ratio is preserved. 
If each of these parts has at least 5 vertices, this step is relatively straightforward. But for parts containing at most $4$ vertices, the argument becomes significantly more challenging, in particular, since we are aiming for a better guarantee than previous work. 
Handling the small separators forms a substantial part of our work consisting of several innovations. 
In particular, given a separator that splits the graph into two parts, the structure of the interaction of the separator with the parts is exploited in carefully constructing the subproblems. 
Here, we sometimes introduce {\it pseudo-edges}, representing possible connections via the other part, and suitably remove them during the combining step.
Our reduction, which works for any $\alpha\geq \frac 3 2$, might be useful for future works.
We only need $\alpha=\frac {13}8$ for our main result.

\paragraph*{Bridge covering.}
Empowered by a stronger preprocessing, we can rule out more structures in the input graph, which enables us to obtain a bridgeless $2$-edge-cover of $G$, even for an approximation ratio of $(1+c)$, for $c= \frac{5}{8}$.
In fact, our bridge-covering works even for $c = \frac{3}{5}$, so it might also be useful for future works.
At the end of the bridge-covering step, we have the following charges left in the $2$-edge-connected components: $2$ in the {\it large} components (containing $4$ or more unit-edges), $3c= \frac {15}8 < 2$ in the {\it medium} components (containing $3$ unit-edges) and $2c=\frac{5}4 \ll 2$ in the {\it small} components (containing $2$ unit-edges).

\paragraph*{Gluing.}
In the gluing step, we are able to merge all the medium components into large components even though medium components have strictly less than $2$ charge. We are also able to handle some small components that have a particular configuration by deleting edges and gaining charge. 
Unfortunately, we were unable to handle all the small components as they have a minuscule charge. 
In the end, we are left with a {\em special} configuration that has only large (with charge $\geq 2$) and small components (with charge $\frac 5 4$) which cannot be merged.

\paragraph*{Two-edge-connecting special configurations.}

The small components of the special configuration originate in the initial $D_2$ and could not be merged. 
So we ask the following question. How {\it close} are these small components to $\OPT$ restricted to the vertices of the small components?
Intuitively, if they are close, we should be able to do something algorithmically as we are roughly doing what $\OPT$ is doing on this part of the graph. 
Otherwise, if they are not close, we should be able to argue that $\OPT$ does much worse than what the $D_2$ does on this part of the graph, giving us an improved lower bound. Our main conceptual innovation is in articulating a notion of closeness and making this intuition work. 

\paragraph*{Method of simultaneous contractions.}
We now describe our measure of closeness. Let $G$ be our input structured graph and let $H_1,\cdots, H_{s}$ be the small components of the special configuration obtained. 
We count the total number of unit-edges bought by $\OPT$ from the following subgraphs $G[V(H_1)],\cdots,G[V(H_{s})]$. If this number is more than $\frac 8{13}$ times the number of unit-edges in the small components of our special configuration, which is precisely $2s$, we say that the small components are close to $\OPT$. 
Otherwise, they are not close. 

Observe when the small components are indeed close,  on average, each $H_i$ is contractible, preserving an approximation ratio of $\frac {13} 8$. 
Thus, algorithmically, we can simultaneously contract each $V(H_i)$ into a distinct single node, solve the problem on the reduced instance (which can be done recursively, as contracting vertices into nodes decreases the size of the graph), and add the edges of the small components to the solution, without incurring a loss in approximation. 

 When the small components are not close to $\OPT$,  i.e., the $H_i$'s are not {\it simultaneously contractible}, we rely on the gluing step of Cheriyan et al.~\cite{CheriyanCDZ21} using a charge of $\frac 4 3$ per small component instead of our original charge of $\frac 5 4$, increasing our cost. Our improvement, in this case, comes from improving the lower bound.
 
 Note that it is not possible for us to check in polynomial time whether the small components are simultaneously contractible or not; so what should we do -- contract, or use the gluing algorithm of Cheriyan et al.~\cite{CheriyanCDZ21}? We do both and return the solution with a smaller weight and argue that in either scenario the algorithm performs well.

 \paragraph*{Improved lower bound.} In the case when the small components are not simultaneously contractible, $\OPT$ picks at most $\frac 8{13}\cdot 2s$ unit-edges from the $G[V(H_i)]$'s put together. Thus, at least $2s - \frac {16}{13}s = \frac{10}{13}s$ unit-edges are not picked from within the small components. We show that for each unit-edge not picked by $\OPT$ from this part, $\OPT$ buys on average at least $1+\frac 1{12}$  edges that go between different small components. To argue this, we crucially use the fact that the special configurations have a restricted structure, as certain merges are not possible in it. Thus, we show that in total the number of unit-edges used by $\OPT$ on the vertices of the small components is at least $\frac 8 {13}\cdot 2s + (1+\frac 1{12})\frac{10}{13}s=\frac {161}{78}s$, which is strictly more than $2s$, which is the number of unit-edges used by the $D_2$ on this part. Finally, through an elegant argument, we are able to use this improved lower bound on $\OPT$ restricted only to the vertices of the small components to show that it compensates for the increased cost incurred during gluing the small components.

\subsection{Important definitions}
We 
give some definitions that we need for the presentation of our algorithm.
\begin{definition}[$f(.)$]
Given a MAP instance $G$, let $$f(G) = \max\{\frac {13} 8 \cdot \opt(G) -2, \opt(G)\}.$$ 
\end{definition}
For a MAP instance $G$, we will compute a $\ecss$ of $G$ with weight at most $f(G)$. Observe that the `$-2$' term gives us a slightly better bound than claimed, which we crucially exploit in our preprocessing.
\begin{definition}[size of a graph $s(.)$]
Given a graph $G$, its {\bf size} is $s(G)=10\cdot |V(G)|^2 + |E(G)|.$
\end{definition}
We will show that the running time of our algorithm is upper bounded by a polynomial in the size of the input graph.

\begin{definition}[notation graph contraction]
Given a graph $G$ and a set of vertices $T\subseteq V(G)$, $G/T$ denotes the graph obtained from $G$ after contracting all the vertices in $T$ into a single vertex. More generally, given disjoint vertex sets $T_1,\cdots T_k\subseteq V(G)$, $G/\{T_1,\cdots,T_k\}$ denotes the graph obtained from $G$ after contracting vertices of each $T_i$ into single vertices.

Note that edges in $G$ and $G/\{T_1,\cdots,T_k\}$ are in one-to-one correspondence. Given a subgraph $H$ of the contracted graph, we use $\hat H$ to refer to the subgraph of $G$ containing precisely those edges that correspond to the edges of $H$.
\end{definition}

\begin{definition}[contractible subgraphs]
\label{structuredGraphDef:contractible}
Let $\alpha \geq 1$ and $t \geq 2$ be fixed constants. Given a $2$-edge-connected graph $G$, a collection of vertex-disjoint $2$-edge-connected subgraphs $H_1, H_2, ...,$ $H_k$ of $G$ is called \emph{$(\alpha, t , k)$-contractible} if $2 \leq |V(H_i)| \leq t$ for every $i \in [k]$ and every $\ecss$ of $G$ contains at least $\frac{1}{\alpha} ||\bigcup_{i \in [k]} E(H_i)||$ unit-edges from  $\bigcup_{i \in [k]} E(G[V(H_i)])$.
\end{definition}

In our preprocessing, we will remove all $(\frac {13}8,12,1)$-contractible subgraphs, which we simply refer to as contractible subgraphs. Later, when considering a special configuration with $n_s$ small components, we will work with a $(\frac {13}8, 4,n_s)$-contractible collection of small components; we will refer to the special 
configuration simply as $\frac{13}8$-simultaneously contractible.

\subsection{Algorithm overview}
Here, we give a brief overview of our main algorithm.
\begin{itemize}
    \item[Step 1:] Preprocessing: We apply our reduction to obtain a collection of subproblems of MAP on structured graphs (Section~\ref{sec:mainPreprocessing}). We then assume that we are given some structured graph~$G$.
    \item[Step 2:] Bridge covering: We compute a $D_2$ in polynomial-time and apply bridge covering to obtain an \emph{economical} bridgeless 2-edge-cover $H$ -- a bridgeless 2-edge-cover of low cost (Section~\ref{subsex:main:bridge-covering}).
    \item[Step 3:] Special configuration: Given $H$, we compute a special configuration $S$ of $G$ (Section~\ref{subsec:main:computing-special}).
    \item[Step 4:] Contract vs.~glue: We compute two feasible solutions $S_1$ and $S_2$: $S_2$ is obtained by applying the algorithm of~\cite{CheriyanCDZ21} to $G$ and $S$ (Section~\ref{subsec:main:feasible-for-special}); $S_1$ is obtained by calling 
    Step~1 for~$G_S$, 
    which arises from $G$ by contracting each small component of $S$ to a single vertex. Finally, we output $\argmin \{||S_1||,||S_2||\}$.
\end{itemize}

\section{Preprocessing}\label{sec:mainPreprocessing}

We show that, for purposes of approximating MAP with any approximation ratio at least $\frac 3 2$, it suffices to consider MAP instances that do not contain certain forbidden configurations. 
These configurations are cut vertex, parallel edge, contractible subgraph, $S_0$, $S_1$, $S_2$, $S_{\{3,4\}}$, $S_3$, $S_4$, $S_5$, $S_6$, $S'_3$, $S'_4$, $S'_5$, and $S'_6$. 
The formal definitions of these structures are provided in Appendix~\ref{sec:preprocessing}. Each of these configurations is referred to as a {\bf type} and is of constant size. A MAP instance with at least~$20$ vertices that does not contain any of these forbidden configurations is termed as {\bf structured}. 

We briefly describe some of the types that we forbid in a structured graph. Apart from cut vertex, parallel edge, and contractible subgraph, the other forbidden structures we consider can be broadly divided into two categories: (a) ``Path-like''-separators and (b) ``Component-like''-separators. Path-like separators are certain paths which when removed from the input graph disconnects it. 
Forbidding these structures in the structured graph is mostly used in the bridge-covering step. 
Roughly speaking, the absence of  these structures helps us in finding sufficient credit while covering some path (consisting of bridges) between 2-edge connected blocks of the 2-edge-cover.
Component-like structures, on the other hand, are  certain 2-edge-connected subgraphs that when removed from the input graph disconnects it. 
Their absence from structured graphs is mainly exploited in the gluing step; it allows us to find certain cycles through some small components which help us gain credit that is needed for the gluing.


The reduction from MAP instances to structured graphs is given by the following algorithm where we assume
$\alg$ is an algorithm that works on structured graphs. Our reduction is essentially a divide-and-conquer algorithm. 
It searches for a forbidden configuration and if it detects one, it divides the problem into a few subproblems (at most 3) of smaller sizes, solves them recursively, and then combines the returned solutions into a solution for the original instance. 
In case there are no forbidden configurations in the input (the input is structured), it calls $\alg$ to solve the problem.

\begin{algorithm}
\caption{Preprocessing}\label{alg:reduce}
\begin{algorithmic} 
\Function{Reduce}{$G$} 
\If{$G$ is simple and $|V(G)| \leq 20 $} \Return $\opt(G)$. \Comment{\textcolor{gray}{by brute force}} \State 
	\EndIf
	
	\State Look for a forbidden configuration in $G$ in the following type order: 
	\State cut vertex, parallel edge, contractible subgraph, $S_0$, $S_1$, $S_2$, $S_{\{3,4\}}$, $S_k$, $S'_k$ for $k\in\{3,4,5,6\}$. 
	\State Stop immediately on detecting a forbidden configuration. 
	
	\State
	
	\If{a forbidden configuration is detected} 
	\State Call it F and let T be the type of $F$.
	
	\State $(H_1,H_2, H_3)=\divideT_T(G, F)$.
	\Comment{\textcolor{gray}{$H_2$ and/or $H_3$ are always empty for certain types}}
	
	\State $H_i^*=\reducea(H_i)$ for all $i\in\{1,2,3\}$.
	
	\State \Return $\combine_T(G,H_1^*,H_2^*,H_3^*)$.
	\EndIf
	
	\Comment{\textcolor{gray}{$G$ is now structured}}
	
\State \Return $\alg(G)$.
\EndFunction
\end{algorithmic}
\end{algorithm}

In the above algorithm, $\divideT_T$ and $\combine_T$ are subroutines that are defined in Appendix~\ref{sec:preprocessing}, which also contains proofs of the following lemmas.

\begin{restatable}{lemma}{lempolytimesubs}
\label{lem:polytimesubs}
For all types $T$, $\divideT_T$ and $\combine_T$ are polynomial time algorithms. Furthermore, given a MAP instance $G$ and a type $T$, one can check in polynomial time whether $G$ contains a forbidden configuration of type $T$.
\end{restatable}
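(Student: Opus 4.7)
The plan is to prove the three assertions of Lemma~\ref{lem:polytimesubs} in order: detection, then $\divideT_T$, then $\combine_T$. The common thread I will exploit is that every forbidden type on the list is specified by a witness of constant size, so each routine reduces either to brute-force enumeration of constantly many patterns or to constant-depth graph surgery on top of a few standard polynomial-time subroutines.

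For detection, cut vertices and parallel edges are found in linear time by standard procedures (block-cut decomposition and a pass over the edge list, respectively). For each of $S_0, S_1, S_2, S_{\{3,4\}}, S_k, S'_k$ with $k\in\{3,4,5,6\}$, the type is defined by a constant-size labelled subgraph with a few degree and adjacency constraints, so detection amounts to iterating over all $O(|V(G)|^c)$ vertex tuples of the relevant arity $c$ and checking the template, which is polynomial. The only subtle case is detecting a contractible subgraph, which I address next.

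Because contractible subgraphs have $|V(H)|\leq 12$ by Definition~\ref{structuredGraphDef:contractible}, I would enumerate all subsets $U\subseteq V(G)$ with $2\leq |U|\leq 12$ and, for each one, enumerate the finitely many $2$-edge-connected subgraphs $H$ on $V(H)=U$ using edges of $G[U]$. The delicate part is verifying the condition ``every $\ecss(G)$ contains at least $\frac{1}{\alpha}\|H\|$ unit-edges from $G[V(H)]$'', which is universally quantified over all feasible solutions. The plan is to localize this check: the contribution of a $\ecss$ inside $G[U]$ is determined by its interaction with $G$ through the cut $\partial U$, so it is captured by a boundary pattern (the connectivity demand imposed on $U$ by the rest of $G$). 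Since $|U|$ is bounded by $12$, there are only constantly many such patterns, hence constantly many candidate edge subsets $F\subseteq E(G[U])$ to inspect; for each pair $(F,P)$ one tests in polynomial time, via an $\ecss$-extension check on the graph $G/U$ with the pattern $P$ enforced on the contracted vertex's incident edges, whether $F$ can be completed to a $\ecss$ of $G$. If the minimum unit-weight over all completable $(F,P)$ is at least $\frac{1}{\alpha}\|H\|$, then $H$ is contractible. This local-to-global reduction is what I expect to be the main obstacle, but it is made tractable precisely by the constant bound on $|V(H)|$.

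For $\divideT_T$ and $\combine_T$, once the witness $F$ of type $T$ is in hand, Divide produces the (at most three) subgraphs $H_1, H_2, H_3$ by constant-size surgery on $G$: contraction of a constant-size piece, deletion, splitting along a small separator, or introduction of a constant number of pseudo-edges; all of this takes polynomial time. Correspondingly, Combine stitches the recursive subsolutions back together by undoing those constant-size modifications (removing pseudo-edges, reinserting edges of $F$ or of the separator, merging at common vertices). All of these operations are elementary set and graph manipulations of polynomial cost, so the two routines are polynomial for each type. Together these three pieces establish the lemma.
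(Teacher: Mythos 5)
Most of your plan matches the paper's argument: detection of cut vertices, parallel edges, and each of the $S$-types is brute force over constant-size witnesses, and $\divideT_T$, $\combine_T$ are constant-size graph surgery around polynomial primitives, so polynomiality is immediate there. The one place you diverge, and where your argument has a genuine gap, is the contractible-subgraph check: you flag the ``local-to-global'' reduction as the main obstacle, but the ``$\ecss$-extension check on $G/U$ with the pattern $P$ enforced'' is never actually defined, and without it the step is incomplete.

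The missing idea is a simple monotonicity observation that removes the boundary-pattern layer entirely. Fix a candidate inside part $H'$, a spanning subgraph of $G[V(H)]$. You want to know whether some $E'\subseteq E(G)\setminus E(G[V(H)])$ makes $H'\cup E'$ a $\ecss$ of $G$. Since adding edges can only help $2$-edge-connectivity, this holds iff the single extremal choice $(V(G),\,E(H')\cup(E(G)\setminus E(G[V(H)])))$ is already $2$-edge-connected; conversely any $\ecss$ $F$ with $||F\cap E(G[V(H)])||<\frac{1}{\alpha}||H||$ gives such an $H'$. So the pattern $P$ plays no role. The paper's Lemma~\ref{lem:polytimecheckT} does exactly this: for each $2$-edge-connected $H$ on at most $12$ vertices, enumerate all spanning subgraphs $H'$ of $G[V(H)]$ (constantly many, as $G$ is simple), and check whether the graph above is $2$-edge-connected with $||H'||<\frac{1}{\alpha}||H||$; $H$ is contractible iff no such $H'$ exists. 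With that observation your proposal closes; without it, it does not.
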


\begin{restatable}{lemma}{lemdivideCombine}
\label{lem:divideCombine}
Given a MAP instance $G$ and a forbidden configuration $F$ that appears in $G$ of type $T$ from the list $L=$(cut vertex, parallel edge, contractible subgraph, $S_0$, $S_1$, $S_2$, $S_k$, $S_k'$, $k \in \{3, 4, 5, 6\}$) such that
 $G$ does not contain any forbidden configuration of a type that precedes $T$ in the list $L$ and  $\divideT_T(G,F)=(H_1,H_2, H_3)$, then the following statements hold:
 \begin{itemize}
     \item[(i)] for each $i\in\{1,2, 3\}$ $H_i$ is a MAP instance, 
     \item[(ii)] $s(H_1) + s(H_2) + s(H_3) < s(G)$, and 
     \item[(iii)] if for each $i\in\{1,2,3\}$, $H_i^*$ is a $\ecss$ of $H_i$ such that $||H_i^*|| \leq f(H_i)$, then \linebreak  $\combine_T(G,H_1^*,H_2^*,H_3^*)$ is a $\ecss$ of $G$ such that $||\combine_T(G,H_1^*,H_2^*,H_3^*)||\leq f(G)$.
 \end{itemize}
\end{restatable}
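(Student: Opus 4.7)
The plan is to prove the lemma by case analysis on the type $T$ of the forbidden configuration $F$, since $\divideT_T$ and $\combine_T$ are defined separately for each type in Section~\ref{sec:preprocessing}. Throughout the analysis I would exploit a ``cleanness'' hypothesis that comes for free: because $G$ contains no configuration of any type preceding $T$ in the list $L$, when handling a separator type $S_k$ or $S'_k$ I may assume $G$ has no cut vertex, no parallel edges, no contractible subgraph of the prescribed form, and none of the earlier separators. This is essential for showing that the subproblems built by the divide step are themselves well-structured MAP instances.

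For property (i), in each case I would check that the outputs $H_i$ are simple, that their zero-weight edges still form a matching, and that each $H_i$ is $2$-edge-connected. The cut-vertex and parallel-edge cases are immediate. For contractible subgraphs, contracting a $2$-edge-connected subgraph preserves $2$-edge-connectivity, and the absence of parallel edges (already ruled out by the preceding type) ensures that the contraction does not produce two zero-edges sharing an endpoint. For the small-separator types, the divide step introduces only a constant number of \emph{unit}-weight pseudo-edges into each part to represent routes through the other side; unit weight preserves the matching structure on zero edges, and $2$-edge-connectivity of each part follows from the structural assumptions granted by the cleanness hypothesis. For property (ii), the size function $s(G)=10|V(G)|^2+|E(G)|$ is engineered precisely so that the quadratic vertex term dominates any constant-sized edge additions: each divide step either strictly decreases the vertex count (by contracting or by splitting at a genuine separator) or, in the parallel-edge case, strictly decreases the edge count, and the factor $10$ leaves comfortable slack for the constantly many pseudo-edges.

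The heart of the lemma is property (iii). For each type, the plan is to establish two ingredients: a bound $\sum_i \opt(H_i)\leq \opt(G)+c_T$ for a small type-dependent constant $c_T$ accounting for pseudo-edges and for separator edges duplicated across subproblems, together with a bound of the form $||\combine_T(G,H_1^*,H_2^*,H_3^*)||\leq \sum_i ||H_i^*||+a_T-b_T$, where $a_T$ counts unit edges added and $b_T$ pseudo-edges removed during combining. Substituting the hypothesis $||H_i^*||\leq f(H_i)=\max\{\tfrac{13}{8}\opt(H_i)-2,\opt(H_i)\}$ and unrolling gives an estimate of the form $\tfrac{13}{8}\opt(G)+\Delta_T$, where $\Delta_T$ aggregates all constant overheads and savings. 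The target $||\combine_T(\cdot)||\leq f(G)$ then reduces to verifying $\Delta_T\leq -2$ in each case, which is precisely where the $-2$ slack baked into $f$ is consumed. The main obstacle will be the per-type bookkeeping for the small separators $S_3,S_4,S_5,S_6$ and their primed variants: here parts can have as few as three or four vertices, several pseudo-edges are needed to encode the connectivity of the other side, and $\combine_T$ may have to substitute pseudo-edges by real unit edges or reroute through the separator. Fitting all these constant overheads within the $2$-unit slack of $f$, while simultaneously guaranteeing $2$-edge-connectivity of the combined graph, is the delicate part; the preceding removal of $(\tfrac{13}{8},12,1)$-contractible subgraphs is precisely what rules out the degenerate dense configurations that would otherwise make the exchange infeasible.
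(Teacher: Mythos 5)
Your high-level plan correctly identifies the structure — case analysis by type, exploiting the absence of earlier forbidden configurations, the dominant quadratic vertex term in $s(\cdot)$ for (ii), and the idea of unrolling $f$ through the combine step for (iii) — and this does mirror the paper's decomposition into per-type sub-lemmas (Lemmas~\ref{lem:divideCombine:contractibleSubgraphs}--\ref{lem:divideCombine:Sk'}). However, there are two substantive gaps in your sketch of (iii).

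First, your framework treats $\divideT_T$ as a fixed static construction with a uniform overhead constant $c_T$, but for $S_1$, $S_2$, and $S_{\{3,4\}}$ the paper's $\divideT_T$ is \emph{adaptive}: it first determines whether $\opt(H_1)=3$ or $\opt(H_1)\geq 4$, and (when $\opt(H_1)=3$) whether some $\OPT(H_1)$ attaches to both boundary vertices after uncontraction, and only then chooses among $H_2'$, $H_2''$, $H_2'''$ (contracting the separator fully, partially, or not at all, plus pseudo-edges). Without this adaptivity the budget blows up: when $\opt(H_1)=3$ we have $f(H_1)=3=\opt(H_1)$ rather than $\alpha\opt(H_1)-2$, so the linear-regime slack you are counting on from $H_1$ is simply not there, and the combine step must compensate by buying fewer connecting edges. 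Your ``$\Delta_T\leq -2$'' accounting does not see this because it implicitly assumes every subproblem contributes $-2$.

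Second, the crude bound $\sum_i\opt(H_i)\leq\opt(G)+c_T$ is not tight enough in exactly the case $\opt(H_1)=3$ with $H_2=H_2''$. There the paper needs the \emph{strict} improvement $\opt(G)\geq\opt(H_1)+\opt(H_2')+1$, and this is not a bookkeeping overhead but a separate structural lemma: any $u$--$v$ path in $\OPT(G)$ avoiding the edge $\{u,v\}$ must lie entirely within one side of the separator, and if it lies within the $H_1$ side then the assumed non-existence of an $\OPT(H_1)$ attaching to both $u$ and $v$ gives a contradiction, and if it lies within the $H_2$ side then $\opt(H_2')=\opt(H_2'')$, again contradicting the case assumption. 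Your proposal does not identify this as the load-bearing step, and the uniform ``constant-overhead'' picture is precisely where the argument would fail to close without it. Finally, a smaller point: the reason the $\alpha\opt(H_i)-2$ branch of $f$ can be used on the larger side is that the forbidden-configuration definitions explicitly enforce $\opt(\cdot)\geq 4$ there (together with $|V(G)|>20$), not because of the contractible-subgraph preprocessing, which you cite for a role it does not play at this point.
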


Using the above lemma, we can establish the following result: If for all structured graphs $G$, $\alg(G)$ is a 
$\ecss$ of $G$ such that $||\alg(G)|| \leq f(G)$, then for all MAP instances $G$, $\reducea(G)$ is a $\ecss$ of $G$ such that $||\reducea(G)|| \leq f(G)$.

We will produce an {\it admissible} $\alg$ that calls $\reducea$ on a smaller instance. Since $\reducea$ and $\alg$ call each other, we need a slightly stronger result, which is obtained using an induction argument.  We first define an admissible algorithm.

\begin{definition}[admissible]
An algorithm $\alg$ is {\bf admissible} if the following holds. If for all MAP instances $G$ with $s(G)\leq t$,  $\reducea(G)$ is a $\ecss$ of $G$ such that $||\reducea(G)|| \leq f(G)$,
then for all structured graphs $G$ such that $s(G)=t+1$, $\alg (G)$ is a $\ecss$ of $G$ such that $|| \alg(G) || \leq f(G)$. Furthermore, if $T(s)$ denotes the running time of $\alg$ for structured graphs of size $s$, and $T'(s)$ denotes the running time of $\reducea$ on MAP instances of size $s'$, then 
$T(s)\leq T'(s-1) + poly(s)$. 
\end{definition}

Our main results in this section are the following.
\begin{theorem}\label{thm:reduce}
If $\alg$ is an admissible algorithm, then for all MAP instances $G$, $\reducea(G)$ is a $\ecss$ of $G$ such that $||\reducea(G)||\leq f(G)$.
\end{theorem}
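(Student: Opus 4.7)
The plan is to prove the statement by strong induction on the size $s(G)$, with induction hypothesis IH$(t)$: for every MAP instance $G'$ with $s(G')\le t$, the call $\reducea(G')$ returns a $\ecss$ of $G'$ of weight at most $f(G')$. For the base case, whenever $\reducea$ takes the brute-force branch (simple instance with $|V(G)|\le 20$), the output is $\opt(G)$, which is trivially a $\ecss$ of $G$ and satisfies $\opt(G)\le f(G)$ by the definition of $f$.

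For the inductive step, assume IH$(t)$ and let $G$ be a MAP instance with $s(G)=t+1$. There are two cases, according to the branch executed by $\reducea(G)$. Suppose first that a forbidden configuration $F$ of some type $T$ from the priority list is detected (the first branch after the base case); by Lemma~\ref{lem:polytimesubs} this detection is correct and runs in polynomial time. Applying Lemma~\ref{lem:divideCombine} to $G$ and $F$, the triple $(H_1,H_2,H_3)=\divideT_T(G,F)$ consists of MAP instances satisfying $s(H_1)+s(H_2)+s(H_3)<s(G)=t+1$, hence $s(H_i)\le t$ for each $i$. The induction hypothesis therefore applies to each $H_i$, yielding $\ecss$s $H_i^*=\reducea(H_i)$ with $\|H_i^*\|\le f(H_i)$. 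Part~(iii) of Lemma~\ref{lem:divideCombine} then guarantees that $\combine_T(G,H_1^*,H_2^*,H_3^*)$ is a $\ecss$ of $G$ of weight at most $f(G)$, which is exactly what $\reducea(G)$ outputs.

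The remaining case is when no forbidden configuration is detected and $\reducea(G)$ returns $\alg(G)$. I would argue that $G$ must be structured in this case: absence of a ``parallel edge'' forbidden configuration forces $G$ to be simple, so the brute-force branch was not taken only if $|V(G)|>20$; together with the absence of all other listed forbidden configurations this matches the definition of structured. Invoking admissibility of $\alg$ at parameter $t$, together with IH$(t)$ (which supplies the hypothesis ``for all MAP instances $G$ with $s(G)\le t$, $\reducea(G)$ is a $\ecss$ of weight at most $f(G)$'' required by that definition), yields that $\alg(G)$ is a $\ecss$ of $G$ with $\|\alg(G)\|\le f(G)$, completing the inductive step.

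The step I expect to require the most care is the bookkeeping around the mutual recursion between $\reducea$ and $\alg$: one must line up the quantifier structure of the admissibility definition (hypothesis up to size $t$, conclusion at size $t+1$) with the strong-induction frame, and verify that every recursive call initiated inside $\alg$ -- in particular the call to $\reducea(G_S)$ in Step~4 of the algorithm overview -- is on an instance of strictly smaller size, so that the induction is well founded and the running-time bound in the admissibility definition composes into an overall polynomial bound. Once this interface is in place, Lemma~\ref{lem:divideCombine} together with admissibility close the induction essentially mechanically.
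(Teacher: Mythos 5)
Your proof is correct and follows essentially the same strategy as the paper's: strong induction on $s(G)$, handling the brute-force, divide-and-combine, and $\alg$ branches via Lemma~\ref{lem:divideCombine} and the admissibility hypothesis, respectively. The only cosmetic difference is that the paper anchors the induction with a tiny base case $s(G)=1$ (empty vertex set) and folds the brute-force branch into the inductive step, while you treat the brute-force branch itself as the base case; both framings are sound.
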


\begin{theorem}\label{thm:runningtime}
If $\alg$ is an admissible algorithm, then $\reducea$ runs in polynomial time.
\end{theorem}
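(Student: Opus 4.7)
The plan is to show by strong induction on the size $s(G)$ that a call to $\reducea(G)$ generates at most $s(G)$ further invocations of $\reducea$ (counting itself), and that each such invocation, together with any $\alg$ invocation it triggers, performs only polynomially many operations outside its own recursive descendants. Multiplying these two bounds yields a polynomial running time.

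First I would bound the per-call work. Within a single execution of $\reducea$, the algorithm checks the base case, scans the constant-length list of forbidden types, and in the division branch runs one instance each of $\divideT_T$ and $\combine_T$; each of these tasks is polynomial in $s(G)$ by Lemma~\ref{lem:polytimesubs}. The base case $|V(G)| \leq 20$ is handled by brute force in $O(1)$ time. For $\alg$, admissibility bounds the work outside its single call to $\reducea$ on an instance of size at most $s(G)-1$ by some polynomial in $s(G)$. Let $p(\cdot)$ be a polynomial that dominates all of these local costs simultaneously.

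Next I would count recursive calls. Let $R(s)$ be the maximum, over MAP instances $G$ with $s(G) \leq s$, of the number of $\reducea$-invocations triggered by $\reducea(G)$, including the top-level one. The claim is $R(s) \leq s$, proved by strong induction on $s$. In the base case, the brute-force branch fires and $R(s) = 1 \leq s$. For the inductive step, if $\reducea(G)$ detects a forbidden configuration $F$ of type $T$ and sets $(H_1, H_2, H_3) = \divideT_T(G, F)$, then Lemma~\ref{lem:divideCombine}(ii) gives $s(H_1)+s(H_2)+s(H_3) \leq s(G)-1$, and the induction hypothesis yields
\[
R(s(G)) \;\leq\; 1 + \sum_{i=1}^{3} R(s(H_i)) \;\leq\; 1 + \sum_{i=1}^{3} s(H_i) \;\leq\; s(G).
\]
If instead no forbidden configuration is present and $\reducea(G)$ calls $\alg(G)$, then admissibility routes control through one further $\reducea$ call on an instance of size at most $s(G)-1$, giving $R(s(G)) \leq 1 + R(s(G)-1) \leq s(G)$. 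This closes the induction.

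Finally, the number of $\alg$-invocations is at most $R(s(G))$, since each one sits inside some $\reducea$ call. Summing the local polynomial work over all invocations bounds the running time of $\reducea(G)$ by $2 R(s(G)) \cdot p(s(G)) \leq 2 s(G) \cdot p(s(G))$, which is polynomial in $s(G)$ and hence in $|V(G)|+|E(G)|$. The only subtle point is the mutual recursion between $\reducea$ and $\alg$; the strict size decrease built into both Lemma~\ref{lem:divideCombine}(ii) (sum, not maximum, of children sizes is less than the parent) and the admissibility clause $T(s) \leq T'(s-1)+\mathrm{poly}(s)$ is precisely what tames the branching factor of three and prevents the exponential blowup that would otherwise threaten such a divide-and-conquer recursion.
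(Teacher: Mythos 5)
Your high-level plan (strong induction on $s(G)$, splitting the work into a call count times per-call cost) is reasonable, and the branches you handle line up with the three return statements of $\reducea$; the base case, the divide branch, and the product bound at the end are all fine. However, there is a genuine gap in the inductive step for the $\alg$ branch.

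You write that ``admissibility routes control through one further $\reducea$ call on an instance of size at most $s(G)-1$.'' That is not what the definition of admissibility says. Admissibility is purely a \emph{running-time} bound: $T(s)\le T'(s-1)+\mathrm{poly}(s)$. It does not assert that $\alg$ issues exactly one $\reducea$ call, nor that the call is on an instance of size $\le s-1$. An algorithm that makes, say, three $\reducea$ calls on structured graphs of size roughly $s/2$ could still satisfy the time inequality once $T'$ turns out to be polynomial of degree $\ge 2$; for such an $\alg$ your recurrence $R(s)\le 1+R(s-1)$ is false, and the clean bound $R(s)\le s$ need not hold. (The specific $\alg$ constructed in the paper does make one recursive call, so your reasoning is valid for that instantiation --- but the theorem is stated for an arbitrary admissible $\alg$.) The paper sidesteps this entirely: instead of counting calls, it sets up a single induction hypothesis $T'(s)\le cs^k$ and, in the $\alg$ branch, just unrolls $T'(s)\le \mathrm{poly}(s)+T(s)\le \mathrm{poly}(s)+T'(s-1)+\mathrm{poly}(s)\le \mathrm{poly}(s)+c(s-1)^k$ and closes the induction using $(s-1)^k\le s^k - s^{k-1}$. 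You could repair your proof by dropping the call-count abstraction and working directly with the time recurrence, as the paper does; the two approaches then become essentially the same.
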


The proofs of the above two results follow from a straightforward application of Lemmas~\ref{lem:polytimesubs} and~\ref{lem:divideCombine} and are included in Appendix~\ref{sec:running-time}.
Now, if we can find an admissible $\alg$, Theorem~\ref{thm:reduce} and Theorem~\ref{thm:runningtime} immediately imply Theorem~\ref{thm:main}. In the next subsections, we exhibit an admissible $\alg$.

\section{Algorithm for structured graphs}\label{sec:mainAlgorithm}
We exhibit an admissible algorithm $\alg$ that takes as input a structured graph $G$ and outputs a $\ecss$ of $G$ with weight at most $f(G)$.
Our algorithm has three main steps. First, we compute an `economical' bridgeless $2$-edge-cover of $G$. Then, with the aid of this $2$-edge-cover, we compute a `special' configuration of $G$. 
We two-edge-connect the special configuration in two ways and return the solution with minimum weight. We define the relevant terms and explain these steps below.

\begin{algorithm}
\caption{Main algorithm for structured graphs}\label{alg:main-structured}
\begin{algorithmic} 
\Function{$\alg$}{$G$} 
\Comment{\textcolor{gray}{$G$ is structured}} 
	\State $H = \mathsf{\text{economical bridgeless $2$-edge-cover}}(G)$
	
	\State $S = \mathsf{\text{special configuration}}(G, H)$
	
	\State $R = \contractvsglue(G,S)$
	\State \Return $R$
\EndFunction
\end{algorithmic}
\end{algorithm}
\subsection{Computing an economical bridgeless $2$-edge-cover}
\label{subsex:main:bridge-covering}
Given a structured graph $G$, we first compute an economical bridgeless $2$-edge-cover of it. Before we define an economical bridgeless $2$-edge-cover, we need to first define {\bf small, medium,} and {\bf large}, which is used to categorize a $2$-edge-connected subgraph of $G$ based on its weight.

\begin{definition}[small, medium, large]For a weighted graph $G$, we call a $2$-edge-connected subgraph $H$ of $G$ {\bf small} if $||H||\leq 2$, {\bf medium} if $||H||=3$, and {\bf large} if $||H||\geq 4$.
\end{definition}
Note that for structured graphs, the only possible small components are cycles of length $3$ or $4$ with exactly $2$ unit-edges, and medium components are cycles of length $3$, $4$, $5$, or $6$ with exactly~$3$~unit-edges.

\begin{definition}
A bridgeless 2-edge-cover $H$ of a graph $G$ is {\bf economical} if all the zero-edges of~$G$ are in $H$, $||H|| \leq \frac{13}{8} \cdot ||D_2(G)|| - 2 n_\ell - \frac{15}{8} n_m - \frac{5}{4} n_s$, where $n_\ell, n_m$, and $n_s$ are the number of large, medium, and small components of $H$, respectively. Furthermore, there exists a $D_2$ of $G$ such that each small component of $H$ is a small component of the $D_2$.
\end{definition}

Our main result for this subsection is as follows.

\begin{restatable}{theorem}{thmbridgecover}
\label{thm:bridgeCover}
Given a structured graph $G$, we can compute an economical bridgeless $2$-edge-cover of $G$ in polynomial time.
\end{restatable}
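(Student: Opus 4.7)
I would follow the standard bridge-covering template with a discharging scheme based on a credit of $c = 5/8$ per unit-edge. The first step is to compute a minimum $2$-edge-cover $D_2$ of $G$ in polynomial time (using the extension of Edmonds' algorithm), and form $H_0 := E(D_2) \cup Z$, where $Z$ is the set of zero-edges of $G$. This gives $||H_0|| = ||D_2||$ and immediately satisfies the requirement that all zero-edges are in $H$. Each unit-edge of $H_0$ is then assigned a credit of $5/8$, for a total credit pool of $(5/8)\,||D_2||$.

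The main loop iteratively covers bridges of the current subgraph $H$, starting from $H_0$. For each bridge, I would locate an augmenting cycle or path in $G$ that covers it, buy its new edges (paying out of the credit pool), and possibly delete redundant unit-edges to recover credit. Throughout, I would maintain the invariants: (i)~$H$ is a $2$-edge-cover containing all zero-edges; (ii)~any component of $H$ that has ever been merged or augmented is large and retains credit at least $2$; and (iii)~the small and medium components that have never been touched retain their initial credits of $2c = 5/4$ and $3c = 15/8$, respectively. When the loop terminates, $H$ is bridgeless, and invariants~(ii) and~(iii) together yield $||H|| + 2 n_\ell + (15/8)\,n_m + (5/4)\,n_s \leq (13/8)\,||D_2||$. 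Invariant~(iii) also guarantees that the small components of the output $H$ coincide with small components of the initial $D_2$, fulfilling the last requirement of the economical definition.

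The main obstacle is the detailed case analysis showing that each bridge can indeed be covered within the tight credit budget. With only $c = 5/8$ per unit-edge, a merge involving $t$ existing unit-edges can afford at most about $(5t/8) - 2$ new unit-edges before the resulting large component drops below the required leftover credit of~$2$. This is exactly where the preprocessing pays off: since $G$ excludes cut vertices, parallel edges, contractible subgraphs, and each of the small-separator types $S_0, S_1, S_2, S_{\{3,4\}}, S_k, S'_k$, for every bridge one can locate a cheap augmenting structure whose size fits into the budget, possibly after freeing up credit by deleting a redundant edge. The argument closely parallels the discharging analysis of Cheriyan et~al.~\cite{CheriyanCDZ21}, but the tighter credit forces a more refined case split, which the enlarged list of forbidden configurations makes possible (indeed, as the paper remarks, the argument goes through for any $c \geq 3/5$). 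Polynomial runtime is routine, since each iteration strictly decreases the number of bridges and each augmenting search can be carried out in polynomial time.
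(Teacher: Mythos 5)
Your high-level strategy is the same as the paper's: compute a $D_2$, add all zero-edges, assign a credit of $c = 5/8$ per unit-edge, and iteratively cover bridges by spending credit, using the forbidden structures of structured graphs to guarantee cheap augmenting structures exist. The polynomial running time and the shape of the final bound follow from the credit invariant, and the remark that the scheme works for any $c \geq 3/5$ is also the paper's. So far so good.

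There is, however, a genuine gap in your invariants (ii) and (iii). You track credit only at the level of \emph{components}, classifying them as large (touched), small, or medium (untouched). But a $D_2$ typically has \emph{complex} components --- ones with bridges, consisting of $2$-edge-connected \emph{blocks} joined by bridges and possibly containing \emph{black} vertices lying in no block --- and these are exactly the components that bridge-covering must handle. They are neither small nor medium (those words are defined only for $2$-edge-connected subgraphs), and they do not become a single large $2$-edge-connected component after one augmentation: a pseudo-ear covers one bridge at a time, and the intermediate state is still complex. So your invariant (ii) is not even well-typed for such intermediate components, and your sketch says nothing about how credit is distributed inside a complex component in the first place. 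The paper supplies two ingredients you omit: (a) a \emph{canonical} $D_2$ normalization (Lemma~\ref{lem:canonical-D2}), which uses the absence of cut vertices and contractible subgraphs to force every small pendant block to have four vertices and every medium pendant block to be incident to a unit-edge bridge --- otherwise those blocks simply do not carry enough credit; and (b) a finer-grained credit invariant on blocks, components, and black vertices (with explicit allocations $1/4$, $7/8$, $1$ for small/medium/other blocks, $1$ c-credit per component, and $\tfrac{5}{16}\Deg_H^{(1)}(v)$ n-credit for each black vertex $v$), initialized on the canonical $D_2$ (Lemma~\ref{lem:credit-invariant-initialization}) and maintained through each pseudo-ear augmentation (Lemmas~\ref{lem:pseudo-ear:enough-credit}--\ref{lem:pseudo-ear-augmentation-satisfied}). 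These are not merely ``a more refined case split'' on top of Cheriyan et al.'s discharging; without the canonicalization step and the block-level accounting there is no invariant to discharge against at $c = 5/8$, so the argument as you have stated it does not close.
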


To compute an economical bridgeless $2$-edge-cover of $G$, we first find a $D_2$ of $G$ and include all the zero-edges in it. Next, we transform this $D_2$ into a `canonical' $D_2$, which is defined in Appendix~\ref{sec:bridgeCovering}.
Then, we cover the bridges of the canonical $D_2$ to get an `economical'  bridgeless $2$-edge-cover. 
All of this can be done in polynomial time. The details 
with 
the proof of Theorem~\ref{thm:bridgeCover} are 
in Appendix~\ref{sec:bridgeCovering}.


\subsection{Computing a special configuration}
\label{subsec:main:computing-special}
Next, given an economical bridgeless $2$-edge cover $H$ of a structured graph $G$, we compute a `special' configuration of $G$.
A special configuration is a bridgeless $2$-edge-cover that satisfies certain additional properties. In particular, it does not contain any medium components.

\begin{definition}[Special configuration]
Given a structured graph $G$, we say $H$ is a special configuration of $G$ if 
\begin{itemize}
    \item[(i)] $H$ is an economical bridgeless $2$-edge-cover of $G$,
    \item[(ii)] $H$ does not contain medium-size components,
    \item[(iii)] $G/H$ does not contain good cycles
    \item[(iv)] $G/H$ does not contain open $3$-augmenting paths, and
    \item[(v)] $H$ does not contain a small to medium merge or small to large merge.
\end{itemize}
\end{definition}

The terms and notation used in conditions (iii)-(v) are formally defined in Appendix~\ref{sec:ComputingSpecialConfiguration}.
Without going into details, we briefly describe the structures defined in (iii)-(v).
A good cycle is a simple cycle $C$ in $G/H$ that contains either a) two large components, b) one large component and one small component containing a shortcut, or c) two small components each containing a shortcut.
Here, we say a small component $S$ is shortcut w.r.t.\ $C$ if in $G[V(S)]$ there exists a Hamiltonian path from $u$ to $v$ of weight $1$, where $u$ and $v$ are the vertices incident to $C$ when $S$ is expanded.
Hence, there is a unit-edge in the small component $S$ that is redundant for the 2-edge-connectivity of $H$ after we add the edges of $C$ to it.
An open $3$-augmenting path is a simple path $P$ in $G/H$ through 4 small components such that for each of the two interior small components there is a shortcut w.r.t.\ $P$.
Finally, a small to medium merge (or small to large merge) is a set of 3 small components $S_1, S_2, S_3 \in H$ such that in $G' = G[V(S_1 \cup S_2 \cup S_3)]$ there exists a set of edges that form two medium components (or one large component) of weight precisely 6 spanning $V(G')$.

Essentially, these conditions restrict the structure of special configurations.
For example, in the graph $G/H$ (the graph obtained by contracting the various components of $H$ into single nodes), there is no cycle that has $2$ or more nodes corresponding to large components.
In particular, a special configuration contains at least one small component or is already feasible. The restricted structure of special configurations will be crucially exploited while proving an improved lower bound on $\OPT(G)$.

From an economical bridgeless 2-edge-cover $H$, we obtain a special configuration by repeatedly searching for the four forbidden structures (properties (ii)-(v) above) and buying and selling certain edges such that we turn $H$ into an economical bridgeless 2-edge cover $H'$ with fewer components.
One can show that searching for such structures can be done in polynomial time.

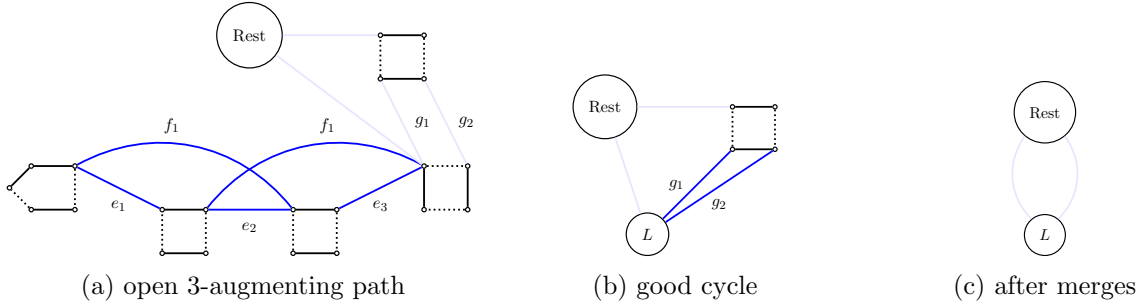
\begin{figure}[t]
 \begin{subfigure}[b]{0.39\textwidth}
        \centering
        \resizebox{\linewidth}{!}{
        \begin{tikzpicture}
	
				\tikzstyle{hvertex}=[thick,circle,inner sep=0.cm, minimum size=1mm, fill=white]
				\tikzstyle{overtex}=[thick,circle,inner sep=0.cm, minimum size=1mm, fill=white, draw=black]
				\tikzstyle{lvertex}=[thick,circle,inner sep=0.cm, minimum size=15mm, fill=white, draw=black]

				\node[overtex] (v11) at (0.5,1.5) {};
				\node[overtex] (v12) at (1,2) {};
				\node[overtex] (v13) at (2,2) {};
				\node[overtex] (v14) at (2,1) {};
				\node[overtex] (v15) at (1,1) {};
				
				\node[overtex] (v21) at (4,0) {};
				\node[overtex] (v22) at (4,1) {};
				\node[overtex] (v23) at (5,1) {};
				\node[overtex] (v24) at (5,0) {};
				
				\node[overtex] (v31) at (7,0) {};
				\node[overtex] (v32) at (7,1) {};
				\node[overtex] (v33) at (8,1) {};
				\node[overtex] (v34) at (8,0) {};

				\node[overtex] (v41) at (10,1) {};
				\node[overtex] (v42) at (10,2) {};
				\node[overtex] (v43) at (11,2) {};
				\node[overtex] (v44) at (11,1) {};
				
				\node[overtex] (v51) at (9,4) {};
				\node[overtex] (v52) at (9,5) {};
				\node[overtex] (v53) at (10,5) {};
				\node[overtex] (v54) at (10,4) {};
				
				\node[lvertex] (l1) at (6,5) {Rest};
				
				\draw[edge, - , black] (v11) to  (v12);
				\draw[edge, - , black] (v12) to  (v13);
				\draw[edge, - , black, dotted] (v13) to  (v14);
				\draw[edge, - , black] (v14) to  (v15);
				\draw[edge, - , black, dotted] (v15) to  (v11);
				
				\draw[edge, - , black, dotted] (v21) to  (v22);
				\draw[edge, - , black] (v22) to  (v23);
				\draw[edge, - , black, dotted] (v23) to  (v24);
				\draw[edge, - , black] (v24) to  (v21);
				
				\draw[edge, - , black, dotted] (v31) to  (v32);
				\draw[edge, - , black] (v32) to  (v33);
				\draw[edge, - , black, dotted] (v33) to  (v34);
				\draw[edge, - , black] (v34) to  (v31);
				
				\draw[edge, - , black] (v41) to  (v42);
				\draw[edge, - , black, dotted] (v42) to  (v43);
				\draw[edge, - , black] (v43) to  (v44);
				\draw[edge, - , black, dotted] (v44) to  (v41);
				
				\draw[edge, - , black, dotted] (v51) to  (v52);
				\draw[edge, - , black] (v52) to  (v53);
				\draw[edge, - , black, dotted] (v53) to  (v54);
				\draw[edge, - , black] (v54) to  (v51);
				
				\draw[edge, - , blue] (v13) to node[label= below:\textcolor{black}{$e_1$}] {} (v22);
				\draw[edge, - , blue] (v23) to node[label= below:\textcolor{black}{$e_2$}] {} (v32);
				\draw[edge, - , blue] (v33) to node[label= below:\textcolor{black}{$e_3$}] {} (v42);
				
				\draw[edge, - , blue] (v13) to [bend left=40]  node[label= above left:\textcolor{black}{$f_1$}] {} (v32);
				\draw[edge, - , blue] (v42) to [bend right=40]  node[label= above right:\textcolor{black}{$f_1$}] {} (v23);
				\draw[edge, - , blue!10!white] (l1) to (v42);
				\draw[edge, - , blue!10!white] (l1) to (v52);
				\draw[edge, - , blue!10!white] (v51) to node[label= right:\textcolor{black}{$g_1$}] {} (v42);
				\draw[edge, - , blue!10!white] (v54) to node[label= right:\textcolor{black}{$g_2$}] {} (v43);
			\end{tikzpicture}
			}
        \caption{open $3$-augmenting path}
        \label{fig:ipco:1-1}
    \end{subfigure}
  \begin{subfigure}[b]{0.29\textwidth}
    \centering
        \resizebox{.6\linewidth}{!}{
            \begin{tikzpicture}
            
				\tikzstyle{hvertex}=[thick,circle,inner sep=0.cm, minimum size=1mm, fill=white]
				\tikzstyle{overtex}=[thick,circle,inner sep=0.cm, minimum size=1mm, fill=white, draw=black]
				\tikzstyle{lvertex}=[thick,circle,inner sep=0.cm, minimum size=10mm, fill=white, draw=black]
				\tikzstyle{rvertex}=[thick,circle,inner sep=0.cm, minimum size=15mm, fill=white, draw=black]

				\node[lvertex] (l1) at (1,0) {$L$};
				\node[rvertex] (r1) at (0,3) {Rest};
				
				\node[overtex] (v11) at (3,2) {};
				\node[overtex] (v12) at (3,3) {};
				\node[overtex] (v13) at (4,3) {};
				\node[overtex] (v14) at (4,2) {};
				
				\draw[edge, - , black, dotted] (v11) to  (v12);
				\draw[edge, - , black] (v12) to  (v13);
				\draw[edge, - , black, dotted] (v13) to  (v14);
				\draw[edge, - , black] (v14) to  (v11);
				
				\draw[edge, - , blue] (v11) to node[label= left:\textcolor{black}{$g_1$}] {} (l1);
				\draw[edge, - , blue] (v14) to node[label= below:\textcolor{black}{$g_2$}] {} (l1);
				
				\draw[edge, - , blue!10!white] (l1) to (r1);
				\draw[edge, - , blue!10!white] (r1) to (v12);
			\end{tikzpicture}
			}
        \caption{good cycle}   
        \label{fig:ipco:1-2}
    \end{subfigure}
    \begin{subfigure}[b]{.29\textwidth}
        \centering
        \resizebox{.25\linewidth}{!}{
            \begin{tikzpicture}
				\tikzstyle{hvertex}=[thick,circle,inner sep=0.cm, minimum size=1mm, fill=white]
				\tikzstyle{overtex}=[thick,circle,inner sep=0.cm, minimum size=1mm, fill=white, draw=black]
				\tikzstyle{lvertex}=[thick,circle,inner sep=0.cm, minimum size=10mm, fill=white, draw=black]
				\tikzstyle{rvertex}=[thick,circle,inner sep=0.cm, minimum size=15mm, fill=white, draw=black]

				\node[lvertex] (l1) at (0,0) {$L$};
				\node[rvertex] (r1) at (0,3) {Rest};
				
				\draw[edge, - , blue!10!white] (l1) to [bend left=40] (r1);
				\draw[edge, - , blue!10!white] (l1) to [bend right=40] (r1);
			\end{tikzpicture}
			}
		 \caption{after merges}
        \label{fig:ipco:1-3}
    \end{subfigure}
\caption{An example of obtaining a special configuration.} 
\label{fig:ipco:1}
\end{figure} 

We briefly explain this process by an example: Figure~\ref{fig:ipco:1-1}. The black edges correspond to the economical bridgeless $2$-edge-cover $H$, where the dotted edges are of weight $0$. The (bold and faint) blue edges are edges of $G$ that are not in $H$. The $3$ blue edges $e_1$, $e_2$, and $e_3$ form an open $3$-augmenting path in $G/H$ (as it can `shortcut' the two black unit edges adjacent to $e_2$).
By the properties of structured graphs, one can show that the blue edges $f_1$ and $f_2$ must exist, and hence four  components of $H$ can be merged into one large component by buying all the 5 bold blue edges and selling the 2 black unit edges adjacent to $e_2$, which are `shortcut' by the open 3-augmenting path, to obtain Figure~\ref{fig:ipco:1-2}; as initially the $4$ components incident to the blue edges have a credit of $9\times \frac 5 8 \geq 5$, we buy $5$ blue and sell $2$ black edges to have a final credit of at least $2$. 
One can show that this step is tight for our analysis with an approximation ratio of $13/8$.
Now, in Figure~\ref{fig:ipco:1-2} the blue edges $g_1$ and $g_2$ form a good cycle in $G/H$ (as it can be merged into a single large component). Initially, the credits in the large and small components that are part of this good cycle have a credit of $2+2\times \frac 5 8\geq 3$. We buy the edges $g_1$ and $g_2$ and sell the unit edge adjacent to both $g_1$ and $g_2$ to form a single $2$-edge-connected component having a credit of at least $3-2+1=2$, and thus the good cycle merges into a large component as shown in Figure~\ref{fig:ipco:1-3}.
In general, we show the following theorem,
which is proved in Appendix~\ref{sec:ComputingSpecialConfiguration}.

\begin{restatable}{theorem}{thmspecialconfig}
\label{thm:specialConfig}
Given a structured graph $G$ and an economical bridgeless $2$-edge-cover of it, we can compute a special configuration of $G$ in polynomial time.
\end{restatable}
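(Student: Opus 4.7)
The plan is to prove Theorem~\ref{thm:specialConfig} by a local-improvement procedure. Starting from the given economical bridgeless $2$-edge-cover $H$, I would repeatedly test whether $H$ violates any of the properties (ii)-(v) of a special configuration and, if so, perform a suitable \emph{merge}: buy a bounded number of edges from $E(G)\setminus E(H)$, sell a bounded number of unit-edges of $E(H)$ (leaving every zero-edge in $H$), and combine several components of $H$ into one larger $2$-edge-connected component. The loop terminates once none of the four violations occurs, at which point $H$ is, by definition, a special configuration.

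For each of the four forbidden structures -- a medium component of $H$, a good cycle in $G/H$, an open $3$-augmenting path in $G/H$, or a small-to-medium / small-to-large merge -- I would design a polynomial-time detection subroutine. Good cycles and open $3$-augmenting paths live in the contracted graph $G/H$ and correspond to configurations of bounded size, so they can be enumerated directly. Medium components and small-to-medium / small-to-large merges are identified by scanning each component of $H$ together with the edges of $G$ leaving it. Since every successful merge strictly reduces the number of connected components of $H$, and the initial number of components is at most $|V(G)|$, the loop halts after at most $|V(G)|$ iterations, each of polynomial cost.

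The main technical burden is to show that each merge preserves the economical bridgeless $2$-edge-cover property. The weight bound $||H|| \le \tfrac{13}{8}\,||D_2(G)|| - 2 n_\ell - \tfrac{15}{8} n_m - \tfrac{5}{4} n_s$ decomposes naturally into a \emph{credit} scheme that assigns reserves $2$, $15/8$, and $5/4$ to large, medium, and small components of $H$ respectively; equivalently, each unit-edge of $H$ contributes $5/8$ to the total available credit. For each merge I would verify that the reserves released by the disappearing components, minus the net weight change (bought minus sold), is at least $2$, which is exactly what the new (large) component requires. The open $3$-augmenting path case exemplifies the tight situation described in the example: the four participating components carry $9$ unit-edges with combined credit $9 \cdot 5/8 \ge 5$, absorb the cost of $5$ bought edges offset by $2$ sold edges, and leave credit $\ge 2$ in the resulting large component.

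I expect the principal obstacle to be a careful case analysis that verifies, for every merge type, both (a) that the ``supporting'' edges needed to realize the merge as a $2$-edge-connected combination (such as $f_1, f_2$ in the illustrative example) must exist in a structured graph, and (b) that the credit arithmetic works out in all sub-cases of how the medium/small/large components are arranged. Item~(a) is precisely what the structured-graph assumption supplies, since the exclusions of $S_0,\dots,S_6,S_3',\dots,S_6'$ together with the absence of parallel edges, cut vertices, small separators, and contractible subgraphs rule out exactly the configurations in which the required supporting edges could be missing; most of the proof will therefore consist of translating each structural exclusion into an enabling lemma for one merge. The auxiliary requirement that every small component of $H$ remains a small component of some $D_2(G)$ is preserved trivially, since merges only destroy small components and never create new ones.
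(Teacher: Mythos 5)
Your proposal matches the paper's proof strategy essentially exactly: an iterative local-improvement loop that detects violations of properties (ii)--(v), performs a merge that strictly decreases the number of components, and verifies for each merge type (via the credit scheme assigning $2$, $\tfrac{15}{8}$, $\tfrac{5}{4}$ to large/medium/small components, i.e.\ $\tfrac{5}{8}$ per unit-edge) that the released credit covers the net weight increase while leaving $\ge 2$ in the new large component. The paper packages exactly this plan as Lemmas~\ref{lem:find-special-obstruction-poly} (polynomial-time detection) and Lemmas~\ref{lem:handling-good-cycles}--\ref{lem:handling-medium-components} (the merge case analysis, with the structured-graph exclusions supplying the supporting edges and the separator-free cases), including the observation you make that the ``small components embed in a $D_2$'' property is inherited since merges never create new small components.
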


\subsection{Two-edge-connecting special configurations}
\label{subsec:main:feasible-for-special}
Finally, we present the last part of our algorithm, which we call `$\contractvsglue$', that converts a special configuration into a $2$-edge-connected graph. Recall, a special configuration is an economical bridgeless $2$-edge-cover of a structured graph that contains only small and large components and satisfies certain additional properties. Our algorithm computes two solutions and returns the one with a lower weight. The first solution is obtained by contracting the small components into single nodes and recursively computing the solution on the contracted graph (this is done by calling $\reducea$ on the contracted graph) and then adding the edges in the small components to the solution after expanding it back. The second solution is obtained by following the `Gluing Algorithm' of Cheriyan et al.~\cite{CheriyanCDZ21}, which we call `$\glue$', and reproduce it below for completeness' sake.

\begin{algorithm}
\caption{$\contractvsglue$}\label{alg:contractvsglue}
\begin{algorithmic} 
\Function{Contract-vs-glue}{$G$, $S$} 
\Comment{\textcolor{gray}{$G$ is structured, $S$ is a special configuration of $G$}}
\If {$S$ is a $\ecss$ of $G$}
\Return{$S$}
\EndIf
\State Let $H_1, \cdots, H_k$ be the small components of $S$. \Comment{\textcolor{gray}{now $S$ must have small components}}
\State $G_1^* = \reducea(G/\{H_1,\cdots,H_k\})$
\State $S_1 = (V(G), E(\hat{G_1^*})\cup \bigcup_{i\in[k]} E(H_i))$

\State $S_2 = \glue(G,S)$

\Return{$\argmin \{||S_1||,||S_2||\}$}

\EndFunction
\end{algorithmic}
\end{algorithm}

We will be using the following lemmas to prove our main result.
\begin{lemma}\label{lem:simultaneousContract}
Let $G$ be a structured graph, $S$ be a special configuration of $G$ with small components $H_1,\cdots,H_k$, and let $G_1^*=\reducea(G/\{H_1,\cdots,H_k\})$.
If $G_1^*$ is a $2$-edge-connected spanning subgraph of $G/\{H_1,\cdots,H_k\}$, then $(V(G),E(\hat G_1^*)\cup \bigcup_{i\in[k]}E(H_i))$ is a $2$-edge-connected spanning subgraph of~$G$. 
Furthermore,
if $H_1,\cdots,H_k$ is $(\frac {13}8,4,k)$-contractible in $G$ and  $||G_1^*||\leq f(G/\{H_1,\cdots,H_k\})$, then $||(V(G), E(\hat G_1^*)\cup \bigcup_{i\in[k]}E(H_i))||\leq f(G)$.
\end{lemma}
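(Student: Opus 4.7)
Write $T = \{V(H_1),\ldots,V(H_k)\}$ and $G' := (V(G),\ E(\hat G_1^*) \cup \bigcup_{i \in [k]} E(H_i))$. My plan is to first check directly that $G'$ is a spanning 2-edge-connected subgraph of $G$, and then to bound $||G'||$ by combining the hypothesis $||G_1^*|| \leq f(G/T)$ with the contractibility assumption, which lets me relate $\opt(G/T)$ to $\opt(G)$.

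Spanning is clear. For 2-edge-connectedness I would use the standard ``expand the super-vertices'' argument: for any edge $e \in E(G')$ consider $G' - e$. If $e$ lies inside some $H_i$, then $H_i - e$ stays connected by 2-edge-connectedness of $H_i$, and the remaining edges of $\hat G_1^*$ keep everything attached to the rest of $G'$; if instead $e = \hat f$ for some $f \in E(G_1^*)$, then $G_1^* - f$ is connected in $G/T$, and lifting this connectivity back to $G'$ using the internal connectedness of each $H_i$ shows $G' - e$ is connected.

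For the weight bound, note that $E(\hat G_1^*)$ and $\bigcup_i E(H_i)$ are disjoint: edges of $G_1^*$ come from $G/T$ and therefore correspond to non-loop edges of $G$, which cannot lie entirely inside any $V(H_i)$. Hence $||G'|| = ||G_1^*|| + \sum_i ||H_i|| = ||G_1^*|| + 2k$, since each small component carries exactly two unit-edges. The next step is to establish
\[
\opt(G/T) \;\leq\; \opt(G) - \tfrac{16}{13}\,k.
\]
I would take an optimal $\ecss$ $\OPT$ of $G$; by $(\tfrac{13}{8},4,k)$-contractibility it contains at least $\tfrac{8}{13}\cdot||\bigcup_i E(H_i)|| = \tfrac{16}{13}\,k$ unit-edges from $\bigcup_i E(G[V(H_i)])$. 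Contracting $T$ deletes these edges as loops and leaves a 2-edge-connected spanning subgraph of $G/T$ of weight at most $\opt(G) - \tfrac{16}{13}\,k$.

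Combining, in the generic case $\opt(G/T) \geq 16/5$ we have $f(G/T) = \tfrac{13}{8}\opt(G/T) - 2$, and the chain
\[
||G'|| \leq f(G/T) + 2k \leq \tfrac{13}{8}\opt(G/T) - 2 + 2k \leq \tfrac{13}{8}\bigl(\opt(G) - \tfrac{16k}{13}\bigr) - 2 + 2k = \tfrac{13}{8}\opt(G) - 2 \leq f(G)
\]
closes the argument: the $2k$ and $-2k$ terms cancel exactly. The main obstacle is the residual case $\opt(G/T) \leq 3$, where $f(G/T) = \opt(G/T)$ and the chain above is slack by up to $2$. I plan to rule this case out structurally: $\opt(G/T) \leq 3$ forces $|V(G/T)| \leq 6$, and together with $|V(G)| \geq 20$ and $|V(H_i)| \leq 4$ this gives $|V(G/T)| \geq 20 - 3k$, hence $k \geq 5$. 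Each large component of $S$ would contribute at least three vertices to $V(G/T)$, so with $k \geq 5$ the special configuration has no large component and the $H_i$'s cover all of $V(G)$. A zero-edge count inside the components (one zero-edge per 3-vertex $H_i$, two per 4-vertex $H_i$), combined with the global matching constraint $\#\{\text{zero-edges of } G\} \leq |V(G)|/2$, bounds the zero-edges of $G/T$ by $a/2$, where $a$ counts the 3-vertex $H_i$'s. A short arithmetic check for $k \in \{5,6\}$ against $|V(G)| = 3a + 4b \geq 20$ then yields $\opt(G/T) \geq k - a/2 \geq 4$, contradicting $\opt(G/T) \leq 3$. Hence the residual case is vacuous and $||G'|| \leq f(G)$ holds unconditionally.
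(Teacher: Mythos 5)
Your argument is correct and follows the same route as the paper's: the paper reduces the second assertion to Lemma~\ref{lem:divideCombine:contractibleSubgraphs}, whose proof is exactly the inequality $\opt(G/T) \leq \opt(G) - \frac{1}{\alpha}\sum_i\|E(H_i)\|$ you derive from contractibility, followed by the cancellation $\frac{13}{8}\cdot\frac{16}{13}k = 2k$ against the $2k$ unit-edges you add back. The one place you go beyond the paper is the residual case $\opt(G/T) \leq 3$. The paper's cited lemma dismisses this with a one-line appeal to $|V(G)|\geq 20$ and $|V(H_i)|\leq 12$, which settles it when $k=1$ (then $|V(G/T)|\geq 9$), but does not transfer directly to the simultaneous-contraction setting where $k$ can be on the order of $|V(G)|/3$; filling this step in, as you do, is a genuinely needed addition, and your arithmetic for $k\in\{5,6\}$ checks out.

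One simplification worth noting: the $a/2$ bound on the zero-edges of $G/T$, and with it the $a,b$ bookkeeping, can be avoided. Since $S$ is an economical bridgeless $2$-edge-cover, it contains every zero-edge of $G$, so every zero-edge lies inside some component of $S$ and none runs between distinct small components. Once you have established that there are no large components (your step), it follows immediately that $G/T$ has no zero-edges, whence $\opt(G/T)\geq |V(G/T)| = k \geq 5 > 3$, closing the residual case in one line. (Also, a large component has $\geq 4$ unit-edges and, since $G$ is simple, at least $4$ vertices rather than the $3$ you used, though your weaker estimate already suffices.)
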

The first statement in the above lemma is straightforward to see. 
The second part is obtained by specializing Lemma~33 of Appendix~\ref{sec:preprocessing} to our parameters.

The following lemma is proved implicitly in~\cite{CheriyanCDZ21}. 
For completeness, we give a full proof in Appendix~\ref{sec:Glue}.

\begin{restatable}{lemma}{lemglue}
\label{lem:glue}
Let $G$ be a structured graph and $S$ be a special configuration of $G$ with $n_\ell$ large and $n_s$ small components. Then, $\glue(G,S)$ is a $2$-edge-connected spanning subgraph of $G$ with
$||\glue(G,S)|| \leq ||S|| + 2 n_\ell + \frac 4 3 n_s - 2$.
\end{restatable}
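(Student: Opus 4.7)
The plan is to follow the credit-discharging analysis of Cheriyan et al.~\cite{CheriyanCDZ21}, which applies directly to our special configurations because they contain only large and small components. First form the multigraph $G/S$ by contracting each component of $S$ to a single vertex; since $G$ is 2-edge-connected and $S$ is spanning, $G/S$ is 2-edge-connected. Assign credit $2$ to every vertex of $G/S$ coming from a large component and credit $\frac{4}{3}$ to every vertex coming from a small component, for a total initial credit of $2n_\ell + \frac{4}{3}n_s$. Every unit edge bought during gluing will be paid for from this credit, while maintaining the invariant that every current connected component carries credit at least $2$; we will terminate with a single 2-edge-connected component that still carries credit at least $2$.

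The core step is iterative: locate either a short cycle or an ear in the current contracted graph, buy its edges, optionally sell a redundant unit edge shortcut by the new structure (returning a credit of $1$), and contract the merged piece into a new super-vertex whose credit equals the sum of absorbed credits minus the net spend. For a cycle of length $k$ through $k_\ell$ large and $k_s$ small vertices, the available credit is $2k_\ell + \frac{4}{3}k_s$ and the $k = k_\ell + k_s$ newly bought edges cost $k$, leaving $k_\ell + \frac{k_s}{3}$ in the merged component. This is at least $2$ whenever $k_\ell \geq 2$, or $k_\ell = 1$ with $k_s \geq 3$, or $k_\ell = 0$ with $k_s \geq 6$. The tight cases are cycles or ears dominated by small components, and they are resolved by the auxiliary moves of~\cite{CheriyanCDZ21}: selling a shortcut unit edge of a small component to refund credit, or chaining two small-small merges through a shared neighbor so the accumulated credit reaches $2$.

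The main obstacle is certifying that some credit-preserving move is always applicable until a single component remains. This is the content of the Cheriyan et al.\ case analysis, which we can invoke essentially verbatim: 2-edge-connectivity of $G/S$ guarantees that a cycle or ear of the relevant type exists at each step, and the fact that $S$ contains only large and small components (since it is a special configuration) means that exactly the cases they handle can arise here. Summing across iterations, each purchased unit edge is charged to credit, each credit is used at most once, and the terminal component retains credit at least $2$; hence the total cost of purchased edges is at most $2n_\ell + \frac{4}{3}n_s - 2$. Since sales only decrease weight, $||\glue(G,S)|| \leq ||S|| + 2n_\ell + \frac{4}{3}n_s - 2$, and the resulting subgraph is 2-edge-connected and spanning by construction.
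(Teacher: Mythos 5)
Your high-level plan matches the paper's: assign credit $2$ to large and $\frac{4}{3}$ to small components, iteratively merge cycle-like structures while maintaining the invariant that every current component holds credit at least $2$, and terminate with a single component so that the total spend is at most $2n_\ell + \frac{4}{3}n_s - 2$. Deferring the progress argument to Cheriyan et al.~\cite{CheriyanCDZ21} is also a defensible route here.

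However, the step where you certify that a credit-preserving move always exists contains a genuine gap. You assert that ``2-edge-connectivity of $G/S$ guarantees that a cycle or ear of the relevant type exists,'' with the only other ingredient being that $S$ has no medium components. That is not enough. Your own arithmetic shows a bare cycle with $k_\ell=0$, $k_s<6$, or $k_\ell=1$, $k_s<3$, does not preserve credit; yet $G/S$ being 2-edge-connected can certainly present you with exactly such a short cycle through small nodes and nothing better among bare cycles. The actual moves used are not bare cycles at all: they are good cycles (which carry savings from shortcuts), open 2-augmenting paths, and stacked closed 2-augmenting paths, all of which redeem savings by deleting shortcut unit edges. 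The guarantee that one of these three is always available is the real content of the lemma (this is Lemma~\ref{lem:glue:existence-of-stacked-path} in the paper). Its proof constructs an auxiliary digraph $D^{aux}$ over the components recording closed 2-augmenting paths, and derives a contradiction from the hypotheses that $G$ is structured — specifically, the absence of cut vertices, the absence of $S_{\{3,4\}}$ separators, and the absence of contractible subgraphs. None of these hypotheses appear in your argument; 2-edge-connectivity of $G/S$ alone does not imply them, and without them the tight small-dominated cases (which you call ``auxiliary moves'' and gesture at chaining through shared neighbors) cannot actually be resolved. To close the gap you would need to explicitly invoke the structured-graph properties and show they force a stacked closed 2-augmenting path (or one of the other two moves) whenever more than one component remains.
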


Also, we prove the following lower bound result. Informally, if a $\ecss$ includes $t$ fewer edges from within the small components of a special configuration, it must include $t(1+\frac 1{12})$ edges going between different small components.
The proof is given in Appendix~\ref{sec:lower-bound}.

\begin{restatable}{lemma}{lemlowerbound}
\label{lem:lowerbound}
Let $G$ be a structured graph and $S$ be a special configuration of $G$ with $k$ small components: $H_1,\cdots,H_k$.
Let $R$ be any $2$-edge-connected spanning subgraph of $G$ such that $2k - \sum_{i\in[k]}||R[V(H_i)]||=t$, then $\sum_{i<j\leq k}e_R(V(H_i),V(H_j)) \geq (1 + \frac {1}{12})t$, where $e_R(A,B)$ represents the number of unit-edges going between vertex sets $A$ and $B$ in $R$.
\end{restatable}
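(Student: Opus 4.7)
The plan is to do a local charging argument on each small component $H_i$, relating the local deficit $t_i := 2 - \|R[V(H_i)]\|$ to the number of unit-edges of $R$ leaving $V(H_i)$ toward other small components, and then summing over $i$ and dividing by two to account for each inter-component edge being seen at both its endpoints. Since each $H_i$ is a $3$-cycle or $4$-cycle containing exactly $2$ unit-edges, the structure of a small component is rigid, and the matching property of the zero-edges strongly constrains what zero-edges of $R$ can meet $V(H_i)$: for a $4$-cycle small component the internal matching saturates all four vertices, so no zero-edge of $R$ can leave $V(H_i)$, and for a $3$-cycle exactly one endpoint is free.

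The first main step is cut accounting. Because $R$ is $2$-edge-connected, $\deg_R(v)\geq 2$ for every $v\in V(H_i)$, so
\[
|\delta_R(V(H_i))| \;\geq\; 2|V(H_i)| - 2\,\|R[V(H_i)]\| - 2 b_i,
\]
where $b_i$ counts the internal zero-edges of $R$. Combined with the matching constraint $2b_i + W_i \leq |V(H_i)|$ for the number $W_i$ of zero-edges in the cut, this produces a lower bound of the form $U_i \geq 2 t_i$ (possibly adjusted by a small slack) on the number $U_i$ of \emph{unit}-edges in $\delta_R(V(H_i))$. The second step uses property~(ii) of a special configuration: $S$ has no medium components, so each such external unit-edge goes either to another small component (contributing to $Y_i$) or to a large component (contributing to $Z_i$), and $\sum_{i<j} e_R(V(H_i),V(H_j)) = \tfrac{1}{2}\sum_i Y_i$.

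The crux is to translate properties~(iii)--(v) of a special configuration (no good cycle, no open $3$-augmenting path, no small-to-medium/large merge) together with the fact that $G$ is structured (no small separators, no contractible subgraphs, etc.) into quantitative restrictions on the routing of $R$'s external unit-edges. Intuitively, if too many of the $U_i$ unit-edges were diverted to large components, then the $2$-edge-connectivity of $R$ would force a small cluster of small components whose external unit-edges land on large components in a pattern that realises one of the forbidden structures (a small-to-large merge, or a good cycle, or an open $3$-augmenting path in $G/H$), contradicting the assumption that $S$ is a special configuration. A careful case analysis across $3$-cycle and $4$-cycle small components, over how the external unit-edges distribute across the vertices of $V(H_i)$ and across the component types they land on, yields $Z_i \leq U_i - \tfrac{13}{6} t_i$ on average, so that $\sum_i Y_i = \sum_i (U_i - Z_i) \geq \tfrac{13}{6} t$ and the inequality $\sum_{i<j} e_R(V(H_i),V(H_j)) \geq \bigl(1+\tfrac{1}{12}\bigr)t$ follows.

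The hard part is this last step: the factor $\tfrac{1}{12}$ is delicate, so the case analysis must be genuinely tight. In particular, the proof must identify precisely which external patterns around a single small component are compatible with a special configuration, and then exhibit, for each such compatible pattern, a combinatorial witness (a would-be merge, cycle, or augmenting path) whenever a stronger bound on $Z_i$ fails. I expect this is where the structural hypotheses on $G$ that were not needed in previous work get used most heavily, and where most of the proof's length will lie.
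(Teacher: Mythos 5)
Your scaffold --- classify the unit-edges leaving each small component into ``to small'' vs.\ ``to large'' piles, invoke the structural hypotheses, and sum --- has some of the right ingredients, but the main mechanism is misidentified and the core technical content is missing. The intuition that the extra $\frac{1}{12}$ comes from bounding $Z_i$ (edges to large components) is not how the paper's argument works and cannot work that way. The paper's Fact~\ref{fact:lower-bound} shows that if a vertex of $H_i$ has an $R$-edge to a large component, then the other endpoint of the incident shortcut unit-edge has \emph{no} outgoing edges at all; so edges to large components at shortcut vertices are essentially ruled out, and the issue is not that ``too many edges are diverted to large.'' In fact if $Z_i = 0$ your degree count only gives $Y_i = U_i \geq 2t_i$, hence $r \geq t$ --- still short of $(1+\frac{1}{12})t$. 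The slack cannot come from the $Z_i$ vs.\ $Y_i$ trade-off; it must come from somewhere else.

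That ``somewhere else'' is exactly what you have left unspecified, and it is where all the work in the paper lies. Your per-component target $Y_i \geq \frac{13}{6} t_i$ is false pointwise (take $t_i=1$, $U_i=2$, $Z_i=0$: you would need $2 \geq \frac{13}{6}$), so ``on average'' is doing all the lifting, and you give no mechanism for the averaging. The paper's mechanism: open $2$-augmenting paths are \emph{not} forbidden (only $3$-augmenting ones are), and whenever a crossing edge of $R$ joins two shortcut vertices whose components are on an open $2$-augmenting path, Lemma~\ref{lem:lower-bound-help-lemma} forces additional crossing edges from the end components back into the middle one --- so the excess shows up on \emph{neighbours} of the ``tight'' component, not on the component itself. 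The paper captures this with a reduction to $4$-cycle small components, a taxonomy of active components ($U_1, U_2, I, N_1^\top, N_1^\bot, N_2, Z$), a charging scheme $\xi$ that also gives inside edges a share of load (the quantity actually lower-bounded is $\|T\|$, not $\sum_i Y_i$), per-type load lower bounds ($\geq 2$, $\geq \frac{5}{2}$, $\geq \frac{7}{3}$, with extra per-neighbour terms for $U_1$, $U_2$), and the bipartite-degree averaging Fact~\ref{fact:bipartite} to combine these into an average of $\frac{13}{6}$ per active component. None of that appears in your plan, so what you have is a framing of the target, not a proof.
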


Here, we give an intuition on how to prove Lemma~\ref{lem:lowerbound}.
Fix a structured graph $G$ together with a special configuration $S$ and a $2$-edge-connected spanning subgraph $R$ as specified in Lemma~\ref{lem:lowerbound}.
In order to simplify things, here we assume that each small component $H_i$ of $S$ is a cycle of length $4$ such that $G$ does not contain any of the diagonals of $H_i$.
Furthermore, let $H_1, H_2, \dots, H_\ell$ be the set of small components in~$S$.

An edge between two 
small components is called \textbf{crossing}, whereas an edge inside a small component is called \textbf{inside}.
Informally speaking, Lemma~\ref{lem:lowerbound} states that, on average, for each inside edge $e \in E(S)$ of some small component of $S$ that is not present in $R$, $E(R)$ has to contain at least $1 + \frac{1}{12}$ crossing edges.
First, one can show that the vertices incident to an inside edge $e$ that is not present in $E(R)$ cannot be adjacent to vertices of a large component 
of $S$, as otherwise this implies that $S$ contains a good cycle
, contradicting that $S$ is special.
Hence, 
each vertex incident to an inside edge $e$ that is not present in $E(R)$ must be incident to at least one crossing edge in $R$. 

In order to show that $R$ contains sufficiently many crossing edges, we define an assignment $\xi$ that distributes for each inside or crossing edge of $R$ a total charge of one to the small components $H_1, \dots, H_\ell$.
The sum over all charges of edges incident to some component~$H_i$ then defines the \emph{load} of the component~$H_i$. Note that, by construction, the total load over all small components is equal to the number of inside and crossing edges in $R$.

Each inside edge contributes one to the charge of its component, while each crossing edge  distributes a charge of one to the components incident to it:
if only one of the two unit edges of $E(S)$ adjacent to a crossing edge is shortcut (absent) in $R$, then the component with the shortcut edge receives a charge of one from that crossing edge. 
Otherwise, both incident components receive a charge of $\frac 1 2$.
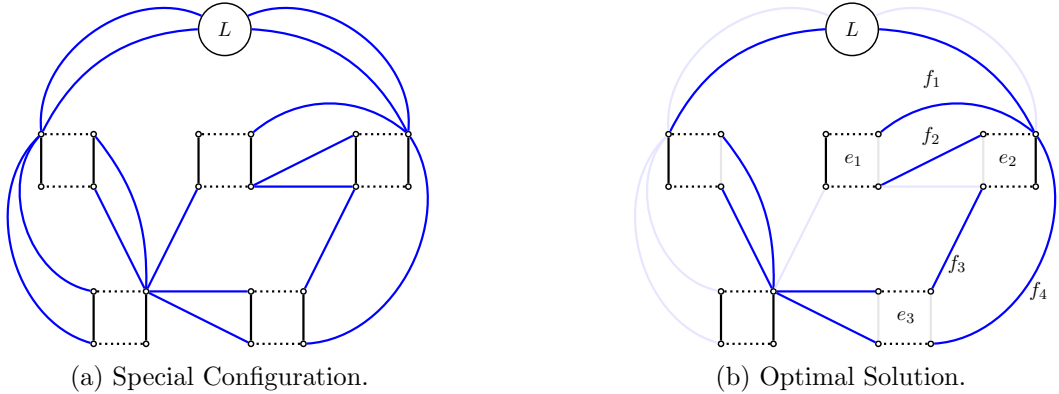
\begin{figure}[t]
	\subcaptionbox{\centering Special Configuration.\label{fig:ipco:2-1}}[0.5\textwidth]
	{
	
			 \resizebox{0.4\textwidth}{!}{
			\begin{tikzpicture}[rotate=-90]

				\tikzstyle{hvertex}=[thick,circle,inner sep=0.cm, minimum size=1mm, fill=white]
				\tikzstyle{overtex}=[thick,circle,inner sep=0.cm, minimum size=1mm, fill=white, draw=black]
				\tikzstyle{lvertex}=[thick,circle,inner sep=0.cm, minimum size=10mm, fill=white, draw=black]

				\node[overtex] (v11) at (2,0) {};
				\node[overtex] (v12) at (2,1) {};
				\node[overtex] (v13) at (3,1) {};
				\node[overtex] (v14) at (3,0) {};
				
				\node[overtex] (v21) at (2,3) {};
				\node[overtex] (v22) at (2,4) {};
				\node[overtex] (v23) at (3,4) {};
				\node[overtex] (v24) at (3,3) {};
				
				\node[overtex] (v31) at (2,6) {};
				\node[overtex] (v32) at (2,7) {};
				\node[overtex] (v33) at (3,7) {};
				\node[overtex] (v34) at (3,6) {};

				\node[overtex] (v41) at (5,1) {};
				\node[overtex] (v42) at (5,2) {};
				\node[overtex] (v43) at (6,2) {};
				\node[overtex] (v44) at (6,1) {};
				
				\node[overtex] (v51) at (5,4) {};
				\node[overtex] (v52) at (5,5) {};
				\node[overtex] (v53) at (6,5) {};
				\node[overtex] (v54) at (6,4) {};
				
				\node[lvertex] (l1) at (0,3.5) {$L$};
				
				\draw[edge, - , black, dotted] (v11) to  (v12);
				\draw[edge, - , black] (v12) to  (v13);
				\draw[edge, - , black, dotted] (v13) to  (v14);
				\draw[edge, - , black] (v14) to  (v11);
				
				\draw[edge, - , black, dotted] (v21) to  (v22);
				\draw[edge, - , black] (v22) to  (v23);
				\draw[edge, - , black, dotted] (v23) to  (v24);
				\draw[edge, - , black] (v24) to  (v21);
				
				\draw[edge, - , black, dotted] (v31) to  (v32);
				\draw[edge, - , black] (v32) to  (v33);
				\draw[edge, - , black, dotted] (v33) to  (v34);
				\draw[edge, - , black] (v34) to  (v31);
				
				\draw[edge, - , black, dotted] (v41) to  (v42);
				\draw[edge, - , black] (v42) to  (v43);
				\draw[edge, - , black, dotted] (v43) to  (v44);
				\draw[edge, - , black] (v44) to  (v41);
				
				\draw[edge, - , black, dotted] (v51) to  (v52);
				\draw[edge, - , black] (v52) to  (v53);
				\draw[edge, - , black, dotted] (v53) to  (v54);
				\draw[edge, - , black] (v54) to  (v51);

				\draw[edge, - , blue] (l1) to [bend right=30] (v11);
				\draw[edge, - , blue] (l1) to [bend right=60] (v11);
				\draw[edge, - , blue] (l1) to [bend left=30] (v32);
				\draw[edge, - , blue] (l1) to [bend left=60] (v32);
				
				\draw[edge, - , blue] (v12) to [bend left=20] (v42);
				\draw[edge, - , blue] (v13) to (v42);
				\draw[edge, - , blue] (v11) to [bend right=60] (v41);
				\draw[edge, - , blue] (v11) to [bend right=60] (v44);
				
				\draw[edge, - , blue] (v24) to (v42);
				\draw[edge, - , blue] (v23) to (v31);
				\draw[edge, - , blue] (v22) to [bend left=40] (v32);
				\draw[edge, - , blue] (v23) to (v34);
				
				\draw[edge, - , blue] (v32) to [bend left=60] (v53);
				\draw[edge, - , blue] (v34) to (v52);
				
				\draw[edge, - , blue] (v42) to (v51);
				\draw[edge, - , blue] (v42) to (v54);
			\end{tikzpicture}
			
			}
		}
%
		\subcaptionbox{\centering Optimal Solution. \label{fig:ipco:2-2}}[0.5\linewidth]
		{
		\resizebox{0.4\textwidth}{!}{
			\begin{tikzpicture}[rotate=-90]
				\tikzstyle{hvertex}=[thick,circle,inner sep=0.cm, minimum size=1mm, fill=white]
				\tikzstyle{overtex}=[thick,circle,inner sep=0.cm, minimum size=1mm, fill=white, draw=black]
				\tikzstyle{lvertex}=[thick,circle,inner sep=0.cm, minimum size=10mm, fill=white, draw=black]

				\node[overtex] (v11) at (2,0) {};
				\node[overtex] (v12) at (2,1) {};
				\node[overtex] (v13) at (3,1) {};
				\node[overtex] (v14) at (3,0) {};
				
				\node[overtex] (v21) at (2,3) {};
				\node[overtex] (v22) at (2,4) {};
				\node[overtex] (v23) at (3,4) {};
				\node[overtex] (v24) at (3,3) {};
				
				\node[overtex] (v31) at (2,6) {};
				\node[overtex] (v32) at (2,7) {};
				\node[overtex] (v33) at (3,7) {};
				\node[overtex] (v34) at (3,6) {};

				\node[overtex] (v41) at (5,1) {};
				\node[overtex] (v42) at (5,2) {};
				\node[overtex] (v43) at (6,2) {};
				\node[overtex] (v44) at (6,1) {};
				
				\node[overtex] (v51) at (5,4) {};
				\node[overtex] (v52) at (5,5) {};
				\node[overtex] (v53) at (6,5) {};
				\node[overtex] (v54) at (6,4) {};
				
				\node[lvertex] (l1) at (0,3.5) {$L$};
				
				\draw[edge, - , black, dotted] (v11) to  (v12);
				\draw[edge, - , black!10!white] (v12) to  (v13);
				\draw[edge, - , black, dotted] (v13) to  (v14);
				\draw[edge, - , black] (v14) to  (v11);
				
				\draw[edge, - , black, dotted] (v21) to  (v22);
				\draw[edge, - , black!10!white] (v22) to node[label= left:\textcolor{black}{$e_1$}] {} (v23);
				\draw[edge, - , black, dotted] (v23) to (v24);
				\draw[edge, - , black] (v24) to  (v21);
				
				\draw[edge, - , black, dotted] (v31) to  (v32);
				\draw[edge, - , black] (v32) to  (v33);
				\draw[edge, - , black, dotted] (v33) to  (v34);
				\draw[edge, - , black!10!white] (v34) to node[label= right:\textcolor{black}{$e_2$}] {} (v31);
				
				\draw[edge, - , black, dotted] (v41) to  (v42);
				\draw[edge, - , black] (v42) to  (v43);
				\draw[edge, - , black, dotted] (v43) to  (v44);
				\draw[edge, - , black] (v44) to  (v41);
				
				\draw[edge, - , black, dotted] (v51) to  (v52);
				\draw[edge, - , black!10!white] (v52) to  node[label= left:\textcolor{black}{$e_3$}] {} (v53);
				\draw[edge, - , black, dotted] (v53) to  (v54);
				\draw[edge, - , black!10!white] (v54) to  (v51);

				\draw[edge, - , blue] (l1) to [bend right=30] (v11);
				\draw[edge, - , blue!10!white] (l1) to [bend right=60] (v11);
				\draw[edge, - , blue] (l1) to [bend left=30] (v32);
				\draw[edge, - , blue!10!white] (l1) to [bend left=60] (v32);
				
				\draw[edge, - , blue] (v12) to [bend left=20] (v42);
				\draw[edge, - , blue] (v13) to (v42);
				\draw[edge, - , blue!10!white] (v11) to [bend right=60] (v41);
				\draw[edge, - , blue!10!white] (v11) to [bend right=60] (v44);
				
				\draw[edge, - , blue!10!white] (v24) to (v42);
				\draw[edge, - , blue] (v23) to node[label= above:\textcolor{black}{$f_2$}] {} (v31);
				\draw[edge, - , blue] (v22) to [bend left=40]  node[label= above left:\textcolor{black}{$f_1$}] {} (v32);
				\draw[edge, - , blue!10!white] (v23) to (v34);
				
				\draw[edge, - , blue] (v32) to [bend left=60] node[label= below:\textcolor{black}{$f_4$}] {}  (v53);
				\draw[edge, - , blue] (v34) to node[label= below:\textcolor{black}{$f_3$}] {}  (v52);
				
				\draw[edge, - , blue] (v42) to (v51);
				\draw[edge, - , blue] (v42) to (v54);
			\end{tikzpicture}
			
			}
		}
		\caption{Example: special configuration, optimal solution, and lower bound.}
		\label{fig:ipco:2}
		
	\end{figure} 
Consider for example Figure~\ref{fig:ipco:2-2}, where the bold edges represent $R$.
By the above assignment, the components incident to $f_3$ receive a charge of $\frac 1 2$ each from $f_3$, while only the component containing $e_3$ receives a charge of $1$ from $f_4$. 
The total load of the components (from left to right, top to bottom) then is $3,\frac 5 2, 2, 2$, and $\frac 7 2$, respectively.

From our assignment, one can easily argue that the load of each small component is $\geq 2$.
Furthermore, if there are no two shortcut edges that are adjacent to a crossing edge, 
then it clearly follows that Lemma~\ref{lem:lowerbound} holds.
In fact, in this case, we could replace the $1 + \frac{1}{12}$ by $2$ in the lemma -- a much stronger result.

Hence, we may assume that there are some shortcuts that share crossing edges, e.g.\ edges $e_1, e_2,$ and $e_3$ in Figure~\ref{fig:ipco:2-2}.
However, in this case, the edges $f_2$ and $f_3$ (which form an {\it open $2$-augmenting path}) cannot be extended to an open $3$-augmenting path 
(since $S$ is special); the edges $f_1$ and $f_4$ have to go back to the component containing $e_2$.
One can show that in this case (since there are also no good cycles or local merges), the average load of the components containing $e_1$, $e_2$, and $e_3$ is at least $\frac 5 2$. 
In the remaining case when there are no open 2-augmenting paths in $R$, using a similar argument we can also show that the average load of a component is at least $2 + \frac{1}{6}$.
This load assignment then implies the statement of Lemma~\ref{lem:lowerbound}.

\section{$\alg$ is admissible}\label{sec:mainTheoremProof}
As noted earlier, from Theorems~\ref{thm:reduce} and \ref{thm:runningtime}, it follows that if we can show $\alg$ is admissible, Theorem~\ref{thm:main} follows. Thus, we will now focus on proving that $\alg$ is admissible.
\begin{lemma}\label{lem:admissible}
$\alg$ is admissible.
\end{lemma}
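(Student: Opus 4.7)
My plan is to proceed by induction on $s(G)$. I would assume the inductive hypothesis: for every MAP instance $G'$ with $s(G')\leq t$, $\reducea(G')$ is a $\ecss$ of $G'$ with $||\reducea(G')||\leq f(G')$. Fix a structured graph $G$ with $s(G)=t+1$. By Theorems~\ref{thm:bridgeCover} and~\ref{thm:specialConfig}, $\alg$ produces in polynomial time an economical bridgeless $2$-edge-cover and then a special configuration $S$. I would first dispatch the trivial case when $S$ itself is a $\ecss$: since $S$ has no medium components and structured graphs satisfy $\opt(G)\geq |V(G)|/2\geq 10$ (because at most $|V(G)|/2$ of the $\geq |V(G)|$ edges needed for min-degree two can lie in the zero-weight matching), $S$ must be a single large component and the economical bound yields $||S||\leq \frac{13}{8}||D_2||-2\leq f(G)$. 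Otherwise, let $H_1,\ldots,H_{n_s}$ be the small components of $S$ with $n_s\geq 1$. The graph $G':=G/\{H_1,\ldots,H_{n_s}\}$ satisfies $s(G')\leq s(G)-1$, so by the inductive hypothesis $G_1^*=\reducea(G')$ is a $\ecss$ of $G'$ with $||G_1^*||\leq f(G')$; then Lemma~\ref{lem:simultaneousContract} (first statement) and Lemma~\ref{lem:glue} will guarantee that both $S_1$ and $S_2=\glue(G,S)$ are $\ecss$ of $G$.

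For the weight analysis I plan to establish two key inequalities. First, combining the economical bound $||S||\leq \frac{13}{8}||D_2||-2n_\ell-\frac{5}{4}n_s$ (valid since special configurations have no medium components) with Lemma~\ref{lem:glue} gives
\[
||S_2||\leq \frac{13}{8}||D_2||+\frac{n_s}{12}-2.
\]
Second, let $a_1$ (resp.\ $a_2$) count the unit edges of $\OPT(G)$ with both endpoints in a single $V(H_i)$ (resp.\ in two different $V(H_i)$'s), and set $t_{\OPT}:=2n_s-a_1$. The subgraph $D_2^{\star}:=(\OPT(G)\setminus E(G[\bigcup_i V(H_i)]))\cup \bigcup_i E(H_i)$ is a $2$-edge-cover of $G$ of weight $\opt(G)-a_1-a_2+2n_s$ (each vertex in a $V(H_i)$ gets degree two from $H_i$, while vertices outside retain their $\OPT$-degrees), whence $||D_2||\leq \opt(G)-a_1-a_2+2n_s$. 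Applying Lemma~\ref{lem:lowerbound} to $\OPT(G)$ gives $a_2\geq (1+\frac{1}{12})t_{\OPT}$, and rearranging yields
\[
\opt(G)\geq ||D_2||+\frac{1}{12}\,t_{\OPT}.
\]

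The analysis then splits on whether $t_{\OPT}\geq \frac{8n_s}{13}$. In this case the two inequalities combine:
\[
||S_2||\leq \frac{13}{8}\Bigl(\opt(G)-\frac{1}{12}t_{\OPT}\Bigr)+\frac{n_s}{12}-2\leq \frac{13}{8}\opt(G)-2\leq f(G),
\]
since $\frac{13\,t_{\OPT}}{96}\geq \frac{n_s}{12}$. Otherwise $t_{\OPT}<\frac{8n_s}{13}$, equivalently $a_1>\frac{18n_s}{13}$, and I would bound $||S_1||$ by direct induction on $G'$: contracting $\OPT(G)$ across each $V(H_i)$ yields a $\ecss$ of $G'$ of weight $\opt(G)-a_1$, so $\opt(G')\leq \opt(G)-a_1$ and $||S_1||=||G_1^*||+2n_s\leq f(G')+2n_s$. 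In the main subcase $\opt(G')\geq \frac{16}{5}$, one has $f(G')=\frac{13}{8}\opt(G')-2$, so $||S_1||\leq \frac{13}{8}(\opt(G)-a_1)-2+2n_s$, and $\frac{13a_1}{8}>\frac{9n_s}{4}>2n_s$ gives $||S_1||<\frac{13}{8}\opt(G)-2=f(G)$.

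The main obstacle will be the edge subcase $\opt(G')<\frac{16}{5}$ where $f(G')=\opt(G')$ and one only obtains $||S_1||\leq \opt(G)+t_{\OPT}$; the desired inequality $||S_1||\leq f(G)$ is equivalent to $\opt(G)\geq \frac{8(t_{\OPT}+2)}{5}$, which I plan to verify via two independent lower bounds on $\opt(G)$---the structural bound $\opt(G)\geq 10$ handles small $n_s$, while combining $a_1>\frac{18n_s}{13}$ with the Lemma~\ref{lem:lowerbound}-derived bound $\opt(G)\geq a_1+a_2\geq -\frac{a_1}{12}+\frac{13n_s}{6}$ handles larger $n_s$. Finally, the running time follows because every step of $\alg(G)$ besides the single recursive call to $\reducea(G')$ runs in time polynomial in $s(G)$ (by Theorems~\ref{thm:bridgeCover} and~\ref{thm:specialConfig} and the polynomial runtime of $\glue$), and $s(G')\leq s(G)-1$, so $T(s(G))\leq T'(s(G)-1)+\mathrm{poly}(s(G))$, as required for admissibility.
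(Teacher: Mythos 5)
Your proof is correct and follows the paper's overall approach (same contract-vs-glue decomposition, same use of the economical bound, Lemma~\ref{lem:glue}, Lemma~\ref{lem:lowerbound}, and a $2$-edge-cover exchange: your $D_2^\star$ construction is exactly the content of Proposition~\ref{prop:key}). Two of your choices refine the paper's argument. First, the case split: you split on the inside-edge count $a_1$ of a fixed optimal solution (equivalently on $t_{\OPT}=2n_s-a_1$), whereas the paper splits on whether $S$ is $\frac{13}{8}$-simultaneously contractible, a condition in Definition~\ref{structuredGraphDef:contractible} that quantifies over \emph{every} $\ecss$. In its non-contractible branch the paper asserts that $\OPT$ has at most $\frac{8}{13}\cdot 2n_s$ inside unit-edges, which is not a formal consequence of non-contractibility (that only produces \emph{some} $\ecss$ with few inside edges, not necessarily an optimal one); your split conditions directly on the quantity the two analyses need, and the two branches genuinely overlap (the contract analysis goes through once $a_1\geq \frac{16n_s}{13}$, your gluing analysis once $a_1\leq\frac{18n_s}{13}$), so every optimum is covered. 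Second, you explicitly handle the subcase $f(G')=\opt(G')$ when $\opt(G')$ is small; the paper's Lemma~\ref{lem:divideCombine:contractibleSubgraphs} dispatches this by citing $|V(G)|\geq 20$ and $|V(H_i)|\leq 12$, a reasoning that is clean for one contracted subgraph but not obviously so after simultaneously contracting many small components, where $|V(G'))|$ can drop to roughly $|V(G)|/4$. Your two lower bounds on $\opt(G)$ (the structural one and the one from $\opt(G)\geq a_1+a_2$ via Lemma~\ref{lem:lowerbound}) close that case, and the running-time justification matches the paper's.
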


Before we proceed with the proof, we develop some key definitions and propositions that will be used in the proof. Throughout this subsection, $G$ is a  structured graph and $S$ is a special configuration of $G$ with small components $H_1,\cdots,H_{n_s}$.

\begin{definition}[simultaneously-contractible]
We say $S$ is $\frac {13}8$-simultaneously contractible if the small components of $S$ are $(\frac {13}8, 4, n_s)$-contractible in $G$.
\end{definition}
\begin{definition}[$\OPT^L, \OPT^R, D_2^L, D_2^R$]
We partition the vertex set of $G$ in two sets: $V(G)=L\cup R$, where $L$ consists of the vertices in the large components of the special configuration $S$ and $R$ is the set of remaining vertices, i.e., the set of vertices in the small components of $S$. Let $\OPT^L$ be the edges of $\OPT(G)$ that have at least one endpoint incident on a vertex in $L$, and $\OPT^R$ be the remaining edges of $\OPT(G)$, i.e., the edges whose both endpoints are in $R$. $D_2^L$ and $D_2^R$ are defined analogously: $D_2^L$ is the set of edges of $D_2(G)$ that are incident on at least one vertex of $L$ and $D_2^R=E(D_2(G))\setminus D_2^L$. $\opt^L, \opt^R, d_2^L$, and $d_2^R$ are defined to be $||\OPT^L||, ||\OPT^R||, ||D_2^L||$, and $||D_2^R||$, respectively.
\end{definition}

The following relationships are immediate.
\begin{proposition}\label{prop:relations}
$$||\OPT(G)||:=\opt=\opt^L + \opt^R.$$
$$||D_2(G)||:=d_2 = d_2^L + d_2^R.$$
\end{proposition}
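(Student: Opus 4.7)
The plan is to verify that the two equations are just the additivity of weight over an edge partition, so the entire proof reduces to observing that $\OPT^L$ and $\OPT^R$ partition $E(\OPT(G))$, and analogously for $D_2$.

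First I would note that by construction $L$ and $R$ are defined as the vertex sets of the large components and small components of the special configuration $S$, respectively, so $V(G) = L \cup R$ with $L \cap R = \emptyset$. Hence every edge $e \in E(G)$ falls into exactly one of three categories: both endpoints in $L$, both endpoints in $R$, or one endpoint in $L$ and one in $R$. The definition of $\OPT^L$ (``at least one endpoint in $L$'') captures precisely the first and third categories, while $\OPT^R$ is defined as the complement $E(\OPT(G)) \setminus \OPT^L$, which by the above trichotomy equals the set of $\OPT$-edges with both endpoints in $R$.

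From this, $\OPT^L$ and $\OPT^R$ are disjoint and their union is $E(\OPT(G))$. Since the weight $\|\cdot\|$ simply counts unit-edges and is therefore additive over disjoint edge sets, we get $\|\OPT(G)\| = \|\OPT^L\| + \|\OPT^R\|$, i.e. $\opt = \opt^L + \opt^R$. The identity $d_2 = d_2^L + d_2^R$ follows by applying the exact same argument to $E(D_2(G))$ in place of $E(\OPT(G))$.

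There is no real obstacle here: the statement is a bookkeeping identity that falls out directly from the definitions, and the only thing to check is that $L, R$ form a partition of $V(G)$ — which is immediate from the definition of a special configuration, where every vertex belongs to exactly one component (large or small) of the bridgeless $2$-edge-cover $S$.
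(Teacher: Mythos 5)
Your proof is correct and matches the paper's intent: the paper labels these identities as ``immediate'' and gives no explicit argument, and your observation that $\OPT^L, \OPT^R$ (respectively $D_2^L, D_2^R$) partition the edge set so that weight is additive is exactly the bookkeeping that makes them so.
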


The following proposition is key to proving our bound.
\begin{proposition}\label{prop:key}
$$\opt^L \geq d_2^L$$
\end{proposition}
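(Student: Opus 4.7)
The natural route is to combine $\OPT^L$ with the small components of $S$ to manufacture a $2$-edge-cover of $G$ and then exploit the minimality of $D_2$. Concretely, the plan is to consider the subgraph
\[
F := \OPT^L \;\cup\; \bigcup_{i\in[n_s]} E(H_i)
\]
and argue two things: (a) $F$ is a $2$-edge-cover of $G$, so $||F|| \geq d_2$; and (b) the weight contributions of the two pieces in $F$ are disjoint and match precisely the decomposition $d_2 = d_2^L + d_2^R$ on the right-hand side, so after cancellation one obtains $\opt^L \geq d_2^L$.

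\textbf{Step 1: identifying $d_2^R$.} By the definition of an economical bridgeless $2$-edge-cover (which $S$ inherits, by construction of special configurations), there exists a $D_2$ of $G$ in which each $H_i$ is itself a connected small component. In particular, for every vertex $v\in V(H_i)\subseteq R$, all $D_2$-edges at $v$ lie inside $V(H_i)$. Hence no edge of $D_2$ has both endpoints in $R$ unless it lies in some $H_i$, giving
\[
D_2^R \;=\; \bigcup_{i\in[n_s]} E(H_i), \qquad d_2^R \;=\; \sum_{i\in[n_s]} \|H_i\| \;=\; 2n_s.
\]

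\textbf{Step 2: $F$ is a $2$-edge-cover.} For any $v\in L$, the fact that $\OPT(G)$ is $2$-edge-connected gives $\deg_{\OPT}(v) \geq 2$, and since $v\in L$ all edges of $\OPT$ at $v$ lie in $\OPT^L$ by definition; so $v$ has degree $\geq 2$ in $F$. For any $v\in R$, the vertex $v$ lies in some $H_i$, which is $2$-edge-connected, so $\deg_{H_i}(v) \geq 2$ and $v$ again has degree $\geq 2$ in $F$. Thus $F$ has minimum degree at least $2$, i.e., $F$ is a $2$-edge-cover of $G$, and minimality of $D_2$ yields $\|F\| \geq d_2$.

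\textbf{Step 3: combining the pieces.} Every edge of $\bigcup_i E(H_i)$ has both endpoints in $R$, whereas every edge of $\OPT^L$ has at least one endpoint in $L$; the two edge sets are disjoint, so $\|F\| = \opt^L + 2n_s$. Using Step~1 and $\|F\|\geq d_2 = d_2^L + d_2^R = d_2^L + 2n_s$, the $2n_s$ terms cancel and we are left with $\opt^L \geq d_2^L$, as claimed. The only nontrivial ingredient is Step~1, i.e., invoking the specific choice of $D_2$ that certifies $S$ as economical so that $d_2^R$ collapses exactly to $2n_s$; once this is in hand, the argument is essentially a one-line degree count.
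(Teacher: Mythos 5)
Your proof is correct and follows essentially the same route as the paper: form the candidate $2$-edge-cover $\OPT^L \cup D_2^R$ (you phrase it as $\OPT^L \cup \bigcup_i E(H_i)$ and then identify this union with $D_2^R$ via the economical-$D_2$ property), verify the degree condition separately on $L$ and on $R$, and invoke the minimality of $D_2$. The paper states this as a proof by contradiction while you argue directly, and you make explicit the computation $d_2^R = 2n_s$, but these are cosmetic differences rather than a different argument.
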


\begin{proof}
Assume for contradiction $\opt^L<d_2^L$. Observe $ \OPT^L\cup D_2^R$ forms a $2$-edge-cover of $G$, since each vertex of $L$ has at least $2$ edges incident on it from $\OPT^L$ (as $\OPT$ is a feasible $2$-ECSS of $G$) and each vertex of $D_2^R$ has $2$ edges incident on it from $D_2^R$ (as $D_2^R$ are the edges of $D_2$ restricted to the small components of $S$, which were originally small in $D_2$). But $$||\OPT^L\cup D_2^R|| = \opt^L + d_2^R < d_2^L + d_2^R=d_2,$$ which contradicts the fact that $D_2$ is a minimum $2$-edge-cover of $G$.
\end{proof}

Now, we are ready to prove that $\alg$ is admissible.

\begin{proof}[Proof of Lemma~\ref{lem:admissible}]
Fix a structured graph $G$. To show $\alg$ is admissible, we need to show two properties: 
(i)~$\alg(G)$ is a $2$-edge-connected spanning subgraph of $G$ with $||\alg(G)|| \leq f(G)$ under the assumption that $\reducea(G')$ is a $2$-edge-connected spanning subgraph of $G'$ with $||\reducea(G')||\leq f(G')$ for all MAP instances $G'$ of size strictly smaller than the size of $G$, and (ii)  $T(s(G)) \leq T'(s(G)-1) + poly(s)$, where $T$ is the running time of $\alg$ and $T'$ is the running time of $\reducea$. 
Note that (ii) follows from the fact that each of the three steps in $\alg$ takes polynomial time and in the final step, namely $\contractvsglue$, $\alg$ calls the subroutine $\reducea$ only once on a smaller graph. Thus, we will focus on proving (i) below.

Note that $\alg$ on input $G$ first computes a special configuration $S$ and then applies the algorithm $\contractvsglue$ on $(G,S)$. 
If $S$ is $\ecss$, $\contractvsglue$ returns $S$, and $||S||\leq \frac{13}8 d_2-2n_\ell - \frac{5}4n_s$,  where $n_\ell=1$ is the number of large components and $n_s=0$ is the number of small components in $S$ (since $S$ is an economical bridgeless $2$-edge-cover of $G$). Thus, $||\alg||\leq f(G)$.
Otherwise, $S$ must contain at least one small component as observed in Section~\ref{subsec:main:computing-special}. 

Let $H_1,\dots,H_{n_s}$ be the small components of $S$.  
$\contractvsglue$ on $(G,S)$ computes two solutions $S_1$ and $S_2$ and returns the one with lower weight. 
Recall $S_1$ is obtained by contracting the small components of $S$, calling $\reducea$ on it, and then expanding the contracted nodes and adding back the edges of the small components. 
$S_2$ is computed by calling the $\glue(G,S)$ subroutine. 
In either case, the output is guaranteed to be a $2$-edge-connected spanning subgraph of $G$ from Lemmas~\ref{lem:simultaneousContract} and~\ref{lem:glue}.

Now, to show $||\alg(G)||\leq f(G)$,
we have two cases based on whether the special configuration $S$ is $\frac {13}8$-simultaneously contractible in $G$.
If $S$ is a $\frac{13}8$-simultaneously contractible in $G$, then by invoking Lemma~\ref{lem:simultaneousContract} (whose precondition holds since the contracted graph has size strictly smaller than $G$ and then we have the guarantee that $||\reducea(G')||\leq f(G')$ for all MAP instances $G'$), we have $||\alg(G)||\leq ||S_1||\leq f(G)$ and we are done.

In the case $S$ is not $\frac{13}8$-simultaneously contractible in  $G$, we will first lower bound $\opt$ and then upper bound $||S_2||$ to show $||\alg(G)||\leq f(G)$. 

\paragraph*{Lower bound on $\opt$.}

From Propositions~\ref{prop:relations} and~\ref{prop:key} we have
$$\opt=\opt^L + \opt^R \geq d_2^L + \opt^R.$$
We now focus on lower bounding $\opt^R$. Recall $\OPT^R$ consists of edges whose both endpoints are contained in $\bigcup_{i\in[n_s]} V(H_i)$. We categorize the edges of $\OPT^R$ into two types.
\begin{itemize}
    \item An edge of $\OPT^R$ is \textbf{inside} if both its endpoints belong to the same $V(H_i)$ for some $i$.
    \item An edge of $\OPT^R$ is \textbf{crossing} if its endpoints lie in distinct $V(H_i)$ and $V(H_j)$ for some $i\neq j$. 
\end{itemize}
Since $S$ is not $\frac{13}8$-simultaneously contractible in $G$, the number of unit-edges that are inside is at most $\frac 8{13}\cdot 2 n_s$. Let us say the number of inside edges is exactly $t$ less than the number of unit-edges in the small components of $S$, i.e., $2n_s - t$, and this number is at most $\frac 8{13}\cdot 2n_s$.

Now, to lower bound the number of unit-edges that are crossing, we invoke Lemma~\ref{lem:lowerbound}, which states that the number of unit-edges going between $V(H_i)$ and $V(H_j)$ for all $i\neq j$ is at least $(1+\frac 1{12})$t.

Thus, we have the following lower bound for $\opt$.
\begin{align*}
\opt = \opt^L + \opt^R &
    \geq d_2^L + \opt^R = d_2^L + ||\text{inside}|| + ||\text{crossing}|| \\
&\geq d_2^L + (2n_s-t) + \left(1+\frac1{12}\right)t,
\end{align*}

where $2 n_s - t \leq \frac {8}{13}\cdot 2n_s$. The 
lower bound is minimized when $t$ is kept as small as possible, i.e., when $2 n_s - t = \frac {8}{13} \cdot 2n_s$, i.e., for $t= \frac 5{13}\cdot 2n_s$. Thus,

\begin{align*}
    \opt 
    &\geq d_2^L + \frac 8{13}\cdot 2 n_s + \left(1+\frac 1{12}\right)\frac 5{13}\cdot 2 n_s 
    = d_2^L + \left(\frac {16}{13}+\frac {10}{12}\right)n_s
    = d_2^L + \frac {161}{78}\cdot n_s.
\end{align*}

\paragraph*{Upper bound on $||\alg(G)||$.}

Since $S$ is an economical bridgeless $2$-edge-cover of $G$, we have $$||S|| \leq \frac {13}8\cdot d_2 - 2n_\ell - \frac 5 4 \cdot n_s,$$ where  $n_\ell$ and $n_s$ denote the number of large and small components of $S$, respectively. Also, from Lemma~\ref{lem:glue}, we have $$||S_2||=||\glue(G,S)|| \leq ||S|| + 2 n_\ell + \frac 4 3\cdot n_s - 2.$$

Combing the two bounds we obtain $$||S_2||\leq \frac{13}8\cdot d_2 + \frac 1{12}\cdot n_s-2.$$

Now we can split $d_2$ as $d_2^L + d_2^R$, and use the fact that $d_2^R=2n_s$ to obtain our bound.

\begin{align*}
    ||\alg(G)||&\leq ||S_2|| \leq\frac{13}8\cdot d_2 + \frac 1{12}\cdot n_s-2
    =\left(\frac {13}8 \cdot d_2^L + \frac{13}8\cdot 2 n_s\right)+\frac 1{12} \cdot n_s-2\\ 
    &= \frac {13}8 \cdot d_2^L + \frac {10}{3} \cdot n_s -2
    \leq \frac{13}8\left(d_2^L + \frac{160}{78}\cdot n_s\right) -2
     \leq \frac {13}{8}\cdot \opt -2\leq f(G),
\end{align*}
where the second last inequality follows from the lower bound on $\opt$ obtained above. 
\end{proof}


\section{Conclusion}\label{sec:conclusion}

 In this work, we presented a $\frac {13}8$-approximation for MAP, which is a fundamental problem in network design.  
While several of our steps also work for smaller approximation ratios, two of our steps are tight for $\frac{13}{8}$:
 First, constructing a special configuration is tight for $\frac{13}{8}$. 
 In particular, the merge involving $3$-augmenting paths, in the worst case, uses all the available credits. 
 On the other hand, such a merge could not be avoided, as their absence from special configurations helps us improve the lower bound later.
 Furthermore, the lower bound is tight.
 Hence, simply obtaining a better construction of a special configuration is not enough as one has to improve upon the lower bound as well.
 Finally, even if one can resolve these two issues, our approximation ratio would still be tight for $1.6$ at two places:
 First, constructing a bridgeless 2-edge-cover is tight for $1.6$, even though we believe that this result can be strengthened.
 Second, in the construction of special configuration, handling the medium components is also tight for precisely $1.6$.
 Hence, also here new ideas are needed in order to obtain an approximation ratio below $1.6$.
 
Our result builds on a new $\frac 3 2$-approximation preserving reduction to instances not containing certain structures including small separators and contractible subgraphs. Furthermore, we introduced the method of simultaneous contractions and improved lower bounds to achieve our main result. These techniques seem general and applicable to other problems in network design.


\bibliography{lib.bib}

\newpage
\section*{Appendix}
\appendix 
\section{Proofs of Theorems~\ref{thm:reduce} and \ref{thm:runningtime}}\label{sec:reductions}
\label{sec:running-time}
\begin{proof}[Proof of Theorem~\ref{thm:reduce}]
Let $\alg$ be an admissible algorithm. We will show that for all MAP instances~$G$, $\reducea(G)$ is a $\ecss$ of $G$ and $||\reducea(G)||\leq f(G)$ by strong induction on $s(G)$.

Base case: $s(G)=1$ implies $V(G)=\emptyset$, and we are trivially done as $\reducea$ returns $\emptyset$ in the first if statement.
Induction hypothesis (I.H.): Fix a $t\geq 1$. We assume that the statement holds for all MAP instances $G$ with $s(G)\leq t$.
Induction step: We now prove the statement holds for all MAP instances with size $t+1$. Fix a MAP instance $G$ such that $s(G)=t+1.$

If $G$ is simple and $|V(G)| \leq 20$, from the first if statement in $\reducea$, $\reducea(G) = \OPT(G)$ and we are trivially done.

Else, if $V(G)$ consists of a forbidden configuration, let $F$ be the forbidden configuration of type $T$ detected by $\reducea$ in its second if statement. Let $(H_1,H_2,H_3)=\divideT_T(G,F)$. Since $G$ does not contain a forbidden configuration of a type that precedes $T$ (as per the algorithm), from Lemma~\ref{lem:divideCombine} parts (i) and (ii), we conclude that for each $i\in\{1,2,3\}$, $H_i$ is a MAP instance and $s(H_i) <s(G)=t+1$. Applying I.H., for each $i\in\{1,2\}$, $H_i^* = \reducea(H_i)$ is a $\ecss$ of $H_i$ and $||\reducea(H_i)|| \leq f(G)$, which enables the invocation of Lemma~\ref{lem:divideCombine} part (iii) giving us $\combine_T(G,H_1^*,H_2^*,H_3^*)$ is a $\ecss$ of $G$ and $||\combine_T(G,H_1^*,H_2^*,H_3^*)||\leq f(G)$, which is what $\reducea$ returns in the second if statement, and we are done. 

Finally, if $G$ does not contain any forbidden configuration, $\reducea$ returns $\alg(G)$. Now, from I.H. we know that the statement holds for all MAP instances with size at most $t$ which when combined with our assumption of $\alg$ being  admissible gives us that $\alg(G)$ is a $\ecss$ of $G$ such that $||\alg(G)|| \leq f(G)$ as $s(G)=t+1$, and we are done.
\end{proof}

\begin{proof}[Proof of Theorem~\ref{thm:runningtime}]
Let $\alg$ be an admissible algorithm whose running time is given by $T(s)$ for structured graphs of size $s$. Let the running time of $\reducea$ be denoted by $T'(s)$ for MAP instances of size $s$. From the definition of admissible, we have ($\star$) $T(s)\leq T'(s-1) + c_1 s^{k_1}$ for some constants $c_1$ and $k_1$.

If $\reducea$ terminates at the first return statement, it takes constant time, say at most $c_2$. If $\reducea$ terminates at the second return statement, from Lemma~\ref{lem:polytimesubs}, we know that checking for forbidden configurations and running $\divideT$ and $\combine$ takes polynomial time, and let us say that altogether $\reducea$ takes time at most $c_3s^{k_2}$ for some constants $c_3$ and $k_2$, not counting the time taken for the recursive calls.

Let $c = c_1+c_2+c_3$ and $k=1 + \max\{k_1,k_2\}$. We show $T'(s) \leq cs^k$ by strong induction on $s$. Base case: $s=1$, $T'(1)
\leq c_2 \leq c \cdot 1^k$, and we are done. Assume, that the inequality holds for all sizes at most $s-1$, and we will show that it holds for size $s$. Fix an input instance of size $s$ and we will analyze the running time of $\reducea$ on that instance in three cases depending on which of the three return statements $\reducea$ terminates at. If the $\reducea$ terminates in the first return statement, $T'(s)\leq c_2 \leq c s^k$, and we are done. If $\reducea$ terminates in the second return statement, then $T'(s) \leq  c_3s^{k_2} + T'(s_1) + T'(s_2)+T'(s_3)$ for some $s_1$, $s_2$, and $s_2$ such that  $s_1 + s_2 +s_3\leq s-1$ from $Lemma~\ref{lem:divideCombine}$ part (ii). Applying the I.H., $T'(s) \leq  c_3s^{k_2} + cs_1^k + cs_2^k + cs_3^k\leq c_3s^{k_2}+c(s_1+s_2+s_3)^k\leq c_3s^{k_2}+c (s-1)^k\leq c(s^{k_2}+(s-1)^k)\leq c(s^{k-1}+(s-1)^k) \leq cs^k$, where the last inequality follows from the fact that $(1-\frac 1 s)^k \leq 1 - \frac 1 s\implies (s-1)^k \leq s^k - s^{k-1}$. Finally, if $\reducea$ terminates at the last return statement, $T'(s)\leq c_2s^{k_2} + T(s)\leq c_2s^k + T'(s-1) + c_1s^{k_1}$ from ($\star$). Applying I.H., $T'(s)\leq c_2s^{k_2}+c(s-1)^k+c_1s^{k_1}\leq c(s-1)^k + cs^{\max\{k_1,k_2\}}= c((s-1)^k+s^{k-1})\leq cs^k$, where the last inequality follows like before.
\end{proof}
\section{Preprocessing}\label{sec:preprocessing}

In order to define structured graphs, we first need definitions of certain forbidden configurations.
Throughout the section, we assume that $\alpha \geq 1.5$ is some fixed constant.
Furthermore, we assume that $G$ is the given MAP instance.

\begin{definition}[$S_0$, previously called zero-cost $S_2$ in \cite{CheriyanCDZ21}]
Given a graph $G$, an $S_0$ of $G$ is a zero-edge $e\in E(G)$ such that deleting the endpoints of $e$ from $G$ results in two graphs with vertex sets $V_1$ and $V_2$ that are not connected to each other
\end{definition}

\begin{definition}[$S_1$, previously called unit-cost $S_2$ in \cite{CheriyanCDZ21}]
Given a simple graph $G$, an $S_1$ of $G$ is a unit-edge  $uv \in E(G)$ such that
(i) deleting the vertices $u$ and $v$ from $G$ results in two graphs with vertex sets $V_1$ and $V_2$ that are not connected to each other (ii) $\opt(G[V(G) \setminus V_2] / \{u, v\}) \geq 3$, (iii) $u$ is incident to a zero-edge which  goes to $V_2$ such that $\opt(G[V(G) \setminus V_1] / \{u, v\}) \geq 4$.
\end{definition}

\begin{figure}
\centering
\begin{subfigure}{.5\textwidth}
  \centering
  \includegraphics[width=.7\linewidth]{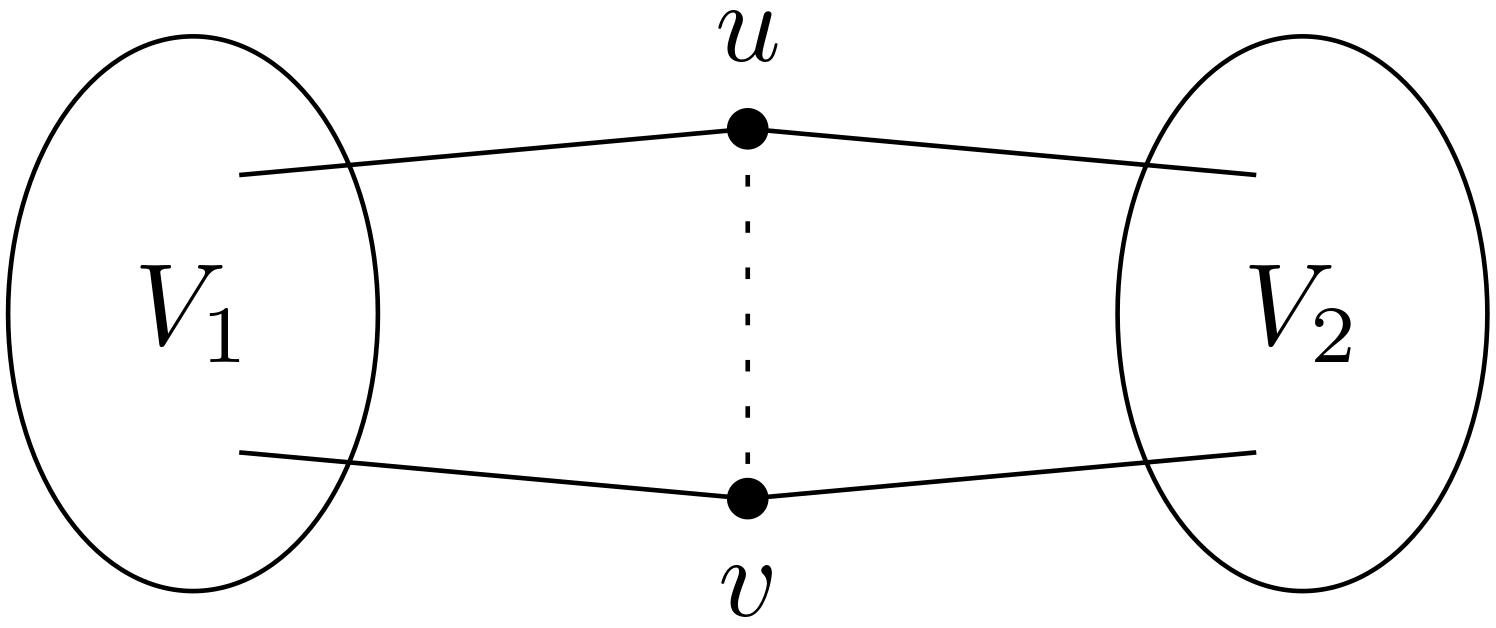}
\end{subfigure}%
\begin{subfigure}{.5\textwidth}
  \centering
  \includegraphics[width=.7\linewidth]{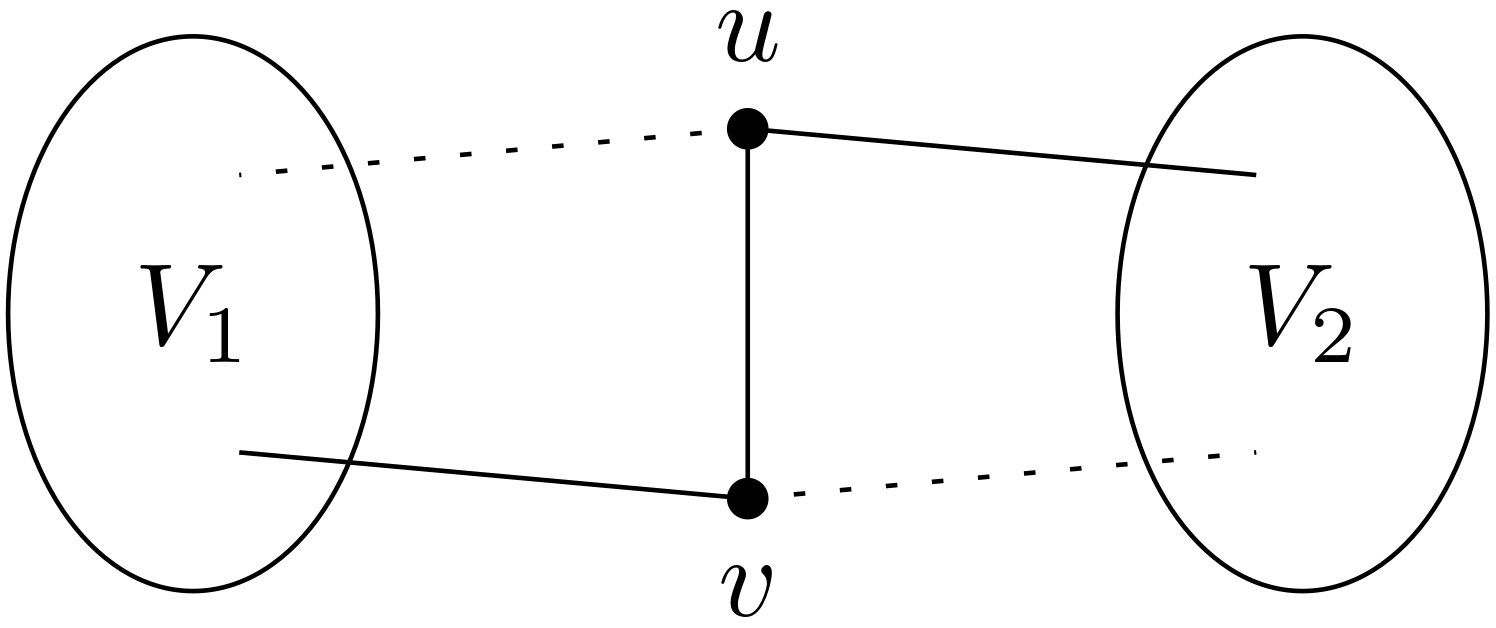}
\end{subfigure}
\caption{Illustration of an $S_0$ and $S_1$. Zero-edges are dotted, and unit edges are not dotted.}
\label{fig:S1-S2}
\end{figure}

\begin{figure}
\centering
\begin{subfigure}{.5\textwidth}
  \centering
  \includegraphics[width=.7\linewidth]{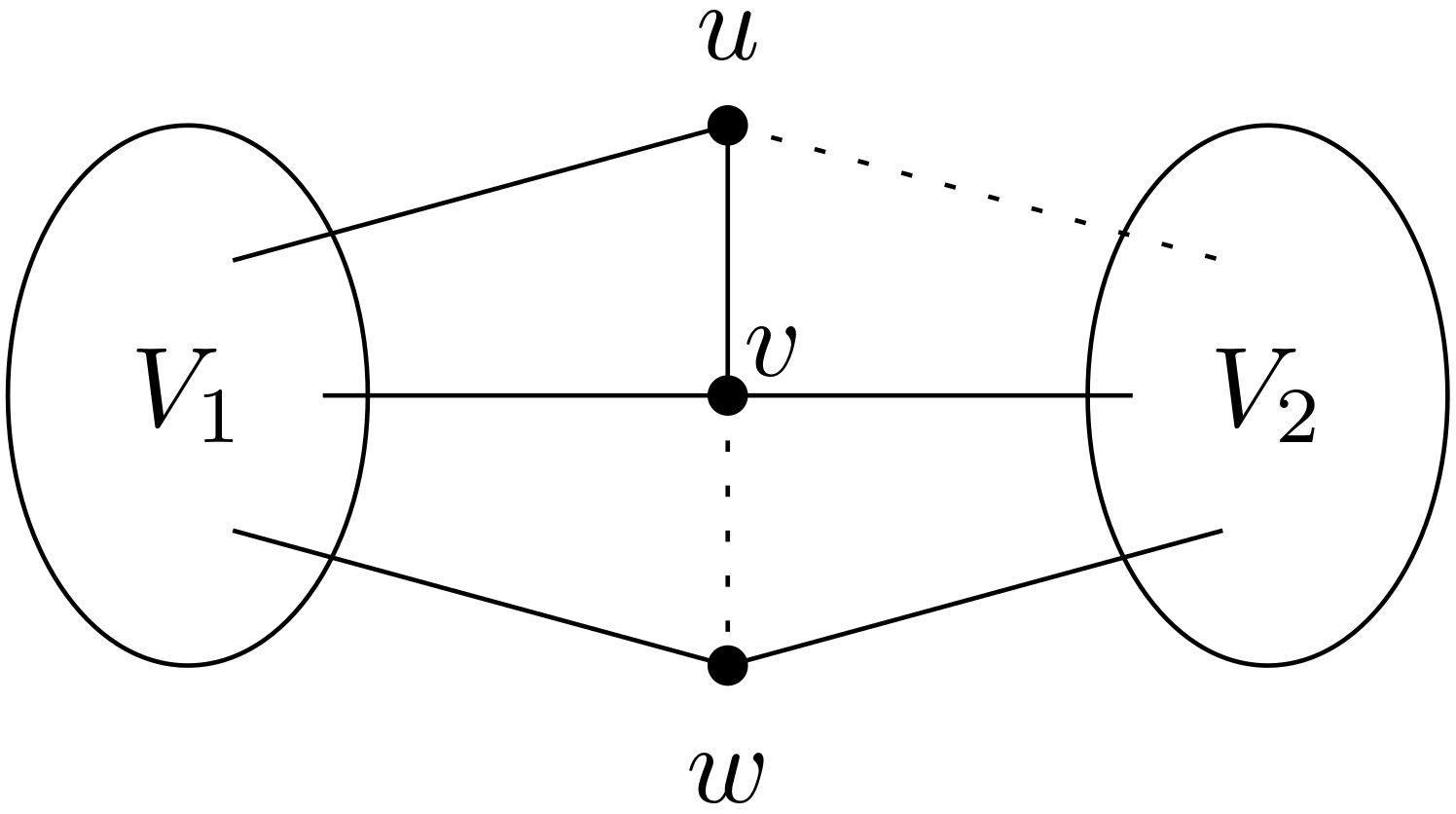}
\end{subfigure}%
\begin{subfigure}{.5\textwidth}
  \centering
  \includegraphics[width=.7\linewidth]{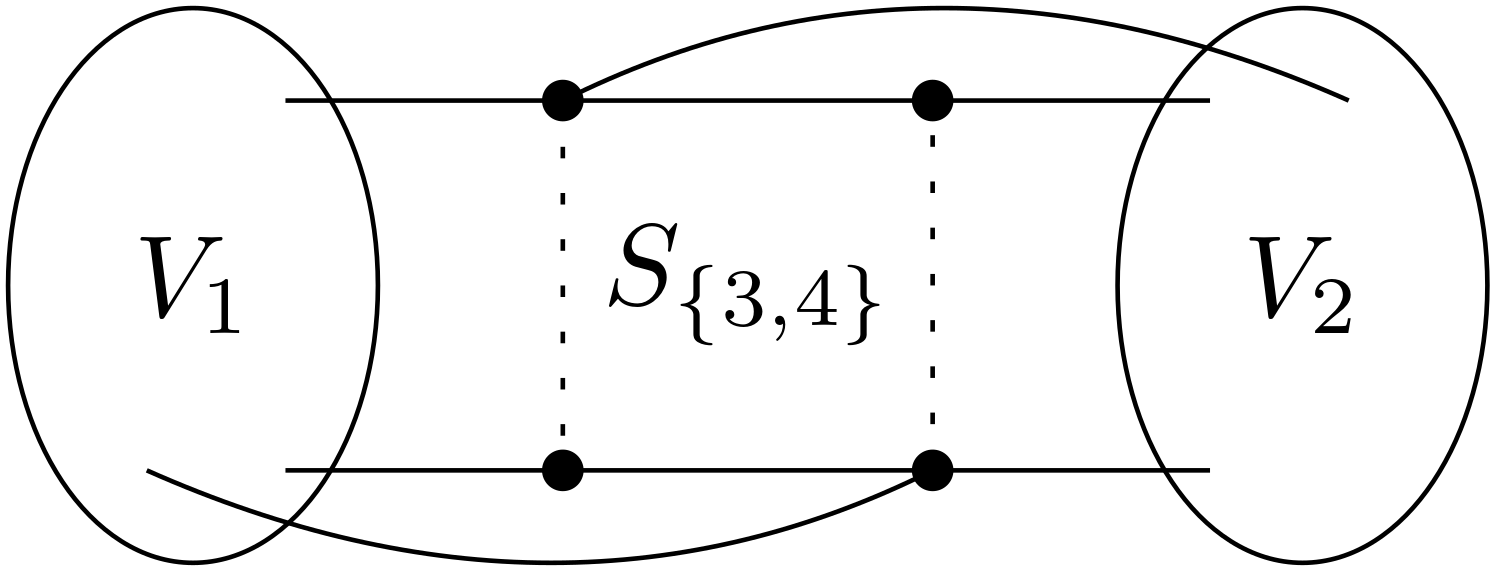}
\end{subfigure}
\caption{Illustration of an $S_2$ and $S_{\{3,4\}}$. Zero-edges are dotted, and unit edges are not dotted.}
\label{fig:S2-S34}
\end{figure}

\begin{definition}[$S_2$]
An $S_2$ is a path $P_3 = u, v, w$ on three vertices, containing exactly one unit-edge $e = \{u, v\}$, such that $G \setminus \{u, v, w \}$ consists of at least two connected components on vertex sets $V_1$ and $V_2$.
Furthermore, we have that $u$ is incident to a zero-edge connecting $u$ to some vertex of $V_2$, and $\opt(G[V(G) \setminus V_2] / \{u,v,w\}) \geq 3$ and $\opt(G[V(G) \setminus V_1] / \{u,v,w\}) \geq 4$.
\end{definition}

\begin{definition}[$S_{\{3, 4\}}$]
Given a simple graph $G$, an $S_{\{3, 4\}}$ of $G$ is a 2-vertex connected subgraph $C$ of $G$ with $|V(C)| \in \{3, 4\}$ that has a spanning cycle of cost two such that $G \setminus V(C)$ has at least 2 connected components on the vertex sets $V_1$ and $V_2$, and the cut $\delta(V(C))$ has no zero-edges; moreover, $\opt(G[V(G) \setminus V_2 / C) \geq 4$ and $\opt(G[V(G) \setminus V_1 / C) \geq 3$. 
\end{definition}

\begin{definition}[$S_3, S_4, S_5, S_6$]
Given a simple graph $G$, an $S_{k}$, $k \in \{3, 4, 5, 6\}$ of $G$ is a 2-vertex connected subgraph $C$ of $G$ with $|V(C)| = k$ that has a spanning cycle of cost 3 such that $G \setminus V(C)$ has at least 3 connected components on vertex sets $V_1$, $V_2$, and $V_3$, and the cut $\delta(V(C))$ has no zero-edges; moreover, $\opt(G[V(G) \setminus V_i / C) \geq 4$, $i \in \{1, 2, 3\}$. 
\end{definition}

\begin{definition}[$S_3', S_4', S_5', S_6'$]
Given a simple graph $G$, an $S_{k}'$, $k \in \{3, 4, 5, 6\}$ of $G$ is a 2-vertex connected subgraph $C$ of $G$ with $|V(C)| = k$ that has a spanning cycle of cost 3 such that $G \setminus V(C)$ has at least 2 connected components, and the cut $\delta(V(C))$ has no zero-edges; moreover, there is a vertex $u_1 \in V(C)$ that is only adjacent to vertices in $V(C)$, and $\opt(G[V(G) \setminus V_i / C) \geq 4$, $i \in \{1, 2\}$. 
\end{definition}

\begin{definition}[contractible subgraphs]
Let $\alpha \geq 1$ and $t \geq 2$ be fixed constants. Given a $2$-edge-connected graph $G$, a collection of vertex-disjoint $2$-edge-connected subgraphs $H_1, H_2, ..., H_k$ of $G$ is called \emph{$(\alpha, t , k)$-contractible} if $2 \leq |V(H_i)| \leq t$ for every $i \in [k]$ and every $\ecss$ of $G$ contains at least $\frac{1}{\alpha} ||\bigcup_{i \in [k]} E(H_i)||$ unit-edges from  $\bigcup_{i \in [k]} E(G[V(H_i)])$.
\end{definition}

\begin{definition}[forbidden configurations, types, structured graphs]
Given a graph, a {\bf forbidden configuration} is a cut vertex, a parallel edge, a contractible subgraph, an $S_0$, an $S_1$, an $S_2$
, an $S_{\{3,4\}}$, an $S_i$, or an $S'_i$ for $i\in\{3,4,5,6\}$.
We refer to cut vertex, parallel edge, contractible subgraph, $S_0$, $S_1$, $S_2$, $S_i$, $S'_i$ for $i\in\{3,4,5,6\}$ as the {\bf types} of forbidden configurations.
A graph is  {\bf structured} if it is a MAP instance with at least 20 vertices and does not contain any forbidden configurations.
\end{definition}

Let $\alg$ be an algorithm that outputs for each structured graph $G$ a $2$-edge-connected spanning subgraph of $G$. Then, our reduction is given by the following Algorithm~\ref{alg:reduce}.
In this algorithm, $\divideT_T$ and $\combine_T$ are subroutines that are defined and proven in the lemmas below.

The main results of this section are the following two lemmas.

\lempolytimesubs*

\lemdivideCombine*

In the next section, we show some useful lemmas that help us prove the above statements.

\subsection{Useful lemmas}

\begin{lemma}\label{lem:polytimecheckT}
Given a MAP instance $G$ and a type $T$, one can check in polynomial time if $G$ contains a forbidden configuration of type $T$.
\end{lemma}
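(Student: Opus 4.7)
The plan is to handle each type $T$ separately and show that detection reduces to enumerating ``cores'' of constant size and then performing a constant number of polynomial-time tests on each. Cut vertex and parallel edges are immediate from standard connectivity tests and inspection of the edge list. For $S_0$, $S_1$, and $S_2$, I would enumerate the distinguished zero-edge, unit-edge, or $3$-vertex path (at most $O(|V(G)|^3)$ choices), compute the connected components of $G$ after removing the endpoints of the core, and verify the separator condition together with the stated $\opt$ lower bounds. For $S_{\{3,4\}}$, $S_k$, and $S'_k$ with $k \in \{3,4,5,6\}$, I would enumerate the at most $O(|V(G)|^6)$ vertex subsets of size at most $6$; each such subset fully determines the candidate subgraph $C$, after which one tests whether $C$ admits the required spanning cycle of weight $2$ or $3$, whether $G \setminus V(C)$ has the required number of connected components on appropriate vertex sets $V_i$, whether the cut $\delta(V(C))$ is free of zero-edges, and the relevant $\opt$ inequalities.

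The main obstacle is verifying inequalities of the form $\opt(G') \geq m$ with $m \in \{3, 4\}$, since computing $\opt$ is NP-hard in general. I would overcome this via the following equivalence: $\opt(G') < m$ holds if and only if there exists a subset $U$ of the unit-edges of $G'$ with $|U| \leq m - 1$ such that the spanning subgraph $(V(G'), U \cup E_0(G'))$ is $2$-edge-connected, where $E_0(G')$ denotes the set of zero-edges of $G'$. The forward direction is immediate by taking $U$ to be the unit-edges of any witnessing $\ecss$; the reverse direction uses that the witnessing $\ecss$ is a subgraph of $(V(G'), U \cup E_0(G'))$, and a supergraph of a $2$-edge-connected spanning subgraph is also $2$-edge-connected. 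Since $m$ is a constant, there are only $O(|E(G')|^{m-1})$ candidate sets $U$, and each $2$-edge-connectivity test runs in polynomial time, so $\opt(G') \geq m$ is decidable in polynomial time.

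The same idea handles the contractible subgraph type. I would enumerate all vertex sets $T \subseteq V(G)$ with $2 \leq |T| \leq 12$, giving $O(|V(G)|^{12})$ candidates, and for each $T$ enumerate the constantly many $2$-edge-connected subgraphs $H$ on $T$. For each candidate $H$, certifying contractibility reduces to showing that there is no subset $U$ of the unit-edges of $G[V(H)]$ with $|U| < \tfrac{8}{13}\,||H||$ such that the graph obtained from $G$ by deleting the unit-edges of $G[V(H)] \setminus U$ remains $2$-edge-connected; the same monotonicity argument yields the equivalence with the original definition of contractibility. Since $|V(H)| \leq 12$, the number of candidate sets $U$ is bounded by a constant, and each $2$-edge-connectivity test is polynomial, so the entire check runs in polynomial time. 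Combining all cases proves the lemma.
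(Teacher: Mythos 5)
Your proof is correct and follows the same broad strategy as the paper's own proof (enumerate constant-size cores, then test the defining conditions; for contractibility, enumerate candidate $H$ and the possible restrictions of a witnessing $\ecss$ to $G[V(H)]$, relying on the fact that adding edges preserves $2$-edge-connectivity). You go further than the paper in one place, though. The paper's proof simply notes that each configuration has a constant-size core and asserts it can therefore be checked exhaustively; it does not address how to decide the conditions of the form $\opt(G') \geq m$ appearing in the definitions of $S_1$, $S_2$, $S_{\{3,4\}}$, $S_k$, and $S'_k$, even though computing $\opt$ exactly is NP-hard and the auxiliary graphs $G'$ have $\Theta(|V(G)|)$ vertices. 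Your observation that $\opt(G') < m$ for constant $m$ is equivalent to the existence of a unit-edge set $U$ with $|U| \leq m-1$ for which $(V(G'), U \cup E_0(G'))$ is $2$-edge-connected — and is hence decidable by enumerating $O(|E(G')|^{m-1})$ candidates — is exactly the argument needed to close this gap. (You slightly swap the labels on the two directions of the equivalence, but the underlying supergraph-monotonicity argument is sound.) Your contractibility check, enumerating subsets $U$ of unit-edges of $G[V(H)]$ with $|U| < \frac{1}{\alpha}||E(H)||$, is equivalent to the paper's enumeration over spanning subgraphs $H'$ of $G[V(H)]$; yours is marginally cleaner since it never discards zero-edges.
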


\begin{proof}
Note that each of the forbidden configurations is a subgraph on $O(1)$ vertices.
Hence, except for contractible subgraphs, we can exhaustively check for such configurations in our graph, leading to a polynomial time algorithm.

In order to find small contractible subgraphs in polynomial time, we do the following.
Note that contractible subgraphs are defined to be $(\alpha, 12, 1)$-contractable.
We iterate over all $2$-edge-connected subgraphs $H$ of $G$ such that $|V(H)|\leq 12$ in polynomial time (as $G$ is simple) and check whether $H$ is $(\alpha, 12, 1)$-contractible in polynomial time as follows.

Fix a $2$-edge-connected subgraph $H$. 
Now we iterate over each spanning subgraph $H'$ of $G[V(H)]$ (polynomially many possibilities as $G$ is simple and $|V(H)| \leq 12$) and perform the following check whether: 
\begin{enumerate}
    \item[$(i)$] $(V(G), E(H') \cup (E(G)\setminus E(G[V(H)])))$ is $2$-edge-connected, and
    \item[$(ii)$] $|E(H')| < \frac 1 \alpha |E(H)|$.
\end{enumerate}
If we can find an $H'$ for which the above condition holds, then $H$ is not $(\alpha,12, 1)$-contractible (as $(V(G), E(H') \cup (E(G)\setminus E(G[V(H)])))$ is a witness). 
Otherwise, it is $(\alpha, 12, 1)$-contractible (since there has to be a witness for non-contractibility if $H$ is not contractible).
Clearly, the entire check takes only polynomial time.
\end{proof}

In the following lemmas we define, for each type $T$ of a forbidden configuration, individual routines $\divideT_T$ and $\combine_T$ and show the properties states in Lemma~\ref{lem:divideCombine}.

\begin{lemma}\label{lem:divideCombine:contractibleSubgraphs}
Let $G$ be a graph that has a set of vertex-disjoint 2-edge-connected subgraphs $H_1, ..., H_k$ of $G$
that form a contractible subgraph.
Then there is a MAP instance $G_1$ such that
\begin{itemize}
    \item $G_1$ is 2EC, and $|V(G_1)| < |V(G)|$,
    \item $|E(G_1)| < |E(G)|$, and 
    \item if there is a 2EC subgraph $G_1^* \leq \max( \alpha \cdot \opt(G_1) -2, \opt(G_1))$, then 
      $G_1^* \cup \bigcup_{i \in [k]} E(G[V(H_i)])$ is a feasible solution to the MAP instance on $G$ and 
    $||G_1^* \cup \bigcup_{i \in [k]} E(G[V(H_i)]) || \leq \alpha \cdot \opt(G) -2$.
\end{itemize}
\end{lemma}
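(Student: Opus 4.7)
The plan is to define $G_1 := G/\{V(H_1), \ldots, V(H_k)\}$, the graph obtained from $G$ by contracting each $V(H_i)$ to a single vertex (discarding self-loops and merging any parallel edges by retaining the minimum-weight copy). I would first verify that $G_1$ is a valid MAP instance: $2$-edge-connectivity is inherited from $G$ because each $H_i$ is itself $2$-edge-connected, so contracting $V(H_i)$ preserves $2$-edge-connectivity; the zero-edges still form a matching since every vertex of $G$ is incident to at most one zero-edge, and any zero-edge internal to some $V(H_j)$ becomes a self-loop and is discarded. The strict size decrease is immediate: $|V(G_1)| < |V(G)|$ because every $|V(H_i)| \geq 2$, and $|E(G_1)| < |E(G)|$ because each 2EC subgraph on at least two vertices has at least two internal edges, which vanish as self-loops.

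The heart of the argument is the inequality
\[
\opt(G_1) \;\leq\; \opt(G) \;-\; \tfrac{1}{\alpha}\,||\textstyle\bigcup_{i \in [k]} E(H_i)||.
\]
To prove it, I would take $\OPT(G)$, drop every edge lying inside some $V(H_i)$, and view the remainder in $G_1$. The projection is a $\ecss$ of $G_1$, since contracting vertex sets that induce $2$-edge-connected subgraphs in any $\ecss$ of $G$ yields a $\ecss$ of $G_1$. By the $(\alpha,t,k)$-contractibility hypothesis, $\OPT(G)$ contains at least $\tfrac{1}{\alpha}\,||\bigcup_{i} E(H_i)||$ unit-edges from $\bigcup_{i} E(G[V(H_i)])$, so the projection has weight at most $\opt(G) - \tfrac{1}{\alpha}\,||\bigcup_{i} E(H_i)||$, upper-bounding $\opt(G_1)$.

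Now suppose $G_1^*$ is a $\ecss$ of $G_1$ with $||G_1^*|| \leq \max(\alpha\,\opt(G_1)-2,\opt(G_1))$. I would output the edge set $S := E(\hat{G_1^*}) \cup \bigcup_{i\in[k]} E(H_i)$ in $G$. For feasibility, $\hat{G_1^*}$ provides $2$-edge-connectivity across the contracted structure, while each $H_i$ provides internal $2$-edge-connectivity on $V(H_i)$, so $S$ is a $\ecss$ of $G$ spanning $V(G)$. The two edge sets are disjoint (no edge of $\hat{G_1^*}$ lies inside any $V(H_i)$, since those are precisely the self-loops discarded during contraction), giving the clean identity $||S|| = ||G_1^*|| + ||\bigcup_{i} E(H_i)||$. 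In the principal case $||G_1^*|| \leq \alpha\,\opt(G_1)-2$, substituting the central inequality yields
\[
||S|| \;\leq\; \alpha\Bigl(\opt(G) - \tfrac{1}{\alpha}\,||\textstyle\bigcup_{i} E(H_i)||\Bigr) - 2 + ||\textstyle\bigcup_{i} E(H_i)|| \;=\; \alpha\,\opt(G) - 2,
\]
which is exactly the claimed bound. The borderline case $||G_1^*|| \leq \opt(G_1)$ (so that $(\alpha-1)\opt(G_1)<2$ and $G_1$ is a tiny instance) would be handled by a direct base-case analysis.

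The main obstacle, I expect, is the careful bookkeeping required to make $G_1$ a genuine MAP instance. In particular, contractions can, in principle, create a vertex incident to multiple surviving zero-edges (when two different $V(H_i)$'s each contain the endpoint of a zero-edge going to a common external component), and any cleanup must not perturb either the tight identity $||S|| = ||G_1^*|| + ||\bigcup_{i} E(H_i)||$ or the projection argument. Once the setup is clean, the computation is short: the factor $\tfrac{1}{\alpha}$ built into the contractibility definition is calibrated precisely to cancel the outer factor $\alpha$ coming from the recursive approximation guarantee, so the inequality is tight by design.
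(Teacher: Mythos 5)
Your proof follows the same approach as the paper: define $G_1 = G/\{V(H_1),\ldots,V(H_k)\}$, invoke the contractibility definition to get $\opt(G_1) \leq \opt(G) - \tfrac{1}{\alpha}\sum_i||E(H_i)||$, and combine with $||G_1^*|| \leq \alpha\,\opt(G_1)-2$ to conclude. You are more explicit on one point (the projection of $\OPT(G)$ onto $G_1$, which the paper merely asserts "by the definition of contractible subgraphs") and less explicit on another: where you defer the borderline case $||G_1^*||\leq\opt(G_1)>\alpha\,\opt(G_1)-2$ to an unspecified "base-case analysis", the paper discharges it by noting that $|V(G)|\geq 20$ and $|V(H_i)|\leq 12$ force $G_1$ to be large enough that $\alpha\,\opt(G_1)-2\geq\opt(G_1)$, so only the principal case can occur. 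One small quibble with your parenthetical worry: if "a common external component" means a common vertex $u$ outside all the $V(H_i)$'s, that situation cannot arise since $u$ would then carry two zero-edges in $G$ itself, violating the matching property; the genuine risk is a \emph{single} $V(H_i)$ with two distinct vertices each carrying an outgoing zero-edge (or a contracted node receiving inter-component zero-edges from two different $V(H_j)$'s), which the paper's proof also leaves unaddressed.
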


\begin{proof}
Let $G_1 = G / \{H_1, H_2, ..., H_k \}$, i.e.\ $G_1$ arises from $G$ by contracting each $H_i$, $i \in [k]$, to a single vertex.
Hence, the first two statements immediately follow.
Furthermore, since $V(G) \geq 20$ and $|V(H_i)| \leq 12$, we have that $G_1^* \leq \alpha \cdot \opt(G_1) -2$.
Furthermore, note that by the definition of contractible subgraphs we have that
\[ \opt(G) \geq \opt(G_1) + \frac{1}{\alpha} \sum_{i \in [k]} ||E(H_i)|| \ . \]
Hence, we have that
\begin{align*}
    ||G_1^* \cup \bigcup_{i \in [k]} E(G[V(H_i)]) || \leq \alpha \cdot \opt(G) -2 \ .
\end{align*}
This finishes the proof.
\end{proof}

\begin{lemma}\label{lem:divideCombine:cutvertices}
Let $G$ be a graph that has a cut vertex $u$. Then there are two MAP instances $H_1$ and $H_2$ such that
\begin{itemize}
    \item $H_i$ is 2EC, and $|V(H_i)| < |V(G)|$ for $i \in \{1, 2\}$,
    \item $|V(H_1)| + |V(H_2)| \leq |V(G)| + 1$, and $|E(H_1)| + |E(H_2)| \leq |E(G)|$, and
    \item if there is a 2EC subgraph $||H_i^*|| \leq \max(\alpha \cdot \opt(H_i) -2, \opt(H_i))$ for $i \in \{1, 2\}$, then $H_1^* \cup H_2^*$ is a feasible solution to the MAP instance on $G$ and  $||H_1^* \cup H_2^* || \leq \max( \alpha \cdot \opt(G) -2, \opt(G))$.
\end{itemize}
\end{lemma}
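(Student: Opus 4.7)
The plan is to split $G$ at the cut vertex $u$, recurse on the two pieces (with $u$ duplicated into each), and glue the returned solutions back together at $u$. First I would let $C_1, \ldots, C_k$ (with $k \geq 2$) be the connected components of $G - u$, pick any nonempty partition of the indices into $I_1, I_2$ (e.g., $I_1 = \{1\}$ and $I_2 = \{2, \ldots, k\}$), and set $V_j := \{u\} \cup \bigcup_{i \in I_j} V(C_i)$ and $H_j := G[V_j]$ for $j \in \{1,2\}$. The size bounds are then immediate: $V_1$ and $V_2$ intersect only in $u$ and both are strictly smaller than $V(G)$, so $|V(H_1)| + |V(H_2)| = |V(G)| + 1$ and $|V(H_j)| < |V(G)|$; every edge of $G$ lies in exactly one $H_j$ (each edge either has both endpoints in a single component of $G - u$, or is incident to $u$ and enters only one side), giving $|E(H_1)| + |E(H_2)| = |E(G)|$.

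The structural observation I would rely on throughout is that any simple path in $G$ between two vertices of $V_j$ stays inside $V_j$, since exiting $V_j \setminus \{u\}$ forces the path through $u$ and a simple path visits $u$ at most once. This immediately implies each $H_j$ is 2-edge-connected: for every $e \in E(H_j)$, a simple path in $G - e$ between the endpoints of $e$ stays in $V_j$ and therefore uses only edges of $H_j$. The zero-edges of $H_j$ are inherited from $G$ and still form a matching, so $H_j$ is a valid MAP instance. Defining $\combine(G, H_1^*, H_2^*) := (V(G), E(H_1^*) \cup E(H_2^*))$, feasibility of the combined solution follows from the same observation combined with the fact that $H_1^*$ and $H_2^*$ share the vertex $u$: removing any $e \in E(H_j^*)$ leaves $H_j^* - e$ connected while $H_{3-j}^*$ is untouched, and the two pieces glue through $u$, yielding a $\ecss$ of $G$.

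For the weight bound I would first show $\opt(G) \geq \opt(H_1) + \opt(H_2)$ by the symmetric argument: the restriction $\OPT(G)[V_j]$ of an optimal $\ecss$ of $G$ to $V_j$ is itself a $\ecss$ of $H_j$ via the same ``paths stay inside $V_j$'' observation, and these two restrictions edge-partition $E(\OPT(G))$. Combined with $||H_1^* \cup H_2^*|| = ||H_1^*|| + ||H_2^*||$ (edge-disjointness), it remains to bound $\max(\alpha\opt(H_1) - 2, \opt(H_1)) + \max(\alpha\opt(H_2) - 2, \opt(H_2))$ by expanding it as a maximum over four sums: any sum containing at least one $\alpha$-branch is at most $\alpha(\opt(H_1) + \opt(H_2)) - 2 \leq \alpha\opt(G) - 2$, and the pure-$\opt$ sum is at most $\opt(G)$. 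The only step requiring real care is the two 2-edge-connectivity verifications (for each $H_j$ and for $\combine$'s output); once the ``paths stay inside $V_j$'' lemma is in hand, the sizes, the edge decomposition, and the weight inequality are straightforward bookkeeping. Polynomial-time computability of both subroutines is immediate, since computing the components of $G - u$ takes linear time.
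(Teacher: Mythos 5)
Your proposal is correct and takes essentially the same approach as the paper: split at the cut vertex $u$ into two pieces that share only $u$, observe that $\opt(G) \geq \opt(H_1)+\opt(H_2)$ (the paper states equality, but only $\geq$ is needed for the bound), and combine by taking the edge-disjoint union. The paper's proof is much terser, dismissing the structural claims as trivial; your ``simple paths between $V_j$-vertices stay inside $V_j$'' lemma and the four-way case split on the maxima are exactly the details being left implicit.
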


\begin{proof}
Recap the definition of $V_1$ and $V_2$ from the definition of a cut vertex.
We define $H_1 = G \setminus V_2$ and $H_2 = G \setminus V_1$.
The first two statements are trivial.
For the last statement, observe that $\opt(G) = \opt(H_1) + \opt(H_2)$.
Hence, $||H_1^* \cup H_2^* || \leq \max( \alpha \cdot \opt(G) -2, \opt(G))$.
\end{proof}

\begin{lemma}\label{lem:divideCombine:paralleledges}
Let $G$ be a graph that has no cut vertices and let $e \in E(G)$ be a parallel edge. Then $G - \{e\}$ is 2EC and $\opt(G) = \opt(G-e)$.
\end{lemma}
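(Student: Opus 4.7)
Let $e = uv$ be the given parallel edge, and let $e' = uv$ denote a second parallel copy. The proof naturally splits into showing (a) $G - \{e\}$ remains $2$-edge-connected, and (b) the optimal MAP value is unchanged. The key structural fact I will exploit is simply that $e'$ still provides a direct $u$-$v$ link after $e$ is removed, so any trouble in $G-\{e\}$ has to come from a second, genuinely different, 2-edge cut of $G$.

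For (a), first I would note $G$ is 2-edge-connected (as a MAP instance), so $G-\{e\}$ is connected. Suppose for contradiction that some edge $f$ is a bridge of $G-\{e\}$; equivalently, $\{e,f\}$ is a 2-edge cut of $G$ separating vertex sets $A \ni u$ and $B \ni v$. I would split into two cases. If $f = e'$, then both parallel copies lie in the cut, so inside $G$ every $A$-$B$ edge is incident to $u$ and to $v$; removing $u$ from $G$ kills all edges between $A\setminus\{u\}$ and $B$, making $u$ a cut vertex unless $A = \{u\}$, and symmetrically for $v$. Since $|V(G)| \geq 20$, at least one side has an additional vertex, producing the forbidden cut vertex. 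If $f \neq e'$, then $e'$ is still present in $G - \{e,f\}$ and connects $u$ and $v$, so $u,v$ lie in the same component of $G - \{e,f\}$; but then adding back $e$ (which is also $uv$) does not merge any components, so $G - \{f\}$ remains disconnected, contradicting the $2$-edge-connectivity of $G$. Both cases fail, so $G-\{e\}$ has no bridge and is $2$-edge-connected.

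For (b), the inequality $\opt(G) \leq \opt(G-\{e\})$ is immediate since any $\ecss$ of $G-\{e\}$ is also a $\ecss$ of $G$. For the reverse direction, take an optimal $F^*$ of $G$; if $e \notin F^*$ we are done. Otherwise, consider $F^{**} = (F^* \setminus \{e\}) \cup \{e'\}$: since $e$ and $e'$ have the same endpoints, $F^{**}$ is again $2$-edge-connected. To ensure the weight does not increase, I would observe that the MAP matching condition forces at most one of $\{e,e'\}$ to be a zero-edge; hence we may choose the deleted edge $e$ to satisfy $||e|| \geq ||e'||$, and then $||F^{**}|| \leq ||F^*|| = \opt(G)$, yielding $\opt(G-\{e\}) \leq \opt(G)$.

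The only non-routine step is the cut-vertex argument in Case $f=e'$ of part (a): one has to notice that a $2$-edge cut consisting of two parallel copies forces one of the endpoints to be a cut vertex once the graph has at least three vertices. Everything else is a direct rewriting or a one-edge swap, so I expect this to be the main, and essentially only, point that needs care.
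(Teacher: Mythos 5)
Your proof of the first claim, that $G-\{e\}$ is $2$-edge-connected, is correct and in fact more explicit than the paper's, which only establishes this implicitly by exhibiting a feasible subgraph of $G-\{e\}$; the structural idea (a bridge $f$ of $G-\{e\}$ gives a $2$-edge cut $\{e,f\}$ of $G$, split on $f=e'$ versus $f\neq e'$, and use $|V(G)|\ge 20$ plus the absence of cut vertices) is sound.

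The second part has a genuine gap. The line ``consider $F^{**} = (F^* \setminus \{e\}) \cup \{e'\}$: since $e$ and $e'$ have the same endpoints, $F^{**}$ is again $2$-edge-connected'' is only justified when $e' \notin F^*$, where it is a bona fide parallel-edge swap. If $F^*$ contains \emph{both} copies $e$ and $e'$, then $F^{**}=F^*\setminus\{e\}$, and this need not be $2$-edge-connected: if $\{e,e'\}$ is a $2$-edge cut of $F^*$, then $e'$ becomes a bridge of $F^{**}$. And indeed, when $e$ is a unit edge and $F^*$ is optimal, $F^*\setminus\{e\}$ \emph{cannot} be $2$-edge-connected, since otherwise $e$ was redundant and $F^*$ was not minimum. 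So the only case in which the swap trick fails is exactly the case that can occur. The paper's proof is built around precisely this situation: it assumes $\OPT(G)$ contains both $e$ and $e'$, observes that $e'$ becomes the unique bridge of $\OPT(G)\setminus\{e\}$, and then uses the no-cut-vertex property to find a replacement edge $f$ crossing the same cut, yielding $\OPT(G)\setminus\{e\}\cup\{f\}$ of no greater weight. Your part (a) already gives you the needed tool: since you showed $G-\{e\}$ is $2$-edge-connected, $G-\{e,e'\}$ is connected, so there is a crossing edge $g\notin\{e,e'\}$; because $||e||\ge||e'||$ forces $e$ to be a unit edge, $||g||\le||e||$, and $F^*\setminus\{e\}\cup\{g\}$ is a $\ecss$ of $G-\{e\}$ of weight at most $\opt(G)$. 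Adding this case would close the gap and essentially recover the paper's argument.
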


\begin{proof}
Assume that $G$ has no cut vertices and $\opt(G)$ contains two parallel edges $e, e' \in E(G)$.
Clearly, $\opt(G') - \{e \}$ is not two-edge connected and the only bridge in $\opt(G') - \{e \}$ is $e$, where $e = \{u v\}$.
Since $G$ does not have cut-vertices, there must be an edge $f \in E(G)$ not incident to $u$ such that $\opt(G') - \{e \} + \{ f \}$ is a 2-edge-connected spanning subgraph.
Hence, there is also an optimal solution that does not contain both parallel edges $e$ and $e'$.
\end{proof}

\begin{lemma}\label{lem:divideCombine:S0}
Let $G$ be a graph that has no cut vertices, no parallel edges and that has an $S_0$ $uv$.
Then there are two MAP instances $H_1$ and $H_2$ such that
\begin{itemize}
    \item $H_i$ is 2EC and $|V(H_i)| < |V(G)|$ for $i \in \{1, 2\}$,
    \item $|V(H_1)| + |V(H_2)| \leq |V(G)| + 1$, and $|E(H_1)| + |E(H_2)| \leq |E(G)|$, and
    \item if there is a 2EC subgraph $H_i^*$ of $H_i$ such that $||H^*_i|| \leq \max( \alpha \cdot \opt(H_i) -2, \opt(H_i))$ for $i \in \{1, 2\}$, then there is an edge-set $F$ with $|F| \leq 2$ such that $E(H^*_1) \cup E(H^*_2) \cup F$ is a feasible solution to the MAP instance on $G$ and $||E(H^*_1) \cup E(H^*_2) \cup F|| \leq \max( \alpha \cdot \opt(G) -2, \opt(G))$.
\end{itemize}
\end{lemma}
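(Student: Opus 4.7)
My plan is to split the MAP instance along the $S_0$-separator $\{u,v\}$ by contracting the pair $\{u,v\}$ on each side, solve the two subproblems recursively, and recombine by adding back the zero-edge $uv$ together with at most one additional edge of $G$ to restore global 2-edge-connectivity. With $V_1, V_2$ the two components of $G-\{u,v\}$, I define $H_i := G[V_i\cup\{u,v\}]/\{u,v\}$, contracting $\{u,v\}$ to a new vertex $w_i$ and dropping the resulting $uv$ self-loop. Then $|V(H_i)| = |V_i|+1 < |V(G)|$, $|V(H_1)|+|V(H_2)| = |V(G)|$, and $|E(H_1)|+|E(H_2)| = |E(G)|-1$, meeting the stated size bounds. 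Since $uv$ is the only zero-edge incident to $u$ or $v$ (matching condition), the zero-edges of $H_i$ lie inside $V_i$ and still form a matching, so each $H_i$ is a MAP instance. To see that $H_i$ is 2-edge-connected, I argue that a hypothetical bridge of $H_i$ would, by lifting and using that every $V_1$--$V_2$ path in $G$ passes through $\{u,v\}$, be the unique $G$-edge out of some non-trivial $B\subsetneq V_i$, and hence a bridge of $G$, contradicting 2-edge-connectivity of $G$ and the absence of cut vertices.

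For the weight inequality $\opt(H_1)+\opt(H_2)\leq \opt(G)$, I set $\OPT := \OPT(G)$ and $\OPT_i := \OPT\cap E(G[V_i\cup\{u,v\}])$, viewed in $H_i$ after contracting $\{u,v\}\mapsto w_i$. A standard cut-pullback argument (again using the $V_1$--$V_2$ separation) shows each $\OPT_i$ is 2-edge-connected in $H_i$, so $\opt(H_i)\leq ||\OPT_i||$. Every unit-edge of $\OPT$ lies in exactly one $\OPT_i$, and the only possibly shared edge is the zero-weight $uv$; summing delivers $\opt(H_1)+\opt(H_2)\leq\opt(G)$.

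To combine, I lift $H_i^*$ to $F_i\subseteq E(G)$ and set $F := \{uv\}$. For any cut of $G$ keeping $u$ and $v$ on the same side, $F_1\cup F_2$ inherits at least two crossing edges from its 2-edge-connectivity in $G/\{u,v\}$, so no augmentation is needed. Only cuts separating $u$ from $v$ can be deficient; a structural analysis of how these bad cuts partition $V_1$ and $V_2$ (consistently with the neighborhoods of $u$ and $v$ in $F_1, F_2$) shows that at most one additional $G$-edge $e$ — whose existence is forced by the 2-edge-connectivity of $G$ applied to the witnessing cut — is needed to patch all of them. This yields $|F|\leq 2$ and $||F||\leq 1$.

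The main obstacle is the weight arithmetic in boundary regimes. When both $\opt(H_i)\geq 2/(\alpha-1)$, the two $-2$ savings from the hypothesis on $||H_i^*||$ absorb $||F||\leq 1$ and give $||H_1^*||+||H_2^*||+||F||\leq \alpha\opt(G)-3\leq \alpha\opt(G)-2$. When some $\opt(H_i)$ is small, the matching constraint combined with the zero-edge $uv$ rules out $\opt(H_i)\leq 1$ whenever $|V(H_i)|\geq 2$, and in the minimal case $|V_i|=1$ the unique $y\in V_i$ forces $H_i$ to be a pair of parallel edges $uy, vy$ — both of which any 2-edge-connected $H_i^*$ must include in full — yielding a balanced endpoint count at $w_i$ and $||F||=0$. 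A short case analysis on the remaining small values $\opt(H_i)\in\{2,3\}$ then exploits the $(\alpha-1)\opt(H_i)$-slack against the at-most-unit $||F||$ to complete the bound $||E(H_1^*)\cup E(H_2^*)\cup F||\leq \max(\alpha\opt(G)-2,\opt(G))$.
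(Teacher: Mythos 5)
Your decomposition, the argument that each $H_i$ is 2EC, the observation $\opt(H_1)+\opt(H_2)\leq\opt(G)$, and the combining step ($F=\{uv\}$ plus at most one edge to kill the only possible bridge) all match the paper's proof. The gap is in the weight arithmetic for the ``small $\opt(H_i)$'' cases, which you leave as a hand-waved ``short case analysis.''

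Concretely, when $\opt(H_i)\in\{2,3\}$ and $\alpha=3/2$, the hypothesis gives only $||H_i^*||\leq\max(\alpha\opt(H_i)-2,\opt(H_i))=\opt(H_i)$ (no $-2$ savings at all). So if \emph{both} subproblems are small, say $\opt(H_1)=\opt(H_2)=2$, then $||H_1^*||+||H_2^*||+||F||$ can be as large as $2+2+1=5$, while $\max(\alpha\opt(G)-2,\opt(G))$ could be $\max(4,4)=4$ when $\opt(G)=\opt(H_1)+\opt(H_2)=4$. The ``$(\alpha-1)\opt(H_i)$-slack'' you invoke gives a total of $(\alpha-1)\cdot(2+2)=2<3$, which is not enough to absorb both the required $-2$ and the $+1$ from $F$. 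The same failure occurs for $(\opt(H_1),\opt(H_2))=(2,3)$. Your special treatment of $|V_i|=1$ does not cover these: $\opt(H_i)=2$ is possible with $|V_i|=2$ (a triangle on $w_i,z_1,z_2$ with $z_1z_2$ a zero-edge), and there your ``balanced endpoint'' trick giving $||F||=0$ is not available.

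The paper resolves this not by a case analysis on small $\opt(H_i)$ but by invoking the global assumption $|V(G)|>20$ inherited from the reduction framework (recall $\reducea$ only calls $\divideT$ when $|V(G)|>20$). Since $|V(H_1)|+|V(H_2)|=|V(G)|+1>21$ and any MAP instance $H$ satisfies $\opt(H)\geq\lceil|V(H)|/2\rceil$, the larger subproblem (WLOG $H_2$, ordered so $\opt(H_1)\leq\opt(H_2)$) has $\opt(H_2)\geq 4$. With $\opt(H_2)\geq 4$ the second $\max$ resolves to $\alpha\opt(H_2)-2$, and with $\opt(H_1)\geq 2$ (which you correctly observe) one gets $\max(\alpha\opt(H_1)-2,\opt(H_1))\leq\alpha\opt(H_1)-1$, and the arithmetic closes: $(\alpha\opt(H_1)-1)+(\alpha\opt(H_2)-2)+1\leq\alpha\opt(G)-2$. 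You should make this $|V(G)|>20$ hypothesis explicit and use it to rule out the $\opt(H_2)\leq 3$ regime rather than attempting a general small-value case analysis, which as written does not go through.
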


\begin{proof}
We prove the statements one by one, define the MAP instances $H_1$ and $H_2$ that are later used for Function $\dividea$, and explain how to obtain the edge set $F$ used in Function $\combinea$.
Recall that $|V(G)| > 20$.
We define $H_1 = G[V(G) \setminus V_2] / uv$ and $H_2 = G[V(G) \setminus V_1] / uv$ such that $\opt(H_1) \leq \opt(H_2)$.
Hence, the first two statements clearly follow.

First, observe that $\opt(H_1) \geq 2$ since $\{u, v\}$ is a zero-edge and therefore there are at least 2 unit edges incident to the contracted vertex $\{u, v\}$ in $\opt(H_1)$. 
Second, observe that $\opt(H_2) \geq 4$ since $\opt(H_1) \leq \opt(H_2)$ and $|V(G)| > 20$.
Third, observe that $\opt(G) \geq \opt(H_1) + \opt(H_2)$.

If $(V(G), \hat{E}(H_1^*) \cup \hat{E}(H_2^*) \cup \{u, v\})$ is 2EC, we simply output $(V(G), \hat{E}(H_1^*) \cup \hat{E}(H_2^*) \cup \{uv\}$ and observe that
\begin{align*}
    || \hat{E}(H_1^*) \cup \hat{E}(H_2^*) \cup \{u, v\} || \leq & \ \max(\alpha \cdot \opt(H_1) -2, \opt(H_1)) +  \max(\alpha \cdot \opt(H_2) -2, \opt(H_2))\\
     \leq & \ \alpha \cdot (  \opt(H_1) +  \opt(H_2)) - 2 \leq \alpha \cdot \opt(G) -2 \ ,
\end{align*}
where the second inequality holds since $\opt(H_2) \geq 4$.

Else, if $(V(G), \hat{E}(H_1^*) \cup \hat{E}(H_2^*) \cup \{u, v\})$ is not 2EC, observe that there is at most one cut edge and this edge is $\{u, v \}$.
In this case, both $\hat{E}(H_1^*)$ and $\hat{E}(H_2^*)$ are only connected to $u$ or $v$, respectively. 
Since there are no cut-vertices in $G$, there must be an edge $f \in E(G)$ that is connected to either $u$ or $v$ such that $(V(G), \hat{E}(H_1^*) \cup \hat{E}(H_2^*) \cup \{u, v \} \cup \{f\})$ is 2EC.
Then, we have
\begin{align*}
    || \hat{E}(H_1^*) \cup \hat{E}(H_2^*) \cup \{uv\} \cup \{f\} || \leq &  \max(\alpha \cdot \opt(H_1) -2, \opt(H_1)) \\
    & + \max(\alpha \cdot \opt(H_2) -2, \opt(H_2)) + 1 \\ 
    \leq & \ \max(\alpha \cdot \opt(H_1) -2, \opt(H_1)) +  \alpha \cdot \opt(H_2) - 2 + 1 \\
    \leq & \ \alpha \cdot \opt(H_1) - 1 + \alpha \cdot \opt(H_2) - 2 + 1 \leq \alpha \cdot \opt(G) -2 \ ,
\end{align*}
where the second inequality holds since $\opt(H_2) \geq 4$ and the third inequality holds since $\opt(H_1) \geq 2$.
This proves the lemma.
\end{proof}

\begin{lemma}\label{lem:divideCombine:S1}
Let $G$ be a graph that has no cut-vertices, no parallel edges and that has an $S_1$ $uv$. 
Then there are two MAP instances $H_1$ and $H_2$ such that
\begin{itemize}
    \item $H_i$ is 2EC and $|V(H_i)| < |V(G)|$ for $i \in \{1, 2\}$,
    \item $|V(H_1)| + |V(H_2)| \leq |V(G)| + 1$, and $|E(H_1)| + |E(H_2)| \leq |E(G)| + 1$, and
    \item if there is a 2EC subgraph $H_i^*$ of $H_i$ such that $||H^*_i|| \leq \max( \alpha \cdot \opt(H_i) -2, \opt(H_i))$ for $i \in \{1, 2\}$, then there is an edge-set $F$ with $|F| \leq 2$ such that $E(H^*_1) \cup E(H^*_2) \cup F$ is a feasible solution to the MAP instance on $G$ and $||E(H^*_1) \cup E(H^*_2) \cup F|| \leq \max( \alpha \cdot \opt(G) -2, \opt(G))$.
\end{itemize}
\end{lemma}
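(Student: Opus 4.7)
The plan is to mirror the construction used for the $S_0$ case (Lemma~\ref{lem:divideCombine:S0}). I would set
\[ H_1 := G[V(G) \setminus V_2] / \{u,v\} \quad \text{and} \quad H_2 := G[V(G) \setminus V_1] / \{u,v\}, \]
and observe that both graphs arise as contractions of $G$ itself (namely $G/(V_2 \cup \{u,v\})$ and $G/(V_1 \cup \{u,v\})$, using that by the $S_1$ property every $V_1$--$V_2$ path in $G$ passes through $\{u,v\}$), so both are $2$-edge-connected. The counts are immediate: $|V(H_1)|+|V(H_2)| = (|V_1|+1)+(|V_2|+1) = |V(G)|$, and partitioning $E(G)$ according to the two sides of the separator gives $|E(H_1)|+|E(H_2)| = |E(G)|-1$ (the edge $uv$ becomes a self-loop on both sides and is discarded), both comfortably within the stated bounds.

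For the combining step, I would set $F$ to always contain the unit edge $uv$ together with at most one additional edge. Concretely, I first check whether $(V(G), \hat{E}(H_1^*) \cup \hat{E}(H_2^*) \cup \{uv\})$ is already $2$-edge-connected; if so, take $F=\{uv\}$. Otherwise, the only possible bridge in this graph is $uv$ itself (all other edges lie inside a $2$-edge-connected piece re-assembled via $uv$), and the no-cut-vertex assumption on $G$ supplies a single additional edge $f$ incident to $u$ or $v$ whose addition restores $2$-edge-connectivity, so $|F|\leq 2$.

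For the weight bound I would first establish $\opt(G) \geq \opt(H_1) + \opt(H_2) + [uv \in E(\OPT(G))]$ by partitioning $E(\OPT(G))$ along the two sides of the separator (the edge $uv$ contributes to neither contracted side, and $\OPT(G)$ restricted to each side and contracted at $\{u,v\}$ is a feasible $\ecss$ of $H_i$). Combined with $\opt(H_1)\geq 3$ and $\opt(H_2)\geq 4$ from conditions (ii) and (iii), and the hypothesis $||H_i^*||\leq \max(\alpha\cdot\opt(H_i)-2,\opt(H_i))$, a short case analysis on which branch of the max is attained for each $H_i$ then yields
\[ ||\hat{E}(H_1^*)|| + ||\hat{E}(H_2^*)|| + |F| \leq \max(\alpha\cdot \opt(G) - 2, \opt(G)), \]
the critical slack coming from the fact that for $\alpha\geq 3/2$ and $\opt(H_2)\geq 4$ one has $\alpha\cdot \opt(H_2)-2\geq 4$, which more than absorbs the penalty $|F|\leq 2$.

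Two subtleties drive the argument and I expect them to be the hard part. First, one must verify that $H_2$ is genuinely a MAP instance: condition (iii) hands us a zero-edge from $u$ into $V_2$, but $v$ may independently carry a zero-edge into $V_2$, making the contracted vertex of $H_2$ incident to two zero-edges and violating the matching property. The fix is either to promote one of the two zero-edges to a unit-edge---which is exactly what the $+1$ slack in the edge-count budget leaves room for, and which is absorbed in the weight bound by the $\opt(H_2)\geq 4$ slack---or to argue that such a configuration would already have been eliminated earlier as an $S_0$ or a contractible subgraph. Second, the tightest subcase of the weight bound, where $\opt(H_1)$ meets the lower bound of $3$ and $\alpha$ is close to $3/2$, is delicate and forces one either to invoke the strengthened lower bound $\opt(G)\geq \opt(H_1)+\opt(H_2)+1$ (by showing $uv\in E(\OPT(G))$) or to refine the combine step to produce $|F|\leq 1$ in that case. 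This accounting, rather than the topological construction itself, is in my view the real obstacle.
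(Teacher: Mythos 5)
You correctly set up $H_1 = G[V(G)\setminus V_2]/\{u,v\}$ and the high-level combine strategy, and you rightly flag the two genuine subtleties (the matching-property issue at the contracted node, and the tight case $\opt(H_1)=3$). But your construction stops short of what the paper actually uses: the paper defines \emph{two} candidate instances for the second side---the contracted $H_2' = G[V(G)\setminus V_1]/\{u,v\}$ and an \emph{uncontracted} variant $H_2'' = G[V(G)\setminus V_1]\cup\{e'\}$ with an added unit parallel edge $e'$ to $uv$---and chooses between them depending on whether some $\OPT(H_1)$ reaches both $u$ and $v$ when uncontracted. You use only the contracted version.

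Your two proposed fixes for the tight case line up with the two branches of the paper's case split, but neither one alone can be made to work with your single $H_2$. Refining the combine step to $F=\{uv\}$ requires the specific $H_1^*$ handed to you to have both $u$ and $v$ incident to $\hat{E}(H_1^*)$; the lemma's hypothesis only pins down $||H_1^*||=\opt(H_1)=3$, not which optimal solution you receive. And the strengthened bound $\opt(G)\geq\opt(H_1)+\opt(H_2)+1$ is simply false when $\OPT(G)$ avoids the edge $uv$ and its restrictions to $V_1\cup\{u,v\}$ and $V_2\cup\{u,v\}$ are already optimal --- which is precisely the situation in which \emph{some} $\OPT(H_1)$ connects both $u$ and $v$, but the $H_1^*$ you were given might be a different optimum that does not. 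These two failure modes are complementary, and the paper bridges them by making the divide step itself depend on the structure of $\OPT(H_1)$: when no optimal solution of $H_1$ connects both $u$ and $v$, it switches to $H_2''$, proves $\opt(G)\geq\opt(H_1)+\opt(H_2')+1$ via a cycle-through-$u,v$ argument on $\OPT(G)$, and carries the extra slack through a comparison $\opt(H_2')\leq\opt(H_2'')\leq\opt(H_2')+2$. Without this device, the arithmetic with $\alpha$ near $\tfrac{3}{2}$, $\opt(H_1)=3$, and $|F|=2$ requires $\alpha\geq \tfrac{5}{3}$ and does not close. Incidentally, the $+1$ slack in the edge-count budget that you read as room for a zero-to-unit promotion is in fact there to accommodate the extra parallel edge in $H_2''$.
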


\begin{figure}
\centering
\begin{subfigure}{.5\textwidth}
  \centering
  \includegraphics[width=.7\linewidth]{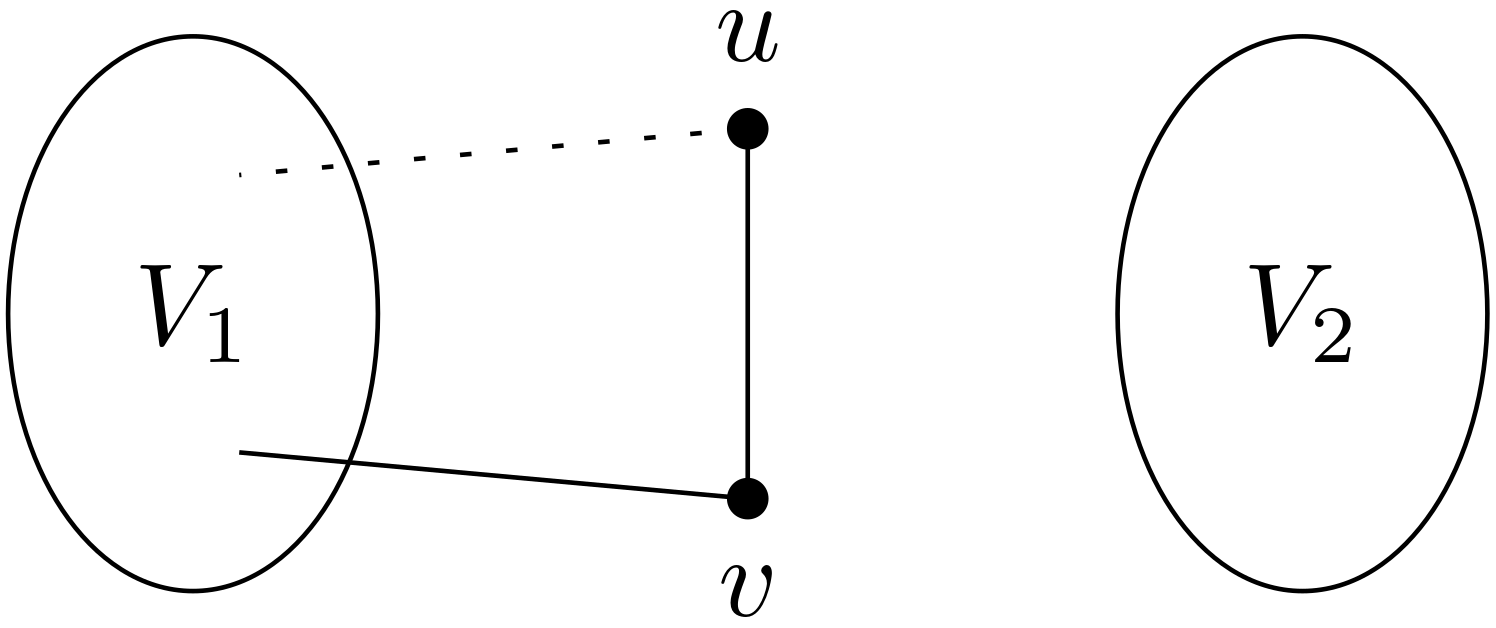}
\end{subfigure}%
\begin{subfigure}{.5\textwidth}
  \centering
  \includegraphics[width=.38\linewidth]{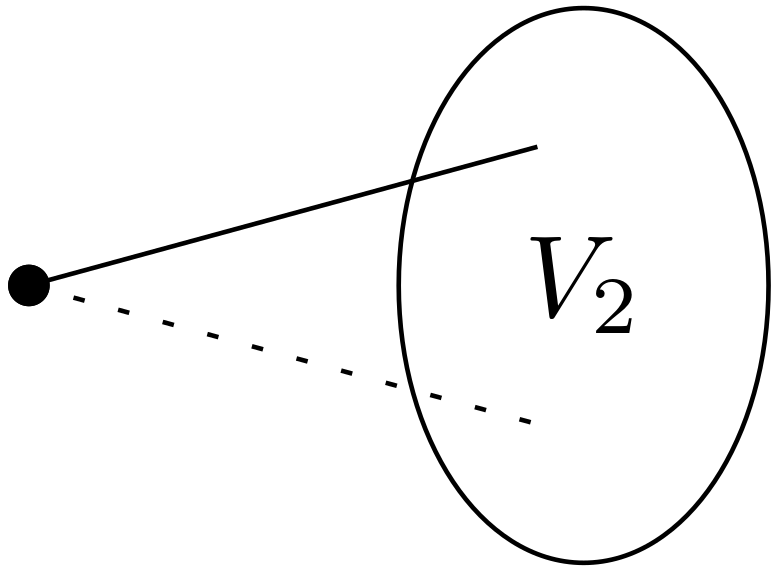}
\end{subfigure}
\caption{Illustration of an $S_1$ and $H_2'$. Zero-edges are dotted, and unit edges are not dotted.} 
\label{fig:S1_1}
\end{figure}

\begin{proof}
We prove the statements one by one, define the MAP instances $H_1$ and $H_2$ that are later used for Function $\divideb$, and explain how to obtain the edge set $F$ used in Function $\combineb$. 

Recall that $|V(G)| > 20$. 
We define $H_1 = G[V(G) \setminus V_2] / uv$ and $H_2' = G[V(G) \setminus V_1] / uv$ such that $\opt(H_1) \leq \opt(H_2)$.
Furthermore, let $H_2'' = G[V(G) \setminus V_1] \cup \{e'\}$, where $e'$ is a unit-cost parallel edge to $\{u, v\}$. 
That is, the difference between $H_2'$ and $H_2''$ is whether the edge $\{u, v\}$ is contracted or not (plus some additional parallel edge $e'$).
However, when handling an $S_1$, depending on $H_1$, we sometimes run our algorithm on $H_2'$ and sometimes on $H_2''$. 
With slight abuse of notation and in order to increase readability, by $H_2^*$ we will refer to the solution computed by the algorithm on $H_2'$ and $H_2''$.
In any case, the first two statements clearly follow.

Recall, by the definition of an $S_1$, we have that $\opt(H_1) \geq 3$ and $\opt(H_2) \geq 4$.
Furthermore, observe that $\opt(G) \geq \opt(H_1) + \opt(H_2')$.

Let us first assume that $||\opt(H_1)|| \geq 4$.
In this case, we run our algorithm on $H_2 = H_2'$.
Then, observe that (similar to the proof of Lemma~\ref{lem:divideCombine:S0}), there can be at most one cut-edge in $(V(G), \hat{E}(H_1^*) \cup \hat{E}(H_2^*) \cup \{u, v\})$ and this edge is $\{u, v \}$.
Then, since there are no cut-vertices, there must be an edge $f \in E(G)$ incident to $u$ or $v$ such that $(V(G), \hat{E}(H_1^*) \cup \hat{E}(H_2^*) \cup \{u, v\} \cup \{f\})$ is 2EC.
Then we have 
\begin{align*}
    || \hat{E}(H_1^*) \cup \hat{E}(H_2^*) \cup \{u, v\} \cup \{f\} || \leq & \ \max(\alpha \cdot \opt(H_1) -2, \opt(H_1)) \\ 
    & \ +  \max(\alpha \cdot \opt(H_2) -2, \opt(H_2)) + 2\\
     \leq & \ \alpha \cdot (  \opt(H_1) +  \opt(H_2)) - 4 + 2 \leq \alpha \cdot \opt(G) -2 \ ,
\end{align*}
where the second inequality follows from the fact that $\opt(H_1) \geq 4$ and $\opt(H_2) \geq 4$.

Now let us assume that $\opt(H_1) = 3$.
First, let us assume there is an optimal solution $\OPT(H_1)$, also called $H_1^*$ for brevity, such that the edge set $\hat{E}(H_1^*)$ is connected to both $u$ and $v$ when uncontracting $\{u, v\}$.
An illustration of this case is given in Figure~\ref{fig:S1_1}.
In this case, we run our algorithm on $H_2'$.
Observe that $(V(G), \hat{E}(H_1^*) \cup \hat{E}(H_2^*) \cup \{uv\})$ is 2EC.
Hence, we have 
\begin{align*}
    || \hat{E}(H_1^*) \cup \hat{E}(H_2^*) \cup \{uv\} || \leq & \ \max(\alpha \cdot \opt(H_1) -2, \opt(H_1))  +  \max(\alpha \cdot \opt(H_2) -2, \opt(H_2)) + 1\\
    \leq & \ \opt(H_1) + \alpha \cdot \opt(H_2) -2 + 1 \leq \alpha \cdot \opt(H_1) - \alpha + \alpha \cdot \opt(H_2) -1 \\
    \leq & \ \alpha \cdot (  \opt(H_1) +  \opt(H_2)) - 2 \leq \alpha \cdot \opt(G) -2  \ ,
\end{align*}
where the second and third inequality follows from the fact that $\opt(H_2) \geq 4$ and $\opt(H_1) = 3$, respectively.

\begin{figure}
\centering
\begin{subfigure}{.5\textwidth}
  \centering
  \includegraphics[width=.7\linewidth]{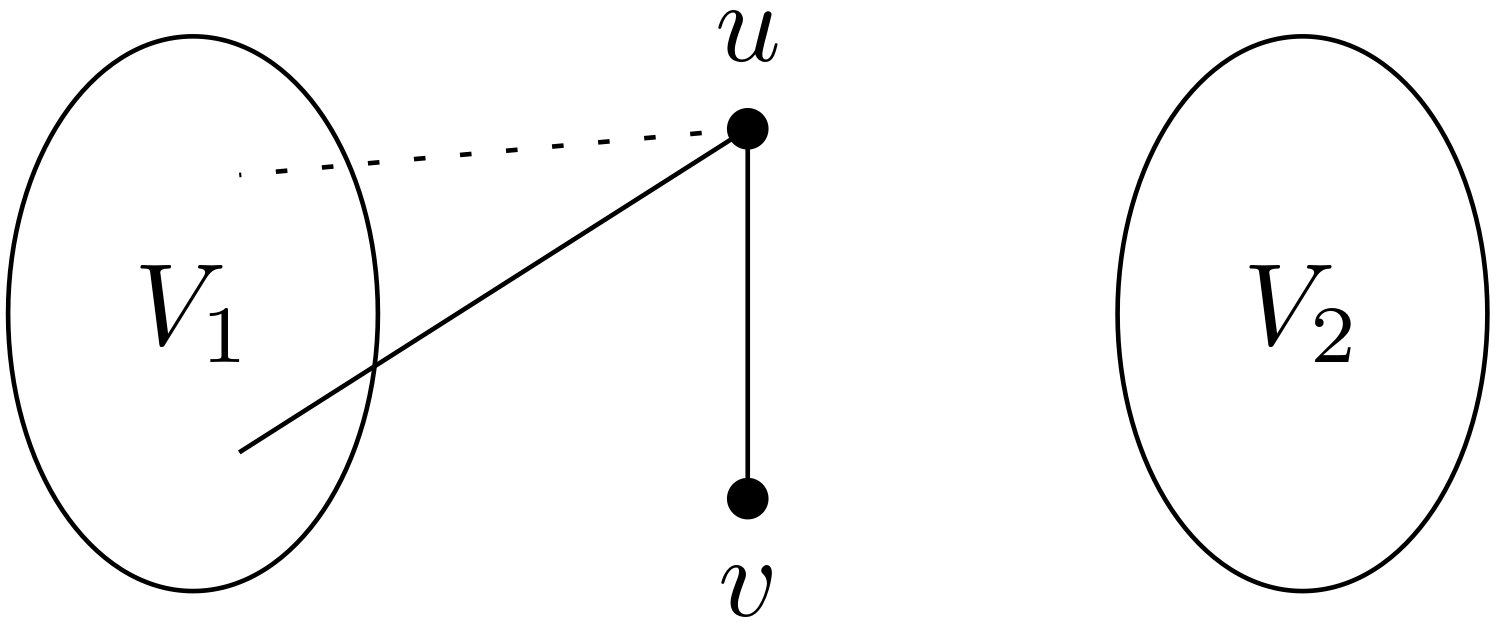}
\end{subfigure}%
\begin{subfigure}{.5\textwidth}
  \centering
  \includegraphics[width=.38\linewidth]{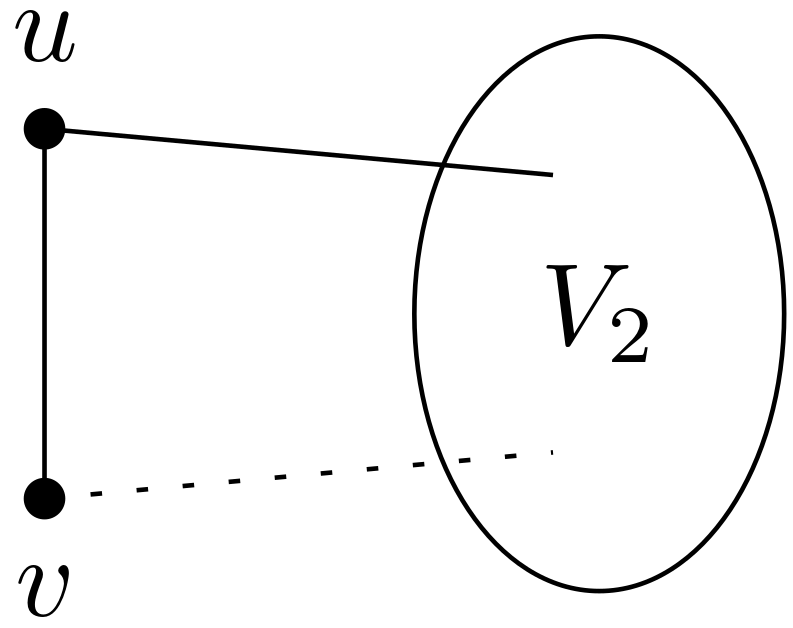}
\end{subfigure}
\caption{Illustration of an $S_1$ and $H_2''$. Zero-edges are dotted, and unit edges are not dotted.}
\label{fig:S1_2}
\end{figure}

Now, let us assume that there is no $\opt(H_1)$ such that the edge set $\hat{E}(H_1^*)$ is connected to both $u$ and $v$ when uncontracting $\{u, v\}$.
In this case, we run our algorithm on $H_2''$. 
An illustration of this case is given in Figure~\ref{fig:S1_2}.
However, for brevity and readability by $H_2^*$ we refer to the solution output by the algorithm on $H_2''$.
First, observe that $(V(G), \hat{E}(H_1^*) \cup \hat{E}(H_2^*))$ is 2EC. 
Furthermore, if in $H_2^*$ the introduced parallel edge $e'$ is chosen, then we can replace this edge with an appropriate edge $f \in E(G)$ such that the resulting graph is 2EC since there are no cut-vertices.
Second, observe that $\opt(H_2') \leq \opt(H_2'') \leq \opt(H_2') + 2$. 
The first inequality follows from the fact that $H_2'$ arises from $H_2'$ by contracting $\{u, v\}$.
The second inequality follows from the fact that any solution to $H_2$ can be transformed to a solution for $H_2''$ by adding $\{u,  v\}$ and $e'$, where $e'$ was the introduced parallel edge to $\{u, v\}$.

We now make a case distinction on whether $\opt(H_2') = \opt(H_2'')$ or $\opt(H_2') + 1 \leq \opt(H_2'')||$.
In the first case, if $\opt(H_2') = \opt(H_2'')$, observe that $\opt(G) \geq \opt(H_1) + \opt(H_2'')$, since $\opt(G) \geq \opt(H_1) + \opt(H_2)$.
Hence, we have 
\begin{align*}
    || \hat{E}(H_1^*) \cup \hat{E}(H_2^*)|| \leq & \ \max(\alpha \cdot \opt(H_1) -2, \opt(H_1)) +  \max(\alpha \cdot \opt(H_2'') -2, \opt(H_2''))\\
    \leq & \ \opt(H_1) + \alpha \cdot \opt(H_2'') -2 \leq \alpha \cdot (  \opt(H_1) +  \opt(H_2'')) - 2\\
    \leq & \ \alpha \cdot \opt(G) -2  \ ,
\end{align*}
where the second inequality follows from the fact that $\opt(H_2') \geq 4$.

In the remaining case, we have that $\opt(H_2') + 1 \leq \opt(H_2'') \leq \opt(H_2') + 2$.
We now show that $\opt(G) \geq \opt(H_1) + \opt(H_2') + 1$ (a slightly stronger statement compared to the one we used before).
To derive a contradiction, assume this is not true and we have $\opt(G) = \opt(H_1) + \opt(H_2')$. 
Consider the vertices $u$ and $v$ in $G$. 
Since $\OPT(G)$ is 2EC, there must be a cycle in $\OPT(G)$ that goes through $u$ and $v$.
Possibly, this cycle uses the edge $\{u, v\} \in E(G)$.
Hence, in $(V(G), \OPT(G) \setminus \{u, v\})$ there must be path $P$ from $u$ to $v$.
If $P$ is also a path in $H_1$, then $\opt(G) = \opt(H_1') + 1 + \opt(H_2)$, since we are in the case that there is no $\OPT(H_1)$ such that the edge set $\hat{E}(H_1^*)$ is connected to both $u$ and $v$ when uncontracting $\{u, v\}$. 
However, this is a contradiction.
Hence, $P$ must be a path in $H_2$.
But then $\OPT(G)[V(G) \setminus V_1]$ induces a solution to $H_2''$ and it follows that $\opt(H_2') = \opt(H_2'')$, again a contradiction.

Using  $\opt(G) \geq \opt(H_1) + \opt(H_2') + 1$ and the fact that $ \opt(H_2'') \leq \opt(H_2') + 2$, we obtain $\opt(G) \geq \opt(H_1) + \opt(H_2'') - 1$.
We now set $H_2 = H_2''$.
Hence, we have 
\begin{align*}
    || \hat{E}(H_1^*) \cup \hat{E}(H_2^*)|| \leq & \ \max(\alpha \cdot \opt(H_1) -2, \opt(H_1)) +  \max(\alpha \cdot \opt(H_2') -2, \opt(H_2'))\\
    \leq & \ \opt(H_1) + \alpha \cdot \opt(H_2') -\alpha -2 \leq \alpha \cdot (  \opt(H_1) +  \opt(H_2')) - \alpha - 2\\
    \leq & \ \alpha \cdot (  \opt(G) + 1 ) - \alpha - 2 = \alpha \cdot \opt(G) -2  \ ,
\end{align*}
where the second and third inequality follows from the fact that $\opt(H_2) \geq 4$ and $\opt(H_1) = 3$, respectively, and the fourth inequality follows from the fact that $\opt(G) \geq \opt(H_1) + \opt(H_2') - 1$.
This proves the lemma.
\end{proof}

\begin{lemma}\label{lem:divideCombine:S2}
Let $G$ be a graph that has no cut-vertices, no parallel edges and that has an $S_2$ $uvw$. 
Then there are two MAP instances $H_1$ and $H_2$ such that
\begin{itemize}
    \item $H_i$ is 2EC and $|V(H_i)| < |V(G)|$ for $i \in \{1, 2\}$,
    \item $|V(H_1)| + |V(H_2)| \leq |V(G)| + 1$, and $|E(H_1)| + |E(H_2)| \leq |E(G)| + 2$, and
    \item if there is a 2EC subgraph $H_i^*$ of $H_i$ such that $||H^*_i|| \leq \max( \alpha \cdot \opt(H_i) -2, \opt(H_i))$ for $i \in \{1, 2\}$, then there is an edge-set $F$ with $|F| \leq 2$ such that $E(H^*_1) \cup E(H^*_2) \cup F$ is a feasible solution to the MAP instance on $G$ and $||E(H^*_1) \cup E(H^*_2) \cup F|| \leq \max( \alpha \cdot \opt(G) -2, \opt(G))$.
\end{itemize}
\end{lemma}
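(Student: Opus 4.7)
The plan is to mimic the construction from the $S_0$ and $S_1$ lemmas. Define $H_1 := G[V(G)\setminus V_2]/\{u,v,w\}$ and $H_2 := G[V(G)\setminus V_1]/\{u,v,w\}$, collapsing the three separator vertices into a single contracted vertex $z$ in each instance. The matching property on zero-edges is preserved because $v$ and $w$ are tied to the zero-edge $\{v,w\}$ inside the separator (so neither can host another zero-edge), while $u$'s unique zero-edge targets $V_2$; hence $z$ carries no zero-edge in $H_1$ and at most one (into $V_2$) in $H_2$. 2-edge-connectivity of each $H_i$ follows by lifting: any bridge or cut vertex in $H_i$ would produce a corresponding one in $G$, which the preprocessing order has already excluded. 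The size bounds are immediate: $|V(H_i)|=|V_i|+1<|V(G)|$, $|V(H_1)|+|V(H_2)|=|V_1|+|V_2|+2=|V(G)|-1$, and $|E(H_1)|+|E(H_2)|=|E(G)|-|E(G[\{u,v,w\}])|\leq |E(G)|$, since the edges internal to the separator (namely $\{u,v\}$, $\{v,w\}$, and possibly $\{u,w\}$) become self-loops under contraction and are discarded.

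For the weight relation, set $S_i^*:=\OPT(G)\cap E(G[V_i\cup\{u,v,w\}])$. Since $V_1$ and $V_2$ communicate in $G$ only through $\{u,v,w\}$, a standard cut-counting argument shows that $S_i^*/\{u,v,w\}$ is a feasible 2-ECSS of $H_i$. The intersection $S_1^*\cap S_2^* = \OPT(G)\cap E(G[\{u,v,w\}])$ is counted twice in $||S_1^*||+||S_2^*||$ but vanishes under both contractions, yielding
\[
\opt(H_1)+\opt(H_2)\ \leq\ ||S_1^*||+||S_2^*||-2\,||S_1^*\cap S_2^*||\ \leq\ \opt(G).
\]
Combined with $\opt(H_1)\geq 3$ and $\opt(H_2)\geq 4$ from the $S_2$ hypotheses, this forces $\opt(G)\geq 7$.

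For the combine step, take $F:=\{uv,vw\}$, so $|F|=2$ and $||F||=1$ (the edge $\{v,w\}$ is zero), and assume without loss of generality that each $H_i^*$ contains all zero-edges of $H_i$ (free to include); in particular the zero-edge from $u$ into $V_2$ lives in $\hat{E}(H_2^*)$. The claim is that $E:=\hat{E}(H_1^*)\cup\hat{E}(H_2^*)\cup F$ is 2-edge-connected on $V(G)$. Cuts that do not split $\{u,v,w\}$ pick up at least two crossings from the 2-edge-connectivity of the contracted $H_i^*$'s. Cuts that split $\{u,v,w\}$ are dispatched by a six-case analysis: the middle vertex $v$ is covered by both edges of $F$; the vertex $u$ is covered by $\{uv\}\in F$ together with its zero-edge into $V_2$ inside $\hat{E}(H_2^*)$; and $w$ is covered by $\{vw\}\in F$ together with an edge of $\hat{E}(H_1^*)\cup\hat{E}(H_2^*)$ incident to $w$, whose existence is forced by $G$ having no cut vertex and no $S_1$. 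The weight estimate then reads
\[
||E||\ \leq\ \max(\alpha\opt(H_1)-2,\,\opt(H_1))+(\alpha\opt(H_2)-2)+1\ \leq\ \alpha(\opt(H_1)+\opt(H_2))-2\ \leq\ \alpha\opt(G)-2,
\]
using $\alpha\geq 3/2$ and the weight relation above.

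The hard part is the last case of the six-case analysis, the cut isolating $w$: one must show that $w$ has at least one edge into $V_1\cup V_2$ in $G$ and that some such edge actually lands in $\hat{E}(H_1^*)\cup\hat{E}(H_2^*)$. Absent such an edge, $w$'s only non-$v$ neighbor in $G$ would be $u$ via $\{u,w\}$, so that $\{u,v\}$ would be a 2-separator trapping $w$ on one side; after checking conditions $(ii)$ and $(iii)$ of the $S_1$ definition in this configuration, this would realize an $S_1$ in $G$, contradicting the preprocessing order. If a residual corner case survives this structural argument, the fallback is to mimic the $H_2''$ trick from the $S_1$ lemma and add a single pseudo-edge to $H_2$, which is comfortably absorbed by the $|E(G)|+2$ slack in the edge bound, to recover clean 2-edge-connectivity in $E$.
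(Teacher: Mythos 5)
Your decomposition $H_1 = G[V(G)\setminus V_2]/\{u,v,w\}$ and $H_2 = G[V(G)\setminus V_1]/\{u,v,w\}$ matches the paper's $H_1$ and $H_2'$, and the weight relation $\opt(H_1)+\opt(H_2)\leq\opt(G)$ via the cut-lifting argument is correct. The arithmetic in the final display is also sound under the stated assumptions. However, there is a genuine gap in the feasibility step, and you have in fact located it yourself without closing it.

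The issue is that taking $F = \{uv, vw\}$ and declaring $\hat E(H_1^*)\cup\hat E(H_2^*)\cup F$ to be $2$-edge-connected is simply false in general. After contraction, $H_i^*$ is computed against a single merged node $z$, and when $z$ is expanded back into $u,v,w$ nothing forces the edges incident to $z$ in $H_i^*$ to attach to $w$ at all. It is perfectly possible for all uncontracted edges of $\hat E(H_1^*)$ and $\hat E(H_2^*)$ to land on $v$ (or on $u$ and $v$), in which case $w$ has degree one in your union and the graph is not even a $2$-edge-cover. Your structural argument (``absent such an edge, $\{u,v\}$ would be a $2$-separator\dots'') only shows that $G$ itself has an extra edge at $w$; it does not show that this edge is chosen by either $H_i^*$, which is what you need, and you have no control over which edges the recursive solutions pick. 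The paper handles precisely this obstruction by a case distinction on the \emph{shape} of an optimal solution to $H_1$ — whether, when uncontracted, it attaches to $u$ and $w$, to $u$ and $v$, to $v$ and $w$, or to none of these pairs — and for the unfavourable shapes it swaps $H_2'$ for the variants $H_2''$ or $H_2'''$ that keep $v,w$ (resp.\ all of $u,v,w$) uncontracted and carry pseudo-parallel edges. The cost of this is paid by a refined lower bound ($\opt(G)\geq \opt(H_1)+\opt(H_2')+1$ in the bad sub-cases, proved by a contradiction argument about links through the separator), which is essential to absorb the extra weight and is absent from your write-up. Your closing sentence acknowledges that a pseudo-edge trick may be needed ``if a residual corner case survives,'' but the corner case is not residual — it is the main content of the proof — and ``comfortably absorbed by the $|E(G)|+2$ slack'' addresses the size bound, not the weight bound, which is the part that is actually delicate.
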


\begin{proof}
We prove the statements one by one, define the MAP instances $H_1$ and $H_2$ that are later used for Function $\dividec$, and explain how to obtain the edge set $F$ used in Function $\combinec$.

Recall that $|V(G)| > 20$. 
We define $H_1 = G[V(G) \setminus V_2] / uvw$ and $H_2' = G[V(G) \setminus V_1] / uvw$ such that $\opt(H_1) \leq \opt(H_2)$.
Furthermore, let $H_2'' = G[V(G) \setminus V_1] / vw \cup \{e''\}$, where $e'' \subseteq \{ u, v \}$ is a parallel edge to $\{u, v\}$.
Additionally, let $H_2''' = G[V(G) \setminus V_1] \cup \{F'''\}$, where $F''' \subseteq \{ \{u, v\}, \{u, w\}, \{v, u\} \}$ are possible parallel edges to $\{u, v\}$, $\{u, w\}$, and $\{v, u\}$. For $x, y \in \{u, v, w \}$, $x \neq y$, the edge $\{x, y\}$ is present in $F'''$ if and only if there is a path from $x$ to $y$ in $G[V(G) \setminus V_2)] \setminus \{ \{u, v\}, \{u, w\}, \{v, u\} \}$.

That is, the difference between $H_2'''$, $H_2''$, and $H_2'$ is whether no edge, the edge $\{u, v \}$, or the edges $\{u, v\}$ and $\{ v, w\}$ are contracted (plus some additional parallel edges), respectively.
However, when handling an $S_2$, depending on $H_1$, we sometimes run our algorithm on $H_2'$, sometimes on $H_2''$, and sometimes on $H_2'''$. 
Similar to previous lemmas, with a slight abuse of notation and in order to increase readability, by $H_2^*$ we will refer to the solution computed by the algorithm on $H_2'$, $H_2''$, and $H_2'''$ .
In any case, the first two statements clearly follow.

Recall, by the definition of an $S_2$, we have that $\opt(H_1) \geq 3$ and $\opt(H_2) \geq 4$ and let $g \in E(G)$ be the zero-edge incident to $u$ and some vertex of $V_2$.
Furthermore, $\{u, v\}$ is a unit-edge, $\{v, w\}$ is a zero-edge, and observe that $\opt(G) \geq \opt(H_1) + \opt(H_2')$.

Let us first assume that $\opt(H_1) \geq 4$.
In this case, we run our algorithm on $H_2 = H_2'$.
Then, observe that (similar to the proof of the Lemma~\ref{lem:divideCombine:S0}), there can be at most two cut-edges in $(V(G), \hat{E}(H_1^*) \cup \hat{E}(H_2^*) \cup \{u, v\} \cup \{v, w \} \cup \{g\})$ and these edges are $\{u, v \}$ and $\{v, w \}$.
Then, since there are no cut-vertices, no $S_0$ and no $S_1$, there must be an edge $f \in E(G)$ incident to $u$, $v$ or $w$ such that $(V(G), \hat{E}(H_1^*) \cup \hat{E}(H_2^*) \cup \{u, v\} \cup \{v, w \} \cup \{g\} \cup \{f\})$ is 2EC.
Then we have 
\begin{align*}
    || \hat{E}(H_2^*) \cup \{uv\} \cup \{v w \} \cup \{g\} \cup \{f\} || \leq & \ \max(\alpha \cdot \opt(H_1) -2, \opt(H_1)) \\ 
    & \ +  \max(\alpha \cdot \opt(H_2) -2, \opt(H_2)) + 2\\
     \leq & \ \alpha \cdot (  \opt(H_1) +  \opt(H_2)) - 4 + 2\\
     \leq & \ \alpha \cdot \opt(G) -2 \ ,
\end{align*}
where the second inequality follows from the fact that $\opt(H_1) \geq 4$ and $\opt(H_2) \geq 4$.

Now let us assume that $\opt(H_1) = 3$.
First, let us assume there is an $\OPT(H_1)$, also called $H_1^*$ for brevity, such that the edge set $\hat{E}(H_1^*)$ is connected to both $u$ and $w$ when uncontracting $\{uvw\}$.
In this case, we run our algorithm on $H_2 = H_2'$.
\begin{figure}
\centering
\begin{subfigure}{.5\textwidth}
  \centering
  \includegraphics[width=.7\linewidth]{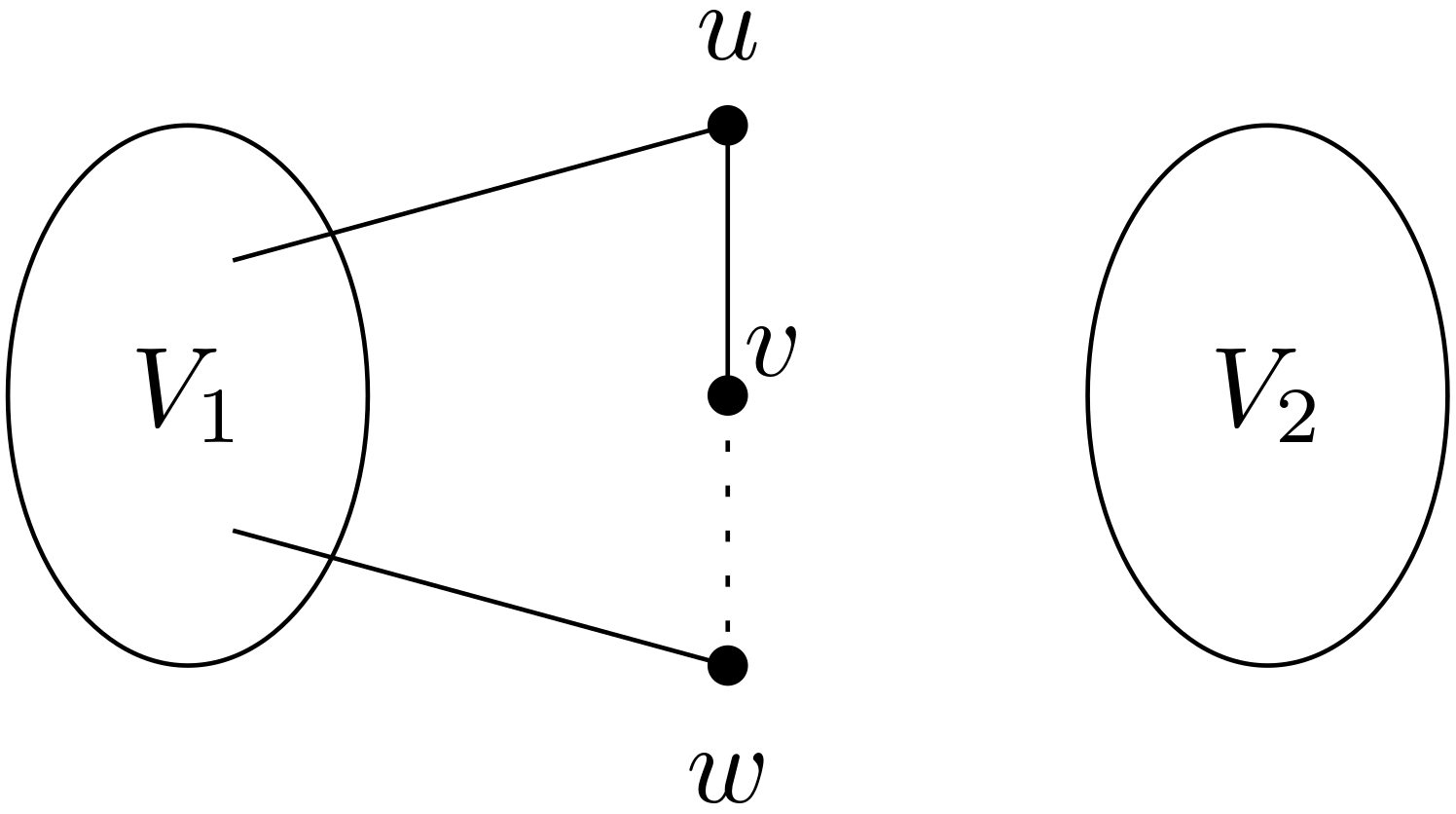}
\end{subfigure}%
\begin{subfigure}{.5\textwidth}
  \centering
  \includegraphics[width=.38\linewidth]{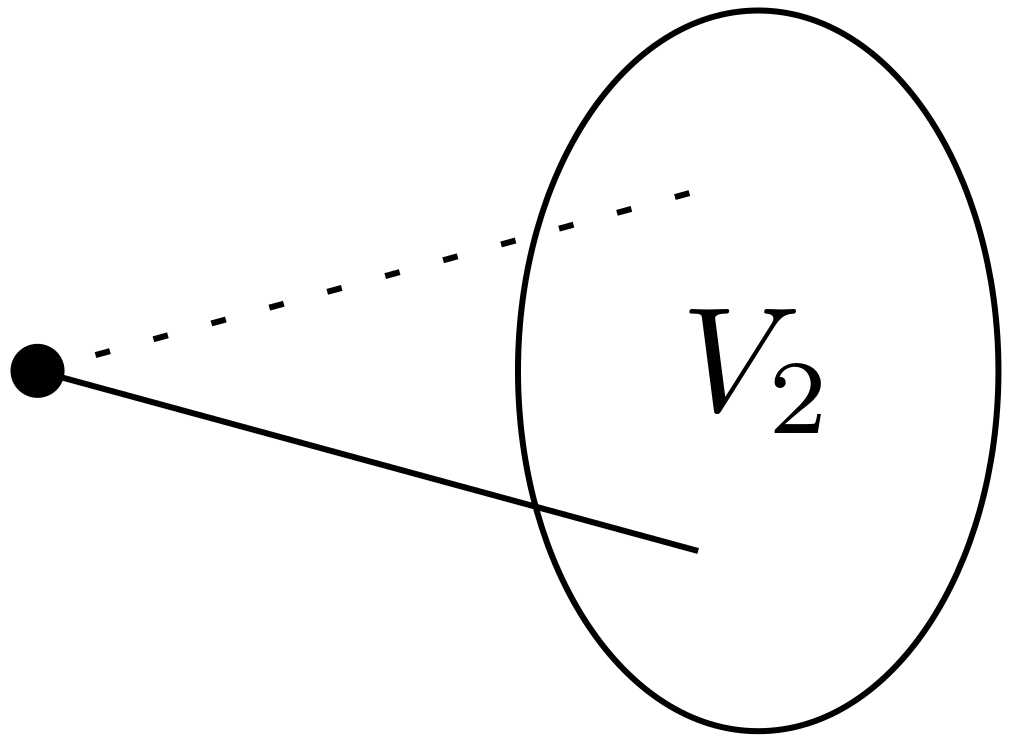}
\end{subfigure}
\caption{Illustration of an $S_2$ and $H_2'$. Zero-edges are dotted, and unit edges are not dotted.}
\label{fig:S2_2}
\end{figure}
An illustration of this case is given in Figure~\ref{fig:S2_2}.
Observe that $(V(G), \hat{E}(H_1^*) \cup \hat{E}(H_2^*) \cup \{u, v\} \cup \{v, w\})$ is 2EC.
Hence, we have 
\begin{align*}
    ||  \hat{E}(H_1^*) \cup \hat{E}(H_2^*) \cup \{u, v\} \cup \{v, w\} || \leq & \ \max(\alpha \cdot \opt(H_1) -2, \opt(H_1)) \\ 
    & \ +  \max(\alpha \cdot \opt(H_2) -2, \opt(H_2)) + 1\\
    \leq & \ \opt(H_1) + \alpha \cdot \opt(H_2) -2 + 1 \\
    \leq & \ \alpha \cdot \opt(H_1) - \alpha + \alpha \cdot \opt(H_2) -2 + 1 \\
    \leq & \ \alpha \cdot \opt(G) -2  \ ,
\end{align*}
where the second and third inequality follows from the fact that $\opt(H_2) \geq 4$ and $\opt(H_1) = 3$, respectively.

Second, let us assume there is an $\OPT(H_1)$, also called $H_1^*$ for brevity, such that the edge set $\hat{E}(H_1^*)$ is connected to both $u$ and $v$ when uncontracting $\{uvw\}$.
In this case, we run our algorithm on $H_2= H_2'$.
Observe that $(V(G), \hat{E}(H_1^*) \cup \hat{E}(H_2^*) \cup \{v, w\}  \cup \{f\} )$ is 2EC, where $f$ is any edge incident to $w$.
Hence, we have 
\begin{align*}
    || \hat{E}(H_1^*) \cup \hat{E}(H_2^*) \cup \{v, w\}  \cup \{f\} || \leq & \ \max(\alpha \cdot \opt(H_1) -2, \opt(H_1))  \\ 
    & \ +  \max(\alpha \cdot \opt(H_2) -2, \opt(H_2)) + 1\\
    \leq & \ \opt(H_1) + \alpha \cdot \opt(H_2) -2 + 1 \\
    \leq & \ \alpha \cdot \opt(H_1) - \alpha + \alpha \cdot \opt(H_2) -1 \\
    \leq & \ \alpha \cdot \opt(G) -2  \ ,
\end{align*}
where the second and third inequality follows from the fact that $\opt(H_2) \geq 4$ and $\opt(H_1) = 3$, respectively.

Third, let us assume there is no $\OPT(H_1)$, such that such that the edge set $\hat{E}(\opt(H_1))$ is connected to both $u$ and $w$ or to both $u$ and $v$, when uncontracting $\{uvw\}$.
Additionally, let us assume that there is some $\OPT(H_1)$, also called $H_1^*$ for brevity, such that the edge set $\hat{E}(H_1^*)$ is connected to both $v$ and $w$ when uncontracting $\{uvw\}$.
In this case, we run our algorithm on $H_2 = H_2''$ (which is formally defined at the beginning of this proof). 
\begin{figure}
\centering
\begin{subfigure}{.5\textwidth}
  \centering
  \includegraphics[width=.7\linewidth]{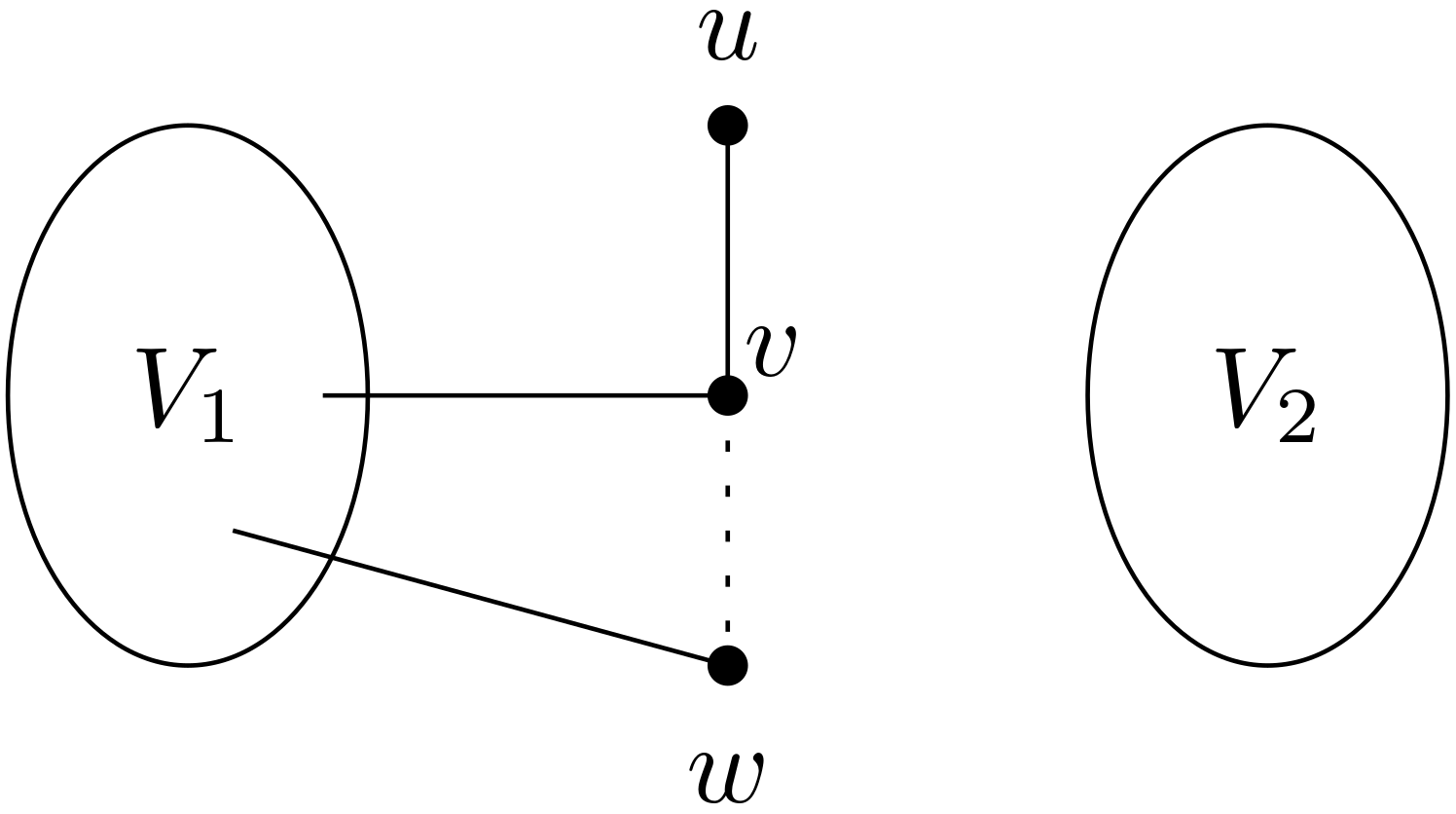}
\end{subfigure}%
\begin{subfigure}{.5\textwidth}
  \centering
  \includegraphics[width=.38\linewidth]{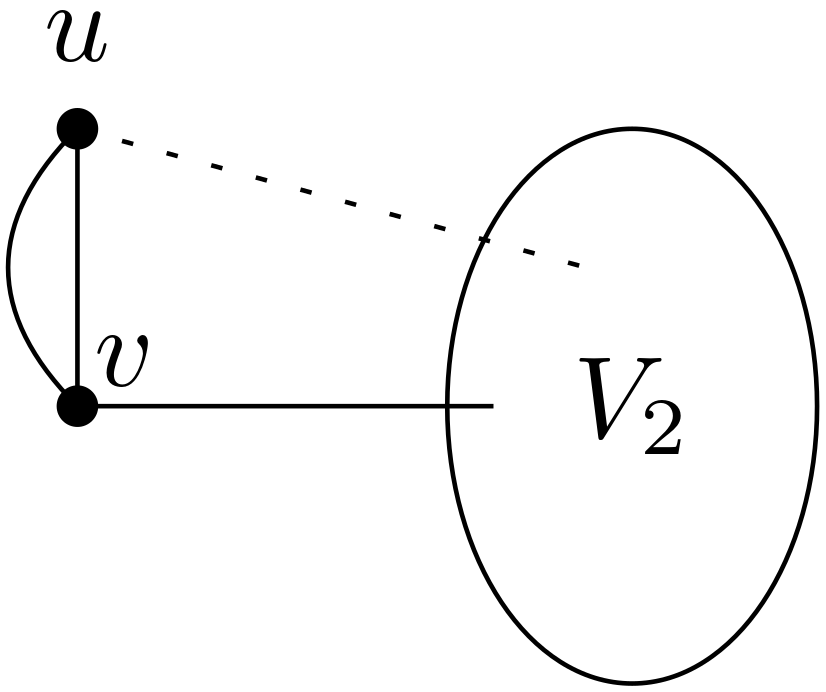}
\end{subfigure}
\caption{Illustration of an $S_2$ and $H_2''$. Zero-edges are dotted, and unit edges are not dotted.}
\label{fig:S2_3}
\end{figure}
An illustration of this case is given in Figure~\ref{fig:S2_3}.
Recall that $H_2''$ is the graph arising from $G[V(G) \setminus V_1] / \{v, w \}$ and adding an additional parallel edge $\{u, v\}$.
However, for brevity and readability by $H_2^*$ we refer to the solution output by the algorithm on $H_2''$.

We proceed similarly to the last case of the proof in Lemma~\ref{lem:divideCombine:S1}.
First, observe that the graph $(V(G), \hat{E}(H_1^*) \cup \hat{E}(H_2^*) \cup \{v, w\})$ is 2EC. 
Furthermore, if in $(V(G), \hat{E}(H_1^*) \cup \hat{E}(H_2^*) \cup \{v, w\})$ the introduced parallel edge $e'$ is chosen, then we can replace this edge by an appropriate edge $f \in E(G)$ such that the resulting graph is 2EC since there are no cut-vertices.
Second, observe that $\opt(H_2') \leq \opt(H_2'') \leq \opt(H_2') + 2$. 

We now make a case distinction on whether $\opt(H_2') = \opt(H_2'')$ or $\opt(H_2') + 1 \leq \opt(H_2'')$.
In the first case, if $\opt(H_2') = \opt(H_2'')$, observe that $\opt(G) \geq \opt(H_1) + \opt(H_2'')$, since $\opt(G) \geq \opt(H_1) + ||\opt(H_2')$.
Hence, we have 
\begin{align*}
    || \hat{E}(H_1^*) \cup \hat{E}(H_2^*)  \cup \{v, w\}|| \leq & \ \max(\alpha \cdot \opt(H_1) -2, \opt(H_1))  +  \max(\alpha \cdot \opt(H_2') -2, \opt(H_2'))\\
    \leq & \ \opt(H_1) + \alpha \cdot \opt(H_2') -2 
    \leq \alpha \cdot \opt(G) -2  \ ,
\end{align*}
where the second inequality follows from the fact that $\opt(H_2) \geq 4$.

In the remaining case, we have that $\opt(H_2') + 1 \leq \opt(H_2'') \leq \opt(H_2') + 2$.
Since there is no $\opt(H_1)$, such that such that the edge set $\hat{E}(\opt(H_1))$ is connected to both $u$ and $w$ or to both $u$ and $v$, when uncontracting $\{uvw\}$, similarly as in the proof of Lemma~\ref{lem:divideCombine:S1} we can show that in this case $\opt(G) \geq \opt(H_1) + \opt(H_2) + 1$.
Using  $\opt(G) \geq \opt(H_1) + \opt(H_2') + 1$ and the fact that $ \opt(H_2'') \leq \opt(H_2') + 2$, we obtain $\opt(G) \geq \opt(H_1) + \opt(H_2'') - 1$.
Hence, we have 
\begin{align*}
    || \hat{E}(H_1^*) \cup \hat{E}(H_2^*)  \cup \{v, w\}|| \leq & \ \max(\alpha \cdot \opt(H_1) -2, \opt(H_1)) +  \max(\alpha \cdot \opt(H_2') -2, \opt(H_2'))\\
    \leq & \ \opt(H_1) + \alpha \cdot \opt(H_2') -\alpha -2
    \leq  \alpha \cdot (  \opt(H_1) +  \opt(H_2')) - 2 - \alpha\\
    \leq & \ \alpha \cdot (  \opt(G) + 1 ) - 2 - \alpha
    \leq \alpha \cdot \opt(G) -2  \ ,
\end{align*}
where the second and third inequality follows from the fact that $\opt(H_2) \geq 4$ and $\opt(H_1) = 3$, respectively, and the fourth inequality follows from the fact that $\opt(G) \geq \opt(H_1) + \opt(H_2') - 1$.

Finally, let us assume that there is no $OPT(H_1)$, such that the edge set $\hat{E}(\opt(H_1))$ is connected to two distinct vertices of the set $\{u, v, w \}$.
In this case, we run our algorithm on $H_2 = H_2'''$ (which is formally defined at the beginning of this proof). 
Recall that $H_2'''$ is the graph arising from $G[V(G) \setminus V_1]$ and adding some potential parallel edges between some pairs of the set $\{u, v, w\}$, depending on the structure of $H_1$.
However, for brevity and readability by $H_2^*$ we refer to the solution output by the algorithm on $H_2'''$.
\begin{figure}
\centering
\begin{subfigure}{.5\textwidth}
  \centering
  \includegraphics[width=.7\linewidth]{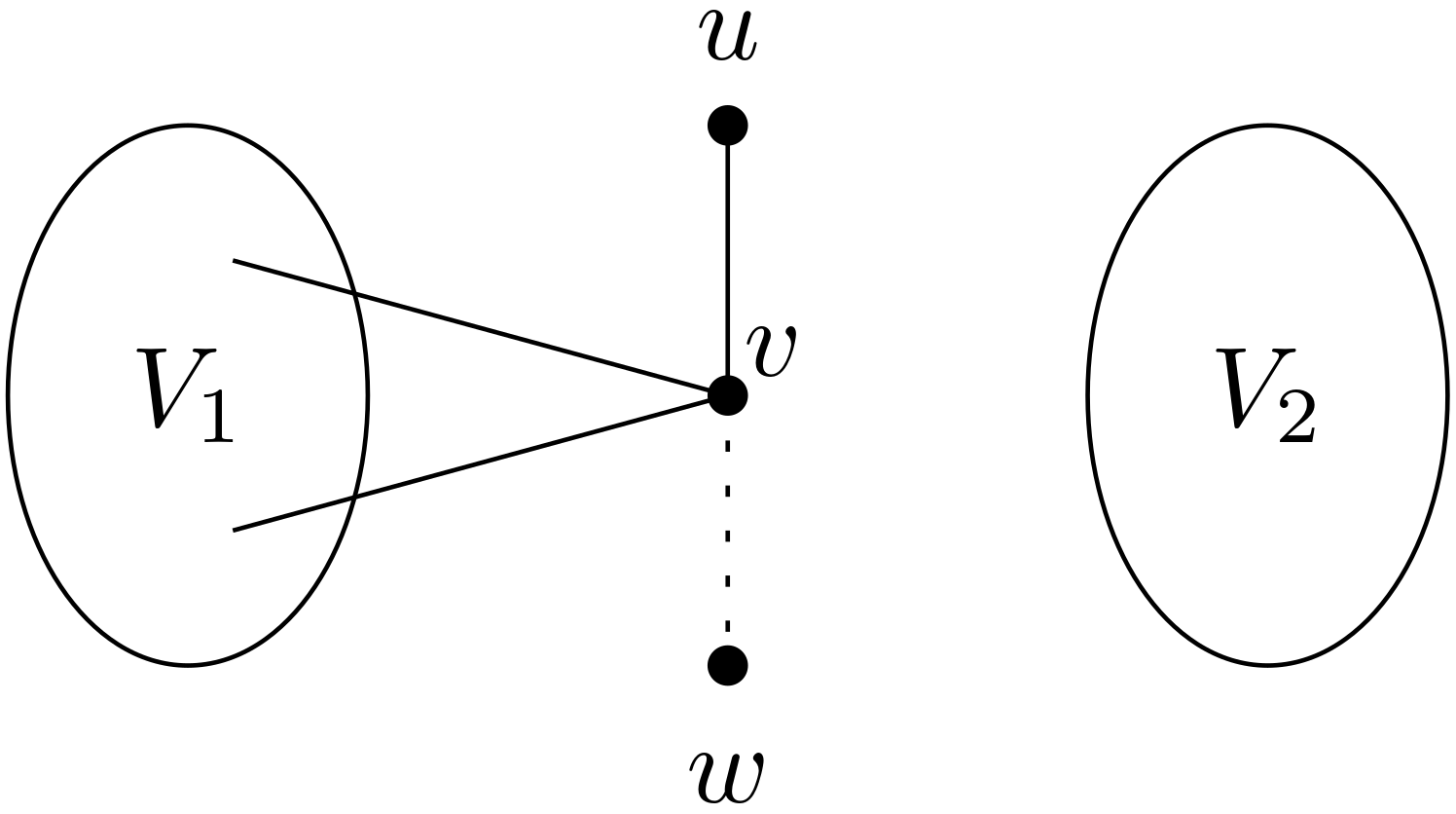}
\end{subfigure}%
\begin{subfigure}{.5\textwidth}
  \centering
  \includegraphics[width=.38\linewidth]{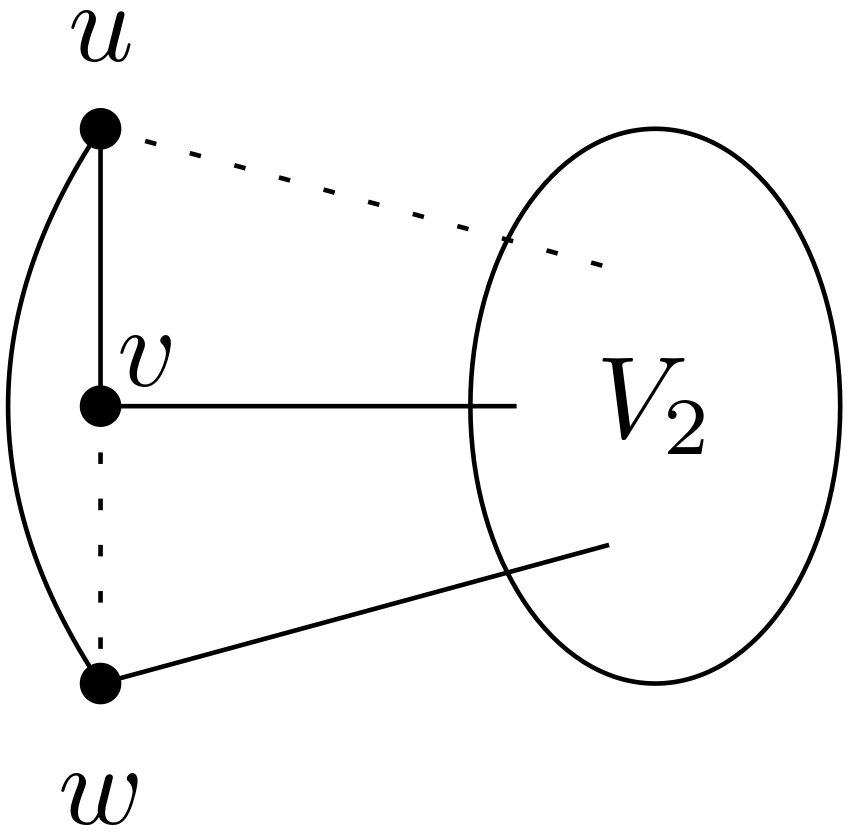}
\end{subfigure}
\caption{Illustration of an $S_2$ and $H_2'''$. Zero-edges are dotted, and unit edges are not dotted.}
\label{fig:S2_4}
\end{figure}
An illustration of this case is given in Figure~\ref{fig:S2_4}.
We proceed similarly to the previous case.
First, observe that $(V(G), \hat{E}(H_1^*) \cup \hat{E}(H_2^*))$ is 2EC. 
Furthermore, if in $(V(G), \hat{E}(H_1^*) \cup \hat{E}(H_2^*))$ any introduced parallel edge $e'$ is chosen, then we can replace this edge with an appropriate edge $f \in E(G)$ such that the resulting graph is 2EC, since there are no cut-vertices.
Second, observe that $\opt(H_2') \leq \opt(H_2''') \leq \opt(H_2') + 2$. 
The second inequality follows from the fact that any solution $\OPT(H_2')$ can be transformed to a solution for $\OPT(H_2''')$ by adding the edges $g$, $\{u, v\}$, $\{v, w\}$, and $\{ w, x \}$ to $\OPT(H_2')$, where $x$ is some vertex in $V_2$. 
Such an edge $\{ w, x \}$ exists since otherwise $uv$ forms an $S_1$, a contradiction. 
Finally, observe that $g$ and $\{u, v\}$ are zero-edges.

We now make a case distinction on whether $\opt(H_2') = \opt(H_2''')$ or $\opt(H_2') + 1 \leq \opt(H_2''')$.
In the first case, if $\opt(H_2') = \opt(H_2''')$, observe that $\opt(G) \geq \opt(H_1) + \opt(H_2''')$, since $\opt(G) \geq \opt(H_1) + \opt(H_2')$.
Hence, we have 
\begin{align*}
    || \hat{E}(H_1^*) \cup \hat{E}(H_2^*)  \cup \{v, w\} || \leq & \ \max(\alpha \cdot \opt(H_1) -2, \opt(H_1))  +  \max(\alpha \cdot \opt(H_2'') -2, \opt(H_2''))\\
    \leq & \ \opt(H_1) + \alpha \cdot \opt(H_2'') -2 
    \leq \alpha \cdot \opt(G) -2  \ ,
\end{align*}
where the second inequality follows from the fact that $\opt(H_2') \geq 4$.

In the remaining case, we have that $\opt(H_2') + 1 \leq \opt(H_2''') \leq \opt(H_2) + 2$.
Similar to before we can show that in this case, $\opt(G) \geq \opt(H_1) + \opt(H_2') + 1$.
Using  $\opt(G) \geq \opt(H_1) + \opt(H_2') + 1$ and the fact that $ \opt(H_2''') \leq \opt(H_2') + 2$, we obtain $\opt(G) \geq \opt(H_1) + \opt(H_2''') - 1$.
Hence, we have 
\begin{align*}
    || \hat{E}(H_1^*) \cup \hat{E}(H_2^*)  \cup \{v, w\}|| \leq & \ \max(\alpha \cdot \opt(H_1) -2, \opt(H_1)) +  \max(\alpha \cdot \opt(H_2'') -2, \opt(H_2''))\\
    \leq & \ \opt(H_1) + \alpha \cdot \opt(H_2'') -\alpha -2 
    \leq \alpha \cdot (  \opt(H_1) +  \opt(H_2'')) - 2 - \alpha\\
    \leq & \ \alpha \cdot (  \opt(G) + 1 ) - 2 - \alpha
    \leq \alpha \cdot \opt(G) -2  \ ,
\end{align*}
where the second and third inequality follows from the fact that $\opt(H_2) \geq 4$ and $\opt(H_1) = 3$, respectively, and the fourth inequality follows from the fact that $\opt(G) \geq \opt(H_1) + \opt(H_2') - 1$.

This proves the lemma.
\end{proof}

\begin{lemma}\label{lem:divideCombine:S34-1}
Let $G$ be a graph that has no cut-vertices, no parallel edges and that has an $S_{\{3, 4\}}$ $u_1, u_2, ..., u_k$, $k \in \{3, 4\}$.
If $\opt((G \setminus V_1) / C) \geq 4$ and $\opt((G \setminus V_1) / C) \geq 4$, then there are two MAP instances $H_1$ and $H_2$ such that
\begin{itemize}
    \item $H_i$ is 2EC and $|V(H_i)| < |V(G)|$ for $i \in \{1, 2\}$,
    \item $|V(H_1)| + |V(H_2)| \leq |V(G)| - 2$, and $|E(H_1)| + |E(H_2)| \leq |E(G)|$, and
    \item if there is a 2EC subgraph $H_i^*$ of $H_i$ such that $||H^*_i|| \leq \max( \alpha \cdot \opt(H_i) -2, \opt(H_i))$ for $i \in \{1, 2\}$, then there is an edge-set $F$ with $|F| \leq 2$ such that $E(H^*_1) \cup E(H^*_2) \cup F$ is a feasible solution to the MAP instance on $G$ and $||E(H^*_1) \cup E(H^*_2) \cup F|| \leq \max( \alpha \cdot \opt(G) -2, \opt(G))$.
\end{itemize}
\end{lemma}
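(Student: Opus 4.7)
The plan is to mirror the constructions for $S_0$, $S_1$, and $S_2$ in the preceding lemmas, in the cleaner regime where both contracted sides have $\opt \geq 4$. I set
\[ H_1 := G[V(G) \setminus V_2] \big/ C, \qquad H_2 := G[V(G) \setminus V_1] \big/ C, \]
that is, I restrict $G$ to each side of the separator $V(C)$ together with $V(C)$ itself, and in each restriction I contract the copy of $V(C)$ to a single vertex $c$. This also defines $\divideT_{S_{\{3,4\}}}$.

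To verify the structural claims on $H_1, H_2$: the zero-edges of $H_i$ are exactly the zero-edges of $G[V_i \cup V(C)]$, with any $V(C)$-internal edges becoming discarded self-loops; since the $S_{\{3,4\}}$ hypothesis guarantees $\delta(V(C))$ contains no zero-edges, no two zero-edges meet at the contracted vertex, so the zero-edges of $H_i$ still form a matching. For 2-edge-connectivity, contracting the 2-edge-connected subgraph $V(C)$ inside the 2-edge-connected graph $G$ yields a 2-edge-connected $G/C$; because $V_1$ and $V_2$ attach to each other in $G/C$ only through $c$, any simple path in $G/C$ between two vertices of $V_i \cup \{c\}$ must stay entirely in $V_i \cup \{c\}$ (otherwise it would revisit $c$). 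Hence the edge-disjoint paths in $G/C$ restrict to $H_i$. Finally, the counts $|V(H_i)| = |V_i| + 1 < |V(G)|$, $|V(H_1)| + |V(H_2)| = |V(G)| - |V(C)| + 2$, and $|E(H_1)| + |E(H_2)| \leq |E(G)|$ (internal $V(C)$-edges appear in neither $H_i$) are straightforward.

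For the combining routine $\combine_{S_{\{3,4\}}}$, let $F$ be the edge set of the spanning cycle of $C$, of weight~$2$, and set $R := (V(G), \hat E(H_1^*) \cup \hat E(H_2^*) \cup F)$. To show $R$ is 2-edge-connected, I invoke the standard fact that if $X \subseteq V(G')$ satisfies that both $G'[X]$ and $G'/X$ are 2-edge-connected, then $G'$ is 2-edge-connected: applied with $G' = R[V_1 \cup V(C)]$ and $X = V(C)$, we have $R[V(C)] \supseteq F$ is 2-edge-connected and $G'/X = H_1^*$ is 2-edge-connected by hypothesis, so $R[V_1 \cup V(C)]$ is 2-edge-connected; symmetrically for $R[V_2 \cup V(C)]$. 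Edge-disjoint paths between any pair in $V(G)$ are then obtained by stitching through distinct vertices of $V(C)$. For the weight bound, the key inequality is $\opt(H_1) + \opt(H_2) \leq \opt(G)$: take any $\OPT(G)$, contract $V(C)$ to obtain a 2-edge-connected $\OPT(G)/C$, and apply the same separator argument to see its restriction to $V_i \cup \{c\}$ is a 2-edge-connected spanning subgraph of $H_i$; summing the two weights gives $||\OPT(G)|| - ||\OPT(G)[V(C)]|| \leq \opt(G)$. Combined with the hypothesis $||H_i^*|| \leq \alpha\,\opt(H_i) - 2$ (valid since $\opt(H_i) \geq 4$ and $\alpha \geq 3/2$) and $||F|| = 2$, this yields
\[ ||R|| \leq \bigl(\alpha\,\opt(H_1) - 2\bigr) + \bigl(\alpha\,\opt(H_2) - 2\bigr) + 2 \leq \alpha\,\opt(G) - 2. \]

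The main obstacle is establishing the 2-edge-connectivity of the restricted-and-contracted graphs $H_i$: it ultimately rests on the observation that a simple path in $G/C$ between two vertices of $V_i \cup \{c\}$ cannot detour through $V_{3-i}$, because $V_{3-i}$ is attached to the rest of $G/C$ only via $c$. The same idea underlies the lower bound $\opt(H_1) + \opt(H_2) \leq \opt(G)$, where one must rule out that $\OPT(G)/C$ could lose 2-edge-connectivity upon restriction to one side. A smaller subtlety is the zero-edge matching condition in $H_i$, which crucially relies on the $S_{\{3,4\}}$ hypothesis that $\delta(V(C))$ has no zero-edges.
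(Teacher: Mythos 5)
Your construction of $H_1$, $H_2$, the choice $F=C$, and the final weight computation $\alpha\,\opt(H_1)-2+\alpha\,\opt(H_2)-2+2 \leq \alpha\,\opt(G)-2$ are exactly the paper's proof of this lemma (the paper has an arithmetic typo writing $\alpha\cdot\opt(G)+1$, which you correctly resolve). You supply more detailed justifications of points the paper leaves implicit — 2-edge-connectivity of the $H_i$ and of the glued solution, and the lower bound $\opt(H_1)+\opt(H_2)\leq\opt(G)$ — but the underlying argument is the same.
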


\begin{proof}
We prove the statements one by one, define the MAP instances $H_1$ and $H_2$ that are later used for Function $\divided$, and explain how to obtain the edge set $F$ used in Function $\combined$.

Let $C$ be the cycle of cost 2 on the vertices of the $S_{\{3, 4\}}$.
We define $H_1 = G[V(G) \setminus V_2] / C$ and $H_2 = G[V(G) \setminus V_1] / C$.
\begin{figure}
\centering
\begin{subfigure}{.5\textwidth}
  \centering
  \includegraphics[width=.7\linewidth]{pictures/S4_1.png}
\end{subfigure}%
\begin{subfigure}{.5\textwidth}
  \centering
  \includegraphics[width=.7\linewidth]{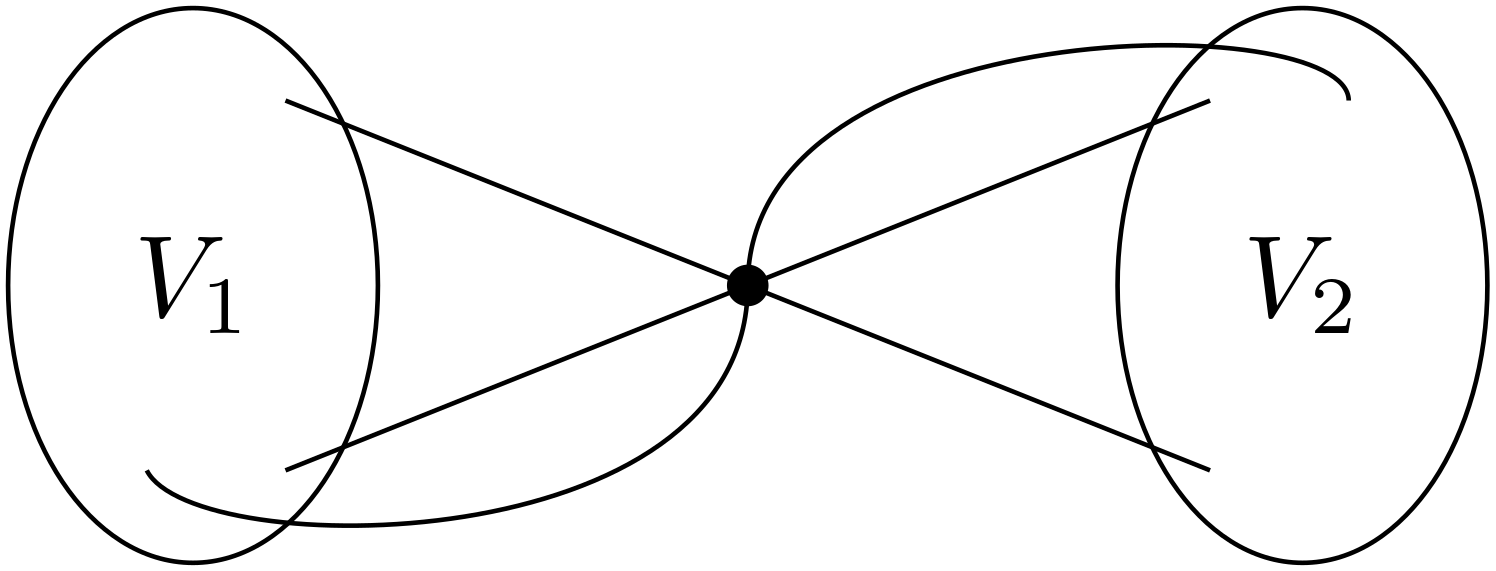}
\end{subfigure}
\caption{Illustration of an $S_{\{3,4\}}$ and $H_2$ for the proof of Lemma~\ref{lem:divideCombine:S34-1}.}
\label{fig:S4_1}
\end{figure}
An illustration is given in Figure~\ref{fig:S4_1}.
By assumption, we have that $\opt(H_1) \geq 4$ and $\opt(H_2) \geq 4$.
Hence, the first two statements directly follow.
Note that $\opt(G) \geq \opt(H_1) + \opt(H_2)$.
Then, for $F = C $, we have that 
\begin{align*}
    || \hat{E}(H_1^*) \cup \hat{E}(H_2^*) \cup F || \leq & \ \alpha \cdot \opt(H_1) -2 + \alpha \cdot \opt(H_2) -2 +2
     \leq  \alpha \cdot \opt(G) +1 \ .
\end{align*}
This proves the lemma.
\end{proof}

\begin{lemma}\label{lem:divideCombine:S34-2}
Let $G$ be a graph that has no cut-vertices, no parallel edges and that has an $S_{\{3, 4\}}$ $u_1, u_2, ..., u_k$, $k \in \{3, 4\}$.
If $\opt((G \setminus V_1) / C) \geq 4$ and $\opt((G \setminus V_1) / C) = 3$, then there are two MAP instances $H_1$ and $H_2$ such that
\begin{itemize}
    \item $H_i$ is 2EC and $|V(H_i)| < |V(G)|$ for $i \in \{1, 2\}$,
    \item $|V(H_1)| + |V(H_2)| \leq |V(G)| - 2$, and $|E(H_1)| + |E(H_2)| \leq |E(G)| + 2$, and
    \item if there is a 2EC subgraph $H_i^*$ of $H_i$ such that $||H^*_i|| \leq \max( \alpha \cdot \opt(H_i) -2, \opt(H_i))$ for $i \in \{1, 2\}$, then there is an edge-set $F$ with $|F| \leq 2$ such that $E(H^*_1) \cup E(H^*_2) \cup F$ is a feasible solution to the MAP instance on $G$ and $||E(H^*_1) \cup E(H^*_2) \cup F|| \leq \max( \alpha \cdot \opt(G) -2, \opt(G))$.
\end{itemize}
\end{lemma}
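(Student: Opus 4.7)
The plan is to mirror Lemma~\ref{lem:divideCombine:S34-1}, but with extra care: since $\opt(H_2)=3$ yields no ``$-2$'' slack from the recursive call on the $V_2$-side ($\max(\alpha\cdot 3-2,3)=3$ for $\alpha=3/2$), the naive choice $F=E(C)$ overshoots $\alpha\opt(G)-2$ by a constant. To recover the missing unit of slack, I would either enrich $H_2$ so that its optimum becomes at least $4$, or strengthen the trivial lower bound $\opt(G)\geq\opt(H_1)+\opt(H_2)$.

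I would keep $H_1=G[V(G)\setminus V_2]/C$ as in the previous lemma. For the $V_2$-side, I would adapt the construction used in the proofs of Lemmas~\ref{lem:divideCombine:S1} and~\ref{lem:divideCombine:S2}: define alternative instances $H_2',H_2'',\ldots$ that contract only a strict subset of $V(C)$ and are augmented with unit pseudo-edges between the kept $C$-vertices, representing possible paths through $V_1$. Depending on how $\OPT(H_1)$ attaches to the contracted $V(C)$-vertex of $H_1$, I would choose the appropriate $H_2^{\#}$ for the recursive call so that either $\opt(H_2^{\#})\geq 4$ directly, or $\opt(G)\geq\opt(H_1)+\opt(H_2^{\#})-1$ via a path-analysis argument analogous to the one used at the end of the proof of Lemma~\ref{lem:divideCombine:S2}.

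The combining step returns $\hat E(H_1^*)\cup\hat E(H_2^*)\cup F$ for some $F\subseteq E(C)$ of size at most $2$; if $H_2^*$ uses a pseudo-edge, it is swapped for a genuine $G$-edge, which exists because $G$ has no cut vertices, no parallel edges, and none of the preceding forbidden types $S_0, S_1, S_2$. The desired weight bound $\|\hat E(H_1^*)\cup\hat E(H_2^*)\cup F\|\leq \max(\alpha\opt(G)-2,\opt(G))$ then follows by arithmetic using $\opt(H_1)\geq 4$, $\alpha\geq 3/2$, the restored slack from $H_2^{\#}$, and the strengthened lower bound on $\opt(G)$. The main obstacle will be the case analysis: enumerating the possibilities for how $\OPT(H_1)$ touches $V(C)$, deciding which interior edges of $C$ to contract in $H_2^{\#}$, and proving $\opt(G)\geq\opt(H_1)+\opt(H_2)+1$ in those sub-cases where contracting fewer vertices only raises $\opt(H_2^{\#})$ by one. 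The separator size $k\in\{3,4\}$ and the presence of $k-2$ zero-edges inside $C$ add further bookkeeping but do not change the broad structure of the argument.
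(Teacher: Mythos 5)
The core structural move in the paper is different from yours: the paper sets $H_1 = G[V(G)\setminus V_2]$ \emph{without} contracting $C$, whereas you keep $H_1 = G[V(G)\setminus V_2]/C$. This is not a cosmetic choice. With $C$ uncontracted, the paper proves the clean lower bound $\opt(G)\geq\opt(H_1)+2$ by a ``links'' argument (counting whether $\OPT(G)$ routes $0$, $1$, or $\geq 2$ internally-disjoint paths between $C$-vertices through $V_2$), and the arithmetic
$\|\hat E(H_1^*)\|+\|F\| \leq (\alpha\opt(H_1)-2)+3 \leq \alpha\opt(G)-2$
then closes exactly at $\alpha=3/2$. The uncontracted $H_1$ is what makes $\opt(H_1)$ large enough relative to $\opt(G)$ for this to work; note that $\opt(G[V(G)\setminus V_2])\geq\opt(G[V(G)\setminus V_2]/C)\geq 4$, so the $\alpha\opt(H_1)-2$ branch of $f$ applies.

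With your contracted $H_1$, there is a genuine gap. Since the expanded $\hat E(H_1^*)\cup\hat E(H_2^*)$ generally leaves the vertices of $C$ insufficiently connected, you will typically need $\|F\|=2$ (all unit edges of the spanning cycle $C$), and then the arithmetic forces $\alpha(\opt(G)-\opt(H_1))\geq 5$, i.e.\ $\opt(G)\geq\opt(H_1)+4$ at $\alpha=3/2$. The trivial bound only gives $\opt(G)\geq\opt(H_1)+\opt(H_2)=\opt(H_1)+3$, and when $\OPT(G)$ routes nothing through $V_2$ there is no obvious way to extract the missing unit. Your fallback of enriching $H_2$ to $H_2^{\#}$ with $\opt(H_2^{\#})\geq 4$ does not rescue this automatically, because the subadditivity $\opt(G)\geq\opt(H_1)+\opt(H_2^{\#})$ need not survive the modification: uncontracting part of $C$ and adding pseudo-edges can make $H_2^{\#}$ strictly harder than the restriction of $\OPT(G)$ to the $V_2$-side, so the inequality may fail. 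You would have to prove a compensating inequality of the form $\opt(G)\geq\opt(H_1)+\opt(H_2^{\#})-1$ in precisely those cases, and your proposal only gestures at this. By contrast, the paper sidesteps all of it: pseudo-edges appear only in the $k=4$ case and only on the $H_1$ side (as potential diagonals $\{u_1,u_3\}$ or $\{u_2,u_4\}$ certified by a cost-$3$ solution of $H_2$), which is the side being solved recursively, while $H_2$ is tiny ($\opt(H_2)=3$) and is just solved exactly and plugged in as $F$. I would recommend switching to the uncontracted $H_1$ and proving the $\opt(G)\geq\opt(H_1)+2$ bound via the link-counting case analysis.
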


\begin{proof}
We prove the statements one by one, define the MAP instances $H_1$ and $H_2$ that are later used for Function $\divided$, and explain how to obtain the edge set $F$ used in Function $\combined$.
Let $C$ be the cycle of cost 2 on the vertices of the $S_{\{3, 4\}}$.

First, we prove the statement for the case that the $S_{\{3, 4\}}$ is a triangle, i.e.\ $C$ consists of the vertices $u_1, u_2, u_3$. 
In this case, we define $H_1 = G[V(G) \setminus V_2 ]$ and $H_2 = G[V(G) \setminus V_1 ] / C$.
Hence, the first two statements directly follow.

Fix an optimal solution $\OPT(G)$.
We say that $\OPT(G)$ uses a \emph{link} $\{u, v\}$, $u, v \in V(C)$, if there is a path from $u$ to $v$ in $\opt(G)$ using edges incident to some vertex of $V_2$ only.
We make a case distinction on whether $\OPT(G)$ uses 0, 1, or 2 links.
First, if $\OPT(G)$ uses 0 links, then clearly we have $\opt(G) \geq \opt(H_1) + 3$, since $\opt(H_2) = 3$.
Second, if $\OPT(G)$ uses 1 link, then we have $\opt(G) \geq \opt(H_1) + 2$.
Finally, if $\OPT(G)$ uses 2 links, then we have $\opt(G) \geq \opt(H_1) -2 + 4 = \opt(H_1) +2$.
To see this, observe that on the one hand $\OPT(H_1)$ might have to buy all edges in $C$, which are exactly 2.
On the other hand, in this case, at least 4 edges have to go from $V(H_2)$ to $V(C)$.
Hence, in any case, we have $\opt(G) \geq \opt(H_1) + 2$.

Hence, for $F = \opt(H_2)$ we have 
\begin{align*}
    || \hat{E}(H_1^*) \cup F || \leq & \ \alpha \cdot \opt(H_1) -2 + 3 \leq \alpha \cdot (\opt(G) -2) + 1 \leq \alpha \cdot \opt(G) -2 \ .
\end{align*}

Second, we prove the statement for the case that the $S_{\{3, 4\}}$ is a cycle of length 4, i.e.\ $C$ consists of the vertices $u_1, u_2, u_3, u_4$ (the edges of $C$ are exactly in this order). 
Furthermore, we set $u_5 = u_1$ for notational convenience.
\begin{figure}
\centering
\begin{subfigure}{.5\textwidth}
  \centering
  \includegraphics[width=.7\linewidth]{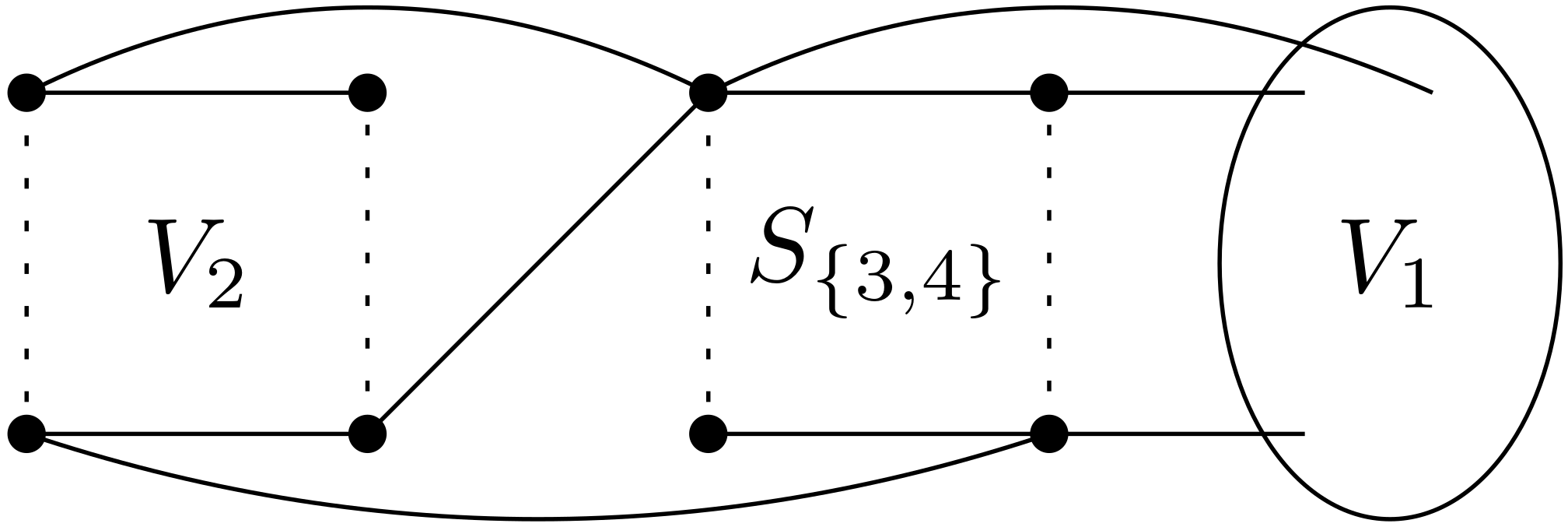}
\end{subfigure}%
\begin{subfigure}{.5\textwidth}
  \centering
  \includegraphics[width=.4\linewidth]{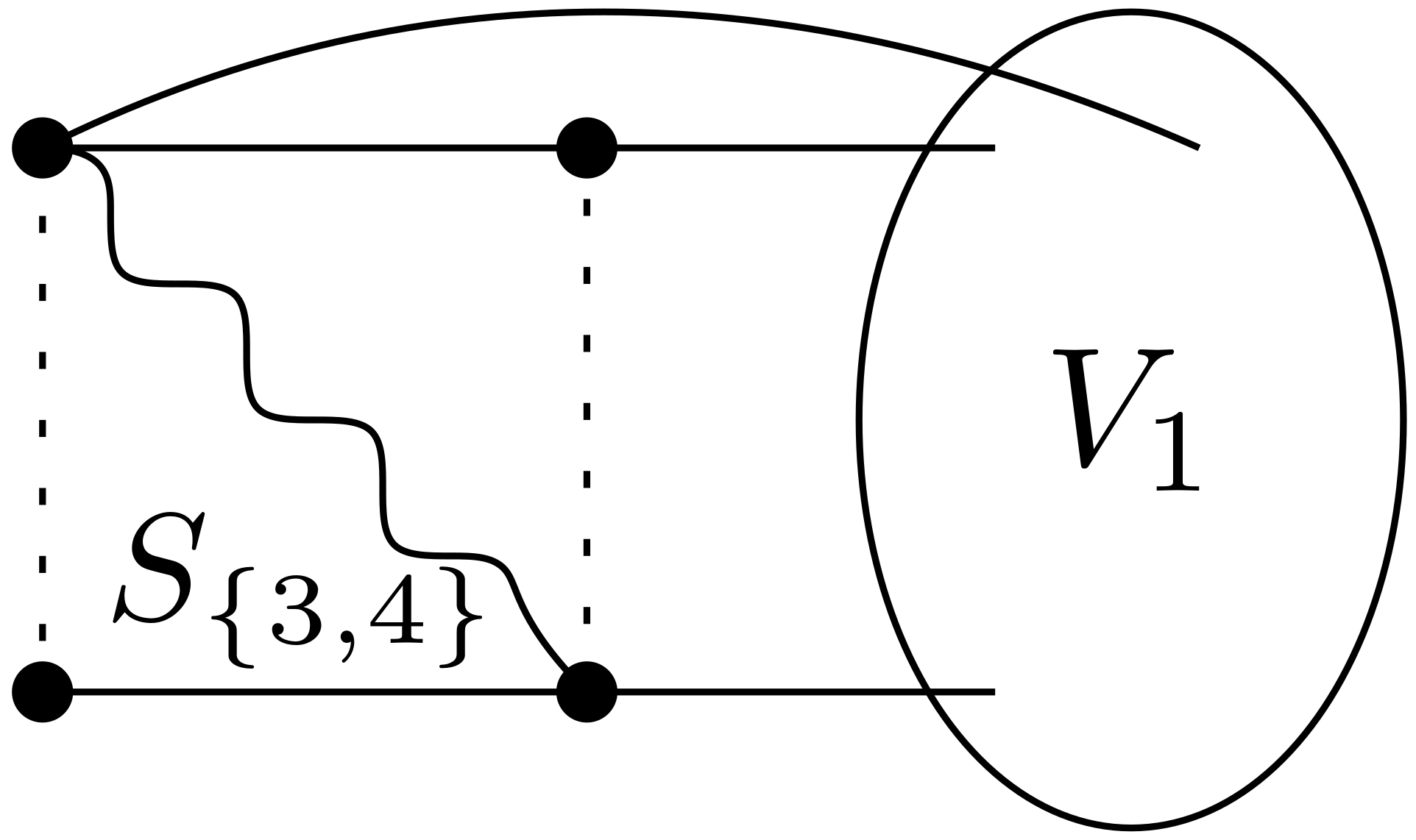}
\end{subfigure}
\caption{Illustration of an $S_{\{3,4\}}$ and $H_2$ for the proof of Lemma~\ref{lem:divideCombine:S34-2}.}
\label{fig:S4-2}
\end{figure}
An illustration is given in Figure~\ref{fig:S4-2}.
Let us fix some optimal solution $\OPT(G)$ and let $H_1' = G[V(G) \setminus V_2 ]$ and $H_2 = G[V(G) \setminus V_1 ] / C$.
Identical to the triangle case, we say that $\OPT(G)$ uses a \emph{link} $\{u, v\}$, $u, v \in V(C)$ if there is a path from $u$ to $v$ in $\OPT(G)$ using edges incident to some vertex of $V_2$ only.

The graph $H_1$ arises from $H_1'$ by adding so-called \emph{pseudo-edges}, which are essentially a subset of the links.
The pseudo-edges are defined as follows.
A pseudo-edge is a 'diagonal' edge $\{u_1, u_3 \}$ or $\{u_2, u_4 \}$ if this edge is not yet present in $H_1$ and if there is a feasible solution $F'$ to $H_2$ such that $|F'| = 3$ and $F'$ is connected to the vertices $u_1$ and $u_3$ (when uncontracted), or to the vertices $u_2$ and $u_4$ (when uncontracted), respectively.
Let $Q$ be the set of pseudo-edges.
Then, $H_1 = H_1' \cup Q$.
Note that the first two statements of the lemma directly follow.

The algorithm proceeds as follows. 
Whenever no pseudo-edge is contained in $E(H_1^*)$, we simply set $\hat{E}(H_1^*)$ as before and set $F = \opt(H_2)$.
If there is exactly one pseudo-edge $e' = \{v, w \}$ contained in $E(H_1^*)$, we simply delete it from $E(H_1^*)$ and set $F = \OPT(H_2)$ such that $F$ connects $v$ and $w$ (such a solution has to exist by the definition of pseudo-edges).
If there are exactly two pseudo-edges contained in $E(H_1^*)$, we delete both pseudo-edges from $E(H_1^*)$, add the edges of $C$ to $E(H_1^*)$, and set $F = \OPT(H_2)$.
Finally, observe that each of these solutions leads to a 2-edge-connected spanning subgraph of $G$ and that $||E(H_1^*) \cup F|| \leq \alpha \cdot \opt(H_1) - 2 + 3 = \alpha \cdot \opt(H_1) + 1$.

Next, we bound $\opt(H_1)$ in terms of $\opt(G)$.
We make a case distinction on how many links are used in $\OPT(G)$.

In the first case, let us assume that $\OPT(G)$ uses no links.
Then we have that $\OPT(G) \setminus V_2$ is 2-edge-connected and get $\opt(G) \geq \opt(H_1) + 3$.

In the second case, let us assume that $\OPT(G)$ uses exactly one link.
If this link connects two vertices $u_i$ and $u_{i+1}$, $1 \leq i \leq 4$, then we have $\opt(G) \geq \opt(H_1) - 1 + 3 = \opt(H_1) + 2$, where the "$-1$" comes from the fact that an optimal solution to $H_1$ might have to buy the edge $\{u_i, u_{i+1} \}$.
Else, if this link connects two vertices that are connected by a pseudo-edge in $H_1$, then we have $\opt(G) \geq \opt(H_1) - 1 + 3 = \opt(H_1) + 2$, by a similar reason as in the previous case.
Else, the link connects either $u_1$ and $u_3$, or $u_2$ and $u_4$ and there does not exists a pseudo-edge between these vertices.
Without loss of generality, assume that $u_1$ and $u_3$ are connected by the link.
Then, by the definition of a pseudo-edge, there is no solution to $H_1$ that has a cost of 3 and additionally connects $u_1$ and $u_3$ (otherwise there is a pseudo-edge between these vertices).
Hence, in this case, $\opt(G) \geq \opt(H_1) - 2 + 4 = \opt(H_1) + 2$, where the "$-2$" comes from the fact that an optimal solution to $H_1$ might have to buy the edges of $C$ in order to connect $u_1$ and $u_3$, instead of the link between $u_1$ and $u_3$.

Finally, assume that $\OPT(G)$ uses at least two links.
Note that in this case, there has to be at least 4 unit edges incident to vertices of $V_2$.
Hence, in this case, $\opt(G) \geq \opt(H_1) - 2 + 4 = \opt(H_1) + 2$, where the "$-2$" again comes from the fact that an optimal solution to $H_1$ might have to buy the edges of $C$ in order to connect the vertices of $C$ that were previously connected via links.

Hence, in any of the three cases, we obtain $\opt(G) \geq \opt(H_1) + 2$.
By the previous discussion, we then obtain 
\begin{align*}
    || E(H_1^*) \cup F || \leq & \ \alpha \cdot \opt(H_1) + 1 \leq \alpha \cdot (\opt(G) -2) + 1 \leq \alpha \cdot \opt(G) -2 \ .
\end{align*}
This proves the lemma.
\end{proof}

\begin{lemma}\label{lem:divideCombine:Sk}
Let $G$ be a graph that has no cut-vertices, no parallel edges and that has an $S_k$ $u_1, u_2, ..., u_k$, $3 \leq k \leq 6$.
Then there are three MAP instances $H_1$, $H_2$, and $H_3$ such that
\begin{itemize}
    \item $H_i$ is 2EC and $|V(H_i)| < |V(G)|$ for $i \in \{1, 2, 3\}$,
    \item $|V(H_1)| + |V(H_2)| + |V(H_3)| \leq |V(G)| + 2$, and $|E(H_1)| + |E(H_2)| \leq |E(G)|$, and
    \item if there is a 2EC subgraph $H_i^*$ of $H_i$ such that $||H^*_i|| \leq \max( \alpha \cdot \opt(H_i) -2, \opt(H_i))$ for $i \in \{1, 2, 3\}$, then there is an edge-set $F$ with $|F| \leq 3$ such that $E(H^*_1) \cup E(H^*_2) \cup F$ is a feasible solution to the MAP instance on $G$ and $||E(H^*_1) \cup E(H^*_2) \cup \cup E(H^*_2) \cup F|| \leq \max( \alpha \cdot \opt(G) -2, \opt(G))$.
\end{itemize}
\end{lemma}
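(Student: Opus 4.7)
The plan is to split $G$ at the separator $V(C)$. I define the three MAP instances $H_i := G[V_i \cup V(C)]/C$ for $i \in \{1, 2, 3\}$, so that $H_i$ consists of the single outer component $V_i$ together with $V(C)$ contracted to a single vertex $c$, and set the combining edge-set $F$ to be the spanning cycle of $V(C)$, which has weight $3$ and at most $k \leq 6$ edges.

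First, I verify the structural properties. Each $H_i$ is 2-edge-connected: because $G$ is 2-edge-connected with no cut-vertex and $C$ is 2-vertex-connected, any two vertices of $V_i \cup V(C)$ admit two vertex-disjoint paths in $G$ whose segments restricted to $V_i \cup V(C)$ (with $V(C)$ contracted to a single vertex) give the required edge-disjoint walks, hence paths, in $H_i$. The size bounds are immediate: $|V(H_i)| = |V_i| + 1 < |V(G)|$, $\sum_i |V(H_i)| = (|V(G)| - k) + 3 \leq |V(G)|$ since $k \geq 3$, and partitioning the edges of $G$ into internal $V(C)$-edges, internal $V_i$-edges, and the cut edges $\delta(V_i, V(C))$ shows $\sum_i |E(H_i)| = |E(G)| - |E(G[V(C)])| \leq |E(G)|$ because the $V(C)$-internal edges become self-loops upon contraction in each $H_i$.

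Second, I establish the key lower bound $\opt(G) \geq \sum_i \opt(H_i)$. Contracting $V(C)$ in $G$ yields $G/C$ in which the contracted vertex $c$ is a cut-vertex separating the three components, so $\opt(G/C) = \sum_i \opt(H_i)$; since every 2EC spanning subgraph $R$ of $G$ contracts to a 2EC spanning subgraph $R/C$ of $G/C$ with $||R/C|| \leq ||R||$, the inequality follows. For the combining step, the output is $R := \bigcup_i \hat H_i^* \cup F$: in the contracted view this is the union of the 2EC graphs $H_i^*$ glued at the single vertex $c$, hence 2EC, and adding the spanning cycle of $V(C)$ restores the internal 2-edge-connectivity of $V(C)$, making $R$ itself 2EC in $G$. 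Reading the $S_k$ condition in the same spirit as Lemma~\ref{lem:divideCombine:S34-1} so that $\opt(H_i) \geq 4$ for each $i$, I compute
\[ ||R|| \leq \sum_{i=1}^{3} \bigl(\alpha \cdot \opt(H_i) - 2\bigr) + 3 = \alpha \sum_i \opt(H_i) - 3 \leq \alpha \cdot \opt(G) - 3 \leq \alpha \cdot \opt(G) - 2 . \]

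The main obstacle I anticipate is the borderline case in which some $\opt(H_i) \in \{2, 3\}$: if the $S_k$ condition is only strong enough to guarantee the weaker pairwise estimate $\opt(H_j) + \opt(H_k) \geq 4$, the calculation above no longer leaves sufficient slack. In that event I would branch on whether $\bigcup_i \hat H_i^*$ is already 2EC -- in which case $F = \emptyset$ suffices and the weight bound follows from the superadditivity $g(a) + g(b) \leq g(a+b)$ of $g(x) := \max(\alpha x - 2, x)$ applied to $\sum_i g(\opt(H_i)) \leq g(\sum_i \opt(H_i)) \leq g(\opt(G))$ -- or otherwise sharpen the lower bound to $\opt(G) \geq \sum_i \opt(H_i) + 2$ via an exchange argument exploiting the 2-vertex-connectivity of $C$, the absence of parallel edges, and the preclusion of earlier forbidden structures, possibly augmenting one $H_i$ with pseudo-edges in the style of Lemma~\ref{lem:divideCombine:S34-2}.
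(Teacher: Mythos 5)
Your proposal takes essentially the same approach as the paper: the same decomposition $H_i = G[V_i \cup V(C)]/C$ for $i \in \{1,2,3\}$, the same combining set $F = C$ (the spanning cycle of $V(C)$, of cost $3$), and the same weight calculation $\sum_i (\alpha \cdot \opt(H_i) - 2) + 3 \leq \alpha \cdot \opt(G) - 3 < \alpha \cdot \opt(G) - 2$, using $\opt(H_i) \geq 4$ and $\opt(G) \geq \sum_i \opt(H_i)$ (which you justify via contraction to $G/C$ where $c$ is a cut-vertex, and which is also why the no-zero-edges-across-$\delta(V(C))$ clause is needed, both to make each $H_i$ a MAP instance and for the lower bound). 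The ``obstacle'' you flag at the end does not arise: the paper's proof reads the $S_k$ condition exactly as you do, namely as $\opt(H_i) \geq 4$ for each $i$, so the branch you sketched for the weaker pairwise bound is unnecessary.
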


\begin{proof}
We prove the statements one by one, define the MAP instances $H_1$, $H_2$, and $H_3$ that are later used for Function $\dividee$ and explain how to obtain the edge set $F$ used in Function $\combinee$.
Let $C$ be the cycle of cost 3 on the vertices of the $S_k$.
We define $H_1 = G[V(G) \setminus (V_2 \cup V_3)] / \{u_1, u_2, ..., u_k \}$, $H_2 = G[V(G) \setminus (V_1 \cup V_3)] / \{u_1, u_2, ..., u_k \}$, and  $H_3 = G[V(G) \setminus (V_1 \cup V_2)] / \{u_1, u_2, ..., u_k \}$.
Hence, the first two statements clearly follow.

First, note that $\opt(G) \geq \opt(H_1) + \opt(H_2) + \opt(H_3)$.
Here, we also used the fact that $\delta(V(C))$ has no zero-edges.
Second, 
recall that $||E(H_i^*)|| \leq \alpha \cdot \opt(H_i) -2$, $i =1, 2, 3$ (by definition of $S_k$).

Then, for $F = C $, we have that 
\begin{align*}
    || \hat{E}(H_1^*) \cup \hat{E}(H_2^*) \cup \hat{E}(H_3^*) \cup F || \leq & \ \alpha \cdot \opt(H_1) -2 + \alpha \cdot \opt(H_2) -2 + \alpha \cdot \opt(H_3) -2 + 3\\
     \leq & \ \alpha \cdot \opt(G) - 3
     < \alpha \cdot \opt(G) -2 \ .
\end{align*}
This proves the lemma.
\end{proof}

\begin{lemma}\label{lem:divideCombine:Sk'}
Let $G$ be a graph that has no cut-vertices, no parallel edges and that has an $S_k'$ $u_1, u_2, ..., u_k$, $3 \leq k \leq 6$.
Then there are three MAP instances $H_1$, $H_2$, and $H_3$ such that
\begin{itemize}
    \item $H_i$ is 2EC and $|V(H_i)| < |V(G)|$ for $i \in \{1, 2, 3\}$,
    \item $|V(H_1)| + |V(H_2)| + |V(H_3)| \leq |V(G)| + 2$, and $|E(H_1)| + |E(H_2)| \leq |E(G)|$, and
    \item if there is a 2EC subgraph $H_i^*$ of $H_i$ such that $||H^*_i|| \leq \max( \alpha \cdot \opt(H_i) -2, \opt(H_i))$ for $i \in \{1, 2, 3\}$, then there is an edge-set $F$ with $|F| \leq 3$ such that $E(H^*_1) \cup E(H^*_2) \cup F$ is a feasible solution to the MAP instance on $G$ and $||E(H^*_1) \cup E(H^*_2) \cup \cup E(H^*_2) \cup F|| \leq \max( \alpha \cdot \opt(G) -2, \opt(G))$.
\end{itemize}
\end{lemma}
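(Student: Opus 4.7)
The plan is to reduce the $S_k'$ instance to two smaller MAP subproblems plus direct handling of the cycle $C$ on $V(C)$, taking the third MAP instance to be empty. Define $H_1 = G[V(G) \setminus V_2] / \{u_1,\ldots,u_k\}$ and $H_2 = G[V(G) \setminus V_1] / \{u_1,\ldots,u_k\}$, and let $H_3$ be empty. Each $H_i$ is $2$EC because $G$ is $2$EC, the cut $\delta(V(C))$ contains no zero-edges, and contracting $V(C)$ preserves $2$-edge-connectivity on $V_i \cup V(C)$. The size bound $|V(H_1)|+|V(H_2)| = |V_1|+|V_2|+2 \leq |V(G)|-k+2$ and the edge bound $|E(H_1)|+|E(H_2)| \leq |E(G)| - |E(G[V(C)])|$ both follow since $V_1,V_2$ are distinct components of $G\setminus V(C)$, so no edges run directly between them.

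For the combining step I take $F = E(C)$, of weight $\|F\|=3$, and output $\hat E(H_1^*) \cup \hat E(H_2^*) \cup F$. That this is a $2$-ECSS of $G$ can be checked by case analysis on which edge is removed: deleting an edge in $\hat E(H_i^*)$ leaves $V_i \cup V(C)$ connected via $H_i^*$'s $2$-edge-connectivity together with $F$; deleting an edge in $F$ leaves a Hamilton path on $V(C)$, which together with $\hat E(H_1^*)$ and $\hat E(H_2^*)$ still keeps everything connected.

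The main obstacle is the cost bound. Plugging in the guarantees gives
\begin{align*}
\|\hat E(H_1^*)\|+\|\hat E(H_2^*)\|+\|F\| \;\leq\; \alpha\,\opt(H_1) - 2 + \alpha\,\opt(H_2) - 2 + 3 \;=\; \alpha\bigl(\opt(H_1)+\opt(H_2)\bigr) - 1.
\end{align*}
The trivial lower bound $\opt(G) \geq \opt(H_1)+\opt(H_2)$ only yields $\alpha\,\opt(G)-1$, which is insufficient. I need to strengthen it to $\opt(G) \geq \opt(H_1)+\opt(H_2)+1$, and then using $\alpha \geq 3/2$ the bound becomes $\alpha\,\opt(G) - \alpha - 1 \leq \alpha\,\opt(G)-2$, as required.

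The strengthening is where the $u_1$ condition enters. I partition $\OPT(G)$ into three disjoint pieces $\OPT_1, \OPT_2, \OPT_C$, where $\OPT_i \subseteq E(G[V_i]) \cup \delta(V(C), V_i)$ for $i \in \{1,2\}$ and $\OPT_C \subseteq E(G[V(C)])$; disjointness and exhaustiveness follow from the absence of edges between $V_1$ and $V_2$. I then argue $\|\OPT_i\| \geq \opt(H_i)$ for $i=1,2$ by showing that the contraction of $\OPT_i$ at $V(C)$ is a $2$-ECSS of $H_i$: for any cut $(S,T)$ of $H_i$, the corresponding cut in $G$ (using the complementary side not containing the contracted vertex if needed) is crossed by at least two $\OPT$-edges, and since $V_1$ and $V_2$ are non-adjacent, all those edges lie in $\OPT_i$. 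Finally, $\|\OPT_C\| \geq 1$ because $u_1$'s two $\OPT$-incident edges both live in $G[V(C)]$, and since zero-edges form a matching, at most one of them can be a zero-edge. Summing gives $\opt(G) = \|\OPT_1\|+\|\OPT_2\|+\|\OPT_C\| \geq \opt(H_1)+\opt(H_2)+1$, closing the argument.
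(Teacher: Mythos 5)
Your proof is correct and follows essentially the same approach as the paper's: same decomposition $H_1 = G[V(G)\setminus V_2]/C$, $H_2 = G[V(G)\setminus V_1]/C$ with $F = E(C)$, and the same key observation that $N(u_1) \subseteq V(C)$ forces $\opt(G) \geq \opt(H_1) + \opt(H_2) + 1$, which closes the bound since $\alpha \geq 3/2$. The paper asserts this strengthened lower bound with only a one-line justification (``since $N(u_1) \subseteq V(S_k')$ \dots, and $\delta(V(C))$ has no zero-edges''), whereas you spell out the partition of $\OPT(G)$ into $\OPT_1, \OPT_2, \OPT_C$ and the matching argument giving $\|\OPT_C\| \geq 1$, which is a more careful accounting of the same idea.
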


\begin{proof}
We prove the statements one by one, define the MAP instances $H_1$, $H_2$, and $H_3$ that are later used for Function $\dividef$ and explain how to obtain the edge set $F$ used in Function $\combinef$.
Let $C$ be the cycle of cost 3 on the vertices of the $S_k'$.
We define $H_1 = G[V(G) \setminus (V_2 \cup V_3)] / \{u_1, u_2, ..., u_k \}$ and $H_2 = G[V(G) \setminus (V_1 \cup V_3)] / \{u_1, u_2, ..., u_k \}$.
Hence, the first two statements clearly follow.

First, note that since $N(u_1) \subseteq V(S_k')$ (by the definition of an $S_k'$), we have that $\opt(G) \geq \opt(H_1) + \opt(H_2) + 1$.
Here, we also used the fact that $\delta(V(C))$ has no zero-edges.
Second, 
recall that $||E(H_i^*)|| \leq \alpha \cdot \opt(H_i) -2$, $i =1, 2$, since $\opt(H_1) \geq 4$ and $\opt(H_2) \geq 4$, by the definition of an $S_k'$.

Then, for $F = C $, we have that
\begin{align*}
    || \hat{E}(H_1^*) \cup \hat{E}(H_2^*) \cup F || \leq & \ \alpha \cdot \opt(H_1) -2 + \alpha \cdot \opt(H_2) -2 +3 
     \leq \alpha \cdot (\opt(H_1) + \opt(H_2)) - 1 \\
     \leq & \ \alpha \cdot (\opt(G) - 1) - 1 < \alpha \cdot \opt(G) -2 \ .
\end{align*}
This proves the lemma.
\end{proof}

\subsection{Proof of Lemmas~\ref{lem:polytimesubs} and~\ref{lem:divideCombine}}

We are now ready to prove the Lemmas~\ref{lem:polytimesubs} and~\ref{lem:divideCombine}.

\begin{proof}[Proof of Lemma~\ref{lem:polytimesubs}]
The second part of the lemma is simply Lemma~\ref{lem:polytimecheckT}.

To see the first part of the statement, recap that the definitions of $\divideT_T$ and $\combine_T$ are given in the Lemmas~\ref{lem:divideCombine:cutvertices}-\ref{lem:divideCombine:Sk'}.
Therein, it is easy to see that their construction works in polynomial time.
Furthermore, the definition of $\divideT_T$ and $\combine_T$ when $T$ is a contractible subgraph is given in Lemma~\ref{lem:divideCombine:contractibleSubgraphs}.
Therein, it is easy to see that their construction works in polynomial time.
This finishes the proof of the lemma.
\end{proof}

\begin{proof}[Proof of Lemma~\ref{lem:divideCombine}]
The statements (i) and (iii) directly follow from Lemmas~\ref{lem:divideCombine:contractibleSubgraphs}-\ref{lem:divideCombine:Sk'}.
It remains to prove (ii):
For any forbidden configuration $F$ that appears in $G$ of type $T$ from the list $L=$(cut vertex, parallel edge, contractible subgraph, $S_0$, $S_1$, $S_2$, $S_k$, $S_k'$, $k \in \{3, 4, 5, 6\}$) such that $G$ does not contain any forbidden configuration of a type that precedes $T$ in the list $L$ and  $\divideT_T(G,F)=(H_1,H_2, H_3)$, then $s(H_1) + s(H_2) + s(H_3) < s(G)$.

Recap that $|V(G)| \geq 20$. 
Let $V_i$ and $E_i$ be the vertex- and edge-sets of $H_i$, respectively, $i \in \{ 1, 2, 3 \}$.
Then we have by Lemmas~\ref{lem:divideCombine:contractibleSubgraphs}-\ref{lem:divideCombine:Sk'}
\begin{align*}
    s(H_1) + s(H_2) + s(H_3) & = 10 |V_1|^2 + 10 |V_2|^2 + 10 |V_3|^2 + |E_1| + |E_2| + |E_3| \\
    & \leq 10 (|V(G)|-1)^2 + 10 \cdot 3^2 + |E(G)| + 2 \\
    & \leq 10 |V(G)|^2 - 20 |V(G)| + 102 + |E(G)| \\
    & = s(G) - 20 |V(G)| + 102 < s(G) \ ,
\end{align*}
since $|V(G)| \geq 20$.
This proves the lemma.
\end{proof}

\newpage

\section{Bridge Covering}\label{sec:bridgeCovering}
In this section, we show how to obtain an \emph{economical} bridgeless 2-edge-cover $H$ of some structured graph $G$ in polynomial time.

\begin{definition}
An economical bridgeless 2-edge-cover $H$ is a 2-edge-cover of some graph $G$ such that $H$ is bridgeless and $||H|| \leq 1.5 ||D_2|| - 2 n_\ell - 1.5 n_m - n_s$, where $n_\ell, n_m$, and $n_s$ is the number of large, medium, and small components of $H$, respectively.
\end{definition}

First, we need to define some notation.

\begin{definition}[bridgeless, blocks, complex, pendant, black vertices]
Let $H$ be a $2$-edge-cover of some graph.
$H$ is {\bf bridgeless} if it contains no bridges, i.e., all the components of $H$ are $2$-edge-connected. A component of $H$ is {\bf complex} if it contains a bridge.  Maximal\footnote{w.r.t. the subgraph relationship} $2$-edge-connected subgraphs of $H$ are {\bf blocks}. A block $B$ of some component $C$ of $H$ is called \textbf{pendant} if $C \setminus V(B)$ is connected. Vertices of $H$ that are not part of any block are {\bf black}. 
\end{definition}

Observe the following fact.
\begin{fact}
Let $H$ be a $2$-edge-cover of some graph. Then, each component of $H$ is either complex or a block.  Furthermore, a complex component  consists of blocks, black vertices, and bridges. 

If $C$ is a complex component with blocks $B_1,\cdots,B_t$, $s$ black vertices, and $r$ bridges, then $C/\{V(B_1),\cdots,V(B_t)
\}$ forms a tree with exactly $t+s$ vertices and $r$ edges, where all the leaves are contracted blocks (some contracted blocks may be internal) and the edges are in one-to-one correspondence with the bridges of $C$. In particular, $r=t+s-1$.
\end{fact}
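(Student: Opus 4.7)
The plan is to derive the fact from elementary properties of bridges and maximal $2$-edge-connected subgraphs, in a few sequential steps.

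First, I would handle the dichotomy and the decomposition. A component $C$ of $H$ is either $2$-edge-connected and hence (by maximality) itself a single block, or it contains a bridge and is complex by definition. For a complex component, every edge of $C$ is either a bridge or lies on some cycle, and any two edges lying on a common cycle generate a $2$-edge-connected subgraph; taking the equivalence closure yields the blocks $B_1,\dots,B_t$. A brief observation I would verify is that these blocks are vertex-disjoint: if two maximal $2$-edge-connected subgraphs $B_i$ and $B_j$ shared a vertex, then $B_i\cup B_j$ would remain connected and bridgeless, contradicting maximality. Vertices of $C$ incident to at least one non-bridge edge lie in a unique block; the remaining vertices are black, and the $2$-edge-cover hypothesis forces each of them to be incident to at least two bridges.

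Next, I would analyze the contracted multigraph $C' := C/\{V(B_1),\dots,V(B_t)\}$. By vertex-disjointness of blocks, $|V(C')|=t+s$. Every edge internal to a block becomes a loop (and is discarded), while bridges of $C$ survive as edges with distinct endpoints in $C'$; hence $|E(C')|=r$, and $C'$ inherits connectedness from $C$. The main substantive step --- and the one I expect to be the principal, though still standard, obstacle --- is showing that $C'$ is acyclic. I would argue by contradiction: suppose bridges $e_1,\dots,e_k$ form a cycle in $C'$. Each intermediate vertex of this cycle is either a black vertex of $C$ (trivial passage) or a contracted block $B_j$, in which case $B_j$ being connected supplies a path inside $B_j$ joining the endpoints of the incoming and outgoing bridges. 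Splicing these paths together with $e_1,\dots,e_k$ produces a cycle in $C$ containing $e_1$, contradicting the fact that $e_1$ is a bridge. Thus $C'$ is a tree, and $r=|E(C')|=|V(C')|-1=t+s-1$.

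Finally, for the leaf characterization, I would use the $2$-edge-cover property: every black vertex $v$ satisfies $\deg_C(v)\geq 2$, and since each edge at $v$ is a bridge, each survives the contraction as an edge of $C'$ incident to $v$. Therefore $\deg_{C'}(v)\geq 2$, so no black vertex can be a leaf of $C'$; every leaf must be a contracted block, while internal vertices of $C'$ may be of either type. This establishes every clause of the fact.
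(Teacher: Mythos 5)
Your proof is correct. The paper states this as a ``Fact'' without providing any proof, so there is no in-text argument to compare against; your argument is the standard bridge--block decomposition argument one would expect. A few minor points are left implicit but are easy to fill in and do not constitute gaps: (a) blocks, being maximal $2$-edge-connected subgraphs, are vertex-disjoint --- your union argument for this is fine, since if $e$ is any edge of $B_i\cup B_j$ then deleting $e$ leaves whichever of $B_i,B_j$ contains $e$ connected and the other untouched, and they share a vertex; (b) no bridge becomes a loop under the contraction (if a bridge had both endpoints in the same block, the $2$-edge-connectivity of that block would supply a second path, contradicting that the edge is a bridge), which you use tacitly when asserting $|E(C')|=r$; and (c) the cycle-splicing argument for acyclicity also covers the multigraph case of two parallel bridges between the same pair of contracted nodes. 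The leaf characterization via $\deg_C(v)\ge 2$ for black vertices (all incident edges being bridges and hence surviving contraction) is exactly the right use of the $2$-edge-cover hypothesis.
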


Our main theorem of this section is the following.

\thmbridgecover*

We prove the theorem in three steps. 
First, recall that a $D_2$ can be found in polynomial time via an extension of Edmond's matching algorithm.
\begin{fact} \label{fact:D2}
A $D_2$ can be computed in polynomial time.
\end{fact}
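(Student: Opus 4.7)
The plan is to reduce the problem of computing a minimum-cardinality $2$-edge-cover to a maximum-weight general matching problem, which Edmonds' blossom algorithm solves in polynomial time. First, I would observe that a subgraph $F \subseteq E(G)$ is a $2$-edge-cover if and only if its complement $M := E(G) \setminus F$ satisfies $|M \cap \delta(v)| \leq \deg(v) - 2$ at every vertex $v$; consequently, minimizing $|F|$ is equivalent to maximizing $|M|$ subject to the vertex-degree upper bounds $b(v) := \deg(v) - 2$, which is precisely the \emph{maximum $b$-matching problem} with capacities $b(v)$. Note that a $2$-edge-cover exists at all because $G$ is $2$-edge-connected, so every vertex has degree at least $2$.

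Second, I would apply the standard Tutte-style gadget reduction from $b$-matching to general matching. Replace each vertex $v$ of $G$ by a gadget consisting of $\deg(v)$ ``slot'' vertices (one per edge of $G$ incident to $v$) and $\deg(v) - b(v) = 2$ ``auxiliary'' vertices, with a complete bipartite graph between slots and auxiliaries. Replace each original edge $uv$ of $G$ by a single edge joining the slot of $u$ corresponding to $uv$ to the slot of $v$ corresponding to $uv$. Assigning weight $1$ to slot--slot edges (which correspond to original edges of $G$) and weight $0$ to auxiliary--slot edges, a maximum-weight matching in the resulting graph $G'$ projects, via its slot--slot edges, to a maximum $b$-matching of $G$; taking the complement then yields a minimum-cardinality $2$-edge-cover.

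The reduction increases the graph size only by a polynomial factor, since $\sum_v (\deg(v) + 2) = O(|E(G)|)$, and Edmonds' blossom algorithm for maximum-weight general matching runs in strongly polynomial time. Putting these pieces together yields a polynomial-time algorithm for~$D_2$. The only subtle point is verifying that the Tutte gadget correctly enforces the capacity bounds and that the weight assignment aligns maximum-weight matchings with maximum $b$-matchings, but this is a standard textbook verification (see, e.g., Schrijver, \emph{Combinatorial Optimization}, Chapter~30) and does not require any argument specific to MAP.
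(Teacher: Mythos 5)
The high-level plan (complement a $2$-edge-cover to obtain a degree-bounded subgraph, i.e.\ a $b$-matching with $b(v)=\deg(v)-2$, and then reduce $b$-matching to ordinary matching) is a legitimate route, and the paper itself simply cites Edmonds' matching algorithm for this Fact without giving any details. However, the specific gadget you describe does not work with the objective you propose. In your graph $G'$, every slot vertex is incident to exactly one slot--slot edge, so the set of \emph{all} slot--slot edges is itself a matching; with slot--slot edges weighted $1$ and slot--auxiliary edges weighted $0$, that matching has weight $|E(G)|$, which is the unique maximum, and it projects to $M = E(G)$, violating $|M \cap \delta(v)| \le \deg(v)-2$ at every vertex. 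In other words, nothing in a (not-necessarily-perfect) maximum-weight matching of $G'$ forces the auxiliaries to be used, so the capacity constraint is simply never enforced.

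The Tutte gadget with $\deg(v)-b(v)$ auxiliaries per vertex is the gadget for the $b$-\emph{factor} problem (exact degrees), where one seeks a \emph{perfect} matching of $G'$; but a $b$-factor need not exist even when a $2$-edge-cover does (e.g.\ $G=K_{2,3}$ has a $2$-edge-cover of size $6 > |V|$, but no subgraph with $\deg_M(a)=\deg_M(b)=1$ and $\deg_M(c)=\deg_M(d)=\deg_M(e)=0$ exists). So one cannot simply ask for a maximum-weight perfect matching either. To make the reduction to ordinary matching go through for a $\le$-constrained $b$-matching one needs a different gadget in which auxiliaries can also be ``parked'' (e.g.\ pairing auxiliaries among themselves, or using the standard min-cost perfect-matching reduction for degree-constrained subgraphs), or one should instead cite a black-box polynomial-time algorithm for maximum $b$-matching / minimum $b$-edge cover directly (Schrijver, Vol.\ A, Ch.\ 30--31). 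As written, the proof has a concrete gap at the point where you claim that the maximum-weight matching of $G'$ projects to a maximum $b$-matching of $G$.
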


Second, we prove that we can transform any $D_2$ into a \emph{canonical} $D_2$.

\begin{definition}
A canonical $D_2$ is a $D_2$ satisfying the following properties:
\begin{itemize}
    \item all zero-edges are contained in the $D_2$,
    \item each small pendant block contains exactly 4 vertices, and
    \item each medium pendant block is incident to a bridge that is a unit-edge.
\end{itemize}
\end{definition}

We then prove the following lemma.

\begin{lemma} \label{lem:canonical-D2}
There is a polynomial-time algorithm that, given a $D_2$ of a structured graph $G$, computes a canonical $D_2$.
\end{lemma}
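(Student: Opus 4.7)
The plan is to transform the given $D_2$ into a canonical $D_2$ through three successive modification stages, one per property of canonicity, each preserving the invariants established by earlier stages while running in polynomial time.

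Stage~1 enforces that all zero-edges lie in $D_2$: to the given $D_2$, add every zero-edge of $G$ not already present. Since zero-edges contribute zero to the weight and the result remains a $2$-edge-cover, the modified subgraph is still a minimum-weight $2$-edge-cover and now contains every zero-edge of $G$.

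Stage~2 enforces that every small pendant block has exactly four vertices. In a simple structured graph, a small block is either a triangle (one zero-edge and two unit-edges) or a $4$-cycle (two non-adjacent zero-edges and two unit-edges); hence the only violators are triangle pendant blocks. Let $B = \{u, v, w\}$ be such a triangle with attachment vertex $v$, zero-edge $uw$ (which lies in $D_2$ by Stage~1), and unit-edges $uv$ and $vw$. Since $G$ has no cut vertex, at least one of $u$ or $w$, say $u$, has an edge $ux$ in $G$ with $x \notin V(B)$; moreover, $ux$ must be a unit-edge, since a zero $ux$ would already lie in $D_2$ by Stage~1 and then removing $uv$ would yield a strictly lighter $2$-edge-cover, contradicting minimality of~$D_2$. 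The exchange ``add $ux$, remove $uv$'' therefore preserves both weight and the $2$-edge-cover property, and destroys the triangle block~$B$. A potential-function argument using the number of triangle pendant blocks, verified not to increase under a single exchange via a local analysis around $x$, yields termination in polynomially many iterations.

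Stage~3 enforces that the bridge attaching each medium pendant block to its component is a unit-edge: for a medium pendant block $B$ attached via a zero-edge bridge $e$, I would use the absence of $S_0$, $S_1$, $S_2$, and $S_{\{3,4\}}$ in $G$ to locate a unit-edge $f$ from $V(B)$ to the rest of the graph and perform a swap that incorporates $e$ into a cycle together with $f$ (so $e$ is no longer a bridge) while designating a unit-edge as the new bridge, balancing the weight by removing an appropriately chosen unit-edge of~$D_2$. The hard part will be the case analysis in Stages~$2$ and~$3$ verifying that the required exchange edge always exists and that each exchange preserves all previously enforced properties without introducing fresh violations; the structured-graph hypothesis is invoked precisely to rule out the local configurations in which a suitable exchange edge would be unavailable. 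Bounding the total number of exchanges by the monotone potential then yields the overall polynomial running time.
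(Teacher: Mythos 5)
Your three-stage decomposition matches the paper's structure in spirit (Stage~1 is exactly what the paper does, and Stages~2--3 correspond to the paper's Lemma~\ref{lem:D2-reduction-of-small-medium-blocks}), but the proof as written has concrete gaps in Stages~2 and~3.

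In Stage~2 you fix the labeling so that the zero-edge of the pendant triangle is the edge $uw$ opposite the attachment vertex~$v$. That is not without loss of generality: the zero-edge of the triangle can just as well be incident to the attachment vertex. In that case the two unit-edges of the triangle both meet the non-attachment, non-zero-edge vertex (the vertex with degree exactly~$2$ inside the triangle), and the exchange ``add an outgoing edge at one endpoint, remove a unit-edge at that endpoint'' drops that third vertex to degree~$1$ -- the result is no longer a $2$-edge-cover. The paper handles this by splitting on whether the attachment-vertex--neighbour edge is a unit- or zero-edge, and in the zero-edge case it needs the no-contractible-subgraph hypothesis to guarantee that the remaining triangle vertex has an outgoing edge in~$G$; without that, the required exchange edge simply does not exist. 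Your proposal invokes ``no cut vertex'' but never the contractibility hypothesis, which is what saves this case.

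The termination argument is also not established. You propose the count of triangle pendant blocks as the potential, and say it is ``verified not to increase under a single exchange via a local analysis around $x$,'' but no such verification is given, and in fact a single exchange can locally create new pendant triangles or split blocks, so the naive count is not obviously monotone. The paper sidesteps this by using the lexicographic-style potential $\rho(H) = n^2 n_c + n\,n_s + n_m$ (components, then small blocks, then medium blocks without a unit-edge bridge): each exchange either merges components, or, failing that, enlarges the block containing the triangle, so $\rho$ strictly decreases. Your argument needs a potential of this kind, or at least a proof that your simpler potential works. Finally, Stage~3 is only a sketch: ``locate a unit-edge~$f$\dots perform a swap\dots balancing the weight by removing an appropriately chosen unit-edge'' is a plan, not a proof, and indeed the paper's treatment of medium blocks is a nontrivial case analysis (on the number of vertices and the number of incident bridges) that again leans on the contractible-subgraph exclusion to produce the needed outgoing edge.
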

We prove this lemma in Section~\ref{sec:bridge-covering:canonical-D2}.

Third, we prove that any canonical $D_2$ can be transformed into an economical bridgeless 2-edge-cover.

\begin{lemma} \label{lem:nice-bridgeless-2-edge-cover}
There is a polynomial-time algorithm that, given a canonical $D_2$ of a structured graph~$G$, computes an economical bridgeless 2-edge-cover of $G$.
\end{lemma}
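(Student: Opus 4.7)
The plan is a credit-based iterative bridge-covering procedure starting from the canonical $D_2$ returned by Lemma~\ref{lem:canonical-D2}. Assign every unit-edge of the current 2-edge-cover $H$ a credit of $c = \frac{5}{8}$, and maintain throughout the following invariant: each large component of $H$ carries credit at least $2$, each medium component carries credit at least $3c = \frac{15}{8}$, and each small component carries credit at least $2c = \frac{5}{4}$. At initialization $H = D_2$: the bound is visibly met on non-complex components, and complex components will be eliminated as the procedure runs. The goal is to end with an $H$ that is bridgeless and still satisfies the invariant on its final (large/medium/small) components; summing the final credit requirements then translates directly into the required cost bound on $||H||$.

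At each iteration, while some complex component $C$ of $H$ still contains a bridge, pick a pendant block $B$ of $C$ attached via bridge $e$ and search for an augmenting set $A \subseteq E(G) \setminus E(H)$ that together with $e$ forms a cycle through $V(B)$. The absence of cut vertices, parallel edges, $S_0$, $S_1$, $S_2$, $S_{\{3,4\}}$, $S_k$, $S'_k$, and contractible subgraphs in $G$ is precisely what guarantees that such a short $A$ exists and that its endpoint sits in an advantageous location. Add $A$ to $H$ and delete from $H$ every unit-edge that has been shortcut by $A$, recovering one credit per deleted unit-edge. A case analysis on the size of $B$ (small, medium, large) and on the type of the far endpoint of $A$ (another block of $C$, a black vertex of $C$, or a block of a different component) shows that in every case the credits freed by the merge, augmented by the credits salvaged from shortcut edges, suffice to pay for the edges of $A$ while leaving the newly formed component with enough credit for the invariant at its new size. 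The canonical structure produced by Lemma~\ref{lem:canonical-D2} is essential here: medium pendant blocks come with a unit-edge bridge that can be sold when shortcut, small pendant blocks are $4$-cycles offering two distinct degree-$2$ vertices from which any augmentation must leave, and all zero-edges are already in $H$ so no zero-edge is ever deleted.

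Since we never delete a unit-edge strictly inside a small component (shortcuts occur only across the merge), every small component of $H$ at termination was already a small component of the canonical $D_2$, meeting the small-component clause of the definition of an economical bridgeless 2-edge-cover. Termination in polynomially many rounds is immediate because the number of bridges strictly decreases per iteration. The final cost bound follows from the credit accounting: starting from weight $||D_2||$ and total credit $c \cdot ||D_2||$, the weight of $H$ plus the residual credit is invariant under a buy-sell step, so at termination $||H|| \le ||D_2|| + c\,||D_2|| - 2 n_\ell - 3c\,n_m - 2c\,n_s = \frac{13}{8}\,||D_2|| - 2 n_\ell - \frac{15}{8} n_m - \frac{5}{4} n_s$. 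The main obstacle is the accounting around small pendant blocks: with only $\frac{5}{4}$ credit they can barely afford a single added unit-edge, and ruling out every configuration in which no cheap augmentation is available is exactly what the exclusions of $S_0$, $S_1$, $S_2$, $S_{\{3,4\}}$, and of small contractible subgraphs achieve; the bulk of the technical work is enumerating these cases uniformly and checking the credit ledger in each.
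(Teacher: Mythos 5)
Your high-level plan --- start from the canonical $D_2$, assign $\frac58$ credit per unit edge, cover bridges iteratively while maintaining a credit invariant, and conclude the cost bound by a budget argument --- matches the paper's strategy, and your final accounting is arithmetically sound if the invariant holds. But there is a real gap in how you set up the invariant, and it is exactly where the difficulty lies.

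You state the invariant only at the level of the final component classification: large $\ge 2$, medium $\ge \frac{15}{8}$, small $\ge \frac54$. That invariant is meaningless during the procedure, because while bridges remain the graph still has \emph{complex} components that are neither small, medium, nor large. You acknowledge this (``complex components will be eliminated as the procedure runs'') but never say what credit those components carry or how it is distributed, so the case analysis you then gesture at has no ledger to check against. The paper resolves this with a strictly finer decomposition: each connected component gets one component credit, each \emph{block} inside a complex component gets its own block credit ($\frac14$ small, $\frac78$ medium, $\ge 1$ otherwise), and each black vertex gets $\frac{5}{16}$ per incident unit-edge. It is this refinement that makes the intermediate states well-defined and makes the step-by-step credit release provable. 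Without it, claims like ``small pendant blocks offer two degree-$2$ vertices from which any augmentation must leave'' cannot be converted into a bound.

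Related, you do not mention the witness-path idea, which is the mechanism the paper uses to actually find the missing unit of credit after each ear augmentation. A pseudo-ear starting at a pendant block $B$ of $C_0$ pays for its edges using the component credits of the components it visits, but that leaves the new merged component one credit short. The paper recovers this missing credit from the \emph{witness path} in $C_0$ between the root of $B$ and the ear's head vertex (Lemma~\ref{lem:pseudo-ear:enough-credit}), and the exclusion of cut vertices, $S_0$, $S_1$, $S_2$ is used to show a satisfying witness path always exists (Lemma~\ref{lem:pseudo-ear:algo-find-credits}). Your sketch instead proposes to delete shortcut unit-edges and ``recover one credit per deleted edge,'' but the paper's bridge-covering stage never deletes edges (deletions only occur in the later special-configuration stage), and permitting deletions here risks violating the clause that every small component of the output coincide with a small component of some $D_2$ --- your one-sentence justification of that clause is not enough given that you've introduced a new operation. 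In short: the budget identity at the end is fine, but the invariant as stated does not govern the states the algorithm actually passes through, and the source of the recovered credit (witness paths via the refined block/black-vertex accounting) is the missing core of the proof.
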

We prove this lemma in Section~\ref{sec:bridge-covering:nice-bridgeless-D2}.

It is easy to see that Theorem~\ref{thm:bridgeCover} follows from Fact~\ref{fact:D2} and Lemmas~\ref{lem:canonical-D2} and~\ref{lem:nice-bridgeless-2-edge-cover}. 

\subsection{Transforming a $D_2$ into a canonical $D_2$}
\label{sec:bridge-covering:canonical-D2}
In this section, we prove Lemma~\ref{lem:canonical-D2}. 
Throughout this section, we assume that $G$ is a structured graph.
Let $H$ be a $D_2$. 
Without loss of generality, we can assume that $H$ contains all zero-edges, since adding these edges does not increase the cost.
In order to transform $H$ into a canonical $D_2$ we make use of the following potential function.

\begin{definition}
Let $H$ be a 2-edge-cover of some graph $G$. Then the potential function $\rho$ is defined by
$$ \rho(H) = n^2 \cdot n_c + n \cdot n_s + n_m \ ,$$
where $n = |V(G)|$, $n_c$ is the number of connected components of $H$, $n_s$ is the number of small blocks of $H$, and $n_m$ is the number of medium blocks of $H$ that is not incident to a unit-edge bridge, respectively.
\end{definition}

Our goal is to modify $H$ by deleting unit edges of $H$ and adding unit edges to $H$ such that the total number of unit edges stays the same while decreasing the potential function~$\rho$ of~$H$.
The iterative application of the following lemma (and the algorithm described in its proof) implies Lemma~\ref{lem:canonical-D2}.

\begin{lemma} \label{lem:D2-reduction-of-small-medium-blocks}
Let $H$ be a $D_2$.  
If $H$ contains a block $B$ that is  small pendant and has less than 4 vertices, then $H$ can be transformed into a graph $H'$ such that $H'$ is a $D_2$ and $\rho (H') < \rho (H)$.
Furthermore, if $H$ contains a block $B$ that is medium with no unit-edge bridge incident to it, then $H$ can be transformed into a graph $H'$ such that $H'$ is a $D_2$ and $\rho (H') < \rho (H)$.
\end{lemma}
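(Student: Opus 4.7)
The plan is to produce, in each case of the lemma, an explicit exchange $H \to H'$ that swaps a unit-edge of $B$ in $H$ for a unit-edge in $G \setminus H$ connecting $V(B)$ to $V(G) \setminus V(B)$. Since all zero-edges of $G$ already belong to $H$, every edge of $G \setminus H$ is a unit-edge, so the swap is automatically weight-preserving. To preserve the $D_2$ property, I will remove an $H$-edge incident to a vertex of $H$-degree at least $3$ (so its degree only drops to $\geq 2$) and arrange the added edge to cover the other endpoint. The potential $\rho(H) = n^2 \cdot n_c + n \cdot n_s + n_m$ will strictly decrease because the exchange either merges two components of $H$ (saving a term of $n^2$, which dominates the bounded changes in $n \cdot n_s + n_m$ since $n = |V(G)| \geq 20$), or absorbs $B$ into a strictly larger $2$-edge-connected block (reducing $n_s$ or $n_m$ by at least one without creating compensating small or bad-medium blocks elsewhere).

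For the small pendant triangle case, I will write $V(B) = \{v, u_1, u_2\}$, where $v$ is the attachment vertex of $B$ inside its component. By the matching property of zero-edges, $B$ has exactly one zero-edge and two unit-edges, and at least one of those two unit-edges is incident to $v$; call it $\{v, u\}$. When $B$ is pendant in a complex component, $v$ has $H$-degree at least $3$ (two edges inside $B$ plus at least one bridge). The absence of a cut vertex in $G$ forces some non-attachment vertex of $B$ to have an edge in $G \setminus H$ to some $w \notin V(B)$; after a possible symmetric re-indexing I may take that vertex to be $u$ itself. The exchange $H' = (H \setminus \{\{v, u\}\}) \cup \{\{u, w\}\}$ then preserves degrees ($u$ stays at $2$, $v$ drops from $\geq 3$ to $\geq 2$, and $w$ gains degree), and a short analysis of the two sub-cases for $w$ will show $\rho(H') < \rho(H)$: if $w$ is in a different component of $H$, then $n_c$ drops by $1$; if $w$ is in the same component as $B$, the new cycle through $V(B)$ and the $v$-to-$w$ path in $H$ absorbs $B$ into a strictly larger $2$-edge-connected block, so $n_s$ drops by $1$ and no new small block is created.

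For the medium case ($B$ has weight $3$ and all bridges incident to $B$ are zero-edges), the argument will be completely parallel. Any vertex $v$ of $B$ incident to a zero-bridge has $H$-degree at least $3$ and, by the matching property of zero-edges, is incident to at least one unit-edge of $B$; picking the other endpoint $u$ of that unit-edge and invoking the no-cut-vertex property will yield an outside edge $e = \{u, w\}$, and the swap $H' = (H \setminus \{\{u, v\}\}) \cup \{e\}$ will either merge components (dropping $n_c$) or close a cycle through the previously-zero bridge at $v$, absorbing $B$ into a $2$-edge-connected block of weight strictly greater than $3$ and thereby reducing $n_m$. The main technical obstacle I anticipate lies in the sub-case where the triangle $B$ itself forms an entire component of $H$: then every vertex of $B$ has $H$-degree exactly $2$, single-edge swaps break the $D_2$ property, and the argument must instead combine two outgoing unit-edges of $V(B)$ (whose existence follows from the absence of an $S_0$ at the zero-edge of $B$ together with the absence of a cut vertex) with a compensating removal of both unit-edges of $B$ and possibly a redundant unit-edge outside $B$ created by the new extra incidences, using the structured-graph constraints to guarantee that such a double exchange exists and preserves $D_2$.
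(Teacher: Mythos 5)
Your overall mechanism — swap a unit-edge of $B$ at the high-degree bridge vertex for a unit-edge leaving $V(B)$, then argue $\rho$ drops because components merge or the block grows — matches the paper's. But there is a genuine gap at the step where you produce the outgoing edge at a \emph{specific} vertex. In the small case you write ``after a possible symmetric re-indexing I may take that vertex to be $u$ itself,'' but the symmetry only exists when the zero-edge of $B$ is the one \emph{not} incident to the attachment vertex $v$. When the zero-edge is $\{v,u_1\}$, the unique unit-edge at $v$ is $\{v,u_2\}$, so $u$ is forced to be $u_2$ and no re-indexing can make it $u_1$; if the only non-attachment vertex with an outside edge is $u_1$, your swap would have to delete the zero-edge $\{v,u_1\}$ and add a unit-edge, which is not weight-preserving and destroys the ``all zero-edges in $H$'' invariant. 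The paper closes this hole by observing that $u_2$ is incident to both unit-edges of $B$, so if $u_2$ has no outside edge then every $\ecss$ buys both of them, making $B$ contractible — a forbidden configuration. That contractibility argument, not the no-cut-vertex property, is the load-bearing fact, and your proposal never invokes it.

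The same problem is sharper in the medium case, which you only sketch. ``Invoking the no-cut-vertex property will yield an outside edge $e=\{u,w\}$'' is not true: absence of a cut vertex only guarantees that \emph{some} vertex of $B$ other than the bridge vertex has an outside edge, and for $|V(B)| \in \{4,5\}$ that vertex may not be adjacent to the bridge vertex, so your single-edge swap would drop the degree of the wrong vertex below $2$. The paper again routes through contractibility to force an outgoing edge at one of the two specific neighbors $v,w$ of the bridge vertex, and it also needs a separate internal-chord sub-case ($\{v,w\}\in E(G)$), because when that chord exists the contractibility argument genuinely fails (an optimum can cover $v$ and $w$ using the chord and two zero-edges, buying only one unit-edge of $G[V(B)]$). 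Your proposal is missing both pieces. Finally, the ``whole-component triangle'' worry you flag at the end does not arise: in the paper's usage, a pendant block is a leaf block of a complex component and is always incident to a bridge, so every vertex of $B$ incident to a bridge has $H$-degree at least $3$; the ``double exchange'' machinery you gesture at is unnecessary.
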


\begin{proof}
We prove the statements one by one. So let us first assume that $H$ contains a small block $B$ of some component $C$ that is pendant and has less than 4 vertices.
Note that $B$ has at least 3 vertices since $G$ is structured and therefore does not have parallel edges.
Hence, $B$ is a cycle of size 3 with exactly one zero-edge and two unit edges.
Let $u, v, w$ be the vertices of $B$.

Let us assume $B$ is a pendant small block and assume $u$ is the vertex incident to the unique bridge.
Note that if $v$ and $w$ both have degree 2 in $G$, then $u$ is a cut vertex, a contradiction to the fact that $G$ is structured.
Without loss of generality let $v$ be the vertex that has at least one edge $e$ incident to it in $E(G) \setminus E(H)$.
Then, if $\{ u, v \}$ is a unit-edge, we remove $\{ u, v \}$ from $H$ and add $e$ to $H$ in order to obtain $H'$ and observe that either the number of components has decreased or the newly created block that contains $u, v, w$ has strictly more vertices. 
Hence, $\rho (H') < \rho (H)$.
Else, if $\{ u, v \}$ is a zero-edge, we have that $w$ is incident to 2 unit-edges in $H$.
If $w$ has degree 2 in $G$, then $B$ is a contractable subgraph, a contradiction to the fact that $G$ is structured.
Hence, $w$ has some edge $e$ incident to it in $E(G) - E(H)$.
In this case $H' = H \cup \{e\} \setminus \{ u, w \}$ and observe that $H'$ is a $D_2$.
If $e$ is incident to a component different from $C$, then $\rho (H') < \rho (H)$.
Otherwise, let $B'$ be the block of $H'$ that contains $u, v, w$.
Then, $B'$ contains more vertices than $B$ and we have $\rho (H') < \rho (H)$.

Next, we prove the second statement. 
Let $B$ be a block of some component $C$ that is medium with no unit-edge bridge incident to it. 
If $B$ has 6 vertices, then $B$ is a cycle with 3 unit edges and 3 zero-edges. Hence all bridges incident to $B$ are unit edges.
Thus, we can assume that the number of vertices of $B$ is at most 5.
Furthermore, the number of vertices of $B$ is at least 3, since $G$ is structured and $H$ is a $D_2$.

Next, observe that if $B$ is incident to 2 or more (zero-edge) bridges, then $B$ must contain at least 4 vertices, as otherwise $H$ contains a unit-edge that can be deleted (a contradiction to the fact that $H$ is a $D_2$).
But then $B$ contains at least 4 unit-edges and $B$ is not medium (but large).

Hence, $B$ is incident to precisely one bridge and this is a zero-edge, by assumption.
Let $u$ be the vertex incident to the bridge and $v, w$ be the neighbors of $u$ in $B$. 
Observe that $\{u, v\}$ and $\{u, w \}$ are unit-edges.
If $\{ v, w \} \in E(G)$, then $H' = H \cup \{ v, w \} \setminus \{ u, v \}$ is a $D_2$, but the newly created block $B'$ containing $w$ now contains 4 vertices.
Hence, $\rho (H') < \rho (H)$.

Therefore, we assume that $\{ v, w \} \notin E(G)$.
Similar to before, if $v$ and $w$ have no edges to vertices of $V(G) \setminus V(B)$ in $G$, then $B$ is a $\frac{3}{2}$-contractable subgraph, since the number of unit-edges in $B$ is three and $v$ and $w$ must be incident to distinct unit-edges in $B$ in any optimal solution.
Hence, without loss of generality let $v$ have some edge $e$ incident to a vertex of $V(G) \setminus V(B)$.
Let $H' = H \cup \{e \} \setminus \{u, v\}$.
Clearly, $H'$ is a $D_2$.
If $e$ is incident to a component different from $C$, then the number of components is reduced, and hence $\rho (H') < \rho (H)$.
Else, observe that there is a block $B'$ in $H'$ that contains all vertices of $B$.
Since $B'$ has more vertices than $B$, either $B'$ is large or $B'$ is incident to a bridge that is a unit-edge.
Hence, in either case, $\rho (H') < \rho (H)$.
\end{proof}

\subsection{Transforming a canonical $D_2$ into an economical bridgeless 2-edge-cover}
\label{sec:bridge-covering:nice-bridgeless-D2}
In this section, we prove Lemma~\ref{lem:nice-bridgeless-2-edge-cover}. 
Throughout this section, we assume that $G$ is a structured graph.
Let $H$ be a canonical $D_2$ of $G$.
We transform $H$ into an economical bridgeless 2-edge-cover by adding (not too many) edges.

\subsubsection{Credit Invariant}

We use the following credit scheme. 
Each unit edge of $H$ is assigned a credit of $\frac{13}{8}$ and keeps a credit of $1$ in order to pay for the edge itself. 
Hence, each unit-edge has a left-over credit of $\frac{5}{8}$ and we wish to buy additional edges from this left-over and distribute the remaining credit such that the resulting graph does not contain bridges and the condition of an economical bridgeless 2-edge-cover is met.

We reassign the $\frac{5}{8}$ credit of each unit-edge to blocks, components, and vertices and wish to maintain this assignment throughout the transformation until we end up with an economical bridgeless 2-edge-cover.
We call a vertex \emph{black} if it is not contained in any block and {\em white} otherwise. 

We use the following {\em credit invariant}:
\begin{itemize}
    \item each component receives a component credit of at least 1, called c-credit,
    \item each component that is a small block receives a block credit of $\frac{1}{4}$, called b-credit,
    \item each component that is a medium block receives a block credit of $\frac{7}{8}$,
    \item every other block receives at least one block credit, and
    \item each black vertex $v$ receives $\frac{5}{16} \Deg_H^{(1)}(v)$ credit, called n-credit, where $\Deg_H^{(1)}(v)$ denotes the number of unit-edges incident to $v$ in $G$.
\end{itemize}

Next, we show that at the beginning of the algorithm the graph $H$, the canonical $D_2$ satisfies the following credit invariant.

\begin{lemma}\label{lem:credit-invariant-initialization}
Let $H$ be a canonical $D_2$. Then $H$ satisfies the credit invariant.
\end{lemma}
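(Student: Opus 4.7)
The plan is to exhibit, for the given canonical $D_2$ graph $H$, an explicit credit assignment that meets every constraint of the invariant, and to verify it component by component. Each unit-edge of $H$ distributes its $\tfrac{5}{8}$ credit as follows: if the edge lies inside a block $B$, its $\tfrac{5}{8}$ is handed to $B$; if it is a bridge $e$, it transfers $\tfrac{5}{16}$ to each of its black endpoints and keeps the residual $\tfrac{5}{8}-\tfrac{5}{16}\cdot(\text{number of black endpoints of }e)$ as component credit. By construction each black vertex $v$ then receives exactly $\tfrac{5}{16}\,\Deg_H^{(1)}(v)$, so the n-credit condition is automatically satisfied, and it remains to verify the b-credit and c-credit conditions.

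For a component consisting of a single block with $k$ unit-edges, the total credit $\tfrac{5k}{8}$ splits into (b-credit, c-credit) equal to $(\tfrac{1}{4},1)$ when $k=2$, $(\tfrac{7}{8},1)$ when $k=3$, and $(1,\ge\tfrac{3}{2})$ when $k\ge 4$, directly matching the invariant.

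The substantive work is for a complex component $C$ whose block-cut tree $T$ has $b$ block nodes and some number of black-vertex nodes. After paying the required $1$ b-credit to each block, the leftover surplus per block is $\tfrac{1}{4}$, $\tfrac{7}{8}$, or $\ge\tfrac{3}{2}$ depending on whether it is small, medium, or large, and each unit-edge bridge retains residual $\ge 0$. I will show these quantities sum to at least the required $1$ c-credit. A short degree-sum argument on $T$, using that every black vertex has $H$-degree $\ge 2$ and hence degree $\ge 2$ in $T$, forbids $b=1$, so $b\ge 2$. If $C$ contains a large block, that block alone contributes surplus $\ge\tfrac{3}{2}$; if $C$ contains a medium block, then combined with any second block we get $\tfrac{7}{8}+\tfrac{1}{4}>1$; and if every block is small and $b\ge 4$, the four small surpluses already sum to $\ge 1$.

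The only remaining, and main, case is when every block of $C$ is small and $b\in\{2,3\}$, where the block surplus $\tfrac{b}{4}$ alone falls short of $1$. This is where the canonicality of $H$ enters. By the canonical property, every pendant small block is a $4$-cycle in which unit- and zero-edges alternate, so its cut vertex is already incident to a zero-edge of the cycle; since zero-edges form a matching in $G$, the bridge leaving this pendant block cannot be a zero-edge and must be a unit edge. Because $T$ has at least two leaves, this yields at least two unit-edge bridges whose block-side endpoint is not black, each contributing residual at least $\tfrac{5}{16}$ (and the full $\tfrac{5}{8}$ when its other endpoint also lies in a block). A brief enumeration over $b=2$ (the tree is a path between two pendant blocks, possibly through some black vertices) and $b=3$ (either a path with one internal block, or a ``Y'' meeting at a black vertex), with care not to double-count a bridge whose two endpoints are both pendant blocks, shows that the sum of block surpluses and bridge residuals comfortably exceeds $1$ in every sub-configuration, completing the verification.
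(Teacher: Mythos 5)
Your proof is correct and takes essentially the same route as the paper's: distribute each unit-edge's $\tfrac{5}{8}$ credit so that n-credits for black vertices are satisfied automatically, dispatch the single-block and large-block cases directly, and invoke the matching property of zero-edges to conclude that every pendant small $4$-cycle in a canonical $D_2$ carries a unit-edge bridge, which supplies the missing surplus for complex components containing only small blocks. The only real difference is organizational and in bookkeeping: you track per-block surpluses plus bridge residuals via an explicit case split on the number $b$ of blocks (using the leaves-are-blocks fact to get $b\ge 2$), whereas the paper argues more tersely by lower-bounding the total vertex credit inside just the two pendant blocks and leaving non-pendant blocks to pay their own b-credit from internal unit-edges; your more careful accounting also sidesteps the paper's loose ``$6\cdot\tfrac{5}{8}\ge 3$'' tally, which attributes each bridge's full credit to a single block's vertices rather than only the $\tfrac{5}{16}$ that actually lands there.
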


\begin{proof}
First, note that each unit-edge has a credit of $\frac{5}{8}$ and we distribute the credit of a unit-edge such that each of its end vertices receives a credit of $\frac{5}{16}$. This directly satisfies the credits for black vertices.

Next, consider components that consist of single blocks only. 
Observe that each small, medium, or large block has 2, 3, or at least 4 unit-edges, and hence each component can receive a c-credit of one and a block credit of $\frac{1}{4}$, $\frac{7}{8}$ or $1$, respectively. 
Therefore, the credit invariant holds for these components.

Next, consider a complex component that contains a block that is large. 
This large block has a credit of at least 2 and we can use one credit for the component credit of this component and keep~1 credit for the block credit of this large block. 
Furthermore, observe that the credit of the vertices of each other block can be used to satisfy the block credit of its block.
Hence, these components and blocks satisfy the credit invariant.

Finally, consider a complex component $C$ that contains no large block.
Then $C$ has two pendant blocks $B_1$ and $B_2$ that are medium or small.
By the definition of a canonical $D_2$, $B_1$ and $B_2$ each must contain a unit-edge incident to it.
Hence, the total credit of the vertices of $B_1$ and $B_2$ is at least $6 \cdot \frac{5}{8} \geq 3$.
Thus, we can use this credit of at least 3 for the component credit of 1 and the two block credits of 1 each. 

Hence, the credit invariant is satisfied.
\end{proof}

\subsubsection{Pseudo-ear augmentation}

\begin{figure}
\centering
\begin{subfigure}{.5\textwidth}
  \centering
  \includegraphics[width=.7\linewidth]{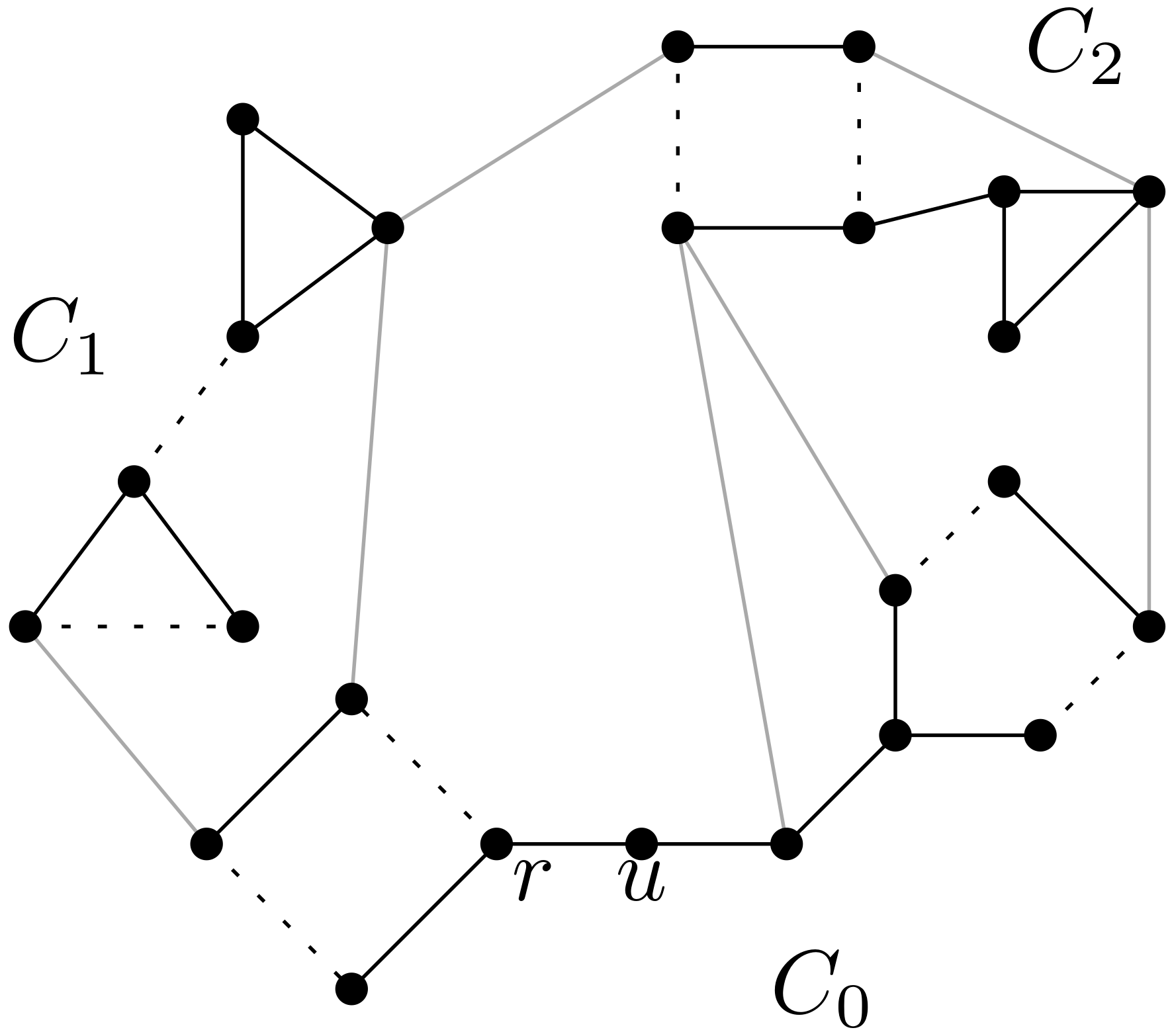}
\end{subfigure}%
\begin{subfigure}{.5\textwidth}
  \centering
  \includegraphics[width=.7\linewidth]{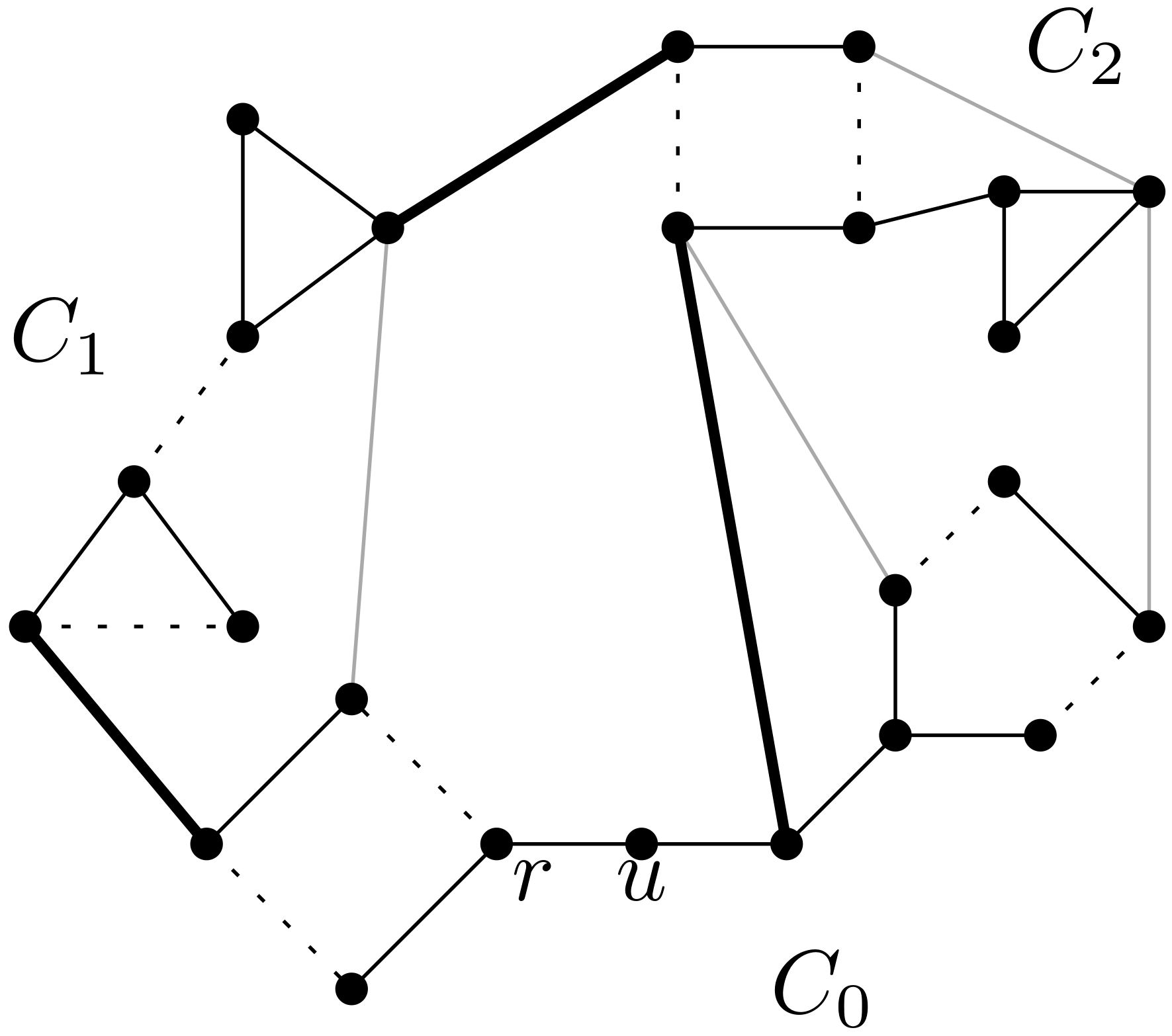}
\end{subfigure}
\caption{Illustration of a bridge-covering step. Zero-edges are dotted, black edges are in $H$ and gray edges are in $G \setminus H$. Bold edges are added in the current step of the pseudo-ear augmentation.}
\label{fig:bridge-covering}
\end{figure}

In order to transform $H$ into an economical bridgeless 2-edge-cover, we proceed as follows.
In each iteration, we pick a component $C$ that has a bridge incident to a pendant block $B$ and add an appropriate set of edges $F \subseteq E(G) \setminus E(H)$ to $H$ such that the unique bridge incident to $B$ is "covered" (afterward, the edge is not a bridge anymore).
We would like to "pay" for this edge by "finding" enough credits, for example by covering enough bridges, which then release their credit since they are contained in a newly formed block. 

Let $\Tilde{H}$ be the graph that arises from $H$ by contracting each block of $H$ to a single vertex.
Similarly, let $\Tilde{G}$ be the graph that arises from $G$ by contracting the blocks of $H$ to single vertices.
In the following, we stick to the notation of~\cite{CheriyanCDZ21} and use the following definition.
Consider some component $C_0$ of $H$ that has a bridge, let $B$ be a pendant 2EC block of $C_0$ and let $\{r, u \}$ be the unique bridge incident to $B$, where $r \in V(B)$.
\begin{definition}
A pseudo-ear of 2-edge-cover $H$ with respect to $C_0$ starting at $B$ is a sequence $B, f_1, C_1, f_2, C_2, ..., f_{k-1}, C_{k-1}, f_k$, where $C_1, C_2, ..., C_{k-1}$ are distinct connected components of $H$, for each $i \in [k-1]$ we have $f_i \in E(G) - E(H)$, $f_i$ has one end vertex in $C_{i-1}$ and one end vertex in $C_i$, and $f_1$ has an end vertex in $B$ and $f_k$ has one end vertex in $C_0 - V(B)$.
The end vertex of $f_k$ in $C_0 - V(B)$ is called the head vertex of the pseudo-ear.

Any shortest path (w.r.t. the number of edges) of $C_0$ between $r$ and the head node of the pseudo-ear is called the witness path of the pseudo-ear.
\end{definition}

First, note that a pseudo-ear is a simple cycle of length $k$ in $\Tilde{H}$, when contracting each connected component to a single vertex.
Second, observe that the block created by the pseudo-ear contains $B$, such that the block credit of $B$ can be transferred to the newly created block.
Third, note that after adding a pseudo-ear to $H$, all components $C_0, C_1, ..., C_{k-1}$ are now in a single component. 
Hence, the component credit of each component visited by the pseudo-ear can be used to pay for the edges of the pseudo-ear. 
However, we still need to "find" one credit in order to obtain the component credit of one for the newly created component.
To do so, observe that all credits of the witness path $Q - r$ are released and can be used for this component credit.
This is exactly our goal in the preceding part of this section.

In each iteration of the pseudo-ear augmentation, we compute a satisfying pseudo-ear covering some bridge using a polynomial time algorithm, which is presented in Lemma~\ref{lem:pseudo-ear:algo-find-credits}.
By \emph{satisfying} we mean that the witness path of the computed pseudo-ear releases a credit of at least 1.
The next lemma suggests which witness paths and therefore pseudo-ears are satisfying, which is an extension of Lemma~17 in~\cite{CheriyanCDZ21}.

\begin{lemma}\label{lem:pseudo-ear:enough-credit}
Let $Q$ be a pseudo-ear of $H$ with respect to $C_0$ starting at $B$, let $P$ be a witness path of $Q$, and let $\{ r, u \}$ be the unique bridge of $C_0$ incident to $B$.
Suppose that $P$ satisfies one of the following:
\begin{itemize}
    \item[a)] $P$ contains a white vertex distinct from $r$, or
    \item[b)] $P$ contains exactly one white vertex and at least 3 unit-edges, or
    \item[c)] $P$ contains exactly one white vertex, 2 unit edges and the head vertex $w$ of $P$ is incident to a unit edge that is not in $P$, or
    \item[d)] $P$ contains exactly one white vertex, 2 unit edges, and the edge $ \{ r, u \}$ is a zero-edge. 
\end{itemize}
Then $P-r$ has at least one credit, and that credit is not needed for the credit invariant of the graph resulting from the pseudo-ear augmentation that adds $Q$ to $H$.
\end{lemma}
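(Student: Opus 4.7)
The plan is to prove the lemma by case analysis on (a)--(d), preceded by two structural observations that isolate where the released credit comes from.

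First I would observe that, since $B$ is a pendant $2$-edge-connected block of $C_0$ and $\{r,u\}$ is its unique incident bridge, every path in $C_0$ from $r$ to a vertex of $C_0 - V(B)$ must use $\{r,u\}$ as its first edge. Hence $P$ begins with $\{r,u\}$, $V(P-r)$ is disjoint from $V(B)$, and every block of $C_0$ that $P$ enters beyond $B$ is distinct from $B$. Second, because $C_0$ contains the bridge $\{r,u\}$ it is a complex component, so by the credit invariant every block of $C_0$---including $B$---is covered by the ``every other block'' clause and carries block credit at least $1$. In particular $B$'s block credit can be repurposed to pay for the block credit of the new big block $B^*$ created by the pseudo-ear augmentation.

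For case~(a), pick any white vertex $x \ne r$ on $P$; by the first observation $x$ lies in a block $B' \ne B$ of $C_0$, which carries block credit at least $1$. After the augmentation, $P$ together with the pseudo-ear edges $f_1,\dots,f_k$ and appropriate internal paths through $C_1,\dots,C_{k-1}$ forms a cycle in $H'$, so all vertices of $P$---in particular $V(B')$---are absorbed into the new $2$-edge-connected block $B^*$. Hence $B'$'s block credit of at least $1$ is released, giving the required credit.

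For cases~(b)--(d), every vertex of $P-r$ is black and has degree $2$ in $H$. Writing $P = r, u_1, \ldots, u_{p-1}, w$ with $e_i = \{u_{i-1}, u_i\}$, the two $H$-edges incident to $u_i$ are $e_i$ and $e_{i+1}$ for $1\le i\le p-1$, while $w$ has $e_p$ on $P$ and one external $H$-edge $e'$ not on $P$. Counting unit-edge/vertex incidences along $P-r$ yields
\[\sum_{i=1}^{p} \Deg_H^{(1)}(u_i) \;=\; 2q - [e_1 \text{ unit}] + [e' \text{ unit}],\]
where $q$ is the number of unit edges of $H$ on $P$. Substituting: (b) with $q\ge 3$ gives sum $\ge 5$ and n-credit at least $\tfrac{25}{16}$; (c) with $q=2$ and $e'$ unit (matching the hypothesis that $w$ is incident in $H$ to a unit edge off $P$) gives sum $\ge 4$ and n-credit at least $\tfrac{5}{4}$; (d) with $q=2$ and $e_1$ a zero-edge gives sum $\ge 4$ and n-credit at least $\tfrac{5}{4}$. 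After the augmentation all of $u_1,\ldots,w$ become white in $B^*$, so these n-credits are freed.

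The main obstacle I expect is the last clause of the lemma: showing that the released credit is not silently needed elsewhere in the invariant of $H'$. I would handle this by a side-by-side credit tally before and after the augmentation: the merger of the $k$ components $C_0,\ldots,C_{k-1}$ releases $k-1$ c-credits, which pay for the at most $k$ unit pseudo-ear edges up to a one-unit deficit; $B$'s original block credit is repurposed to $B^*$; any block or black vertex of $H$ that survives disjoint from the new cycle keeps its credit unchanged; and the released credit identified along $P-r$ is precisely what closes the remaining one-unit gap, so it is genuinely extra, as required.
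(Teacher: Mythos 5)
Your proof is correct and follows essentially the same route as the paper's: case~(a) releases the block credit of the block $B'\neq B$ absorbed into the new block, and cases~(b)--(d) count released n-credits on $P-r$, with the final bookkeeping matching the paper's Lemma~\ref{lem:pseudo-ear-augmentation-satisfied}. Your version is a bit more explicit than the paper's (the unit-edge/vertex-incidence identity makes the paper's hand-waved $\tfrac34$/$\tfrac14$ arithmetic precise), though the claim that every vertex of $P-r$ has degree exactly~$2$ in $H$ should be ``at least~$2$'' (a black vertex can be incident to more than two bridges), so the displayed identity is really a lower bound~--- which is all you use anyway.
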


\begin{proof}
First, suppose that $Q$ contains a white vertex $u$ distinct from $r$.
Then, the block $B_w$ that contains $w$ is distinct from $B$, and after adding the pseudo-ear $Q$ to $H$, the newly created block $B'$ of $H'$ that contains the vertices of $B$ also contains the vertices of $B_w$.
Recall that $B_u$ is a block of a complex component and hence has a block-credit of 1 and this credit can be released.

Next, consider the case where $Q$ contains exactly one white vertex ($r$) and at least 3 unit-edge bridges. 
In this case, all vertices of $P$ except $r$ can release their credit and this credit is at least 1.

Next, consider the case where $Q$ contains exactly one white vertex ($r$), exactly 2 unit-edge bridges and $u$ is incident to a unit-edge bridge that is not part of $P$.
In this case, note that the 2 unit-edge bridges release a credit of at least $\frac{3}{4}$ since at most $\frac{1}{4}$ of its credit was distributed to $r$ (and therefore is not released).
However, since the head vertex $w$ of $P$ is also incident to a unit-edge bridge $e$ that does not belong to $P$, $u$ receives an additional credit of $\frac{1}{4}$ by $e$.
Hence, in total, we obtain a credit of at least 1.

Finally, if $P$ contains exactly one white vertex, 2 unit-edges, and the edge $ \{ r u \}$ is a zero-edge, then the credit of the 2 unit-edges of $P$ is released. Hence, we obtain a credit of at least 1.
\end{proof}

Finally, similar to~\cite{CheriyanCDZ21}, we show that finding such pseudo-ears can be done in polynomial time.

\begin{lemma} \label{lem:pseudo-ear:algo-find-credits}
There is a polynomial-time algorithm that finds a pseudo-ear (of $H$ with respect to $C_0$ and $B$) such that any witness path $P$ of the pseudo-ear satisfies one of the conditions of Lemma~\ref{lem:pseudo-ear:enough-credit}.
\end{lemma}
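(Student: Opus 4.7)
The plan is to enumerate candidate head vertices and test the two requirements for each one, returning the first match. Concretely, form the auxiliary graph $\tilde G$ by contracting each component of $H$ into one super-vertex. A pseudo-ear with respect to $C_0$ starting at $B$ and ending at head $w\in V(C_0)\setminus V(B)$ corresponds to a simple walk in $\tilde G$ that starts via some edge of $E(G)\setminus E(H)$ leaving $V(B)$, passes through distinct components of $H$, and re-enters $C_0$ at $w$. For each candidate $w$ (there are only $O(|V(G)|)$), existence of such a pseudo-ear can be decided in polynomial time by deleting the super-vertex of $C_0$ from $\tilde G$ (retaining only $w$ and a boundary vertex of $B$) and running reachability, after which the pseudo-ear itself is reconstructed by path tracing. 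The witness path $P$ is then a shortest $r$-to-$w$ path in $C_0$, computable in polynomial time, and each of the four conditions (a)--(d) is checkable in $O(|P|)$. Iterating over all candidates and returning the first hit is polynomial overall.

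The real work is showing that some candidate $w$ must succeed. I would consider an extremal pseudo-ear $Q^{\star}$: among all pseudo-ears, choose one whose head $w^{\star}$ is as deep as possible in the block-tree of $C_0$ rooted at $u$, breaking ties to place $w^{\star}$ inside a pendant block when possible. At least one pseudo-ear exists because $G$ is 2-edge-connected and has no cut vertex, so there is an edge of $E(G)\setminus E(H)$ with one endpoint in $V(B)$ and another not equal to $u$. I would then case-analyze the witness path $P$ of $Q^{\star}$ by the number of white vertices in it. If $P$ contains two or more white vertices, condition (a) already holds. Otherwise $P$ lies, modulo one endpoint, inside a single block $B'$ of $C_0$ (or along a black chain), and I would classify by the type of $B'$. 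When $B'$ is large or medium, canonicality of $H$ together with the structure of pendant blocks gives $P$ at least three unit edges (case (b)) or a unit-edge bridge adjacent to $B'$ that can be incorporated into $P$ (case (c)). When $B'$ is small, canonicality forces $|V(B')|=4$, and the extremality of $w^{\star}$ together with absence of $S_0$, $S_1$, $S_2$, $S_{\{3,4\}}$, and contractible subgraphs forces either case (c) via an extra external unit edge at $w^{\star}$ or case (d) because the bridge $\{r,u\}$ must be a zero-edge.

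The main obstacle I expect is the final pathological configuration: a short witness path with exactly one white vertex and exactly two unit edges, where no external unit edge at $w^{\star}$ is visible and $\{r,u\}$ appears to be a unit edge. Here the plan is to combine the extremality of $Q^{\star}$ (which implies that no deeper candidate head admits a pseudo-ear, so removing $\{r,u\}$ together with one carefully chosen edge of $E(G)\setminus E(H)$ would separate a small piece of $C_0$ from the rest of $G$) with the forbidden configurations $S_1$, $S_2$, and $S_{\{3,4\}}$ to derive a contradiction with $G$ being structured. In the same spirit as the corresponding argument in Cheriyan et al.~\cite{CheriyanCDZ21}, the hope is that our richer preprocessing makes the case analysis cleaner rather than more delicate, and that the resulting algorithm is the straightforward enumeration described above.
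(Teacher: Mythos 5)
Your algorithmic wrapper (enumerate candidate heads, test each in polynomial time, return the first hit) is sound, but the structural argument that some candidate must succeed is not actually carried out, and this is where all the content of the lemma lives. The paper's proof does not use an extremal pseudo-ear at all: it takes the path $P' = r, u_1(=u), u_2, \dots, u_k$ from $r$ along bridges to the nearest block of $C_0$, cases on the index of the first white vertex $u_j$ and the number of unit-edges among $\{r,u_1\},\dots,\{u_{j-1},u_j\}$, and based on this chooses a small set $Z$ of at most three vertices of $C_0 \setminus V(B)$ to forbid as heads. The absence of cut vertices, $S_0$, $S_1$, and $S_2$ then guarantees both that a pseudo-ear avoiding $Z$ exists and that its witness path reaches far enough into $C_0$ to trigger one of conditions (a)–(d). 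Your "deepest head in the block-tree rooted at $u$" replaces this by a maximality criterion, which is a genuinely different organizing idea; but you never show it actually yields a witness path meeting the conditions, and your inference "extremality $\Rightarrow$ deleting $\{r,u\}$ and one extra non-$H$ edge separates a small piece of $C_0$" is not established (extremality only says no deeper head exists, and the forbidden $S_i$ structures are vertex separators, not edge separators).

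A second, more local issue: your case split "otherwise $P$ lies, modulo one endpoint, inside a single block $B'$ … classify by the type of $B'$" is not coherent. If the witness path $P$ passes through any block vertex other than $r$, that vertex is white and condition (a) is immediate, so there is nothing to classify. The only interesting case with exactly one white vertex is when $P - r$ is a chain of black vertices and bridges, and there is no block $B'$ to speak of; the argument must then count unit-edges along that chain and consult the canonicality of $H$ and the forbidden $S_i$'s to reach (b), (c), or (d). That is precisely the analysis the paper performs (with explicit thresholds on the number of unit-edges of $P'$ and the parity of zero-edges); your sketch names the right ingredients but stops at "the hope is that the case analysis works out," which leaves the lemma unproved.
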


\begin{proof}
We use simple case analysis to construct a set $Z$ of vertices of $C_0 \setminus V(B)$ with $|Z| \leq 3$ such that $G[V(G) \setminus Z]$ is connected and $C_0 \setminus (V(B) \cup Z)$ is connected.
Then, there exists a pseudo-ear $Q$ from $B$ to $C_0 \setminus (V(B) \cup Z)$ and it can be found by computing a shortest path in $G[V(G) \setminus Z] \setminus E(C_0)$ between $B$ and $C_0 \setminus (V(B) \cup Z)$.
The construction ensures that any witness path $P$ satisfies one of the conditions of Lemma~\ref{lem:pseudo-ear:enough-credit}.
Let $P'$ be the shortest path from $r$ to any other block in $H$ that uses the edge $ \{r u\}$.
We make a case analysis on the length of $P'$ with respect to the number of vertices. 
Let $P' = r u_1 (= u) u_2 ... u_k$.

If $u_1 = u$ is a white vertex, we set $Z \coloneqq \emptyset$. Then any pseudo-ear and its witness path $P$ satisfies condition a) of Lemma~\ref{lem:pseudo-ear:enough-credit}.

If $u_2$ is a white vertex, then we set $Z \coloneqq \{u_1 \}$. 
Then, since $G$ is structured and therefore has no cut-vertices, there must be a pseudo-ear $Q$ from $B$ to $C_0 \setminus (V(B) \cup Z)$ and its witness path $P$ satisfies condition a) of Lemma~\ref{lem:pseudo-ear:enough-credit}.

If $u_3$ is a white vertex and all edges of $P'$ are unit-edges, we set $Z \coloneqq \{u_1 \}$. 
Then, since $G$ is structured and therefore has no cut-vertices, there must be a pseudo-ear $Q$ from $B$ to $C_0 \setminus (V(B) \cup Z)$ and its witness path $P$ satisfies condition c) of Lemma~\ref{lem:pseudo-ear:enough-credit}.
Else, if $P'$ contains at most 2 unit-edges, we set $Z \coloneqq \{ u_1, u_2 \}$.
Now observe that since $G$ is structured and therefore can not contain cut-vertices, an $S_0$, or an $S_1$, there must be a pseudo-ear $Q$ from $B$ to $C_0 \setminus (V(B) \cup Z)$ and its witness path $P$ satisfies condition a) of Lemma~\ref{lem:pseudo-ear:enough-credit}. 
To see this, assume this is not true. 
Observe that $P'$ can have at most 2 zero-edges and if it contains exactly 2 zero-edges, then the unit-edge $\{ u_2, u_3 \}$ of $P'$ defines an $S_1$.
Else, if $P'$ contains exactly one zero-edge, then the edge $\{ u_2, u_3 \}$ defines an $S_0$ or an $S_1$, depending on whether $\{ u_2, u_3 \}$ is a zero-edge or a unit-edge. But this is a contradiction.

Now, let us assume that $u_4$ is a white vertex.
If $P'$ has exactly 2 unit-edges, we set $Z \coloneqq \{u_1, u_2, u_3 \}$. 
Then, since $G$ is structured and therefore has no cut-vertices, no $S_0$, no $S_1$, and no $S_2$ there must be a pseudo-ear $Q$ from $B$ to $C_0 \setminus (V(B) \cup Z)$ and its witness path $P$ satisfies condition a) of Lemma~\ref{lem:pseudo-ear:enough-credit}.
To see this, assume this is not true. 
Note that if $\{ r, u_1 \}$ is a zero-edge, then by the definition of a canonical $D_2$ the block $B$ must be large.
Hence, either $\{ r, u_1 \}$ is a unit-edge or the block $B$ is large.
In any case, $u_1 u_2 u_3$ defines an $S_2$, a contradiction.

If $P'$ has exactly 3 unit-edges, we set $Z \coloneqq \{u_1, u_2 \}$. 
Then, since $G$ is structured and therefore has no cut-vertices, no $S_0$, no $S_1$, and no $S_2$ there must be a pseudo-ear $Q$ from $B$ to $C_0 \setminus (V(B) \cup Z)$ and its witness path $P$ satisfies condition b), c), or d) of Lemma~\ref{lem:pseudo-ear:enough-credit}.
To see this, assume this is not true. 
Then, depending on whether $\{u_1, u_2 \}$ is a zero-edge or a unit-edge, $u_1 u_2$ defines an $S_0$ or an $S_1$, a contradiction.

If $P'$ has exactly 4 unit-edges, we set $Z \coloneqq \{u_1 \}$. 
Then, since $G$ is structured and therefore has no cut-vertices there must be a pseudo-ear $Q$ from $B$ to $C_0 \setminus (V(B) \cup Z)$ and its witness path $P$ satisfies condition b), c), or d) of Lemma~\ref{lem:pseudo-ear:enough-credit}.

Finally, let us assume that $u_k$ is a white vertex for some $k \geq 5$.
We make the same case distinction as in the case that $u_4$ is a white vertex and use the same arguments.
Let $P'' = r u_1 u_2 u_3 u_4 u_5$. If $P''$ has exactly 2 unit-edges, then we set $Z \coloneqq \{u_1, u_2, u_3 \}$. 
Then, since $G$ is structured and therefore has no cut-vertices, no $S_0$, no $S_1$, and no $S_2$ there must be a pseudo-ear $Q$ from $B$ to $C_0 \setminus (V(B) \cup Z)$ and its witness path $P$ satisfies condition b), c) or d) of Lemma~\ref{lem:pseudo-ear:enough-credit}.
If $P''$ has exactly 3 unit-edges, we set $Z \coloneqq \{u_1, u_2 \}$. 
Then, since $G$ is structured and therefore has no cut-vertices, no $S_0$, no $S_1$, and no $S_2$ there must be a pseudo-ear $Q$ from $B$ to $C_0 \setminus (V(B) \cup Z)$ and its witness path $P$ satisfies condition b) or c) of Lemma~\ref{lem:pseudo-ear:enough-credit}.
If $P'"$ has exactly 4 unit-edges, we set $Z \coloneqq \{u_1 \}$. 
Then, since $G$ is structured and therefore has no cut-vertices there must be a pseudo-ear $Q$ from $B$ to $C_0 \setminus (V(B) \cup Z)$ and its witness path $P$ satisfies condition b), c), or d) of Lemma~\ref{lem:pseudo-ear:enough-credit}.
\end{proof}

The next lemma shows that using Lemmas~\ref{lem:pseudo-ear:enough-credit} and~\ref{lem:pseudo-ear:algo-find-credits} we can cover the  bridges of $H$.
The iterative application of this lemma together with the Lemma~\ref{lem:credit-invariant-initialization} implies Lemma~\ref{lem:nice-bridgeless-2-edge-cover}.

\begin{lemma}\label{lem:pseudo-ear-augmentation-satisfied}
Suppose that $H$ satisfies the credit invariant.
If a pseudo-ear augmentation is applied to $H$, then the resulting graph $H'$ satisfies the credit invariant.
\end{lemma}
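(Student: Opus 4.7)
The plan is to verify the credit invariant locally at the changes caused by the pseudo-ear augmentation; everything outside the affected substructure is unchanged and stays valid automatically. Let $Q = B, f_1, C_1, \ldots, f_{k-1}, C_{k-1}, f_k$ be the chosen pseudo-ear with witness path $P$ in $C_0$, and set $H' = H \cup \{f_1, \ldots, f_k\}$. First I would describe the structural effect: the components $C_0, C_1, \ldots, C_{k-1}$ merge into a single new component $C'$, and a new block $B'$ forms that contains $V(B)$, the internal vertices of $P$, and the vertices of any blocks of the $C_i$ traversed by the cycle created by $Q$. All blocks and black vertices outside this absorbed cycle keep their identity and their incident unit-edges, and hence their b- and n-credits remain satisfied for free.

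Next I would do the credit accounting. On the released side, the c-credits of the $k$ old components $C_0, \ldots, C_{k-1}$ sum to at least $k$; the b-credit of $B$ is at least $1$ because $C_0$ is complex and $B$ therefore falls into the ``every other block'' category; and Lemma~\ref{lem:pseudo-ear:enough-credit}, applied to the pseudo-ear produced by Lemma~\ref{lem:pseudo-ear:algo-find-credits}, guarantees that $P - r$ releases at least one extra unit of credit that is genuinely disposable (``not needed for the invariant of $H'$''). On the required side: $C'$ needs a c-credit of $1$; the new block $B'$ needs a b-credit of at most $1$ (it is $\frac{1}{4}$, $\frac{7}{8}$, or $1$ in the case $C' = B'$, and $1$ otherwise, since $B \subseteq B'$ puts $B'$ into the ``every other block'' category whenever $C'$ is complex); each new edge $f_i$ costs $1$; and each black endpoint of an $f_i$ that remains black in $H'$ needs an extra $\frac{5}{16}$ of n-credit to match its increased $\Deg_{H'}^{(1)}$.

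I would then match sources to uses. Assign $C_0$'s c-credit to $C'$; the c-credits of $C_1, \ldots, C_{k-1}$ pay for $k-1$ of the new edges; the remaining edge is paid by the $\ge 1$ credit freed on $P - r$; and $B$'s b-credit covers $B'$'s. The additional $\frac{5}{16}$ n-credits at persistent black endpoints of each $f_i$ are supplied by $f_i$'s own $\frac{5}{8}$ of distributable credit in $H'$, split $\frac{5}{16}$ to each endpoint exactly as in Lemma~\ref{lem:credit-invariant-initialization}, or handed to $B'$ as additional b-credit if an endpoint lies inside $B'$. With these assignments every term of the credit invariant for $H'$ is matched.

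The main obstacle I anticipate is keeping the accounting tight in the boundary case where $C' = B'$ becomes a small or medium single-block component; here $B'$'s required b-credit drops to $\frac{1}{4}$ or $\frac{7}{8}$ but is still comfortably covered by the $\ge 1$ contributed by $B$, so the inequality is safe. A secondary concern is ensuring that the ``$\ge 1$ from $P - r$'' has not already been silently committed elsewhere (for example to the b- or n-credit of a surviving block or black vertex of $C_0$); this is precisely the disposability clause built into Lemma~\ref{lem:pseudo-ear:enough-credit}, which is why its four sub-cases (a)--(d) are formulated as they are, so no double-counting occurs.
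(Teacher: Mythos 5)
Your accounting is the paper's: release the c-credits of all $k$ components $C_0,\dots,C_{k-1}$ plus the $\geq 1$ freed along $P - r$, pay for the $k$ new edges and for $C'$'s c-credit, and carry $B$'s block credit to $B'$ (the paper labels the sources slightly differently, assigning the witness-path credit to $C'$ rather than to $f_k$, but the totals are identical). Two remarks, neither of which affects correctness here. First, the n-credit worry you raise is vacuous: every endpoint of every $f_i$ lies on the cycle formed by $Q$, the connecting paths inside each $C_i$, and the witness path, so every such endpoint is absorbed into $B'$ and is therefore white in $H'$; there are no ``persistent black endpoints'' of the new edges. Second, even if there were, the remedy you propose does not exist: the $\frac{13}{8}$-per-unit-edge budget is assigned only to edges of the initial $D_2$, and a newly purchased edge such as $f_i$ brings no ``own $\frac{5}{8}$ of distributable credit'' --- which is exactly why its full cost of $1$ must be paid out of released c-credits and the witness-path release, as the rest of your argument correctly does.
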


\begin{proof}
Let $C_0$ be a component having a bridge $\{r, u \}$, and let $B$ be the block such that $r \in V(B)$ and $u \notin V(B)$.
Let $B, f_1, C_1, f_2, C_2, ..., f_k$ be a pseudo-ear $Q$ found by Lemma~\ref{lem:pseudo-ear:algo-find-credits} with head vertex $v$ and let $P$ be its witness path.
Let $B'$ be the block that is created by adding $Q$ to $H$ such that $V(B) \subset V(B')$ and let $C'$ be the component that is created by adding $Q$ to $H$, such that $V(C_i) \subseteq V(C')$ for all $i \in [k]$.
The block credit of $B$ is simply used for the block credit of $B'$.
The component credit of each $C_i$ is used to pay for the edge $f_i$.
The additional credit of 1 that is needed for the component credit of $C'$ is obtained since we apply the algorithm of Lemma~\ref{lem:pseudo-ear:algo-find-credits}.
All other credits stay the same and hence the graph $H'$ arising from adding $Q$ to $H$ satisfies the credit invariant.
\end{proof}

\newpage
\section{Computing a special configuration}\label{sec:ComputingSpecialConfiguration} 

In this section, we are given as input an economical bridgeless 2-edge-cover $H$ of a structured graph~$G$.
The goal in this section is to compute a \emph{special configuration} of $G$, which is an economical bridgeless 2-edge-cover with some additional properties.
In order to define a special configuration, we first need the following definitions.

\begin{definition}[notation $G / H$, shortcuts, good cycle, savings]
Fix a structured graph $G$ and a bridgeless $2$-edge-cover $H$ of $G$ with components $C_1, C_2,\cdots, C_t$. We use $G / H$ to denote the graph $G / \{C_1, C_2, \cdots, C_t\}$. 
We refer to the vertices of $G/H$ as {\bf nodes} which correspond to the components of $H$.

Let $ C^\circ$ be a node of $G/H$ which corresponds to a medium or small component $C$ of~$H$. 
Let $S$ be a  $2$-edge-connected subgraph of $G/H$ (not necessarily spanning) such that it contains the node $C^\circ$. 
We say $S$ {\bf shortcuts} $C^\circ$, if there exists a simple spanning path $C'$ of $G[V(C)]$ such that $||C|| - ||C'|| = 1$ and the graph obtained from $S$ by expanding its node $C^\circ$ into $C'$, i.e., $((V(S)\setminus \{\hat C\})\cup V(C), E(S)\cup E(C')))$, is $2$-edge-connected.
When $S$ shortcuts a node $C^\circ$, it is possible to expand the node $C^\circ$ with a different set of edges than those in $C$ such that the resulting graph remains $2$-edge-connected but contains $1$ less unit-edge.

We define $\savings (H, S) = \sum_{C^\circ \in S} ||C|| - ||C'||$. 
Informally speaking, $savings(H, S)$ is the number of edges inside the components incident to $S$ that can be saved.
We sometimes omit $H$ in the definition and write $\savings(S)$ if $H$ is clear from the context.


A (simple) cycle $S$ in $G/H$ is {\bf good} if any of the following holds.
\begin{itemize}
\item[(i)] $S$ contains at least two nodes corresponding to large components.
\item[(ii)] $S$ shortcuts at least two nodes.
\item[(iii)] $S$ contains at least one node corresponding to a large component and shortcuts at least one node.
\end{itemize}
\end{definition}

\begin{figure}
\centering
\begin{subfigure}{.5\textwidth}
  \centering
  \includegraphics[width=.7\linewidth]{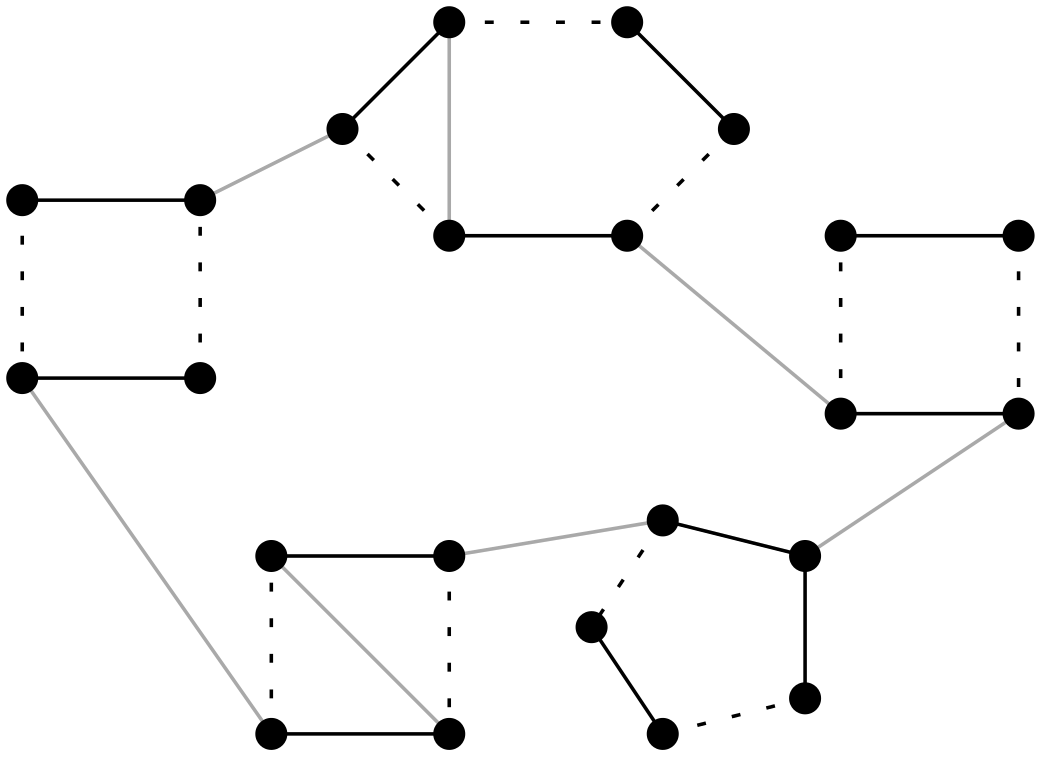}
  \caption{A good cycle.}
  \label{fig:good-cycle-1}
\end{subfigure}%
\begin{subfigure}{.5\textwidth}
  \centering
  \includegraphics[width=.7\linewidth]{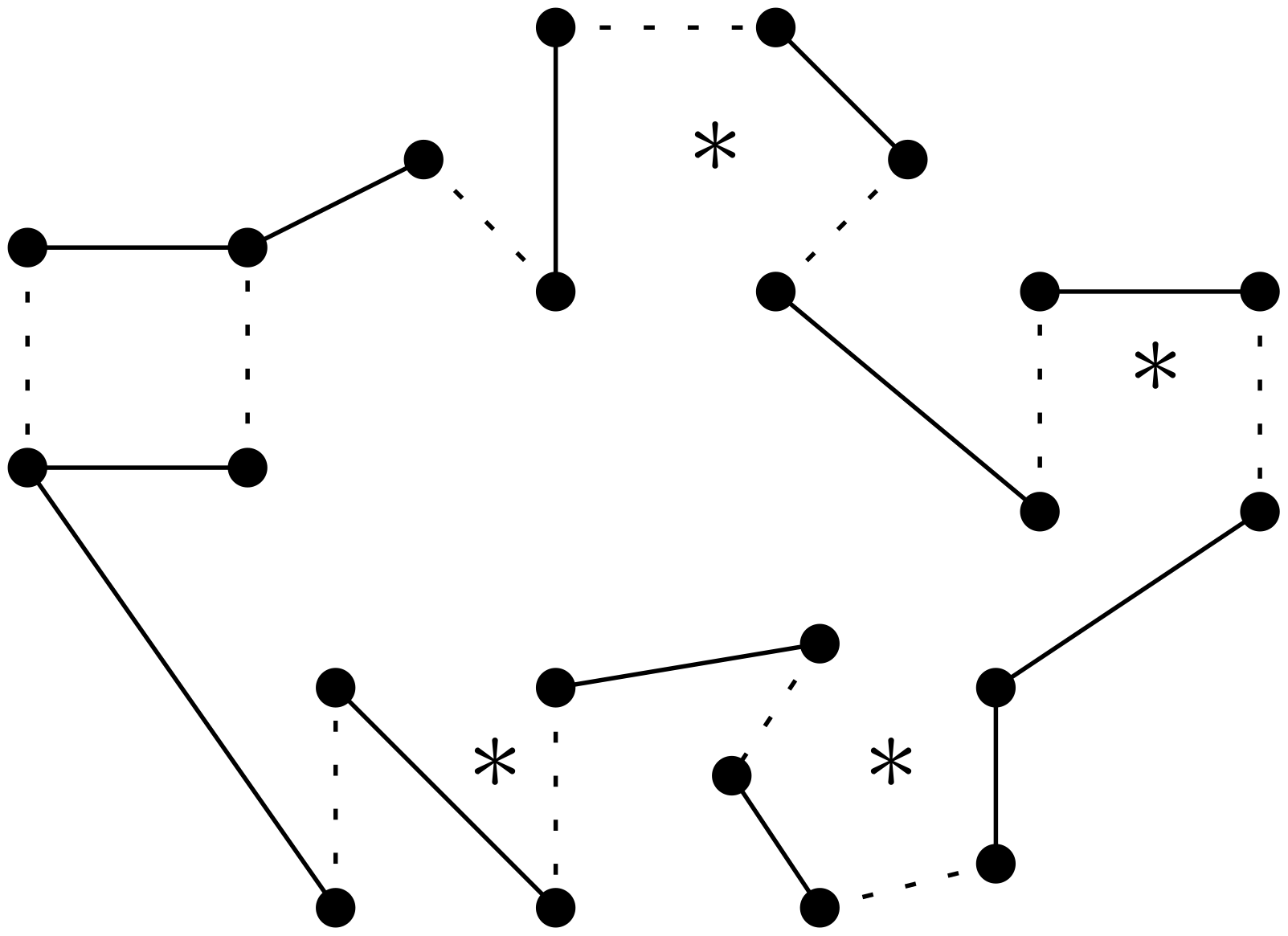}
   \caption{A good cycle with shortcuts.}
  \label{fig:good-cycle-2}
\end{subfigure}
\caption{Example of a good cycle. The unit edges are dotted. In~\ref{fig:good-cycle-1}, the dark edges are contained in the components, the gray edges between the components form the good cycle in $G / H$. Figure~\ref{fig:good-cycle-2} shows how the cycle can be expanded to $G$. Components marked with a $\ast$ are shortcut.}
\label{fig:good-cycle}
\end{figure}

Note, in $S$ if we expand a node $C^\circ$ to its corresponding component $C$, the resulting graph remains $2$-edge-connected. 
An illustration of a good cycle together with shortcuts is given in Figure~\ref{fig:good-cycle}.

\begin{definition}[augmenting paths]
Let $H$ be a bridgeless 2-edge-cover of some structured graph $G$ and let $P$ be a (not necessarily simple) path in $G/H$ of length $k$ such that each interior node corresponds to a small component of $H$. 
We say that $P$ is a \textbf{$k$-augmenting path} w.r.t\ $H$ if for each interior node $C^\circ$ of $P$ there exists a path $P_C$ in $G[V(C)]$, where $C$ is the corresponding component to $C^\circ$, such that
\begin{itemize}
\item[(i)] $P_C$ spans all vertices of $C$, 
\item[(ii)] $P_C$ contains exactly one unit-edge, and
\item[(iii)] $P \cup \left( \bigcup_{C^\circ \text{interior node in } P} P_C \right)$ is a path in $G$.
\end{itemize} 
We say that $P$ is an \textbf{open} $k$-augmenting path if $P$ is additionally simple, i.e.\ each node appears at most once in $P$.
We say that $P$ is a \textbf{closed} $k$-augmenting path if the head and tail of $P$ are the same nodes and each other node appears exactly once.
Note that a closed $k$-augmenting path is a cycle in $G / H$; it is important to specify the ordering of the nodes in $P$ since the conditions (i)-(iii) only hold for the interior nodes.
\end{definition}

\begin{figure}
\centering
\begin{subfigure}{.5\textwidth}
  \centering
  \includegraphics[width=.7\linewidth]{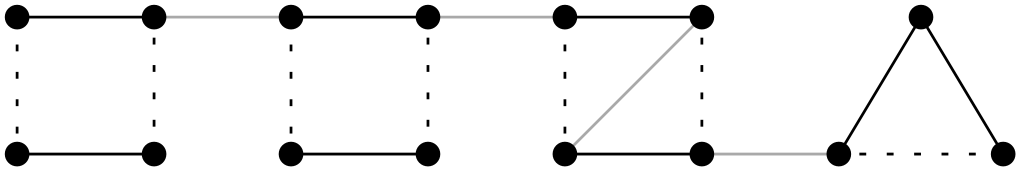}
\end{subfigure}%
\begin{subfigure}{.5\textwidth}
  \centering
  \includegraphics[width=.7\linewidth]{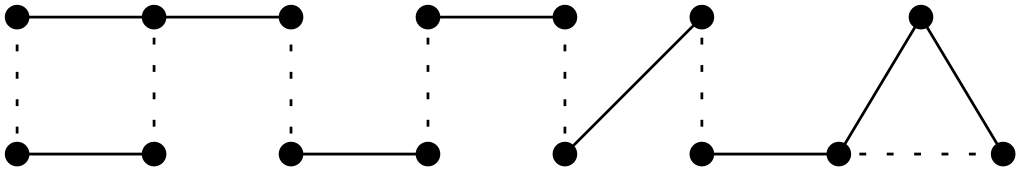}
\end{subfigure}
\caption{Example of an open 3-augmenting path.}
\label{fig:augmenting-path}
\end{figure}

In this section, we only use open $3$-augmenting paths.
The other definitions will be useful in later sections.
For an illustration of an open 3-augmenting path, we refer to Figure~\ref{fig:augmenting-path}.

\begin{definition}[Small to medium merge, small to large merge]
Let $H$ be a bridgeless 2-edge-cover of some structured graph $G$.
We say that $H$ contains a \textbf{small to medium merge} (\textbf{small to large merge}, resp.) if $H$ contains 3 small components $S_1, S_2, S_3$ such that $G[V(S_1) \cup V(S_2) \cup V(S_3)]$ contains two disjoint simple cycles $C_1$ and $C_2$ (one cycle $C_1$, resp.) that span $G[V(S_1) \cup V(S_2) \cup V(S_3)]$ such that $||C_1|| = ||C_2|| = 3$ (such that $||C_1|| = 6$, resp.).
In other words, the three small components can be `merged' into two medium (one large, resp.) components with the same number of unit edges.
\end{definition}

\begin{figure}
\centering
\begin{subfigure}{.5\textwidth}
  \centering
  \includegraphics[width=.7\linewidth]{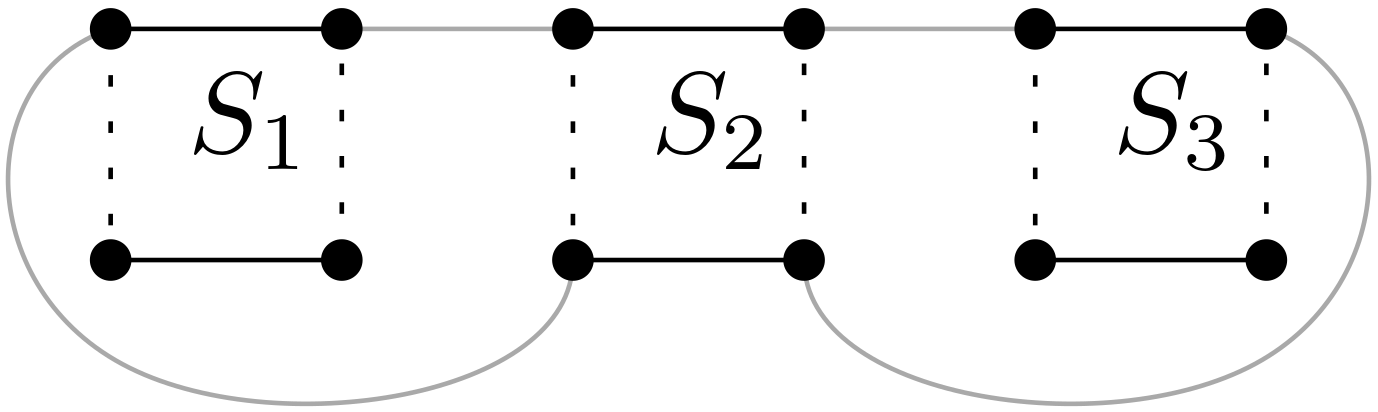}
\end{subfigure}%
\begin{subfigure}{.5\textwidth}
  \centering
  \includegraphics[width=.7\linewidth]{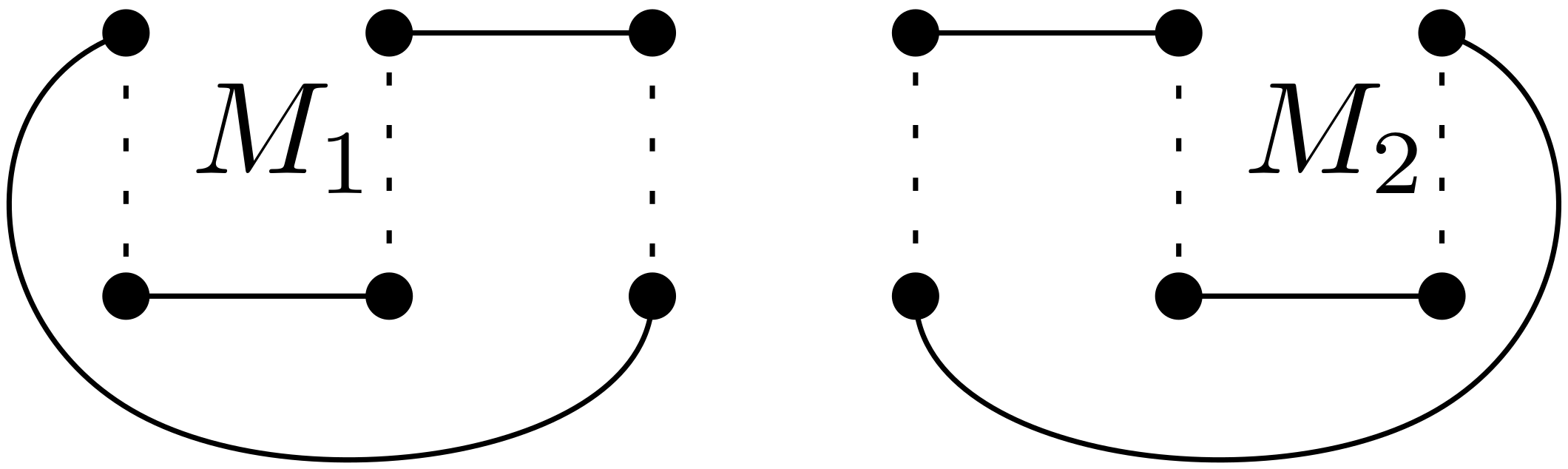}
\end{subfigure}
\caption{Example of a small to medium merge.}
\label{fig:small-to-medium}
\end{figure}

For an illustration of a small to medium merge, we refer to Figure~\ref{fig:small-to-medium}.

\begin{definition}[Special configuration]
Given a structured graph $G$, we say $H$ is a special configuration of $G$ if 
\begin{itemize}
    \item[(i)] $H$ is an economical bridgeless $2$-edge-cover of $G$,
    \item[(ii)] $H$ does not contain medium components,
    \item[(iii)] $G/H$ does not contains good cycles,
    \item[(iv)] $G/H$ does not contain open $3$-augmenting paths, and
    \item[(v)] $H$ does not contain a small to medium merge or small to large merge.
\end{itemize}
\end{definition}

We are now ready to state our main result in this section.

\thmspecialconfig*

To prove the above theorem, we first need to show that we can detect such forbidden configurations.
In order to do so, we will make use of the following key fact, which is proved in~\cite{kawarabayashi2008improved}.

\begin{proposition}
\label{prop:cycle-finding-edges}
Given a graph $G$ and an edge set $F \subseteq E(G)$ such that the number of edges in $F$ is bounded by some constant $k$, we can find a cycle in $G$ that contains all edges of $F$ in polynomial time if one exists.
\end{proposition}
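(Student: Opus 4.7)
The plan is to reduce the problem to the classical $k$-vertex-disjoint paths problem, for which polynomial-time algorithms exist when $k$ is a fixed constant. Write $F = \{e_1, \ldots, e_m\}$ with $e_i = \{u_i, v_i\}$ and $m \leq k$. A simple cycle $C$ in $G$ that contains every edge of $F$ is completely determined by (a) a cyclic ordering $\pi$ of $[m]$ specifying in which order $C$ visits the edges of $F$, (b) an orientation $\sigma_i \in \{+,-\}$ for each edge $e_i$ specifying the direction in which $C$ traverses $e_i$, and (c) a collection of internally vertex-disjoint paths in $G - F$, one connecting the ``head'' of $e_{\pi(j)}$ to the ``tail'' of $e_{\pi(j+1)}$ for each $j$ (indices modulo $m$), each such path avoiding the endpoints of $F$ other than its own two endpoints.

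First I would enumerate all $O(m! \cdot 2^m) = O(1)$ choices of the cyclic ordering $\pi$ and the orientation vector $\sigma$. Each fixed choice produces a list of $m$ source-sink pairs $(s_1,t_1),\ldots,(s_m,t_m)$ in $G - F$. To enforce the ``do not touch other $F$-endpoints'' constraint, I would, for each choice, delete from $G - F$ every endpoint of $F$ that is not among $\{s_j,t_j : j \in [m]\}$, and then use the standard duplication trick so that each remaining $F$-endpoint is permitted only as an endpoint of its designated path. This transforms the search into a genuine $m$-vertex-disjoint paths instance on a graph of size $O(|V(G)|)$.

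I would then invoke the algorithm of Robertson and Seymour for the $k$-disjoint paths problem, which runs in polynomial time for any fixed number of pairs (in fact $O(n^3)$, with subsequent improvements). If some guess $(\pi,\sigma)$ yields disjoint paths $P_1,\ldots,P_m$, stitching them together with the edges of $F$ in the prescribed cyclic order gives the desired simple cycle; otherwise, exhausting all constantly many guesses certifies that no such cycle exists. Because $m \leq k = O(1)$, the total running time is polynomial.

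The main obstacle is not the disjoint paths algorithm itself (which is quoted as a black box), but setting up each subinstance so that vertex-disjoint paths in the reduced graph correspond \emph{exactly} to valid completions of $F$ into a simple cycle: one has to rule out paths that illegally use an unselected endpoint of $F$ as an internal vertex, or that reuse the same endpoint of $F$ for two different paths. Once this bookkeeping is handled via vertex deletion and splitting as above, correctness of the reduction is routine and the proposition follows.
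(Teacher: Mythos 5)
Your reduction to the $k$-vertex-disjoint paths problem is correct and is the standard way to establish this statement. Note, however, that the paper does not prove Proposition~\ref{prop:cycle-finding-edges} in-house: it delegates the result to the cited work~\cite{kawarabayashi2008improved}, which obtains an improved (quadratic) running time but rests on the same underlying idea you describe, namely guessing the cyclic order and orientations of the $F$-edges and then solving a constant-size disjoint-paths instance via the Robertson--Seymour machinery. Two small bookkeeping points that you already gesture at but are worth making explicit: (i) if some vertex is incident to three or more edges of $F$, one should reject immediately, since every vertex of a simple cycle has degree exactly two, and this case is not automatically screened out by the enumeration of $\pi$ and $\sigma$ unless the vertex-splitting is set up carefully; and (ii) when two consecutive $F$-edges (in the guessed order) share an endpoint, the corresponding connecting ``path'' is a single vertex, so the disjoint-paths solver must accept trivial paths, and that shared vertex will appear as a terminal of two different pairs, which is precisely where the duplication trick is needed. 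With those caveats handled, the argument is sound and matches the approach underlying the reference the paper relies on.
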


Given this fact, we can check in polynomial time if our economical bridgeless 2-edge-cover $H$ contains any of the four obstructions (ii)-(v).

\begin{lemma}
\label{lem:find-special-obstruction-poly}
Let $H$ be an economical bridgeless 2-edge-cover of some structured graph $G$. 
Then we can find in $H$ in polynomial time any of the four obstructions (ii)-(v) of a special configuration if one exists.
\end{lemma}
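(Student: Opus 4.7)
My plan is to handle the four obstructions (ii)--(v) individually, reducing each either to enumeration over constant-size structures or to an application of Proposition~\ref{prop:cycle-finding-edges}. First I would dispatch obstruction (ii) by iterating over the connected components of $H$ and checking their weight, which is linear time. For obstruction (v), I would enumerate triples of small components---there are only polynomially many, since each small component has at most $4$ vertices---and for each triple exhaustively test all edge subsets on the at most $12$ spanned vertices to see whether some realizes the required decomposition into two weight-$3$ cycles (small-to-medium merge) or a single weight-$6$ cycle (small-to-large merge); only constantly many edge subsets need to be examined per triple.

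For obstruction (iii), I would exploit Proposition~\ref{prop:cycle-finding-edges}: a good cycle in $G/H$ is determined, up to the cycle-finding step, by a constant amount of local data---either the two large-component nodes it contains (type~(i)), or the pair(s) of edges at nodes it shortcuts (types~(ii) and (iii)), or a combination. Concretely, I would first precompute, for each small or medium component $C$ and each ordered pair $(a,b) \in V(C)^2$, whether $G[V(C)]$ admits a spanning $a$-$b$ path of weight $||C|| - 1$; since $|V(C)| \le 6$ this costs only constant work per component. Next I would enumerate all polynomially many constant-size tuples of anchor edges in $G$ (two external edges at a candidate shortcut component meeting its designated vertices $a,b$, or one external edge at a candidate large node) together with the matching shortcut witnesses and large-component assignments, and for each tuple invoke Proposition~\ref{prop:cycle-finding-edges} on the corresponding at most $4$ edges of $G/H$. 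For each returned cycle, the shortcut certificates from the precomputation directly yield a good cycle.

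Obstruction (iv) I would handle more directly: an open $3$-augmenting path has only $3$ edges and $2$ interior small nodes with prescribed internal paths, so enumerating ordered pairs of small components, candidate entry/exit vertex pairs within them, and the three external edges in $G/H$ suffices, with constant-time verification of the definition per guess. The main obstacle I anticipate is, in the shortcut case of (iii), confirming that the cycle returned by Proposition~\ref{prop:cycle-finding-edges} genuinely realizes the intended shortcuts rather than traversing the candidate shortcut component via some third edge that spoils the witness; but this will be automatic, since a simple cycle uses at most two edges incident to any node, so anchoring \emph{both} external edges at each candidate shortcut component forces the cycle's trace there to be exactly those two edges, at which point the precomputed shortcut table guarantees the required internal $a$-$b$ path of weight $||C|| - 1$. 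Summing the enumeration costs together with the polynomial cost of each invocation of Proposition~\ref{prop:cycle-finding-edges} yields an overall polynomial running time.
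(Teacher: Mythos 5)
Your proof is correct and follows essentially the same approach as the paper: enumerate the constant-size obstructions (ii), (iv), (v) directly, and for good cycles (iii) fix a constant number of anchor edges at the candidate large/shortcut nodes and invoke Proposition~\ref{prop:cycle-finding-edges}. Your treatment is somewhat more careful than the paper's sketch (which only works out one illustrative case), notably the precomputed shortcut table and the observation that anchoring both edges at a shortcut node pins down its trace in the returned simple cycle, but these are detail-level refinements rather than a different route.
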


\begin{proof}
All obstructions, except for good cycles, only have a constant number of edges and hence we can simply enumerate.
It remains to prove the statement for good cycles.
Consider for example the case in which we are looking for a cycle that shortcuts one small component $S$ and contains one large component $L$ (we enumerate over all such pairs).
We enumerate over all pairs of edges outgoing from $S$ potentially creating a shortcut for $S$ and all pairs of edges outgoing from $L$.
Then, using Proposition~\ref{prop:cycle-finding-edges}, we can check in polynomial time whether there is a simple cycle going through these four edges in $G / H$. 
If there is such a cycle, then we have found a good cycle.
If we did not find a cycle after enumerating over all such components and pairs of edges, then there does not exist a good cycle.
This finishes the proof.
\end{proof}

In our algorithm, we repeatedly search for any of the obstructions (ii)-(v) in $H$ and apply some `merge' step, such that afterward the obtained graph $H'$ is still an economical bridgeless 2-edge-cover, where the number of components of $H'$ is strictly less than the number of components in $H$.
This immediately implies that the running time of the algorithm is polynomial.

In order to do so, similar to prior work, we establish a \emph{credit invariant} as follows.

\begin{itemize}
    \item[(i)] Each large component receives a credit of at least 2,
    \item [(ii)] each medium component receives a credit of $\frac{15}{8}$, and
    \item[(iii)] each small component receives a credit of $\frac{5}{4}$.
\end{itemize}

For a component $C$ of $H$, we write $\credit(C)$ to denote its credit.
It is straightforward to verify that a bridgeless 2-edge-cover $H$ is economical if and only if it satisfies the above credit invariant.
Note that each component of $H$ corresponds to a node in $G / H$ and hence we sometimes say that a node of $G / H$ has certain credit.

The credit invariant is used as follows.
Consider some 2-edge-connected subgraph $S$ of $G / H$.
Note that when adding the edge-set of $S$ to $H$ results in a graph such that all nodes contained in $S$ form a new large component $C_S$ when suitably expanded.
Hence, the number of components in the resulting graph $H \cup S$ is strictly less.
Furthermore, note that the credits of all nodes in $S$ are released and can be used as credit for the newly created large component and it can be used to `pay' for the edges in $S$.
Hence, by the definition of $\savings(H, S)$, the overall credit that can be transferred to the newly created large component $C_S$ is at least 
$$\credit(C_S) = \sum_{C^\circ \in S} \credit(C) + \sum_{C^\circ \in S} \savings(C) - ||S|| \ . $$

In order to satisfy the credit invariant and since the newly created component $C_S$ is necessarily large, we have to argue that $\credit(C_S) \geq 2$.

\subsection{Handling good cycles}
In this subsection, we show how to turn an economical bridgeless 2-edge-cover $H$ of some structured graph $G$ that admits a good cycle $S$ of $G/H$ into an economical bridgeless 2-edge-cover $H'$ with strictly fewer components.
The main statement is the following.

\begin{lemma}
\label{lem:handling-good-cycles}
Let $H$ be an economical bridgeless 2-edge-cover of some structured graph $G$ that admits a good cycle $S$ of $G / H$.
Then there is an economical bridgeless 2-edge-cover $H'$ that has strictly fewer components than $H$.
\end{lemma}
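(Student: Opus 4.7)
The plan is to construct $H'$ by buying the edges of the good cycle $S$ (adding them to $H$) and, for every node $C^\circ$ of $S$ that is shortcut, selling one unit-edge so that the subgraph on $V(C)$ becomes the certifying spanning path $C'$ together with all zero-edges of $C$ (the zero-edges can always be retained at no cost, so $H'$ keeps every zero-edge of $G$). The components of $H$ whose contracted images lie off $V(S)$ are unchanged, whereas the components incident to $V(S)$ merge into a single new component $C^*$; since $|V(S)| \geq 2$, the graph $H'$ has strictly fewer components than $H$. A routine degree check (each endpoint of a $C'$ has degree $1$ in $C'$ but picks up the incident $S$-edge, while interior vertices of $C'$ have degree $2$) shows that $H'$ is still a $2$-edge-cover.

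Next I would verify that $C^*$ is $2$-edge-connected. By the shortcut property, each path $C_i'$ has the two $S$-attachment points at $C_i^\circ$ as its endpoints, so the cycle $S$ lifts to a cycle $\widetilde{S}$ of $G$ that traverses every node of $V(S)$: through $C_i'$ when $C_i^\circ$ is shortcut, and along any internal path between the two attachment points otherwise. Every $S$-edge and every edge of a shortened $C_i'$ then lies on $\widetilde{S}$, while every remaining edge of $C^*$ lies inside an unchanged (and $2$-edge-connected) component $C_i$ and so lies on some cycle inside $C_i$. Hence every edge of $C^*$ is on a cycle, i.e., $C^*$ is $2$-edge-connected and $H'$ is bridgeless.

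For the credit invariant, let $k = |E(S)|$, let $n_\ell, n_m, n_s$ count the large, medium, and small nodes of $S$, and let $\mathrm{sh} = \savings(S)$ denote the number of shortcuts. The credits released by the merged components total $2n_\ell + \tfrac{15}{8}n_m + \tfrac{5}{4}n_s$; I gain $\mathrm{sh}$ more credit by selling the shortcut unit-edges; I spend $k = n_\ell + n_m + n_s$ to buy the edges of $S$; and I must reserve $2$ credits for the new large component $C^*$. Simplifying, the leftover is
\[
n_\ell + \tfrac{7}{8}n_m + \tfrac{1}{4}n_s + \mathrm{sh} - 2,
\]
which is non-negative in each of the three cases defining a good cycle: (i) $n_\ell \geq 2$, (ii) $\mathrm{sh} \geq 2$, or (iii) $n_\ell \geq 1$ and $\mathrm{sh} \geq 1$. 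Since the small components of $H'$ are a subset of those of $H$, the $D_2$-compatibility part of being economical is inherited, and hence $H'$ is an economical bridgeless $2$-edge-cover.

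The step I expect to require the most care is establishing $2$-edge-connectedness of $C^*$ when several nodes of $S$ are shortcut simultaneously, since the definition of a shortcut is phrased one node at a time; the lifted cycle $\widetilde{S}$ is the natural device to handle this uniformly. Everything else is driven by the single credit inequality above and a short case check over the three possibilities in the definition of a good cycle.
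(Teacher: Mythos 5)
Your proof is correct and follows essentially the same route as the paper: buy the edges of the good cycle, replace each shortcut component by its certifying spanning path, and verify the credit inequality $n_\ell + \tfrac{7}{8}n_m + \tfrac{1}{4}n_s + \savings(S) \geq 2$, which is exactly the paper's inequality $\savings(H,S) + \sum_{C^\circ\in S}\credit(C)\geq |V(S)|+2$ in expanded form. Your explicit verification of $2$-edge-connectedness via the lifted cycle $\widetilde{S}$ -- using that the attachment points of the two $S$-edges at a shortcut node must coincide with the endpoints of $C'$ -- is a correct filling of a step the paper asserts without comment.
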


\begin{proof}
Let us assume that $S$ contains some components $C_1, C_2, ..., C_k$ that are all shortcut.
Then, by the definition of a good cycle, we can replace $H[V(C_i)]$ by some spanning subgraph $X_i$ of $C_i$ such that the graph 
\[H' \coloneqq (H \setminus \bigcup_{i \in [k]} H[V(C_i)]) \cup S \cup \bigcup_{i \in [k]} X_i \] 
is a bridgeless 2-edge-cover and all nodes of $S$ are now contained in a single large component.
Hence, it remains to prove that $H'$ is economical.

First, observe that all components not contained in $S$ simply maintain their credits, and hence their credits are satisfied.
Therefore, it remains to check if the credits of the newly created component $C_S$ are satisfied.
Note that each component has a credit of at least 1, and by the definition of a good cycle, it can be easily checked that $S$ satisfies
\[
\savings(H, S) + \sum_{C^\circ \in S} \credit(C) \geq |V(S)| + 2 \ .
\]
Hence, since $S$ is a simple cycle in $G / H$ we have that the newly created large component $C_S$ in $H'$ satisfies
\begin{align*}
\credit(C_S) & = \sum_{C^\circ \in S} \credit(C) + \savings(H, S) - ||S|| \\
& \geq  |V(S)| + 2 - |V(S)| \geq 2 \ .
\end{align*}
This proves the lemma.
\end{proof}

\subsection{Handling a small to medium merge and a small to large merge}
In this subsection, we consider the case that $H$ contains a small to medium merge or a small to large merge and show how to transform it into an economical bridgeless 2-edge-cover with strictly fewer components.
\begin{lemma}
\label{lem:handling-small-to-medium-merge}
Let $H$ be an economical bridgeless 2-edge-cover of some structured graph $G$ that admits a small to medium merge or a small to large merge.
Then there is an economical bridgeless 2-edge-cover $H'$ that has strictly fewer components than $H$.
\end{lemma}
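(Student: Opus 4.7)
The plan is to perform the merge explicitly and verify that the resulting graph satisfies all five conditions in the definition of an economical bridgeless $2$-edge-cover. Let $S_1, S_2, S_3$ be the three small components witnessing the merge, and let $W = V(S_1) \cup V(S_2) \cup V(S_3)$. In the small-to-medium case, let $C_1, C_2$ be the two disjoint simple cycles in $G[W]$ with $||C_1|| = ||C_2|| = 3$ that together span $G[W]$; in the small-to-large case, let $C_1$ be the single spanning cycle with $||C_1|| = 6$. Define
\[
H' \;:=\; \bigl(H \setminus E(H[W])\bigr) \,\cup\, E(C_1) \,\cup\, E(C_2),
\]
(omitting $E(C_2)$ in the small-to-large case).

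The routine checks are as follows. Since every component of $H$ different from $S_1, S_2, S_3$ has no edges into $W$, the components of $H'$ are exactly those old components together with $C_1$ (and $C_2$). Each $C_j$ is a simple cycle and hence a bridgeless $2$-edge-connected subgraph; every vertex of $W$ has degree $2$ in the new cycles, so $H'$ is a bridgeless $2$-edge-cover of $G$. The number of components strictly decreases (by $1$ in the small-to-medium case and by $2$ in the small-to-large case), and $||H'|| = ||H||$ because we remove $6$ unit-edges from the three small components and insert exactly $6$ unit-edges via the new cycles. The invariant that each small component of $H'$ is a small component of some $D_2$ of $G$ is preserved by reusing the $D_2$ that works for $H$, since the set of small components of $H'$ is a subset of the set of small components of $H$.

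The subtlety worth checking carefully is that $H'$ still contains every zero-edge of $G$. Because the zero-edges of $G$ form a matching and every zero-edge of $G$ lies in $H$, every zero-edge incident to $W$ lies entirely inside some $S_i$. If $|V(S_i)| = l_i \in \{3,4\}$ then $S_i$ contains exactly $l_i - 2$ zero-edges, so the number of zero-edges of $G$ inside $W$ is $l_1 + l_2 + l_3 - 6$. On the other hand, the total number of edges in $C_1 \cup C_2$ (or $C_1$) equals $l_1 + l_2 + l_3$ with exactly $6$ of them being unit-edges, hence exactly $l_1 + l_2 + l_3 - 6$ zero-edges. Since all these zero-edges must come from $G$, and $G$ has exactly that many zero-edges inside $W$, the cycles pick up all of them.

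The remaining task is the economical inequality $||H'|| \leq \frac{13}{8}\,||D_2(G)|| - 2 n'_\ell - \frac{15}{8} n'_m - \frac{5}{4} n'_s$. In the small-to-medium case $(n'_\ell, n'_m, n'_s) = (n_\ell, n_m + 2, n_s - 3)$, so the right-hand side changes by $-\frac{15}{8}\cdot 2 + \frac{5}{4}\cdot 3 = 0$, matching $||H'|| = ||H||$ exactly; this is the tight case. In the small-to-large case $(n'_\ell, n'_m, n'_s) = (n_\ell + 1, n_m, n_s - 3)$, so the right-hand side increases by $-2 + \frac{15}{4} = \frac{7}{4} > 0$, and the inequality only gets stronger. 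Thus $H'$ is an economical bridgeless $2$-edge-cover with strictly fewer components, as required. The main technical care needed is just the zero-edge bookkeeping above; the credit accounting is a direct calculation and the tight balance in the small-to-medium case is precisely what motivated the choice of the $\frac{5}{4}$ credit for small components.
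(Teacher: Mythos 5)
Your proof is correct and follows the same route as the paper: replace the three small components with the merging cycle(s) and verify the credit bookkeeping via $3\cdot\frac{5}{4} = 2\cdot\frac{15}{8}$ (resp.\ $3\cdot\frac{5}{4} \geq 2$). You are somewhat more careful than the paper's terse proof in explicitly verifying that the zero-edge condition and the ``small components come from a $D_2$'' condition of an economical bridgeless $2$-edge-cover are preserved, but the core argument is identical.
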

\begin{proof}
If $H$ has a small to medium merge with small components $S_1, S_2, S_3$ and the two cycles $C_1$ and $C_2$, we set
\[ H' \coloneqq (H \setminus \bigcup_{i \in [3]} H[V(S_i)]) \cup C_1 \cup C_2 \ . \]
Note that $H'$ is a bridgeless 2-edge-cover and $H'$ has exactly one less component than $H$.
To see that $H'$ is economical, observe that the number of unit-edges in $H$ and $H'$ is the same but the number of small components has been reduced by 3 and the number of medium components has increased by 2.
Since $3 \cdot \frac{5}{4} = 2 \cdot \frac{15}{8}$, the credit invariant is satisfied.

Analogously it is straightforward to verify that a similar statement holds for small to large merge, and hence the credit invariant holds, since $3 \cdot \frac{5}{4} \geq 2$.
This proves the lemma.
\end{proof}

\subsection{Handling open $3$-augmenting paths}
In this subsection, we consider the case that $H$ contains an open $3$-augmenting path and show how to transform it into an economical bridgeless 2-edge-cover with strictly fewer components when $G / H$ does not contain good cycles.

\begin{lemma}
\label{lem:handling-open}
Let $H$ be an economical bridgeless 2-edge-cover of some structured graph $G$ that admits an open $3$-augmenting path and $G /H$ does not contain good cycles.
Then there is an economical bridgeless 2-edge-cover $H'$ that has strictly fewer components than $H$.
\end{lemma}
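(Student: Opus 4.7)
The plan is to merge the four components $v_0, v_1, v_2, v_3$ along an open $3$-augmenting path $P = v_0, e_1, v_1, e_2, v_2, e_3, v_3$ into a single new large component of $H'$. Because $v_1$ and $v_2$ are small, each is a $4$-cycle with two unit edges, and by definition of a $3$-augmenting path each admits a shortcut spanning path of length $3$ carrying a single unit edge and whose endpoints are precisely the vertices incident to the crossing edges $e_1, e_2$ (respectively $e_2, e_3$). Replacing the two interior components by their shortcut paths saves two unit edges in total. I will then add $e_1, e_2, e_3$ together with at most two further auxiliary edges $f_1, f_2 \in E(G) \setminus E(H)$ so that the resulting graph on $V(v_0) \cup V(v_1) \cup V(v_2) \cup V(v_3)$ is $2$-edge-connected.

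First I would verify the credit budget. The four components contribute combined credit at least $4 \cdot \tfrac{5}{4} = 5$; the two shortcuts release $2$ extra units of weight; and the new large component must retain credit $\geq 2$. Hence up to $5$ new unit edges may be bought. The extremal case is $v_0, v_3$ both small: then $n_\ell$ rises by $1$, $n_s$ falls by $4$, and the change in $||H||$ is exactly $5 - 2 = 3$, which matches the change in the required reserve $2 n_\ell + \tfrac{15}{8} n_m + \tfrac{5}{4} n_s$ of an economical $2$-edge-cover with equality. For every other combination of sizes of $v_0, v_3$ the budget is slack, so $H'$ remains economical.

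Next I would establish the existence of the auxiliary edges $f_1, f_2$, in the style of Figure~\ref{fig:ipco:1-1}: $f_1$ connecting an attachment vertex in $V(v_0)$ to an attachment vertex in $V(v_2)$ that is compatible with the chosen shortcut of $v_2$, and $f_2$ connecting $V(v_3)$ to an appropriate vertex of $V(v_1)$. The argument is by case analysis over the attachment patterns of $e_1, e_2, e_3$ on the $4$-cycles $v_1, v_2$. In each case I would show that if the required $f_i$ did not exist, then either a few vertices of $P$ together with the two adjacent crossing edges would form a cut of $G$ of size~$\leq 2$, producing a cut vertex, an $S_0$, an $S_1$, or an $S_2$ (the ordering $S_0, S_1, S_2$ in the list of forbidden types is precisely what is needed here), or else $G/H$ would admit a good cycle of length $4$ through $v_0, v_1, v_2, v_3$ shortcutting both interior nodes, which is excluded by assumption. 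The no-small-to-medium and no-small-to-large merges exclude the residual degenerate combinatorics of $v_1$ and $v_2$.

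The main obstacle is this last step: handling each attachment pattern on the two small $4$-cycles and deriving, in every case, a contradiction with a specific forbidden structure. Once $f_1, f_2$ are in hand, expanding $v_1$ and $v_2$ along their shortcut paths and inspecting the resulting cycle structure gives $2$-edge-connectivity on $V(v_0) \cup V(v_1) \cup V(v_2) \cup V(v_3)$; the other components of $H$ are untouched, so $H'$ is a bridgeless $2$-edge-cover of $G$ whose number of components is exactly three smaller than that of $H$, and by Step~1 it is economical.
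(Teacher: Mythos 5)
Your credit arithmetic for the four-component case is correct (and indeed tight when $v_0, v_3$ are both small), and you are right that forbidden structures and the no-good-cycle hypothesis rule out a few degenerate configurations. However, the central step of your plan has a genuine gap: you assume that the two auxiliary connections $f_1$ (from $V(v_0)$ to $V(v_2)$) and $f_2$ (from $V(v_3)$ to $V(v_1)$) can always be chosen as \emph{single} edges of $G$, and you justify this by claiming that their absence forces a small separator (cut vertex, $S_0$, $S_1$, $S_2$) or a good $4$-cycle on $v_0, v_1, v_2, v_3$. That dichotomy is false. Absence of a direct edge between $V(v_0)$ and $V(v_2)$ only means $v_0$ and $v_2$ are not adjacent in $G/H$; they can perfectly well be joined by a longer path through other components, in which case there is no cut of size $\leq 2$ and no good cycle through exactly these four nodes, yet your merge of just $v_0, v_1, v_2, v_3$ cannot be carried out. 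Forbidding small separators only guarantees that \emph{some} $v_0$--$v_2$ path in $G/H$ avoiding $v_1$ exists, not that it has length one. Consequently your step of fixing the new component to be exactly $V(v_0) \cup V(v_1) \cup V(v_2) \cup V(v_3)$ is unjustified.

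The paper's proof avoids this by not constraining the length of the auxiliary connections at all: it produces two vertex-disjoint paths $P_1$ (from $x_1$ to $x_3$, avoiding $x_2, x_4$) and $P_2$ (from $x_4$ to $x_2$, avoiding $x_1, x_3$) in $G/H$ of arbitrary lengths $n_1+1$ and $n_2+1$, and merges \emph{all} components they touch. Each extra interior node contributes credit $\frac{5}{4}$ but costs only one extra edge, so the balance improves by $\frac{1}{4}(n_1+n_2)$ compared to your tight single-edge case; the resulting lower bound $\credit(C_S') \geq 2 + \frac{1}{4}(n_1+n_2)$ holds uniformly. Your proposal coincides with the paper's argument precisely when $n_1 = n_2 = 0$ (which is exactly the tight case noted in Remark~\ref{remark:bottleneck}), but it does not cover the general case. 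To repair your proof you would have to replace the ``at most two further auxiliary edges'' by arbitrary auxiliary \emph{paths} in $G/H$ and redo the credit accounting over a variable number of merged components, at which point you essentially recover the paper's argument.
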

\begin{proof}
Let $P= x_1, e_1, x_2, e_2, x_3, e_3, x_4$ be an open $3$-augmenting path in $G / H$ containing the nodes $x_1, x_2, x_3, x_4$ and edges $e_1, e_2, e_3$.
Let $X_i$ be the component in $H$ corresponding to $x_i$, $i \in [4]$.
We want to show that there exists paths $P_1$, $P_2$ in $G/H$ from $x_1$ to $x_3$ and $x_4$ to $x_2$, respectively, such that $P_1$ and $P_2$ are disjoint.
Then we will show how to merge $P$ together with $P_1$ and $P_2$.

First, assume that there is a simple path $P'$ in $G / H$ from $x_1$ to $x_4$ such that $P' \cap \{x_2, x_3 \} = \emptyset$.
Then, by the definition of an open augmenting path, $P \cup P'$ forms a good cycle (since we can shortcut $x_2$ and $x_3$), a contradiction.

We now claim that in $G / H$ there must be a path $P_1$ from $x_1$ to $x_3$ such that $x_2$ does not belong to $P_1$.
Assume this is not true. 
Then $X_2$ is a separator in $G$, which contradicts the fact that $G$ is structured and does not contain an $S_{\{3, 4\}}$ separator.
Hence, such a path $P_1$ exists. 
Note that we can find such a path in polynomial time by finding a path in $G/H \setminus \{x_2 \}$ from $x_1$ to $x_3$.
Furthermore, by the above discussion, we know that $x_4$ does not belong to $P_1$. 
By symmetry, there also has to be a path $P_2$ from $x_4$ to $x_2$ in $G / H$ such that $x_3$ does not belong to $P_2$ and $x_1$ does not belong to $P_2$.
Also, such a path can be found in polynomial time.
Furthermore, observe that $P_1 \cap P_2 = \emptyset$, otherwise we have a good cycle, which can be found by starting at $x_1$, following $P_1$ until we reach a node $P_1 \cap P_2$, and then follow $P_2$ until we reach $x_4$, and then following $P$.

Let $S \coloneqq (V(P_1) \cup V(P_2), E(P) \cup E(P_1) \cup E(P_2))$ be a subgraph of $G / H$.
Let $C_S$ be the set of components incident to $P_1$ and $P_2$ in $G$ and note that $X_i \subseteq C_S, i \in [4]$, by the definitions of $P_1$ and $P_2$, respectively.
Observe that $S$ is a 2-edge-connected subgraph in $G / H$ and hence we can merge the components of $C_S$ to a single 2-edge-connected component $C'_S$.
Hence, the number of connected components has decreased compared to $H$.

It remains to show that the credit invariant is satisfied. 
For a node $C^\circ \in S$ let $C$ be its corresponding component in $H$.
Let $n_1$ and $n_2$ be the number of internal nodes in $P_1$ and $P_2$, respectively.
By the definition of an augmenting path, we have that $\savings(S) \geq 2$.
Then we have that
\begin{align*}
\credit(C_S') & = \sum_{C^\circ \in S} \credit(C) + \savings(H, S) - ||S|| \\
& \geq  \frac{5}{4} \cdot (4 + n_1 + n_2) + 2 - (3 + n_1 + 1 + n_2 + 1) \geq 2 \ ,
\end{align*}
where in the second inequality $4 + n_1 + n_2$ is the number of nodes in $S$, and $3 + n_1 + 1 + n_2 + 1$ is the number of unit-edges in $S$ (3 comes from $P$, $n_1 + 1$ comes from $P_1$, and  $n_2 + 1$ comes from $P_2$).
\end{proof}

\begin{remark}
\label{remark:bottleneck}
Note that when $P_1$ and $P_2$ have lengths exactly 1, then $\credit(C_S') =2$.
Hence, if we work with a lower approximation ratio, we will not be able to merge such an object in this way.
\end{remark}

\subsection{Handling medium size components}
In this subsection we assume that $H$ does not contain good cycles.
Our main lemma of this subsection is the following.
\begin{lemma}
\label{lem:handling-medium-components}
Let $H$ be an economical bridgeless 2-edge-cover of some structured graph $G$ that contains a medium size component $C$ and $G/H$ does not contain good cycles.
Then there is an economical bridgeless 2-edge-cover $H'$ that has strictly fewer components than $H$.
\end{lemma}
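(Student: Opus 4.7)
The plan is to find a simple cycle $T$ through $C^\circ$ in $G/H$ and merge into a single component all components of $H$ whose nodes lie on $T$. Since $G$, and hence $G/H$, is $2$-edge-connected, such a cycle exists. Write $k = |V(T)|$, $\ell$ for the number of large-component nodes on $T$, and $s$ for the number of nodes that $T$ shortcuts. The assumption that $G/H$ contains no good cycle forces $\ell \leq 1$ and $\ell + s \leq 1$. After adding $T$ (together with its associated shortcuts) to $H$, the newly formed large component receives credit at least
\[
\credit(C^\circ) + 2\ell + (k - 1 - \ell)\cdot \tfrac{5}{4} + \savings(T) - \|T\| \;\geq\; \tfrac{5}{8} + \tfrac{3\ell}{4} + \tfrac{k}{4} + s,
\]
using $\credit(C^\circ) = \tfrac{15}{8}$, $\credit(L) \geq 2$ for every large $L$, and $\credit(X) \geq \tfrac{5}{4}$ for every small or medium $X$. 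This lower bound is at least $2$ in the easy cases (a)~$s = 1$, (b)~$\ell = 1$ with $k \geq 3$, and (c)~$\ell = 0, s = 0$ with $k \geq 6$; in each of these the merge immediately produces an economical bridgeless $2$-edge-cover with strictly fewer components.

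The remaining hard cases are (i)~$\ell = 1, s = 0, k = 2$ (two parallel $G/H$-edges between $C^\circ$ and a large $L^\circ$) and (ii)~$\ell = 0, s = 0, k \leq 5$. I would rule these out using the structured properties of $G$. Absence of cut vertices forces $\delta(V(C))$ to meet $V(C)$ at two or more distinct vertices. A case analysis on the four possible shapes of a medium component (triangle and $C_4, C_5, C_6$ with exactly three unit-edges), combined with the exclusion of $S'_k$-separators and, crucially, of $(\tfrac{13}{8}, 12, 1)$-contractible subgraphs, shows that if external edges touch only two vertices of $V(C)$ then either $C$ itself is contractible or $G$ admits an $S'_k$, contradicting $G$ being structured. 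Hence external edges meet $V(C)$ at three or more distinct vertices. Choosing two such external edges whose endpoints $v_i, v_j \in V(C)$ admit a spanning path of $G[V(C)]$ with at most two unit-edges produces a cycle $T'$ that shortcuts $C$, placing us in the already-resolved case $s = 1$. If in case (i) the extended cycle still terminates only at $L^\circ$, the $2$-edge-connectivity of $G/H$ provides a second internally-disjoint path between $C^\circ$ and some third node, yielding a length-$\geq 3$ merge to which the credit bound applies.

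The main obstacle is precisely the case analysis above: tracking, across the four shapes of a medium component and all distributions of $\delta(V(C))$-edges at the vertices of $V(C)$, which configurations are incompatible with $G$ being structured, and verifying that in every surviving configuration a shortcutting or sufficiently long cycle through $C^\circ$ exists. Once this is done, applying the credit-merge argument of the first paragraph to the chosen cycle produces the desired economical bridgeless $2$-edge-cover $H'$ with strictly fewer components than $H$.
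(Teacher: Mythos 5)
Your approach is genuinely different from the paper's. The paper splits on whether $C$ is a separator, invoking a standalone lemma (the paper's Lemma~\ref{lem:medium-comp-contain-shortcut}: every medium cycle in a structured graph has a unit-edge both of whose endpoints have outgoing edges) to always realize $\savings \geq 1$, and then handles the separator case with several sub-cases. You instead parametrize an arbitrary simple cycle $T$ through $C^\circ$ by $(k,\ell,s)$ and derive the compact credit bound $\frac{5}{8}+\frac{3}{4}\ell+\frac{1}{4}k+s$; this bound is correct, and your disposal of the easy cases $s=1$, $(\ell,k)=(1,\geq 3)$, $(\ell,s,k)=(0,0,\geq 6)$ is fine.

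However, the remaining cases are exactly where the substance of the lemma lies, and your sketch has a real gap there. First, the structural claim you want (``external edges meet $V(C)$ at three or more distinct vertices'') is not the claim that lets you close a shortcutting cycle: what you actually need is a unit-edge of $C$ both of whose endpoints have outgoing edges (the paper's Lemma~\ref{lem:medium-comp-contain-shortcut}), and that those two outgoing edges lead to nodes that can be joined in $G/H-C^\circ$. When $C$ is a separator whose two sides both have weight $\geq 4$, it can happen that for \emph{every} unit-edge of $C$ the two endpoints' outgoing edges go to \emph{different} sides; then no simple cycle through $C^\circ$ yields $s=1$, and your reduction to the already-resolved case fails. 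The paper covers this by a different merge that is not a single simple cycle: it buys four edges leaving $C$ (two to each side) plus two connecting paths and sells two unit-edges of $C$, giving $\savings\geq 2$ (the ``nice case''). Your argument never produces such a merge. Second, in case~(i) the appeal to ``a second internally-disjoint path between $C^\circ$ and some third node'' is unjustified: $G/H$ may consist of only $C^\circ$ and $L^\circ$, so there is no third node, and here the savings from Lemma~\ref{lem:medium-comp-contain-shortcut} (both endpoints of some unit-edge go to $L$) is the only way to reach credit $2$. So while the framework is sound and arguably cleaner as a bookkeeping device, the hard cases still require, in essence, the paper's full structural case analysis, which you acknowledge but do not supply.
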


In order to prove the statement, we first show the following lemma.

\begin{lemma}
\label{lem:medium-comp-contain-shortcut}
Let $K$ be a cycle of some structured graph $G$ with $||K|| = 3$.
Then there is at least one unit-edge $e$ in $K$ such that each endpoint of $e$ has an edge going to $V(G) \setminus V(K)$. 
\end{lemma}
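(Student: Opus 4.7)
I argue by contradiction. Suppose every unit-edge of $K$ has at least one endpoint $v$ with $N_G(v)\subseteq V(K)$; call such $v$ \emph{trapped}, and let $T\subseteq V(K)$ be the set of trapped vertices. The plan is to show that under this assumption, $K$ is a $(\tfrac{13}{8},12,1)$-contractible subgraph of $G$, contradicting the assumption that $G$ is structured.

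Since $\|K\|=3$ and the zero-edges of $G$ form a matching, the cycle length $\ell:=|V(K)|$ lies in $\{3,4,5,6\}$, so $K$ is $2$-edge-connected with $|V(K)|\le 12$. By the definition of contractibility, it suffices to show that every $\ecss$ $R$ of $G$ contains at least $\lceil\tfrac{8}{13}\cdot 3\rceil=2$ unit-edges from $E(G[V(K)])$. Because each $v\in T$ satisfies $N_G(v)\subseteq V(K)$, every edge of $R$ at $v$ lies in $E(G[V(K)])$. Furthermore, for $\ell\ge 4$ any diagonal of $G[V(K)]$ has both endpoints matched by zero cycle-edges of $K$, so by the matching property every diagonal must be unit-weight; thus the zero-edges of $E(G[V(K)])$ are exactly the $\ell-3$ zero cycle-edges of $K$.

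Let $E_T\subseteq R\cap E(G[V(K)])$ denote the edges incident to $T$, and split $E_T$ into counts $a,b,c,d$ of edges with (one endpoint in $T$, unit), (one, zero), (two, unit), (two, zero), respectively. Counting endpoint-incidences,
\[
a + b + 2c + 2d \;=\; \sum_{v\in T}\deg_R(v) \;\ge\; 2|T|,
\]
while the matching property yields $b + 2d \le |T|$. Assuming for contradiction that $a+c\le 1$, these two inequalities combine to give $2|T| \le (1-c) + b + 2c + 2d \le 1 + c + |T|$, hence $|T|\le 1+c\le 2$. A simple covering argument shows $|T|\ge 3$ when $\ell=6$ (each vertex of the $6$-cycle is incident to exactly one unit cycle-edge), which contradicts $|T|\le 2$. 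For $\ell\in\{3,4,5\}$ and $|T|=2$, equality is forced throughout, so $c=1$, $a=0$, and $b+2d=2$, i.e.\ $(b,d)\in\{(2,0),(0,1)\}$; enumerating the admissible placements of a two-element $T$ on $K$ (subject to covering all three unit-edges) shows that none realises these values, giving the desired contradiction.

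\textbf{Main obstacle.} The delicate step is the finite case check ruling out $|T|=2$ for $\ell\le 5$: one must verify that in each admissible position of $T$ on $K$ we have $b+2d\le 1$, which is incompatible with the tight equality $b+2d=2$. All other steps are routine degree-counting combined with the matching structure of the zero-edges.
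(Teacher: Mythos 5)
Correct, and essentially the same approach as the paper: assume for contradiction that every unit-edge of $K$ has a trapped endpoint, then show any $\ecss$ must include at least two unit-edges inside $G[V(K)]$, so $K$ is a $(\frac{13}{8},12,1)$-contractible subgraph and $G$ is not structured. You package the trapped-vertex degree count into a systematic $(a,b,c,d)$-inequality while the paper argues each cycle length $\ell$ directly, but the underlying computation is the same. Three small points worth tightening: (i) before concluding $|T|=2$ for $\ell\le 5$, you also need $|T|\ge 2$, which follows since a cycle vertex has degree two and thus covers at most two unit cycle-edges out of the three; (ii) the side remark that ``any diagonal has \emph{both} endpoints matched by zero cycle-edges'' is only literally true for $\ell=6$, but one matched endpoint already forces the diagonal to be unit, and for $\ell\in\{4,5\}$ at most one vertex of $K$ is unmatched by cycle zero-edges, so every diagonal touches a matched vertex and your conclusion that the only zero-edges in $E(G[V(K)])$ are the $\ell-3$ zero cycle-edges is correct; (iii) the deferred finite check does work: for every admissible two-element $T$ covering the three unit cycle-edges (enumerated up to symmetry for $\ell\in\{3,4,5\}$) one gets $b+2d\le 1$, which contradicts the forced value $b+2d=2$, and separately $d=1$ would require the two trapped vertices to be joined by both a zero- and a unit-edge in $G[V(K)]$, i.e.\ a parallel pair, already excluded since $G$ is structured.
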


\begin{proof}
Assume the statement is not true and for each unit-edge $e$ of $K$ there is at most one endpoint of $e$ that has an edge (which we call outgoing edge) going to $V(G) \setminus V(K)$.
We will show that if the statement is not true, then $\OPT(G)$ will include at least 2 unit-edges inside $G[V(K)]$, implying that $K$ is a contractable subgraph, a contradiction to the fact that $G$ is structured.

We make a case distinction on whether $K$ has 3, 4, 5, or 6 vertices.
If $K$ has exactly 6 vertices, then observe that there must be at least 3 vertices of $C$ that do not have outgoing edges.
Hence, any optimal solution has to buy at least 2 unit-edges inside $G[V(K)]$, a contradiction.

Next, assume that $K$ has exactly 5 vertices. 
Let $u$ be the vertex incident to 2 unit-edges of $K$.
If $u$ does not have an outgoing edge, again any optimal solution has to buy at least 2 unit-edges inside $G[V(K)]$, a contradiction.
Else, if $u$ has an outgoing edge, then observe that there are 3 vertices in $V(K)$ that do not have an outgoing edge. 
This again implies that any optimal solution buys at least 2 unit-edges inside $G[V(K)]$, a contradiction.

Next, assume that $K$ has exactly 3 or 4 vertices.
Observe that in this case there is a vertex not incident to a zero-edge of $K$ that has no outgoing edges.
Hence, any optimal solution has to buy at least 2 unit-edges inside $G[V(K)]$, a contradiction.
This proves the lemma.
\end{proof}

We are now ready to prove the main result of this subsection.

\begin{proof}[Proof of Lemma~\ref{lem:handling-medium-components}]
Let $C$ be a medium component and let $C^\circ$ be the node corresponding to $C$ in $G / H$.
Since $C$ is a medium component, $C$ is a cycle containing exactly 3 unit edges.

We make a case distinction on whether $C$ is a separator or not, that is, whether $G \setminus C$ is connected or not.
First, let us assume that $C$ is not a separator.

By Lemma~\ref{lem:medium-comp-contain-shortcut} we know that there is an edge of $C$, say $e = \{u_1, u_2\}$, such that $u_1$ and $u_2$ both have an edge to $V(G) \setminus V(C)$.
Let these two edges be $e_1$ and $e_2$ and let $C_1$ and $C_2$ be the components incident to $e_1$ and $e_2$ that are not $C$ itself. 
Note that it could be that $C_1 = C_2$.
Let $C_1^\circ$ and $C_2^\circ$ be the nodes corresponding to $C_1$ and $C_2$ in $G / H$.
Let $P$ be a path from $C_1^\circ$ to $C_2^\circ$ in $G / H$ such that $C^\circ$ does not belong to $P$.
Note that such a path must exist since $C$ is not a separator.

Let $S$ be the subgraph of $G/H$ consisting of $P$ and $e_1$ and $e_2$, together with all nodes incident to these edges.
Observe that $S$ is a 2-edge-connected subgraph of $G / H$ containing at least 2 nodes.
Furthermore, by the definition of the edges $e_1$ and $e_2$, we have that $\savings(H, S) \geq 1$, since the edge $e$ is not needed for 2-edge connectivity when expanding $S$.
Hence, by adding $E(S)$ to $H$ and deleting $e$ we obtain a 2-edge-cover $H'$ that has strictly fewer components than $H$.

It remains to show that the credit invariant is satisfied. 
Let $C_S'$ be the newly created connected component containing all nodes in $P$ and $C^\circ$.
For a node $C^\circ \in S$ let $C$ be its corresponding component in $H$.
Let $n_P$ be the number of vertices in $P$.
Then we have that
\begin{align*}
\credit(C_S') & \geq \sum_{C^\circ \in S} \credit(C) + \savings(H, S) - ||S|| \\
& \geq \frac{15}{8} + \frac{5}{4} \cdot n_P + 1 - (1 + n_P) \geq 2 \ ,
\end{align*}
which proves the lemma in the case that $C$ is not a separator.

Now, let us assume that $C$ is a separator, i.e. $G - C$ has at least 2 distinct connected components.
If $C$ is a separator in $G$, then $C^\circ$ is a cut-vertex in $G/H$.
We consider three cases: one of the parts is medium, one of the parts is small, and all parts are large.

First, assume that one of the connected components of $G / H \setminus \{ C^\circ \}$ corresponds to a small or medium size component $C'$ of $H$.
In this case we show that there is an edge-set $F = \{f_1, f_2 \} \subseteq E(G)$ of cardinality 2 between $C$ and $C'$ such that $C$ and $C'$ can be merged to a new 2-edge-connected component $C_F$ using $F$ such that the credit invariant is satisfied.

First, assume that $C'$ is medium. 
Then, by Lemma~\ref{lem:medium-comp-contain-shortcut}, $E[C']$ contains two edges $f_1, f_2$ incident to the vertices of a unit-edge $e'$ of $C'$ that go to the vertices of $V(G) \setminus V(C')$.
Since $C$ is a separator, both these edges must go to $C$.
Then adding $f_1$ and $f_2$ to $H$ and deleting $e'$ yields a bridgeless 2-edge-cover $H'$ that has exactly one component less than $H$.
Furthermore, $\savings(H, F) \geq 1$, by definition of $f_1$ and $f_2$.
Hence,
\begin{align*}
\credit(C_F) & \geq 2 \cdot \frac{15}{8} + 1 - 2  \geq 2 \ .
\end{align*}

Next, assume that $C'$ is small.
Since $G$ is structured and hence there is no contractable subgraph, there must be a path $P$ in $G$ with $||P|| \leq 3$, starting and ending in $C$ and visiting all vertices of $C'$ (otherwise, $\OPT(G)$ includes at least 2 unit-edges of $G[V(C')]$, a contradiction, since then $C'$ is a contractable subgraph).
Actually, $||P||=3$ such that $P$ contains exactly one unit-edge inside $G[V(C')]$.
Then adding $P$ to $H$ and deleting all edges of $H[V(C')]$ that do not intersect with $P$ yield a bridgeless 2-edge-cover $H'$ that has strictly fewer components than $H$.
Furthermore, $\savings(H, F) \geq 1$, by definition of $P$.
Hence,
\begin{align*}
\credit(C_F) & \geq \frac{15}{8} + \frac{5}{4} + 1 - 2 \geq 2 \ .
\end{align*}

Now consider the last case: no connected component of $G / H \setminus \{ C^\circ \}$ consists of a single node that corresponds to a small or medium component of $H$.
Therefore, for each connected component $C'$ of $G \setminus C$ we have that $|| H[V(C')] || \geq 4$.
Note that since $G$ is a structured graph, $G$ does not contain a forbidden structure $S_k$, $k = 3, 4, 5, 6$.
Hence, by the above observations, we know that the number of connected components of $G \setminus C$ is exactly 2, and let these two connected components of $G \setminus C$ be $C_1$ and $C_2$.
Furthermore, $G$ also does not contain an $S_k'$, $k = 3, 4, 5, 6$, which implies that each vertex of $C$ has an edge (called outgoing edges) going to the set $V(G) \setminus V(C)$.

Assume first that there is a unit-edge $e$ of $C$ such that both endpoints of $e$ have an outgoing edge to the same connected component.
In this case, we do exactly the same as in the case in which $C$ is not a separator and we are done.

Hence, we can assume that for each unit-edge $e$ of $C$ both vertices incident to $e$ have outgoing edges only to different connected components.
Let us assume that $C$ is a cycle with exactly 3 unit edges and exactly 3 zero-edges (i.e. $C$ is a cycle of length 6). 
The cases in which the number of zero-edges is less are analogous.
Let $C = u_1, v_1, u_2, v_2, u_3, v_3$ be this cycle and let $e_1 = \{u_1, v_1 \}$, $e_2 = \{u_2, v_2 \}$, and $e_3 = \{u_3, v_3 \}$ be the unit-edges of $C$ (see the graph in Figure~\ref{fig:medium-comp-merge:nice-case}).

\begin{figure}
\centering
\begin{subfigure}{.5\textwidth}
  \centering
  \includegraphics[width=.7\linewidth]{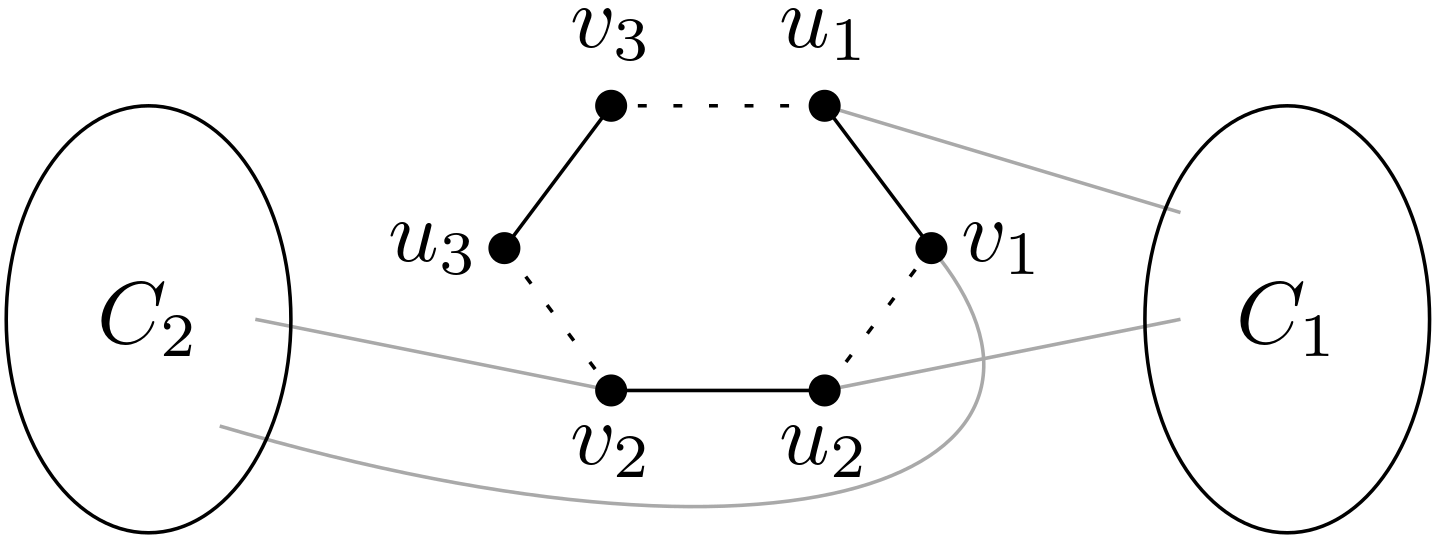}
\end{subfigure}%
\begin{subfigure}{.5\textwidth}
  \centering
  \includegraphics[width=.7\linewidth]{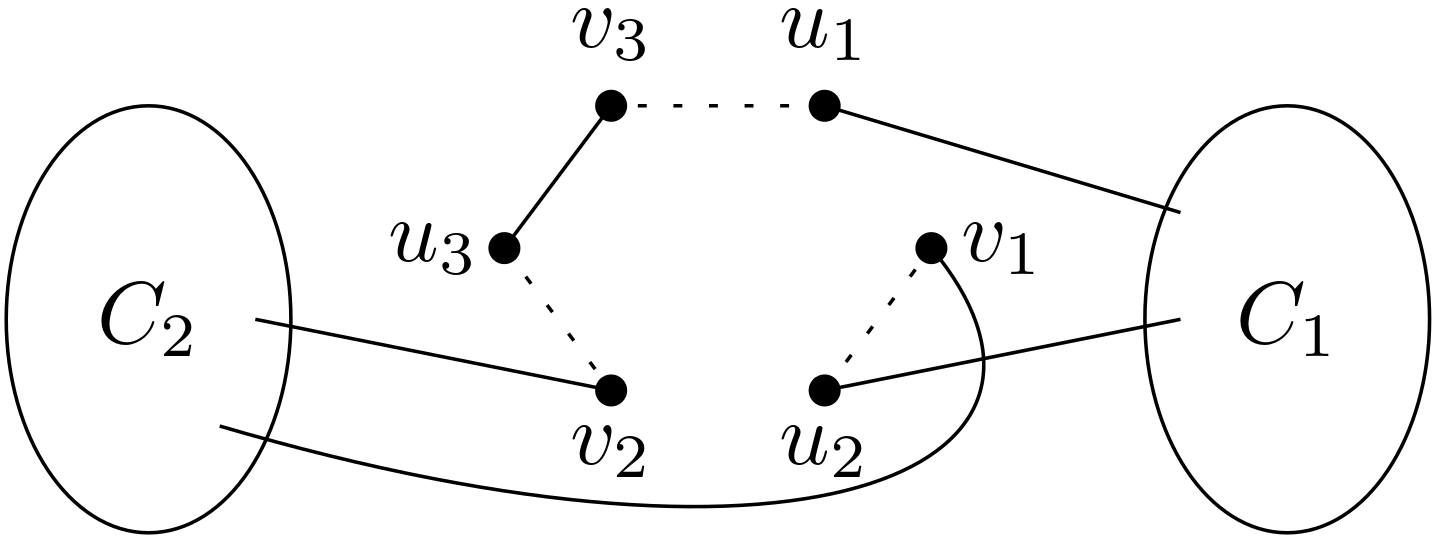}
\end{subfigure}
\caption{Illustration of the `nice case' in the proof of Lemma~\ref{lem:handling-medium-components}.}
\label{fig:medium-comp-merge:nice-case}
\end{figure}

Without loss of generality, we assume that there is an outgoing edge $f_1$ from $u_1$ to $C_1$ and that there is an outgoing edge $g_1$ from $v_1$ to $C_2$.
Next, assume that $u_2$ has an outgoing edge $f_2$ to $C_1$ and $v_2$ has an outgoing edge $g_2$ to $C_2$.
Then, let $F = \{ f_1, f_2, g_1, g_2 \}$.
We refer to this case as the nice case.
For an illustration of this 'nice case', we refer to Figure~\ref{fig:medium-comp-merge:nice-case}.
Furthermore, let $P_1$ and $P_2$ be paths in $G /H$ inside $C_1$ and $C_2$, respectively, that connect the components incident to the edges $f_1$ and $f_2$, and $g_1$ and $g_2$, respectively.
Now observe that $S \coloneqq F \cup P_1 \cup P_2$ is a 2-edge-connected subgraph in $G / H$.
Furthermore, observe that adding $S$ to $H$ and deleting $e_1$ and $e_2$ from $H$ results in a 2-edge-connected subgraph $H'$ that has exactly two fewer components compared to $H$ (see Figure~\ref{fig:medium-comp-merge:nice-case} for an illustration).
It remains to show that the credit invariant is satisfied.

Let $C_S'$ be the newly created connected component containing all nodes in $P_1$, $P_2$, and $C$.
Let $n_1$ be the number of vertices in $P_1$ and $n_2$ be the number of vertices in $P_2$.
Then we have that
\begin{align*}
\credit(C_S') & \geq \frac{15}{8} + \frac{5}{4} \cdot (n_1 + n_2) + 2 - (4 + n_1 - 1 + n_2 - 1) \geq 2 \ .
\end{align*}

Hence, we can assume that the outgoing edge $f_2$ from $u_2$ goes to $C_2$ and that the outgoing edge $g_2$ from $v_2$ goes to $C_1$, i.e., the other way around.

Next, assume that $u_3$ has an outgoing edge $f_3$ to $C_1$ and that $v_3$ has an outgoing edge to $C_2$.
In this case, set $F = \{ f_1, f_3, g_1, g_3 \}$ and observe that this case is analogous to the above where the roles of $e_1$ and $e_2$ are now played by $e_1$ and $e_3$ (which is the nice case from above).

Finally, assume that $u_3$ has an outgoing edge $f_3$ to $C_2$ and that $v_3$ has an outgoing edge to $C_1$.
In this case, set $F = \{ f_2, f_3, g_2, g_3 \}$ and observe that this case is analogous to the above where the roles of $e_1$ and $e_2$ are now played by $e_2$ and $e_3$ (which is the nice case from above).
This covers all relevant cases and hence we have proven the lemma.
\end{proof}

\subsection{Proof of Theorem~\ref{thm:specialConfig}}

\begin{proof}[Proof of Theorem~\ref{thm:specialConfig}]
Let $G$ be a structured graph and $H$ be an economical bridgeless $2$-edge-cover of $G$. 
Our algorithm works as follows. We repeatedly find one of the obstructions according to Lemma~\ref{lem:find-special-obstruction-poly}.
Once we have detected one of the obstructions, we apply one of the Lemmas~\ref{lem:handling-good-cycles} to~\ref{lem:handling-medium-components} and obtain an economical bridgeless 2-edge-cover of $G$, where the number of components has decreased.
Hence, the number of iterations is bounded by $|V(G)|$.
\end{proof}

\newpage
\section{Lower bound (proof of Lemma~\ref{lem:lowerbound})}
\label{sec:lower-bound}

The main goal of this section is to prove the following lemma.
\lemlowerbound*

Throughout this section, we fix a structured graph $G$ together with a special configuration $S$ and a 2-edge-connected spanning subgraph $R$ as specified in the above lemma.

We assume that all small components $H_1, ..., H_k$ of $S$ are cycles of length 4 with precisely 2 unit-edges.
If one of the small components is a cycle of length 3, then we simply replace the vertex $v$ that is incident to 2 unit-edges in the 3-cycle by a zero-edge $\{v_1, v_2 \}$. 
Furthermore, the edges incident to $v$ in $R$ are added to either $v_1$ or $v_2$ as follows:
Since $R$ is $\ecss$, $v$ is contained in some cycle $C$.
We expand $v$ into the edge $\{v_1, v_2 \}$ so that it is part of the expanded cycle $C \cup \{v_1, v_2 \}$.
All the other edges incident to $v$ are now incident to $v_1$.
Observe that the resulting graph remains 2-edge-connected and has precisely the same number of unit edges as $R$.
Also, $||R[V(H_i)]||$ and $||e_R(V(H_i), V(H_j))||$ stay unchanged.
Additionally, note that afterward the instance is still a MAP instance since all zero-edges are contained in the components of the special configuration.

Hence, we can assume that all small components are cycles of length 4 with exactly 2 unit edges.
For a component $H_i$, $i \in [k]$, we let this cycle be $u_1^i, u_2^i, u_3^i, u_4^i$, where the edges $\{u_1^i, u_2^i \}$ and $\{ u_3^i, u_4^i \}$ are the unit-edges of $H_i$. 
Furthermore, we call the edges $\{u^i_1, u^i_3 \}$ and $\{u_2^i, u_4^i \}$ the \emph{diagonal} edges, if present.
We say that $R$ \emph{uses} a diagonal edge if it is contained in $R$.
Without loss of generality, we can assume that if $R$ contains two unit edges inside of the vertices of $H_i$, then these edges are the two unit edges of the cycle $H_i$.
Otherwise, we can simply replace the two unit edges used by $R$ with the unit edges of $H_i$.
Furthermore, for a component $H_i$ let $x_i$ be the corresponding node $x_i$ in $R / H$.
We say that $R$ shortcuts an edge $e$ if $e \in E(H_i) \setminus E(R)$ for some $i \in [k]$.

Before we start with the proof of Lemma~\ref{lem:lowerbound}, we first establish some facts that we obtain from the requirements above.

\begin{fact}
\label{fact:lower-bound}
The graph $G$, the special configuration $S$, and $R$ have the following properties.
\begin{enumerate}
    \item Each small component of $S$ is a cycle of length 4.
    \item If $R$ contains two unit-edges inside of $H_i$, then these edges are the unit-edges of $H_i$.
    \item Let $u \in V(H_i)$ has an outgoing edge $e$ in $G$ to some large component $L$. Let us say $u = u_1^i$. Then, $u_2^i$ does not have any outgoing edge in $G$ to any other component of $S$. Furthermore, if in $G$ there is a diagonal edge $\{ u^i_2, u_4^i \}$ then $u_3^i$ also does not have any outgoing edge in $G$ to any component of $H$. Similar statements can be formulated for $u = u^i_2, u^i_3, u_4^i$.
\end{enumerate}
\end{fact}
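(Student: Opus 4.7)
Properties~1 and~2 of Fact~\ref{fact:lower-bound} are already established as WLOG reductions in the paragraphs immediately preceding the Fact. For Property~1, I would simply point to the explicit vertex-splitting procedure described there: each length-$3$ small cycle is replaced by a length-$4$ one by expanding its unique degree-$4$ vertex $v$ into a zero-edge $\{v_1,v_2\}$ and redistributing the edges of $R$ incident to $v$ among $v_1,v_2$ so as to preserve $2$-edge-connectivity, the value $||R[V(H_i)]||$, and every crossing count $e_R(V(H_i),V(H_j))$. For Property~2, if $R$ contains two unit-edges inside $V(H_i)$ that are not $\{u_1^i,u_2^i\}$ and $\{u_3^i,u_4^i\}$, we swap them for these two unit-edges: the swapped $R$ is still $\ecss$ (its restriction to $V(H_i)$ is still a spanning $2$-edge-connected subgraph of $G[V(H_i)]$ of the same weight), with identical crossing-edge counts.

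The substantive claim is Property~3, which I will prove by contradiction, exhibiting a forbidden configuration in $G$ or $S$. Suppose $u_1^i$ has an outgoing edge $e\in E(G)$ to a vertex of a large component $L$ of $S$, and suppose (for contradiction) $u_2^i$ has an outgoing edge $e'\in E(G)$ to some component $C'\neq H_i$. Let $x_i,\ell,c'$ denote the corresponding nodes of $G/S$. The key observation is that the spanning path $\pi_{12}:u_1^i\to u_4^i\to u_3^i\to u_2^i$ inside $G[V(H_i)]$ uses only the single unit-edge $\{u_3^i,u_4^i\}$, so $||H_i||-||\pi_{12}||=1$; hence any simple cycle in $G/S$ that enters $x_i$ via $e$ (at $u_1^i$) and leaves via $e'$ (at $u_2^i$) shortcuts $x_i$ in the sense of Section~\ref{sec:ComputingSpecialConfiguration}.

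If $C'=L$, then $e$ and $e'$ are distinct parallel edges between $x_i$ and $\ell$ in $G/S$ (they are distinct because $G$ has no parallel edges) and form a simple $2$-cycle that contains the large node $\ell$ and shortcuts $x_i$; this satisfies clause~(iii) in the definition of a good cycle, contradicting property~(iii) of a special configuration. If $C'\neq L$, I look for a simple path $P$ in $G/S$ from $\ell$ to $c'$ that avoids $x_i$: concatenating gives the simple cycle $(e,P,e')$ through $x_i$ that shortcuts $x_i$ and contains the large $L$, again a good cycle and a contradiction. Property~3(b) is proved by the same template with the shortcut path $u_1^i\to u_4^i\to u_2^i\to u_3^i$, which uses only the diagonal unit-edge $\{u_2^i,u_4^i\}$ and the two zero-edges of $H_i$ (a single unit-edge), and shortcuts $x_i$ whenever a cycle enters at $u_1^i$ and exits at $u_3^i$.

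The main obstacle is producing the path $P$ when $C'\neq L$, equivalently ruling out that $x_i$ is a cut vertex of $G/S$. Suppose $x_i$ is a cut vertex. Then $V(H_i)$ is a $4$-vertex separator of $G$ whose induced subgraph $H_i$ is $2$-vertex-connected with a spanning cycle of cost $2$, and the cut $\delta(V(H_i))$ carries no zero-edges (since $S$ contains every zero-edge of $G$). This meets every structural requirement of an $S_{\{3,4\}}$, so because $G$ is structured the opt-conditions must fail on every valid partition $(V_1,V_2)$ of $G\setminus V(H_i)$. Since zero-edges form a matching and the cut has none, a short counting argument bounds $|V_j|$ by $4$ whenever $\opt(G[V(G)\setminus V_{3-j}]/H_i)\leq 3$, and by $2$ when the bound drops to $2$; comparing these bounds with $|V(L)|\ge 4$ (as $L$ is large) and $|V(C')|=4$ (by Property~1), one gets that each relevant component of $G\setminus V(H_i)$ equals exactly a single $S$-component of size $4$. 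Combined with $|V(G)|\ge 20$, this forces at least four such components $W_1,\dots,W_r$, and in the remaining configurations I locate a zero-edge $f\in\{\{u_2^i,u_3^i\},\{u_4^i,u_1^i\}\}$ of $H_i$ such that $G\setminus\{\text{endpoints of }f\}$ disconnects (using that the outgoing edges from $V(H_i)$ to different $W_j$ must leave at few vertices after ruling out the "good-cycle" sub-cases). This gives an $S_0$, again contradicting $G$ being structured. The hard part in writing this up cleanly will be organizing the case analysis on which vertex $u_j^i\in V(H_i)$ carries each outgoing edge to each $W_j$, so that the good-cycle argument covers all cases except those trapped by $S_0$.
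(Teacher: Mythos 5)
Your treatment of Properties 1 and 2 and your main argument for Property 3 coincide with the paper's proof: assume $u_2^i$ has an outgoing edge $e'$ to a component $C'\neq H_i$, take a path $P$ in $G/S$ from $C'$ to $L$ avoiding $x_i$ (empty if $C'=L$), and observe that $e$, $P$, $e'$ form a cycle that contains a large node and shortcuts $x_i$ via the spanning path $u_1^i,u_4^i,u_3^i,u_2^i$ of weight $1$, contradicting that $S$ is special; the diagonal statement is handled the same way via the spanning path $u_1^i,u_4^i,u_2^i,u_3^i$. So far this is exactly the paper's route.

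The gap is in your justification of the existence of $P$, i.e.\ the case where $V(H_i)$ separates $L$ from $C'$. The paper disposes of this in one line ($H_i$ would be an $S_{\{3,4\}}$), and you rightly notice that the definition of $S_{\{3,4\}}$ additionally demands $\opt(G[V(G)\setminus V_2]/H_i)\geq 4$ and $\opt(G[V(G)\setminus V_1]/H_i)\geq 3$; but your proposed verification is not a proof. The claims that failure of these conditions ``forces at least four such components'' and that one can then locate a zero-edge of $H_i$ whose endpoints disconnect $G$ (an $S_0$) are asserted rather than established, and you explicitly defer the case analysis, so the separator case remains open as written. Moreover the detour through $S_0$ is unnecessary: every vertex of the $4$-cycle $H_i$ is matched by a zero-edge inside $H_i$, so no zero-edge crosses $\delta(V(H_i))$; hence in either contracted graph each vertex is incident to at most one zero-edge, and a degree count shows any $\ecss$ there has at least $(|V_j|+2)/2$ unit-edges. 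The components of $G\setminus V(H_i)$ containing $L$ and $C'$ each have at least $4$ vertices, and since $|V(G)|\geq 20$ one may label the bipartition so that $|V_1|\geq 8$ and $|V_2|\geq 4$, giving $\opt(G[V(G)\setminus V_2]/H_i)\geq 5\geq 4$ and $\opt(G[V(G)\setminus V_1]/H_i)\geq 3$. Thus $H_i$ is indeed an $S_{\{3,4\}}$, contradicting that $G$ is structured; this is the short argument your sketch should be replaced by.
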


\begin{proof}
The proof of the first two statements has been discussed in the discussion preceding this fact.
Hence, it remains to prove the last statement.
To see this, assume this is not true and there is an edge $e' = \{u_2^i, w \}$ such that $w$ is a vertex of a different component $C \neq H_i$.
Then, observe that there is a (empty if $C = L$) path $P$ in $G / H$ from $C$ to $L$ such that $P$ is disjoint from $H_i$, as otherwise $H_i$ is a $S_{\{3, 4\}}$, contradicting that $G$ is structured.
Now, $P$ together with the edges $e$ and $e'$ forms a good cycle, a contradiction to the fact that $S$ is special.
The second part of the last statement follows in the same way by finding a cycle through the diagonal edge shortcutting~$H_i$.
\end{proof}

In particular, the last statement of Fact~\ref{fact:lower-bound} implies that all edges 'used' to shortcut some edge inside some small component $H_i$ are edges between the small components.

Furthermore, we show the following lemma. 
Informally, in presence of open 2-augmenting paths in $R/S$ and specific shortcuts, we show that $R$ must contain certain edges between small components.
Please revisit the definition of augmenting paths and good cycles from Section~\ref{sec:ComputingSpecialConfiguration}.

\begin{lemma}
\label{lem:lower-bound-help-lemma}
Let $P = x_1, x_2, x_3$ be an open 2-augmenting path in $R / S$ with components $H_1, H_2, H_3$, let $P$ start in $u_1^1$ and end in $u_1^3$ when expanded, and $\{u_1^1, u_2^1 \}$ and $\{u_1^3, u_2^3 \}$ are shortcut by $R$.
Then, $R$ must include 2 unit-edges: one edge from either $u_2^1$ or $u_3^1$ to $V(H_2)$, and one edge from either $u_2^3$ or $u_3^3$ to~$V(H_2)$.
\end{lemma}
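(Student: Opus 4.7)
The plan is to prove the lemma by contradiction. The two required edges are symmetric, so I show only that $R$ contains an edge from $\{u_2^1, u_3^1\}$ to $V(H_2)$. Assume the contrary. Since adding missing zero-edges affects none of the quantities in the statement, I may assume $\{u_2^1, u_3^1\}, \{u_1^1, u_4^1\} \in E(R)$. Because $\{u_1^1, u_2^1\}$ is shortcut by $R$ and $R$ is 2-edge-connected, the vertex $u_2^1$ needs at least one further edge in $R$: either the diagonal $\{u_2^1, u_4^1\}$ (which must then also lie in $G$) or an edge to some $v' \notin V(H_1)$.

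I first handle the second alternative. By hypothesis $v' \notin V(H_2)$, so $v'$ lies in a component $C' \notin \{H_1, H_2\}$ of $S$ (and $C'$ cannot be medium, since special configurations have no medium components). If $C'$ is large, Fact~\ref{fact:lower-bound}(3) applied to $u = u_2^1$ forces $u_1^1$ to have no outgoing edge, contradicting $\{u_1^1, v^2\} \in E(G)$. If $C'$ is small with $C' \neq H_3$, prepending $\{v', u_2^1\}$ to $P$ and traversing $H_1$ via the weight-$1$ spanning path $u_2^1, u_3^1, u_4^1, u_1^1$ yields an open 3-augmenting path in $G/S$, contradicting that $S$ is special. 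If $C' = H_3$, the triangle $x_1, x_2, x_3$ in $G/S$ formed by $\{u_1^1, v^2\}$, $\{w^2, u_1^3\}$, $\{u_2^1, v'\}$ shortcuts $x_1$ via $u_1^1, u_4^1, u_3^1, u_2^1$ (weight $1$) and shortcuts $x_2$ via the weight-$1$ spanning path in $H_2$ guaranteed by $P$ being 2-augmenting, hence is a good cycle, a contradiction.

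In the remaining case $u_2^1$ has no outside edge in $R$, forcing the diagonal $\{u_2^1, u_4^1\}$ into both $G$ and $R$. By Fact~\ref{fact:lower-bound}(2), $R[V(H_1)]$ contains no further unit-edge, so $E(R[V(H_1)]) = \{\{u_2^1, u_3^1\}, \{u_1^1, u_4^1\}, \{u_2^1, u_4^1\}\}$ and $u_3^1$ has degree one inside $R[V(H_1)]$. Consequently $R$ contains an edge $\{u_3^1, v''\}$ with $v'' \notin V(H_1) \cup V(H_2)$, and I apply exactly the same three-way split to $v''$. A large target triggers Fact~\ref{fact:lower-bound}(3) for $u = u_3^1$; since the diagonal $\{u_2^1, u_4^1\}$ is now known to lie in $G$, its ``diagonal clause'' forces $u_1^1$ to have no outgoing edge, contradicting $\{u_1^1, v^2\}$. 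A small target different from $H_3$ gives an open 3-augmenting path through $H_1$ with entry $u_3^1$ and exit $u_1^1$, using the weight-$1$ spanning path $u_3^1, u_2^1, u_4^1, u_1^1$ that exists precisely because of the diagonal. Finally, $v'' \in V(H_3)$ produces a good cycle that shortcuts $x_1$ (via $u_1^1, u_4^1, u_2^1, u_3^1$, again leveraging the diagonal) and $x_2$.

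The main obstacle is the asymmetry between $u_2^1$ and $u_3^1$ when one tries to extend $P$ to a 3-augmenting path through $H_1$: a direct extension with entry at $u_3^1$ requires the weight-$1$ Hamiltonian path $u_3^1, u_2^1, u_4^1, u_1^1$ in $H_1$, which exists only when $\{u_2^1, u_4^1\} \in E(G)$. The argument resolves this exactly by the dichotomy above: whenever $u_3^1$ is the vertex carrying the problematic outside edge, the case hypothesis on $u_2^1$ has already forced the diagonal to be present, so both the 3-augmenting-path and good-cycle constructions based on $u_3^1$'s outside edge go through, closing every branch of the case analysis.
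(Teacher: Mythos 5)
Your proof is correct and follows essentially the same line of reasoning as the paper's: degree argument at $u_2^1$ (outgoing edge vs.\ diagonal), excluding large targets via Fact~\ref{fact:lower-bound}(3), excluding other small targets via an open $3$-augmenting path, and excluding $H_3$ via a good cycle. You are somewhat more explicit than the paper in the diagonal case — the paper dispatches it with a ``without loss of generality'' that silently relies on the diagonal being present in $G$ to form the required weight-$1$ spanning path through $H_1$, whereas you spell out precisely why the $3$-augmenting path and good-cycle constructions still go through there.
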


\begin{proof}
Since the edge $\{u_1^1, u_2^1 \}$ is shortcut in $R$,
there is some unit-edge incident to $u_2^1$ (as $R$ is $\ecss$), which either goes outside to a different component (necessarily small from Fact~\ref{fact:lower-bound}) or is a diagonal edge inside $H_1$. 
In the latter case, observe that there is also one unit-edge incident to $u_3^1$ and by Fact~\ref{fact:lower-bound} this is an outgoing edge to a small component.
Similarly, we can argue that there is an outgoing edge in $R$ from either $u_2^3$ or $u_3^3$ to a small component.
Without loss of generality let us assume that these edges are incident to $u_2^1$ and $u_2^3$, respectively.
If the outgoing edge from $u_2^1$ is incident to some component distinct from $H_2$ and $H_3$, then this edge together with $P$ forms an open 3-augmenting path, a contradiction to the fact that $S$ is special.
If the outgoing edge from $u_2^1$ is incident to $H_3$, then this edge together with $P$ forms a good cycle, again a contradiction to the fact that $S$ is special.
Hence, this edge must go to $H_2$. 
Similarly, the same can be shown for the edge outgoing from $u_2^3$.
\end{proof}

\subsection{Definition of different types of nodes}

In order to prove our main result, we define different types of nodes of $G/S$, where $x_i$ corresponds to the small component $H_i$ of $S$.
Let $A \coloneqq \{x_1, ..., x_k \}$ be the set of all nodes of $G/S$.
We define the following subsets of $A$.
\begin{itemize}
    \item $N(x_i) \coloneqq \{ x_j \mid x_j \text{ is a adjacent in } R / S \text{ to } x_i  \} $ (called neighbors of $x_i$)
    \item $B_2 \coloneqq \{ x_i \mid R \text{ contains no unit-edges from } E(H_i)\}$ (called \emph{double-shortcut} nodes)
    \item $U_2 \coloneqq \{ x_i \in B_2 \mid x_i \text{ is an interior node of an open $2$-augmenting path in } R / S \} $
    \item $B_1 \coloneqq \{ x_i \mid R \text{ contains $1$ unit-edge from } E(H_i)\}$ (called \emph{single-shortcut} nodes)
    \item $U_1 \coloneqq \{ x_i \in B_1 \mid x_i \text{ is an interior node of an open $2$-augmenting path in } R / S \} $
    \item $N_2 \coloneqq \{ x_i  \mid x_i \text{ is a adjacent in } R / S \text{ to a node of } U_2  \} \setminus (U_1 \cup U_2)$
\end{itemize}

Note that $R$ might contain a diagonal edge whose endpoints are in $H_i$ corresponding to an $x_i \in B_2$.
Furthermore, by Fact~\ref{fact:lower-bound}, $R$ cannot contain a diagonal edge with endpoints in $H_i$ corresponding to an $x_i \in B_1$.
We assume without loss of generality that for each $x_i \in B_1$, $\{u_1^i, u_2^i \}$ is the unit-edge of $H_i$ that is contained in $R$.
\begin{itemize}
    \item $N^\top_1(x_i) \coloneqq \{ x_j \mid \exists \{v, w\} \in E(R) \text{ such that } v \in V(H_j) \text{ and } w \in \{ u_1^i, u_2^i \} \}  \setminus \{x_i \}$
    \item $N^\bot_1(x_i) \coloneqq \{ x_j \mid \exists \{v, w\} \in E(R) \text{ such that } v \in V(H_j) \text{ and } w \in \{ u_3^i, u_4^i \} \} \setminus \{x_i \}$
    \item $N^\top_1 \coloneqq  \bigcup_{x_i \in B_1} N^\top_1(x_i) \setminus (U_1 \cup U_2) $
    \item $N^\bot_1 \coloneqq  \bigcup_{x_i \in B_1} N^\bot_1(x_i) \setminus (U_1 \cup U_2) $
    \item $Z \coloneqq A \setminus (B_1 \cup B_2 \cup N_2 \cup N^\top_1 \cup N^\bot_1)$
    \item $I = A \setminus ( U_1 \cup U_2 \cup N_2 \cup N^\top_1 \cup N^\bot_1 \cup Z)$
\end{itemize}

Observe that any node $x_i \in Z$ has the property that its corresponding component $H_i$ satisfies $|| E(H_i) \setminus E(R)|| = 0$.
Furthermore, any $I$-node is a node in $B_1$ or $B_2$ but not a $U_1$ or $U_2$ node or a neighbor of one of them.
This finishes the definition of different types of nodes. We have the following useful propositions.

\begin{proposition}\label{prop:setRelation1}
$A\setminus Z = I \cup U_1 \cup U_2 \cup N_1^\top \cup N_1^\bot \cup N_2.$
\end{proposition}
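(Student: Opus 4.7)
The plan is to show this proposition purely by unpacking the set-theoretic definitions of $Z$, $I$, and the constituent classes. The key observation is that the definitions of $Z$ and $I$ are set up so that $\{Z, I, U_1, U_2, N_1^\top, N_1^\bot, N_2\}$ essentially form a partition (with possible overlaps only among $U_1, U_2, N_1^\top, N_1^\bot, N_2$), and the claim amounts to saying that removing $Z$ from $A$ leaves precisely the union of the remaining classes.

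First I would rewrite $A \setminus Z$ using the definition of $Z$: since $Z = A \setminus (B_1 \cup B_2 \cup N_2 \cup N_1^\top \cup N_1^\bot)$, we immediately get
\[
A \setminus Z \;=\; B_1 \cup B_2 \cup N_2 \cup N_1^\top \cup N_1^\bot.
\]
Next I would unpack $I$: by definition, $I = A \setminus (U_1 \cup U_2 \cup N_2 \cup N_1^\top \cup N_1^\bot \cup Z)$, so setting $X := U_1 \cup U_2 \cup N_2 \cup N_1^\top \cup N_1^\bot$ we have $I = A \setminus (Z \cup X)$, and hence
\[
I \cup X \;=\; (A \setminus (Z \cup X)) \cup X \;=\; A \setminus (Z \setminus X).
\]

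The remaining step is to argue $Z \cap X = \emptyset$, so that $Z \setminus X = Z$ and we obtain $I \cup X = A \setminus Z$, which is exactly the claim. For this, note that $U_1 \subseteq B_1$ and $U_2 \subseteq B_2$ directly from their definitions, so
\[
X \;=\; U_1 \cup U_2 \cup N_2 \cup N_1^\top \cup N_1^\bot \;\subseteq\; B_1 \cup B_2 \cup N_2 \cup N_1^\top \cup N_1^\bot,
\]
while by the definition of $Z$ we have $Z \cap (B_1 \cup B_2 \cup N_2 \cup N_1^\top \cup N_1^\bot) = \emptyset$. Hence $Z \cap X = \emptyset$, and combining the two displayed equalities gives $I \cup U_1 \cup U_2 \cup N_1^\top \cup N_1^\bot \cup N_2 = A \setminus Z$, as desired.

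I do not anticipate any genuine obstacle: the proposition is a bookkeeping statement about the class decomposition, and the only thing to verify beyond definition-chasing is the containment $U_1 \cup U_2 \subseteq B_1 \cup B_2$, which is immediate from how $U_1$ and $U_2$ are defined as subsets of $B_1$ and $B_2$. The main value of the statement is probably not its proof but its subsequent use, where one wants to partition/bound the total load over $A$ by handling $Z$-nodes separately from the rest.
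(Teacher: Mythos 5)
Your proof is correct and follows essentially the same route as the paper's: unpack the definitions of $Z$ and $I$, observe that $U_1\subseteq B_1$ and $U_2\subseteq B_2$ so that $Z$ is disjoint from $U_1\cup U_2\cup N_2\cup N_1^\top\cup N_1^\bot$, and conclude. (Your opening identity $A\setminus Z=B_1\cup B_2\cup N_2\cup N_1^\top\cup N_1^\bot$ is true but unused; the rest of the chain is exactly the paper's argument phrased via set algebra rather than via the word ``partition''.)
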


\begin{proof}
Notice from the definition of $Z$ above, $ Z$ and  $(B_1 \cup B_2 \cup N_2 \cup N_1^\top \cup N_1^\bot)$ are disjoint. Now since we have $U_1\subseteq B_1$ and $U_2 \subseteq B_2$, we have $Z$ is disjoint from $(U_1 \cup U_2 \cup N_2 \cup N_1^\top \cup N_1^\bot)$.

Now from the definition of $I$ above, $A$ is partitioned by the sets $I$ and $U_1\cup U_2\cup N_2 \cup N_1^\top \cup N_1^\bot \cup Z$. Since we noticed  that $Z$ is disjoint from $(U_1 \cup U_2 \cup N_2 \cup N_1^\top \cup N_1^\bot)$, we have $A$ is partitioned by the sets $I$, $Z$ and $U_1\cup U_2\cup N_2 \cup N_1^\top \cup N_1^\bot$. Thus, $A\setminus Z = I \cup U_1 \cup U_2 \cup N_1^\top \cup N_1^\bot \cup N_2.$
\end{proof}

\begin{proposition}\label{prop:setRelation2}
The sets $I$, $U_1$, $U_2$, $N_1^\top\cup N_1^\bot \cup N_2$ are pairwise disjoint.
\end{proposition}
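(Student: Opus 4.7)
The plan is to verify each of the pairwise disjointness claims directly from the definitions; no structural arguments about $G$, $S$, or $R$ are needed, as the proposition is purely a bookkeeping consequence of how the six sets $I, U_1, U_2, N_1^\top, N_1^\bot, N_2, Z$ were introduced.

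First I would deal with $I$. By definition, $I = A \setminus ( U_1 \cup U_2 \cup N_2 \cup N^\top_1 \cup N^\bot_1 \cup Z)$, so $I$ is automatically disjoint from $U_1$, from $U_2$, and from $N^\top_1 \cup N^\bot_1 \cup N_2$. This disposes of the three pairs that involve $I$.

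Next, I would show $U_1 \cap U_2 = \emptyset$. Since $U_1 \subseteq B_1$ and $U_2 \subseteq B_2$, it suffices to check that $B_1 \cap B_2 = \emptyset$. But $x_i \in B_2$ means $R$ contains no unit-edge from $E(H_i)$, while $x_i \in B_1$ means $R$ contains exactly one such unit-edge, so the two conditions are mutually exclusive.

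Finally, I would check that $U_1$ and $U_2$ are each disjoint from $N^\top_1 \cup N^\bot_1 \cup N_2$. This is immediate from the definitions: the sets $N^\top_1$, $N^\bot_1$, and $N_2$ are each defined as the corresponding union with $U_1 \cup U_2$ explicitly removed. Hence $(N^\top_1 \cup N^\bot_1 \cup N_2) \cap U_1 = \emptyset$ and $(N^\top_1 \cup N^\bot_1 \cup N_2) \cap U_2 = \emptyset$. Combining all the above verifications establishes the pairwise disjointness of $I$, $U_1$, $U_2$, and $N^\top_1 \cup N^\bot_1 \cup N_2$, and since none of the steps involves any nontrivial reasoning beyond unfolding the definitions, there is no real obstacle in this proof.
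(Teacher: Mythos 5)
Your proof is correct and follows essentially the same route as the paper: $I$ is disjoint from the rest by its own definition as a set difference, $U_1$ and $U_2$ are disjoint because $B_1$ and $B_2$ are (single-shortcut vs.\ double-shortcut), and $N_1^\top, N_1^\bot, N_2$ explicitly exclude $U_1 \cup U_2$ in their definitions. The only difference is cosmetic — you spell out the $B_1 \cap B_2 = \emptyset$ step a bit more explicitly than the paper does.
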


\begin{proof}
$I$ is disjoint from the rest of the sets from the definition of $I$.

Now we have to show $U_1$, $U_2$, and $N_1^\top\cup N_1^\bot \cup N_2$ are pairwise disjoint. 
Since each of $N_1^\top$, $N_1^\bot$ and $N_2$ in their definition explicitly exclude $U_1\cup U_2$, we have that $N_1^\top\cup N_1^\bot \cup N_2$ does not intersect with $U_1$, and also does not intersect with $U_2$.

Now $U_1\subseteq B_1$ and $U_2\subseteq B_2$ are clearly disjoint since one consists of double shortcut nodes, whereas the other consists of single shortcut nodes.
\end{proof}

From the above two propositions, the following proposition is immediate.

\begin{proposition}\label{prop:setRelation3}
The sets $I$, $U_1$, $U_2$, $N_1^\top$, $N_2\setminus N_1^T$, $N_1^\bot \setminus (N_1^\top \cup N_2)$ partition $A\setminus Z$.

\end{proposition}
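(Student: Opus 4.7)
The plan is to deduce Proposition~\ref{prop:setRelation3} directly from Propositions~\ref{prop:setRelation1} and~\ref{prop:setRelation2} by a routine set-theoretic disjointification, with no new combinatorial input. Proposition~\ref{prop:setRelation1} already supplies the equality $A\setminus Z = I \cup U_1 \cup U_2 \cup N_1^\top \cup N_1^\bot \cup N_2$, and Proposition~\ref{prop:setRelation2} already tells us that $I$, $U_1$, $U_2$, and $N_1^\top\cup N_1^\bot\cup N_2$ are pairwise disjoint. So the only remaining work is to turn the (potentially overlapping) triple $\{N_1^\top, N_1^\bot, N_2\}$ into three disjoint pieces whose union is unchanged; the specific disjointification picked in the statement is $N_1^\top$, $N_2\setminus N_1^\top$, and $N_1^\bot\setminus(N_1^\top\cup N_2)$.

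First, I would verify the union is preserved, namely
\[
N_1^\top \;\cup\; (N_2\setminus N_1^\top) \;\cup\; \bigl(N_1^\bot\setminus(N_1^\top\cup N_2)\bigr) \;=\; N_1^\top \cup N_1^\bot \cup N_2.
\]
The containment $\subseteq$ is immediate since each piece on the left lies in one of $N_1^\top$, $N_2$, or $N_1^\bot$. For $\supseteq$, any $x\in N_1^\top$ is already in the first piece; any $x\in N_2$ lies in $N_1^\top$ or in $N_2\setminus N_1^\top$; any $x\in N_1^\bot$ lies in $N_1^\top\cup N_2$ (hence in one of the first two pieces) or in $N_1^\bot\setminus(N_1^\top\cup N_2)$.

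Next, I would check pairwise disjointness of the three new pieces. The pair $N_1^\top$ and $N_2\setminus N_1^\top$ are disjoint by construction, and the pair $N_1^\top$ and $N_1^\bot\setminus(N_1^\top\cup N_2)$ are disjoint because the second set explicitly excludes $N_1^\top$. For the last pair, $(N_2\setminus N_1^\top)\cap\bigl(N_1^\bot\setminus(N_1^\top\cup N_2)\bigr)$ is contained in $N_2\cap(\,A\setminus N_2\,)=\emptyset$.

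Combining these observations with Proposition~\ref{prop:setRelation2}, the six sets $I$, $U_1$, $U_2$, $N_1^\top$, $N_2\setminus N_1^\top$, $N_1^\bot\setminus(N_1^\top\cup N_2)$ are pairwise disjoint (since the last three sit inside $N_1^\top\cup N_1^\bot\cup N_2$, which is already disjoint from $I$, $U_1$, $U_2$), and their union equals $I\cup U_1\cup U_2\cup N_1^\top\cup N_1^\bot\cup N_2 = A\setminus Z$ by Proposition~\ref{prop:setRelation1}. Since the statement is purely set-theoretic and both ingredient propositions are already established, there is no real obstacle here; the only thing to be careful about is not to accidentally claim disjointness of the raw triple $\{N_1^\top, N_1^\bot, N_2\}$, which the statement precisely avoids by the minus signs.
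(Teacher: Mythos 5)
Your argument is correct and matches the paper, which simply states that the proposition is immediate from Propositions~\ref{prop:setRelation1} and~\ref{prop:setRelation2}; you have merely spelled out the routine disjointification that the paper leaves implicit. No issues.
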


\subsection{Proof of Lemma~\ref{lem:lowerbound}}
 
 Let us order the set of small components $H_1, ..., H_k$ such that the nodes in $Z$ appear at the end.
 That is, the nodes $x_1, ..., x_\ell$ corresponding to $H_1, ..., H_\ell$ are all contained in $A \setminus Z$ and the nodes $x_{\ell +1}, ..., x_k$ corresponding to $H_{\ell +1}, ..., H_k$ are all contained in $Z$.
 Note that $R$ does not perform any shortcut in nodes $x_i$, for $i  > \ell$.
 We now give the following stronger statement compared to Lemma~\ref{lem:lowerbound}.
 
 \begin{lemma}
\label{lem:lower-bound-stronger}
 $\sum_{i<j\leq k}e_R(V(H_i),V(H_j)) + \sum_{i\in[\ell]} || R[V(H_i)] || \geq (1 + \frac {1}{12}) 2 \ell$.
\end{lemma}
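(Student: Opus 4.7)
The plan is to follow the load-distribution sketch from Section~\ref{subsec:main:feasible-for-special}: distribute the crossing and inside unit-edges of $R$ among the small components, show that each non-Z component carries a baseline load of at least $2$, and then extract an additional total excess of $\frac{\ell}{6}$ from the structural properties of $S$. Concretely, I would define a load function $\xi$ as follows. For each $i\in[\ell]$, every unit inside edge of $R$ in $E(H_i)$ contributes $1$ to $\xi(H_i)$. For each crossing unit-edge $f=\{u,v\}\in E(R)$ with $u\in V(H_i)$ and $v\in V(H_j)$, let $e_u$ and $e_v$ denote the unit-edges of $S$ incident to $u$ and $v$ respectively; if exactly one of $e_u, e_v$ is shortcut in $R$, the component whose adjacent unit-edge is shortcut receives charge $1$ from $f$, otherwise both $H_i$ and $H_j$ receive charge $\frac{1}{2}$. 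By construction, $\sum_{i\in[k]}\xi(H_i)$ is exactly the LHS of Lemma~\ref{lem:lower-bound-stronger}, so it suffices to bound this sum from below.

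Next I would verify that $\xi(H_i)\geq 2$ for every $i\in[\ell]$. If $H_i\notin B_1\cup B_2$ then, by Proposition~\ref{prop:setRelation1}, $H_i\in N_1^\top\cup N_1^\bot\cup N_2$, so both unit-edges of $H_i$ lie in $R$ and give inside charge $2$. If $H_i\in B_1$, the unique shortcut unit-edge forces outgoing crossings in $R$ at each of its two endpoints; by Fact~\ref{fact:lower-bound}(3) these crossings go to other small components, and each contributes at least $\frac{1}{2}$, yielding $\xi(H_i)\geq 1+2\cdot\frac{1}{2}=2$. The $H_i\in B_2$ case is analogous with four outgoing crossings each carrying charge at least $\frac{1}{2}$.

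To extract the extra total load $\frac{\ell}{6}$, I would call a crossing edge $f=\{u,v\}$ \emph{bad} if both adjacent unit-edges $e_u, e_v$ are shortcut in $R$; only bad crossings can reduce a shortcut side's charge below $1$. If a shortcut component $H_i\in B_1\cup B_2$ has at least one non-bad crossing at one of its shortcut vertices, one checks directly that $\xi(H_i)\geq\frac{5}{2}$, giving individual excess $\frac{1}{2}$. It therefore remains to handle shortcut components whose every outgoing crossing at a shortcut vertex is bad; each such bad crossing sits on an open $2$-augmenting path $P=x_1 x_2 x_3$ in $R/S$ whose middle node lies in $U_1\cup U_2$ and whose endpoints lie in $U_1\cup U_2\cup N_1^\top\cup N_1^\bot\cup N_2$. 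Here I would apply Lemma~\ref{lem:lower-bound-help-lemma} together with the defining properties of a special configuration (no open $3$-augmenting path, no good cycle, and no small-to-medium or small-to-large merge) to force two extra unit-edges of $R$ from $V(H_1)$ and $V(H_3)$ back into $V(H_2)$; the additional crossing charge picked up by $x_2$ raises $\xi(H_1)+\xi(H_2)+\xi(H_3)$ to at least $\frac{15}{2}$, i.e.\ an average excess of $\frac{1}{2}$ per node on the path. Combining the per-component excesses over the non-Z components then gives total excess at least $\frac{\ell}{6}$.

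The delicate step will be this final bookkeeping: a single node of $U_1\cup U_2$ may lie on several open $2$-augmenting paths, and a neighbor node may simultaneously be tagged by several of $N_1^\top, N_1^\bot, N_2$, so the per-path excess must be assigned to components carefully to avoid double-counting. The refined partition $A\setminus Z = I\cup U_1\cup U_2\cup N_1^\top\cup(N_2\setminus N_1^\top)\cup(N_1^\bot\setminus(N_1^\top\cup N_2))$ of Proposition~\ref{prop:setRelation3} is designed precisely for this purpose, and the case analysis (diagonals of $H_i$ present in $R$, mixed good/bad neighborhoods of a shortcut vertex, and so on) will repeatedly invoke the absence of forbidden substructures of $S$ to rule out the failure modes.
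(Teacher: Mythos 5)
Your overall framework matches the paper's: define a per-component load $\xi$ from the inside and crossing edges of $R$, show a baseline load of $2$ for every non-$Z$ node, and then average in an excess of $\frac{1}{6}$ per non-$Z$ node. However, there is a genuine gap in the step where you extract the excess. You claim that every bad crossing (both adjacent $S$-unit-edges shortcut) sits on an open $2$-augmenting path and then invoke Lemma~\ref{lem:lowerbound} (help lemma) to force extra edges. This is exactly what fails for the $I$-nodes, which are by definition shortcut nodes that are neither interior to an open $2$-augmenting path nor adjacent to one. Concretely, take $x_i\in I\cap B_1$ with shortcut edge $\{u_3^i,u_4^i\}$. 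Because $x_i\notin U_1$, the outgoing crossings from $u_3^i$ and $u_4^i$ must go to the \emph{same} component $H_p$; in the worst case they go to the same vertex $y\in V(H_p)$, and $e_y$ is shortcut (so both crossings are bad). There is no open $2$-augmenting path through $x_i$, and $x_p\notin U_1\cup U_2$ (otherwise $x_i\in N_1^\top\cup N_1^\bot\cup N_2$, contradicting $x_i\in I$), so the bad crossings do not sit on any open $2$-augmenting path at all, and your excess mechanism never fires. Under your simple $\frac12$-$\frac12$ split for bad crossings, $\xi(H_i)=1+\frac12+\frac12=2$ exactly, with no slack, and nothing in your scheme transfers the slack from $H_p$ back to $H_i$.

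The paper avoids this by giving the charging rule an extra case: when two bad crossings from distinct shortcut vertices of $H_i$ land on a single vertex of $H_p$, the split is $\frac23$ to the $H_i$-side and $\frac13$ to the $H_p$-side (rule (5)c), and Lemma~\ref{lem:lb-0} still guarantees the $H_p$-side gets $\geq\frac12$ in total. That is precisely how the paper obtains $\load(x_i)\geq\frac73$ for $I$-nodes in Lemma~\ref{lem:lb-5}, which (alone) already exceeds $\frac{13}{6}$ so $I$ can be treated as its own cell $S_1$ in the averaging. Your proposal would need either that asymmetric split or an explicit transfer of excess from $H_p$ to its attached $I$-neighbors, together with an argument that the transfer does not overdraw $H_p$ when several $I$-nodes point at it. As written, the argument does not close for $I$-nodes.

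A smaller point: for the baseline and the averaging you need to be careful that diagonals of $H_i$ used by $R$ are counted on the "inside" side (they are, in the paper's rule (1) and in $||R[V(H_i)]||$), and that the partition of $A\setminus Z$ you average over is exactly the one in Proposition~\ref{prop:setRelation3}, since $U_1$, $U_2$, $N_1^\top$, $N_1^\bot$, $N_2$ overlap and a naive per-path accounting would double-count. You flag this yourself, and the paper's use of Fact~\ref{fact:bipartite} with the degree-dependent lower bounds on $U_1$ and $U_2$ (Lemmas~\ref{lem:lb-3} and~\ref{lem:lb-4}) is precisely the bookkeeping that resolves it; your sketch does not yet give an argument for why the per-path excess cannot be swallowed by a single high-degree $U$-node.
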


We first show that Lemma~\ref{lem:lower-bound-stronger} implies Lemma~\ref{lem:lowerbound}.
\begin{proof}[Proof of Lemma~\ref{lem:lowerbound}]

For ease of readability, let us define 
\begin{itemize}
    \item $r = \sum_{i<j\leq k}e_R(V(H_i),V(H_j))$
    \item $s = \sum_{i\in [k]} || R[V(H_i)] ||$
     \item $m =\sum_{i\in [\ell]} || R[V(H_i)] ||$ \ .
\end{itemize}
Using these definitions,
we are given (*) $t=2k - s$ and need to show $r \geq (1 + \frac 1{12})t$.

From Lemma~\ref{lem:lower-bound-stronger}, we have $r +  m \geq (1+\frac 1{12})2\ell$.
Since $R$ does not perform any shortcuts in the nodes $x_i$ for $i>\ell$, we have $||R[V(H_i)]|| =2 $ for all $i>\ell$.
Thus observe $s= m + 2(k-\ell)$. Combined with (*), we have, $s= 2k - t = m + 2k - 2\ell \implies 2\ell = m + t $ (**).

From Lemma~\ref{lem:lower-bound-stronger}, we have $r +  m \geq (1+\frac 1{12})2\ell$. Thus,
$$r \geq  (1+\frac 1{12})2\ell - m \overset{(**)}{=} (1+\frac 1{12})2\ell - (2\ell - t) = \frac 1{12}\cdot 2\ell + t \geq (1+\frac 1{12})t,$$
where the last inequality follows from $2\ell \overset{(**)}{=} m+t \geq t$.
\end{proof}

\subsection{Proof of Lemma~\ref{lem:lower-bound-stronger}}
Thus, it remains to prove Lemma~\ref{lem:lower-bound-stronger}.

We first need the following definitions of the set $T$ and a valid assignment.
\begin{definition}[the set T]

$$T \coloneqq \bigcup_{i<j\leq k}E_R(V(H_i),V(H_j)) \cup \bigcup_{i<j\leq \ell} E(R[V(H_i)]),$$ where $E_R(V(H_i),V(H_j))$ denotes the set of edges between the vertices of $V(H_i)$ and $V(H_j)$ in $R$.

\end{definition}

\begin{definition}[valid assignment]
We say that an assignment $\load: \{x_1,\cdots,x_\ell\}\rightarrow \mathbb{R}$ is valid if the following holds.
\begin{enumerate}
    \item $\load(x_i)\geq 0$ for all $i\in[\ell]$.\label{cond1}
    \item $\sum_{i\in [\ell]}\load(x_i) = ||T||$.\label{cond2}
    \item For each node $x_i \in A \setminus (Z \cup I)$ we have $\load(x_i) \geq 2$.\label{lb1}
    \item For each node $x_i \in N_1^\top$ we have $\load(x_i) \geq \frac{5}{2}$.\label{lb2}
    \item For each node $x_i \in U_1$ we have $\load(x_i) \geq \frac{5}{2} + \frac{1}{2} \cdot \max(| N^\bot_1(x_i)| - 2, 0)$.\label{lb3}
    \item For each node $x_i \in U_2$ we have $\load(x_i) \geq \frac{5}{2} + \frac{1}{2} \cdot \max(|N(x_i)| - 2, 0)$.\label{lb4}
    \item For each node $x_i \in I$ we have $\load(x_i) \geq \frac{7}{3}$.\label{lb5}
\end{enumerate}
\end{definition}

We have the following lemma.

\begin{lemma}\label{lem:validAssignment}
There exists a valid assignment.
\end{lemma}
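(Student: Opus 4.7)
The plan is to define $\load$ by an explicit edge-charge distribution and then verify the seven conditions.

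\textbf{Construction.} Each inside edge in $E(R[V(H_i)])$ with $i \in [\ell]$ contributes $1$ to $\load(x_i)$. For each crossing edge $e = \{u, v\} \in T$ with $u \in V(H_i)$ and $v \in V(H_j)$, let $s(e) \in \{0,1,2\}$ be the number of unit edges of $H_i$ and $H_j$ adjacent to $e$ that are shortcut by $R$. If $s(e) = 1$, assign the full charge of $1$ to the component whose adjacent unit edge is shortcut; if $s(e) \in \{0, 2\}$, split the charge of $1$ evenly as $1/2$ to each endpoint. Any charge that would otherwise land on a $Z$-node is rerouted arbitrarily to a node in $A \setminus Z$; since the remaining conditions are lower bounds, this rerouting does not hurt. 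Conditions~\ref{cond1} and~\ref{cond2} then hold immediately by construction.

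For condition~\ref{lb1} ($\load(x_i) \geq 2$ for $x_i \in A \setminus (Z \cup I)$), I would argue by type. If $x_i \notin B_1 \cup B_2$, both unit edges of $H_i$ lie in $R$ and contribute $2$ from inside alone. If $x_i \in B_1$, the single remaining inside unit edge contributes $1$, and the two endpoints of the shortcut unit edge each need an outgoing edge in $R$ by $2$-edge-connectivity, which by Fact~\ref{fact:lower-bound} must land in another small component; even when all such crossings are shared (each giving only $1/2$) the two of them total at least $1$. If $x_i \in B_2$, a similar calculation over the four vertices of $H_i$ delivers at least $2$ from crossings.

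For condition~\ref{lb2}, the extra $1/2$ on top of the above $\geq 2$ bound comes from the defining neighbor-shortcut, which produces a crossing incident to $x_i$ receiving the full charge of $1$ rather than $1/2$. For conditions~\ref{lb3} and~\ref{lb4}, I would invoke Lemma~\ref{lem:lower-bound-help-lemma}: for a $U_1$- or $U_2$-node $x_i$, two distinguished crossings of the open $2$-augmenting path witnessing $x_i$'s membership in $U_*$ each receive the full charge $1$, giving a base of $5/2$ after combining with inside edges and the forced crossings from $2$-edge-connectivity. Each additional neighbor in $N^\bot_1(x_i)$ (respectively $N(x_i)$) corresponds to an additional incident crossing which donates at least $1/2$ beyond this base, yielding the linear term in the bounds.

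The main obstacle is condition~\ref{lb5}, the $\load(x_i) \geq 7/3$ bound on $I$-nodes, since this is the only non-integer threshold strictly between $2$ and $5/2$ and cannot be obtained from a one-crossing-at-a-time count. I plan to handle it by a local case analysis around $V(H_i) \cup N(x_i)$: if fewer than three of the crossings incident to $x_i$ receive full charge $1$, then the half-charge endpoints must either form an open $2$-augmenting path having $x_i$ as an interior node (contradicting $x_i \in I$, since then $x_i$ would lie in $U_1 \cup U_2$) or complete a good cycle, a small-to-medium merge, or a small-to-large merge through a neighboring small component (contradicting the fact that $S$ is a special configuration). This yields enough full-charge crossings at $x_i$, and averaging across the two unit edges of $H_i$ produces the required load of $7/3$.
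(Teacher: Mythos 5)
Your charging scheme matches the paper's in cases where the crossing edge is adjacent to $0$, $1$, or $2$ shortcut unit edges (you split $1/2$--$1/2$ when both or neither, and give full charge to the shortcut side when exactly one). This delivers conditions~\ref{cond1}--\ref{lb4} by essentially the same reasoning the paper uses. However, the paper's $\xi$ has an additional \emph{asymmetric} rule, case~(5c): when at least one endpoint lies in an $I$-node, both endpoints are shortcut, one endpoint $u$ has a single edge to the other component and the other endpoint $v$ has two or more edges back, the paper assigns $\xi(u,e)=\tfrac{2}{3}$ and $\xi(v,e)=\tfrac13$. Your scheme has no such rule, and this is precisely what is needed for condition~\ref{lb5}.

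The concrete configuration that breaks your fallback is the one the paper spends most of the proof of Lemma~\ref{lem:lb-5} on: $x_i\in I\cap B_1$ with the shortcut edge $\{u_3^i,u_4^i\}$, and the two forced crossings $e_3,e_4$ from $u_3^i$ and $u_4^i$ both land on the \emph{same} vertex $u_1^p$ of a neighboring small component $H_p$, whose adjacent unit edge $\{u_1^p,u_2^p\}$ is also shortcut. Here $s(e_3)=s(e_4)=2$, so under your rule each crossing donates only $\tfrac12$ to $x_i$, for a total load of $1+\tfrac12+\tfrac12 = 2 < \tfrac73$. Your proposed escape does not apply: $x_p,e_3,x_i,e_4,x_p$ is a \emph{closed}, not open, $2$-augmenting path, so $x_i\notin U_1\cup U_2$; and $\{e_3,e_4\}$ is not a good cycle because neither $x_i$ nor $x_p$ is large and $x_p$ cannot be shortcut (both external attachments touch the single vertex $u_1^p$, so no spanning path of $G[V(H_p)]$ saving a unit edge yields $2$-edge-connectivity). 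There is also no small-to-medium or small-to-large merge forced by just these two components. The paper deliberately does \emph{not} seek a contradiction here; it accepts the configuration and wins via the $\tfrac23$--$\tfrac13$ split, giving $\tfrac23+\tfrac23+1=\tfrac73$. You would need to add the analogue of case~(5c) to your scheme (and then re-verify Lemmas~\ref{lem:lb-0}--\ref{lem:lb-4} under it) to make the proof go through.
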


We prove Lemma~\ref{lem:lower-bound-stronger} assuming Lemma~\ref{lem:validAssignment}, and then proof Lemma~\ref{lem:validAssignment} later.
We will use the following simple fact.

\begin{fact}\label{fact:bipartite}
Given a bipartite graph $B=(L,R,F)$ such that every vertex in $R$ has at least one edge incident on it, we have
$$ \frac 1{|L|} \sum_{u\in L} \max(\deg(u)-2, 0)\geq \max\left(\frac{|R|}{|L|}-2,0 \right).$$
\end{fact}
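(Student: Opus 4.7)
The plan is a short double-counting argument. First I would dispose of the trivial case: if $|R| \le 2|L|$, then $\max(|R|/|L| - 2, 0) = 0$, while the left-hand side is a sum of nonnegative terms, so the inequality holds with nothing to prove.

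For the nontrivial case $|R| > 2|L|$, I would use the pointwise bound $\max(\deg(u) - 2, 0) \ge \deg(u) - 2$, valid for every $u \in L$. Summing over $u \in L$ gives
\[
\sum_{u \in L} \max(\deg(u) - 2, 0) \;\ge\; \sum_{u \in L} \deg(u) - 2|L| \;=\; |F| - 2|L|,
\]
where the last equality uses that $F$ is the edge set of a bipartite graph, so $\sum_{u \in L} \deg(u) = |F|$.

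Finally, the hypothesis that every vertex in $R$ has at least one edge incident on it yields $|F| \ge |R|$ (by summing degrees on the $R$-side). Combining these two inequalities gives $\sum_{u \in L} \max(\deg(u) - 2, 0) \ge |R| - 2|L|$, and dividing through by $|L|$ produces the claimed bound. There is essentially no obstacle here: the only subtlety is remembering to invoke the hypothesis on $R$-side degrees at the very last step, and to split into cases so the $\max$ on the right-hand side is handled cleanly.
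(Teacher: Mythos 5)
Your proof is correct and follows essentially the same approach as the paper's: the trivial case where the right-hand side is zero, the pointwise bound $\max(\deg(u)-2,0)\ge \deg(u)-2$, and the degree-sum observation $\sum_{u\in L}\deg(u)=|F|\ge |R|$. The only (cosmetic) difference is that you split on $|R|\le 2|L|$ while the paper splits on $\sum_{u\in L}\deg(u)\le 2|L|$; both resolve the $\max$ on the respective sides and lead to the same estimate.
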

\begin{proof}
Since every vertex in $R$ has at least one edge incident on it, we have
$|R|\leq \sum_{u\in L} \deg(u)$.
Now we have two cases.
Case 1. $\sum_{u\in L} \deg(u) \leq 2|L|$. Here, $\frac {|R|}{|L|} - 2  \leq \frac {\sum_{u\in L} \deg(u)}{|L|} -2 \leq 2 - 2 = 0$. 
Thus, $\max\left(\frac{|R|}{|L|}-2,0 \right)=0$, and we are done since the left-hand side is non-negative.

Case 2.  $\sum_{u\in L} \deg(u) = 2|L| + t$ for some $t\geq 1$. In this case, $\frac{|R|}{|L|} -2 \leq \frac t{|L|}$. Thus, $\max\left(\frac{|R|}{|L|}-2,0 \right) \leq \frac t{|L|}$, since $\frac t{|L|} > 0$. Now, the left hand side is $\frac 1{|L|} \sum_{u\in L} \max(\deg(u)-2, 0) \geq \frac 1{|L|} \sum_{u\in L} \deg(u) -2|L|\geq\frac t{|L|}$ .

\end{proof}

\begin{proof}[Proof of Lemma~\ref{lem:lower-bound-stronger}]
We fix a valid assignment $\load$.
Thus, $\sum_{i\in [\ell]} \load(x_i) = ||T||$. 
We need to show $||T||\geq (1 +\frac{1}{12}) 2\ell$. 
In other words,
we have to show that the average load of a node in $ \{x_1,\cdots, x_\ell\}= A \setminus Z $ is lower bounded by $2(1 + \frac{1}{12})$.

Now from Proposition~\ref{prop:setRelation3}, we have that the sets $I$, $U_1$, $U_2$, $N_1^\top$, $N_2\setminus N_1^T$, $N_1^\bot \setminus (N_1^\top \cup N_2)$ partition $A\setminus Z$.
Thus, the sets $S_1:= I$, $S_2:=N_1^T$, $S_3:= U_2 \cup (N_2\setminus N_1^T)$, $S_4:= U_1 \cup (N_1^\bot \setminus (N_1^\top \cup N_2))$ also partition $A \setminus Z$. 

For each of the above four sets 
$S_i$'s we will show that the average load of the nodes in them is at least $2(1+\frac 1{12})$ and we will be done.

\paragraph*{average load in $S_1$ and $S_2$:}By Conditions~\ref{lb2} and~\ref{lb5} in the definition of a valid assignment, we have that the load of every node in $N_1^\top$ or $I$ is at least $\frac{5}{2}$ or $\frac{7}{3}$, respectively. Since both numbers are larger than $2 \cdot (1 + \frac{1}{12})$, we are done for sets $S_1$ and $S_2$.

\paragraph*{average load in $S_3$:}
From Condition \ref{lb4}, the average load in $U_2$ is
\begin{align*}
    \frac 1{|U_2|}\sum_{x_i\in U_2}\load(x_i) &\geq \frac 1{|U_2|}\sum_{x_i\in U_2}\left(\frac{5}{2} + \frac{1}{2} \cdot \max(|N(x_i)| - 2, 0)\right)\\
    &= \frac 5 2 + \frac 1{2|U_2|}\sum_{x_i\in U_2}\max(|N(x_i)| - 2, 0)\\
    &\geq \frac 5 2 + \frac 1{2|U_2|}\sum_{x_i\in U_2}\max(|N(x_i)\cap (N_2\setminus N_1^T) | - 2, 0)
\end{align*}
Now, we will use Fact~\ref{fact:bipartite} to lower bound the last term. 
We build a bipartite graph $B$ by setting $L= U_2$, $R=N_2\setminus N_1^T$ and for each vertex $x_i\in L$ and $x_j \in R$ there is an edge iff $x_j\in N(x_i)$. 
Observe then $\deg_B(x_u) =|N(x_u)\cap (N_2\setminus N_1^T) |$. Also from the definition of $N_2$, every vertex in $N_2$ is adjacent to some vertex in $U_2$, thus every vertex in $(N_2\setminus N_1^T)$ has at least one edge incident on it in $B$. Thus, applying Fact~\ref{fact:bipartite} to the above inequality, we get

\begin{align*}
    \frac 1{|U_2|}\sum_{x_i\in U_2}\load(x_i) &\geq \frac 5 2 + \frac 1{2}\max\left(\frac{|N_2\setminus N_1^T |}{|U_2|} - 2, 0\right).
\end{align*}

Now from Condition~\ref{lb1} we have for each $x_i\in N_2\setminus N_1^T$, $\load(x_i)\geq 2$. 
Thus the average load in $N_2\setminus N_1^T$ is also at least $2$. For ease of readability set $N_2':=N_2\setminus N_1^T$, and let the average load of a set be given by the function $\mathsf{avg}(.)$.

Since $U_2$ and  $N_2\setminus N_1^T$ are disjoint, we have

\begin{align*}
\mathsf{avg}(S_3)=\mathsf{avg}(U_2 \cup N_2') 
&= \frac{|U_2|}{|U_2| + |N_2'|}\cdot \mathsf{avg}(U_2) + 
\frac{|N_2'|}{|U_2| + |N_2'|} \cdot \mathsf{avg}(N_2'),
\end{align*}

Let us set $p = \frac{|N_2'|}{|U_2|} $. 
Then we have $\mathsf{avg}(S_3)= \frac{1}{1+p}\cdot\mathsf{avg}(U_2) + (1 - \frac{1}{1+p}) \cdot  \mathsf{avg}(N_2').$
For $p = 2$, we have $\mathsf{avg}(S_3)= \frac{1}{3}\cdot\mathsf{avg}(U_2) +  \frac{2}{3}\cdot \mathsf{avg}(N_2')\geq \frac {1}{3} \cdot \frac 5 2 + \frac 2 3 \cdot2 = \frac {13}6=2(1+\frac 1{12})$.
Now, if $p<2$ the lower bound only increases (as we increase the proportion of $\mathsf{avg}(U_2)$).
For $p=2+\eps$, Observe that $\mathsf{avg}(U_2)\geq \frac 5 2 + \frac \eps 2 $. 

Thus, $\mathsf{avg}(S_3) = \frac 1{3+\eps}\cdot (\frac 5 2 + \frac \eps 2) + (1 - \frac 1{3+\eps})\cdot 2 =\frac{13+5\eps}{6+2\eps}\geq \frac{13}6= 2(1+\frac1{12})$, and we are done.

\paragraph*{average load in $S_4$} One can verify very easily that the above steps as in the case of $S_3$ can be followed to reach the same conclusion. 

This concludes the proof of the lemma.
\end{proof}

\subsection{A charging scheme for a valid assignment}

In this section, we define a load assignment and show that it is valid.
We first define $\xi(u, e)$ for each vertex $u\in \bigcup_{i\in[\ell]}(V(H_i)$ and each edge $e\in T$. Our load function then will be obtained as follows. $$\load(x_i) = \sum_{u\in x_i} 
\sum_{e\in T} \xi(u, e) \ .$$

We first partition $T= T_1\cup T_2$, and set $T_2 = \bigcup _{\ell+1 \leq i < j \leq k} E_R(V(H_i),V(H_j))$.
All edges in $T_2$ will transfer their weight to a fixed vertex $u_1^1$, i.e., $\xi(u_1^1, e)=1$ for all $e\in T_2$. Each unit-edge in $T_1$ will distribute a weight of $1$ between its two endpoints. Observe Conditions~\ref{cond1} and~\ref{cond2} in the definition of a valid assignment hold.

\paragraph*{Definition of $\xi$.}
We say a vertex $u$ is shortcut iff it is incident to a unit-edge of some $H_i$ which is shortcut by $R$.

We say that an edge $e \in T_1$ is \emph{crossing} if it is incident to two distinct components of $S$.
Let $e = \{ u, v \} \in T_1$ be some edge from component $H_i$ to component $H_j$ (not necessarily distinct).
Then define $\xi(u, e)$ (which represents the contribution of $e$ to $u$) as follows. 
\begin{itemize}
    \item[(1)] if $H_i=H_j$: $\xi(u,e)= \xi(v,e)=\frac 1 2$.
    \item[(2)] Else if $u$ shortcut, $v$ not shortcut: $\xi(u,e)=1$ and $\xi(v,e)=0$.
    \item[(3)] Else if $u$ and $v$ both not shortcut:
    $\xi(u,e)=\xi(v,e)=\frac 1 2.$
    \item[(4)] ({\bf note from now on: $H_i\neq H_j$  $u$ and $v$ are both shortcut})
    Else if neither $x_i$ nor $x_j$ is contained in $I$: $\xi(u,e)=\xi(v,e)=\frac 1 2.$
    \item[(5)] Else (at least one of $x_i$ or $x_j$ is contained in I)
    \item[a)] if $u$ has only one edge to $H_j$ and $v$ has only one edge to $H_i$ (namely $e$): $\xi(u,e)=\xi(v,e)=\frac 1 2$.
    \item[b)] Else if $u$ has at least $2$ outgoing edges to $H_j$ and $v$ has at least $2$ outgoing edges to $H_i$: $\xi(u,e)=\xi(v,e)=\frac 1 2$.
    \item[c)] Else if $u$ has only one edge, namely $e$ to $H_j$ and $v$ has at least $2$ edges to $H_i$: $\xi(u,e) = \frac{2}{3}$ and $\xi(v,e)= \frac 1 3$.
\end{itemize}

 Next, we will show that the remaining conditions, \ref{lb1}-\ref{lb5}, of the definition of valid assignment are  satisfied.

\subsection{Lower bounds on the $\load$}
In this section, we prove Lemma~\ref{lem:validAssignment}.

We now use this charging scheme $\xi$ and the fact that $S$ is a special configuration of some structured graph $G$ to show the  five Lemmas~\ref{lem:lb-1}-\ref{lem:lb-5}, that precisely establish Conditions~\ref{lb1} to~\ref{lb5} in the definition of a valid assignment, thus proving Lemma~\ref{lem:validAssignment}.
Depending on the type of node (defined earlier) we have five different lower bounds.
 First, we need a definition.
 \begin{definition}[$\delta(.)$] Given a vertex $v$, $\delta(v)$ is the set of unit-edges of $R$ incident on $v$.
 \end{definition}
 We now show the following useful lemma.

\begin{restatable}{lemma}{lemlb0}
\label{lem:lb-0}
For each vertex $v$ of some node $x_i \in A \setminus Z$ we have $\sum_{e \in \delta(v) \cap T_1} \xi(v, e) \geq \frac{1}{2}$.
\end{restatable}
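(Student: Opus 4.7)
My plan is a case analysis on $v$'s local structure inside its component $H_i$, using Facts~\ref{fact:lower-bound}(1)--(3) together with the piecewise definition of $\xi$. The goal is to show either that $v$ is already paid by an inside unit-edge of $R$ (trivial) or that $v$ must have at least one crossing edge in $R$, in which case the case analysis of $\xi$ collects enough charge.

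First I would dispose of the easy cases. If $v$ is not a shortcut vertex then the cycle unit-edge $e \in E(H_i)$ at $v$ lies in $R$ and, since $x_i \notin Z$ forces $i \le \ell$, the edge $e$ is an inside edge in $T_1$; case~(1) of $\xi$ then already gives $\xi(v,e) = \tfrac12$. The same argument works whenever $v$ is shortcut but some diagonal unit-edge of $G[V(H_i)]$ at $v$ lies in $R$. Thus the only case requiring work is when $v$ is shortcut and $R$ contains no inside unit-edge of $G[V(H_i)]$ incident to $v$.

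In this remaining case, my first step would be to show that $v$ has at least one crossing unit-edge in $R$ to some other small component. Suppose for contradiction that every unit-edge of $R$ at $v$ goes to a large component, and write $v = u_1^i$. By Fact~\ref{fact:lower-bound}(3), $u_2^i$ then has no outgoing edge in $G$ to any other component of $S$; since $\{u_1^i,u_2^i\} \notin E(R)$, the degree-$2$ requirement at $u_2^i$ (using the WLOG assumption that all zero-edges lie in $R$) forces the diagonal $\{u_2^i,u_4^i\}$ to exist in $G$ and to lie in $R$. The second half of Fact~\ref{fact:lower-bound}(3) then eliminates all outgoing edges at $u_3^i$, and because our case assumption rules out the diagonal $\{u_1^i,u_3^i\}$ from $R$, the degree-$2$ requirement at $u_3^i$ forces $\{u_3^i,u_4^i\}$ into $R$. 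But then $R$ contains two inside unit-edges of $G[V(H_i)]$, namely $\{u_2^i,u_4^i\}$ and $\{u_3^i,u_4^i\}$, which are not both cycle unit-edges of $H_i$, contradicting Fact~\ref{fact:lower-bound}(2). With at least one crossing edge at $v$ in hand, I would then conclude by inspecting the sub-cases of $\xi$: any crossing edge $\{v,w\}$ with $w$ non-shortcut delivers $\xi(v,e) = 1$ via case~(2), any edge in cases~(4), (5a), (5b) delivers $\tfrac12$, and case~(5c) delivers $\tfrac23$ to $v$ whenever $v$ is on the ``only one edge'' side; the only way $v$ can fail to collect $\tfrac12$ from a single crossing edge is if every crossing edge at $v$ is in case~(5c) with $v$ on the ``$\ge 2$ edges'' side, in which case the defining condition of that side forces $v$ to have at least two crossing edges into the same $H_j$, each contributing $\tfrac13$, for a combined charge of at least $\tfrac23$ from $H_j$ alone.

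The main obstacle I foresee is the middle step, the degree-chasing argument around $u_2^i, u_3^i, u_4^i$: I need to be careful that the argument also handles the sub-case where the diagonal $\{u_2^i,u_4^i\}$ is absent from $G$ entirely (which immediately destroys the degree of $u_2^i$ in $R$ and so gives a more direct contradiction), and to verify that invoking Fact~\ref{fact:lower-bound}(3) is legitimate for \emph{any} of the four possible roles $v$ could play among $\{u_1^i,\ldots,u_4^i\}$. Once this structural bottleneck is settled, the closing $\xi$-case analysis is essentially mechanical.
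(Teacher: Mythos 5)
Your proof is correct, and it takes the same high-level route as the paper's — a case analysis on the charge function $\xi$. The notable difference is in the middle step. The paper's proof of Lemma~\ref{lem:lb-0} first dispatches case~(5c), then the case where $v$ has an inside unit-edge of $E(H_i)$ in $R$, and then, for a shortcut $v$, simply asserts \emph{``By feasibility of $R$, we know that there must be at least one more unit-edge $e \in T$ that is incident to $v$''} before listing the remaining $\xi$-cases. This assertion (that the extra edge lands in $T$, i.e., does not go to a large component) is not proved inside the lemma; the paper instead relies on the informal remark following Fact~\ref{fact:lower-bound} that the edges used to shortcut small components all go between small components. You supply the missing justification explicitly: assume toward a contradiction that every unit-edge of $R$ at $v$ goes to a large component, invoke Fact~\ref{fact:lower-bound}(3) to kill outgoing edges at $v$'s cycle-partner and then (if the relevant diagonal exists) at the next vertex, chase degree requirements around the four-cycle, and collide with Fact~\ref{fact:lower-bound}(2). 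This is the correct argument; the sub-case where the diagonal is absent from $G$, which you flag as a worry, actually gives an even quicker degree contradiction, and the role-symmetry of Fact~\ref{fact:lower-bound}(3) is explicitly granted in the paper (``Similar statements can be formulated for $u = u_2^i, u_3^i, u_4^i$''). Your closing $\xi$-bookkeeping is also sound: the only sub-case yielding less than $\tfrac12$ from a single crossing edge is (5c) with $v$ on the multi-edge side, and the defining condition of that side immediately hands $v$ a second such edge into the same component, for at least $\tfrac13+\tfrac13 = \tfrac23$. Net, you arrive at the same conclusion as the paper, but with the one implicit claim that the paper leans on made fully explicit.
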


\begin{proof}
Let $x_i \in A \setminus Z$ and let $v \in V(H_i)$.

First, if (5) c) applies to $v$, then $\sum_{e \in \delta(v) \cap T} \xi(v, e) \geq \frac{2}{3}$.
To see this, observe that in this case either $v$ is incident to one edge $e$ satisfying $\xi(v, e) = \frac{2}{3}$, or $v$ is incident to two edges $e_1, e_2$ satisfying $\xi(v, e_j) = \frac{1}{3}$, $j = 1,2$.
Hence, assume that (5) c) does not apply for $v$.

If $v$ is incident to a unit-edge $e \in E(H_i) \cap T_1$, then $\xi(v, e) = \frac{1}{2}$ by (1).
Hence, we assume that $v$ is not incident to such a unit edge. 
Thus, $v$ is shortcut.
By feasibility of $R$, we know that there must be at least one more unit-edge $e \in T$ that is incident to $v$.
But then either $\xi(v, e) = \frac{1}{2}$, if we are in case (3), (4), or (5) a), b), or $\xi(v, e) = 1$ if we are in case (2) (since $v$ is shortcut). This proves the claim.
\end{proof}

The above lemma together with $|V(H_i)| = 4$ immediately implies the following lemma.

\begin{restatable}{lemma}{lemlb1}
\label{lem:lb-1}
For each node $x_i \in A \setminus (Z \cup I)$ we have $\load(x_i) \geq 2$.
\end{restatable}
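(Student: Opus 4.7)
The plan is to derive this as a direct additive consequence of Lemma~\ref{lem:lb-0}, using the standing assumption (made at the start of Section~\ref{sec:lower-bound}) that every small component $H_i$ is a $4$-cycle, so $|V(H_i)|=4$. Since $A\setminus(Z\cup I)\subseteq A\setminus Z$, the hypothesis of Lemma~\ref{lem:lb-0} applies to \emph{each} of the four vertices $v\in V(H_i)$, giving $\sum_{e\in\delta(v)\cap T_1} \xi(v,e)\geq \tfrac{1}{2}$ at every such $v$.

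The key step is then to unfold the definition $\load(x_i)=\sum_{u\in V(H_i)}\sum_{e\in T} \xi(u,e)$ and note two simple facts: first, since $T_2$-contributions are non-negative, we have $\load(x_i)\geq \sum_{u\in V(H_i)}\sum_{e\in T_1} \xi(u,e)$; second, the cases (1)--(5) defining $\xi$ assign $\xi(u,e)=0$ whenever $u$ is not an endpoint of $e$, so the inner sum reduces to $\sum_{e\in \delta(u)\cap T_1}\xi(u,e)$. Summing the bound from Lemma~\ref{lem:lb-0} over the four vertices of $V(H_i)$ then yields $\load(x_i)\geq 4\cdot\tfrac{1}{2}=2$, proving the claim.

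I do not expect any real obstacle here; the only point that requires a moment of care is that the double sum $\sum_{u\in V(H_i)}\sum_{e\in\delta(u)\cap T_1}\xi(u,e)$ correctly accounts for both inside and crossing edges. Inside edges fall under case~(1) and contribute $\tfrac{1}{2}+\tfrac{1}{2}=1$ (counted once from each endpoint in $V(H_i)$), while crossing edges contribute only the $\xi(u,e)$-mass landing on their $H_i$-endpoint, but in both situations this is exactly the portion of $\xi$-mass assigned to $V(H_i)$, and Lemma~\ref{lem:lb-0} lower-bounds this portion at each vertex by $\tfrac{1}{2}$ independently.
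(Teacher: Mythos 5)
Your proposal is correct and takes exactly the same route as the paper, which dismisses the lemma in one line ("Lemma~\ref{lem:lb-0} together with $|V(H_i)|=4$ immediately implies the claim"); you simply spell out the bookkeeping that makes the implication "immediate," namely discarding nonnegative $T_2$ contributions, reducing the inner sum to $\delta(u)\cap T_1$, and adding $\tfrac12$ over the four vertices.
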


\begin{restatable}{lemma}{lemlb2}
\label{lem:lb-2}
For each node $x_i \in N_1^\top$ we have $\load(x_i) \geq \frac{5}{2}$.
\end{restatable}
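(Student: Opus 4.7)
The plan is to identify a single vertex $v$ of $H_i$ whose total $\xi$-contribution through edges of $T_1$ is at least $1$, and to combine this with the generic $\tfrac{1}{2}$ per vertex from Lemma~\ref{lem:lb-0} applied to the other three vertices of $H_i$. That immediately yields $\load(x_i)\geq 1 + 3\cdot\tfrac{1}{2} = \tfrac{5}{2}$.

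The candidate vertex comes straight from the definition of $N_1^\top$: since $x_i\in N_1^\top$, there is an index $j\neq i$ with $x_j\in B_1$ and a unit-edge $e^*=\{v,u\}\in E(R)$ with $v\in V(H_i)$ and $u\in\{u_1^j,u_2^j\}$. Because $x_j\in B_1$ and by the convention that the retained unit-edge of $H_j$ is $\{u_1^j,u_2^j\}$, the vertex $u$ is not shortcut. Also $H_i\neq H_j$, so $e^*$ is a crossing (not intra-component) edge; in particular cases~(4) and~(5) of the definition of $\xi$, which require \emph{both} endpoints to be shortcut, cannot apply to $e^*$.

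I would then split on whether $v$ is shortcut. If $v$ is shortcut, case~(2) applies to $e^*$ and gives $\xi(v,e^*)=1$, so $\sum_{e\in\delta(v)\cap T_1}\xi(v,e)\geq 1$. If $v$ is not shortcut, then by the very definition of `shortcut' the unique unit-edge $f$ of $H_i$ incident to $v$ lies in $E(R)$, hence in $T_1$; case~(1) gives $\xi(v,f)=\tfrac{1}{2}$, while case~(3) (both endpoints unshortcut, $H_i\neq H_j$) gives $\xi(v,e^*)=\tfrac{1}{2}$. Since $f$ and $e^*$ are distinct edges of $\delta(v)\cap T_1$ (one is inside $H_i$, the other crosses to $H_j$), we again get $\sum_{e\in\delta(v)\cap T_1}\xi(v,e)\geq \tfrac{1}{2}+\tfrac{1}{2}=1$. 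For each of the remaining three vertices $w\in V(H_i)\setminus\{v\}$, Lemma~\ref{lem:lb-0} delivers $\sum_{e\in\delta(w)\cap T_1}\xi(w,e)\geq\tfrac{1}{2}$, and summing over $V(H_i)$ completes the proof.

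I do not anticipate any real obstacle here: the argument is essentially a case check on the $\xi$-rules. The only points requiring care are (i) verifying that case~(5) cannot intervene (which is why it matters that $u$ is not shortcut, ruling out the ``both shortcut'' preamble to cases~(4)--(5)), and (ii) that $f$ and $e^*$ are genuinely distinct edges contributing independently to $v$'s sum, which is immediate from $H_i\neq H_j$.
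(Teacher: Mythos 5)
Your proof is correct and follows essentially the same strategy as the paper's: isolate the vertex $v\in V(H_i)$ incident to the crossing edge $e^*$ into $H_j$, show its total $\xi$-contribution through $T_1$ is at least $1$ by casing on whether $v$ is shortcut (case (2) giving $\xi(v,e^*)=1$, versus cases (1)+(3) giving $\tfrac12+\tfrac12$), and then apply Lemma~\ref{lem:lb-0} to the remaining three vertices. The paper further subdivides the non-shortcut case according to whether $\{u_3^i,u_4^i\}\in E(R)$, but this extra split is unnecessary and your streamlined version handles both sub-cases uniformly via Lemma~\ref{lem:lb-0}.
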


\begin{proof}
Recall the definition of the nodes in $U_1$ and $N_1^\top$.
A node $x_i \in N_1^\top$ is adjacent to a single-shortcut node $x_j$ corresponding to some component $H_j$ that has an edge $\{u_1^j, u_2^j \}$ in $H_j \cap R$.
Furthermore, a vertex of $V(H_i)$ is adjacent to either $u_1^j$ or $u_2^j$ and, w.l.o.g., let this edge be $e= \{ u_1^j, u_1^i \}$.
Note that for edges between $x_j$ and $x_i$ only (1)-(4) apply, since $x_i, x_j \notin I$.

If the unit-edge $\{ u_1^i, u_2^i \}$ is contained in $R$, then by (3), we have that $\xi(u_1^i, e) = \frac{1}{2}$.
If the other unit-edge $\{ u_3^i, u_4^i \}$ is also contained in $R$, then we clearly have $\load(x_i) \geq \frac{5}{2}$, since then
$\xi(u_1^i, e) = \frac{1}{2}$, $\xi(u_1^i, \{ u_1^i, u_2^i \}) = \frac{1}{2}$, $\xi(u_2^i, \{ u_1^i, u_2^i \}) = \frac{1}{2}$, $\xi(u_3^i, \{ u_3^i, u_4^i \}) = \frac{1}{2}$, and $\xi(u_4^i, \{ u_3^i, u_4^i \}) = \frac{1}{2}$.

Else, if the unit-edge $\{ u_1^i, u_2^i \}$ is not contained in $R$, observe that in $R$ there have to be at least two more unit-edges outgoing from $u_3^i$ and $u_4^i$.
Let $u \in \{ u_3^i,  u_4^i\}$.
By Lemma~\ref{lem:lb-0}, we obtain $\sum_{e \in \delta(u) \cap T_1} \xi(u, e) \geq \frac{1}{2}$.
Then, since $\xi(u_1^i, e) = \frac{1}{2}$, $\xi(u_1^i, \{ u_1^i, u_2^i \}) = \frac{1}{2}$, and $\xi(u_2^i, \{ u_1^i, u_2^i \}) = \frac{1}{2}$, we have $\load(x_i) \geq \frac{5}{2}$.

Else, if the edge $\{ u_1^i, u_2^i \}$ is not present in $R$, then by (2), we have that $\xi(u_1^i, e) = 1$.
Furthermore, for $u \in \{u_2^i, u_3^i,  u_4^i\} $ we have $\sum_{e \in \delta(u) \cap T_1} \xi(u, e) \geq \frac{1}{2}$ by Lemma~\ref{lem:lb-0}.
Consequently, $\load(x_i) \geq \frac{5}{2}$. 
\end{proof}

\begin{restatable}{lemma}{lemlb3}
\label{lem:lb-3}
For each node $x_i \in U_1$ we have $\load(x_i) \geq \frac{5}{2} + \frac{1}{2} \cdot \max(| N^\bot_1(x_i)| - 2, 0)$.
\end{restatable}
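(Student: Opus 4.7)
The plan is to lower-bound $\load(x_i)$ by a careful case analysis driven by the structure of the open $2$-augmenting path witnessing $x_i \in U_1$. Write $\{u_1^i,u_2^i\} \in R$ (WLOG the non-shortcut inside unit edge) and $\{u_3^i,u_4^i\}$ for the shortcut one, and set $p = |N^\bot_1(x_i)|$. First I would establish a baseline. The inside edge $\{u_1^i,u_2^i\}$ contributes $1$ via case (1) of $\xi$. Since zero edges form a matching, each of $u_3^i,u_4^i$ has at most one incident zero edge; moreover the two potential diagonals $\{u_1^i,u_3^i\}$ and $\{u_2^i,u_4^i\}$, if present in $G$, are unit (else the matching property is violated) and cannot lie in $R$ (else $R$ would contain two unit edges inside $H_i$ distinct from the original pair, contradicting the WLOG normalization of Fact~\ref{fact:lower-bound}). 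Consequently, for $R$ to be $2$-edge-connected, each of $u_3^i,u_4^i$ must carry at least one outgoing unit edge, so $d_3,d_4\geq 1$; and every such outgoing edge contributes at least $\tfrac12$ to $\load(x_i)$ via case (2) or case (4) (case (5) does not apply because $N^\bot_1(x_i)\cap I=\emptyset$).

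Next I would handle the principal case in which the aug path enters and exits $H_i$ at the shortcut vertices $u_3^i,u_4^i$; the analogous cases where the crossings are at $u_1^i,u_2^i$, or where the Hamiltonian path in $H_i$ uses a diagonal, follow by essentially the same arguments. Let $e_a=\{u_3^i,v_a\}$ and $e_b=\{u_4^i,v_b\}$ be the aug-path crossing edges, with $x_a\neq x_b$, so $p\geq 2$. If either $v_a$ or $v_b$ is not shortcut, case (2) applies on the $H_i$-side, promoting $\xi$ from $\tfrac12$ to $1$ on that edge; this extra $\tfrac12$, combined with $d_3+d_4\geq p$, yields $\load(x_i)\geq \tfrac32 + (d_3+d_4)/2 \geq \tfrac52 + (p-2)/2$. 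Otherwise both $v_a,v_b$ are shortcut. Let $v_a^*$ denote the other endpoint of the shortcut unit edge of $H_a$ through $v_a$. The same matching-and-WLOG argument used for $H_i$ forces $v_a^*$ to carry at least one outgoing unit edge in $R$. The absence of open $3$-augmenting paths in $S$ forbids this edge from reaching $V(H_c)$ for any $x_c\notin\{x_a,x_i,x_b\}$, and the absence of good cycles forbids it from reaching $V(H_b)$ (such an edge, combined with $e_a$, $e_b$, and the canonical spanning path in $H_i$, would close a cycle in $G/S$ shortcutting both $x_a$ and $x_i$, a good cycle). Thus $v_a^*$ has an outgoing edge to $V(H_i)$, and a further good-cycle analysis on the parallel $x_a$--$x_i$ double-edge shows the endpoint must be $u_3^i$ or $u_4^i$ rather than $u_1^i$ or $u_2^i$. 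The symmetric argument for $v_b^*$ produces a second extra edge incident to $\{u_3^i,u_4^i\}$, so $d_3+d_4\geq \max(p,2)+2$, and hence $\load(x_i)\geq 1 + (d_3+d_4)/2 \geq \tfrac52 + (p-2)/2$.

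The main obstacle I expect is the shortcut-endpoints sub-case: one must combine the good-cycle and open-$3$-augmenting-path constraints with the matching property to force the extra outgoing edges from $v_a^*,v_b^*$ to land precisely on $u_3^i$ or $u_4^i$, and one must verify that the same style of argument goes through uniformly in the Case~B and diagonal sub-cases, where the precise list of forbidden spanning paths (and hence which parallel-cycle shortcuts apply) depends on which diagonals of $H_i$ and $H_a$ actually belong to $G$.
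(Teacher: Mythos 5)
Your case decomposition matches the paper's (baseline from the inside edge and the two augmenting-path crossings; split on whether $v_a,v_b$ are shortcut; invoke Lemma~\ref{lem:lower-bound-help-lemma} in the hard case), but the bookkeeping you chose introduces a gap that the paper's direct $\xi$-sum avoids.

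The problematic step is the last sub-case. After establishing that the extra edges from $v_a^*,v_b^*$ land somewhere in $V(H_i)$, you assert that ``a further good-cycle analysis on the parallel $x_a$--$x_i$ double-edge shows the endpoint must be $u_3^i$ or $u_4^i$,'' and your inequality $\load(x_i)\geq 1 + (d_3+d_4)/2$ with $d_3+d_4 \geq p+2$ depends on exactly this. But that $2$-cycle in general does \emph{not} shortcut $x_i$: with the augmenting-path edge landing on $u_4^i$ and the extra edge landing on $u_1^i$, shortcutting $x_i$ would require a cost-$1$ spanning path of $G[V(H_i)]$ from $u_4^i$ to $u_1^i$, and no such path exists in a $4$-cycle (the only spanning path $u_4^i - u_3^i - u_2^i - u_1^i$ has cost $2$, and neither diagonal helps). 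Since the $2$-cycle shortcuts only $x_a$, it is not a good cycle and gives no contradiction. So $e_a$ (resp.\ $e_b$) may genuinely land on $u_1^i$ or $u_2^i$, in which case it is invisible to $d_3+d_4$, leaving your lower bound at $1 + p/2$, which is $1/2$ short of the required $5/2 + (p-2)/2 = 3/2+p/2$.

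The gap is not in the conclusion but in the counting frame. The extra edges $e_a,e_b$ contribute at least $\tfrac12$ to $\load(x_i)$ no matter which vertex of $H_i$ they hit: both endpoints are shortcut and the two incident nodes are not in $I$, so rule (4) of $\xi$ gives at least $\tfrac12$ to the $H_i$-endpoint. Counting those contributions directly (rather than folding them into $d_3+d_4$) yields $\load(x_i) \geq 2 + \tfrac12 + \tfrac12 = 3$ when $p=2$, and each further element of $N_1^\bot(x_i)$ still adds at least $\tfrac12$ via its witness edge at $u_3^i$ or $u_4^i$; this is the route the paper takes, and it requires no control over where $e_a,e_b$ land inside $H_i$. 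If you replace the $d_3+d_4$-based bound with this direct count, the rest of your argument goes through.
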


\begin{proof}
Since $x_i \in U_1$, here only the definitions (1)-(4) apply, since by definition, $I$ neither contains $U_1$ nodes nor neighbors of $U_1$ nodes.
Recall the definition of a node $x_i \in U_1$.
The node $x_i$ corresponds to a small component $H_i$ such that $|| H[V(H_i)] \setminus E(R)|| = 1$, that is, $R$ contains precisely one unit-edge, say $\{u_1^i, u_2^i \}$, that is contained in $H_i$.
Furthermore, $x_i$ is an interior node of an open 2-augmenting path in $R / S$.
By definition, $\{u_3^i, u_4^i \}$ is the unit-edge of $E(H_i) \setminus E(R)$.
Recall that the set of nodes $N^\bot_1(x_i)$ is defined to be the set of neighbors of $x_i$ that are not incident to $u_1^i$ or $u_2^i$.
By definition of an open 2-augmenting path, there have to be at least one outgoing edge from $u_3^i$ to some component $H_p$ in $R$ and at least one outgoing edge from $u_4^i$ to $H_q$ in $R$ such that $p \neq q$.
W.l.o.g. let $u_1^p u_4^i u_1^i u_2^i u_3^i u_1^q$ be the 2-augmenting path in $R / S$ when expanded.

First, we show that the occurrence of $H_p$ and $H_q$ already lead to the fact that $\load(x_i) \geq \frac{5}{2}$. 
In a second step, we then show that any additional neighbor in $ N^\bot_1(x_i)$ contributes an additional load of at least $\frac{1}{2}$, which then proves the lemma.

If both unit-edges incident to $u_1^p$ and $u_1^q$ in $H_p$ and $H_q$, respectively, are contained in $R$, then it is straightforward to see that already this implies that $\load(x_i) \geq 3$, as follows:
$\xi(u_1^i, \{u_1^i, u_2^i \}) = \frac{1}{2}$ by (1), $\xi(u_2^i, \{u_1^i, u_2^i \}) = \frac{1}{2}$ by (1), $\xi( u_4^i, \{u_1^p, u_4^i \}) = 1$ by (3), and $\xi( u_3^i, \{u_1^q, u_4^i \}) = 1$ by (3).

Else, if only one of these edges is in $R$, say $\{u_1^p, u_2^p \}$, then observe that this implies  $\load(x_i) \geq \frac{5}{2}$, by (2), (3), and (4), since:
$\xi(u_1^i, \{u_1^i, u_2^i \}) = \frac{1}{2}$ by (1), $\xi(u_2^i, \{u_1^i, u_2^i \}) = \frac{1}{2}$ by (1), $\xi( u_4^i, \{u_1^p, u_4^i \}) = \frac{1}{2}$ by (4), and $\xi( u_3^i, \{u_1^q, u_4^i \}) = 1$ by (3).

Finally, consider the case that both unit-edges $\{u_1^p, u_2^p \}$ and $\{u_1^q, u_2^q \}$ are not contained in $R$.
In this case, we apply Lemma~\ref{lem:lower-bound-help-lemma} and obtain that there are additional edges $e_p$ from $H_p$ to $H_i$ and $e_q$ from $H_q$ to $H_i$. 
Since $x_p, x_q \in N^\bot_1(x_i)$ these edges are also incident to $u_3^i$ and $u_4^i$.
Now, also we have $\xi(u_1^i, \{u_1^i, u_2^i \}) = \frac{1}{2}$, $\xi(u_2^i, \{u_1^i, u_2^i \}) = \frac{1}{2}$, $\xi( u_4^i, \{u_1^p, u_4^i \}) = \frac{1}{2}$, and $\xi( u_3^i, \{u_1^q, u_4^i \}) = \frac{1}{2}$. 
Furthermore, either $\xi( u_3^i, e_p) \geq \frac{1}{2}$ or $\xi( u_4^i, e_p) \geq \frac{1}{2}$ and $\xi( u_3^i, e_q) \geq \frac{1}{2}$ or $\xi( u_4^i, e_q) \geq \frac{1}{2}$, all by (1) or (4).
Hence, this implies $\load(x_i) \geq 3$.

To see the second part, observe that each additional node $x_j \in N^\bot_1(x_i)$ is also incident to either $u_3^i$ or $u_4^i$ and hence contributes at least a load of $\frac{1}{2}$ to $x_i$.
Thus, combining we have $\load(x_i) \geq \frac{5}{2} + \frac{1}{2} \cdot \max(| N^\bot_1(x_i)| - 2, 0)$.
\end{proof}

\begin{restatable}{lemma}{lemlb4}
\label{lem:lb-4}
For each node $x_i \in U_2$ we have $\load(x_i) \geq \frac{5}{2} + \frac{1}{2} \cdot \max(|N(x_i)| - 2, 0)$.
\end{restatable}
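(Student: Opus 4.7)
The plan mirrors the proof of Lemma~\ref{lem:lb-3} but adapted to the doubly-shortcut case, combining Lemma~\ref{lem:lower-bound-help-lemma} with the absence of small-to-medium merges in a special configuration. Fix $x_i \in U_2$ and pick an open $2$-augmenting path $P = x_p, x_i, x_q$ through $x_i$, which exists by definition of $U_2$. Since $x_i \in B_2$, all four vertices of $V(H_i)$ are shortcut; without loss of generality $P_{H_i}$ uses the unit-edge $\{u_1^i, u_2^i\}$, so $P$ enters $H_i$ at $u_4^i$ via $e_p = \{w_p, u_4^i\}$ and exits at $u_3^i$ via $e_q = \{w_q, u_3^i\}$ with $w_p \in V(H_p)$, $w_q \in V(H_q)$. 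Because $x_p, x_q \in N_2$, neither lies in $I$, so only charging rules (1)--(4) apply to edges incident to $V(H_i)$.

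To establish $\load(x_i) \geq \frac{5}{2}$, start from the bound $\load(x_i) \geq 4 \cdot \frac{1}{2} = 2$ coming from Lemma~\ref{lem:lb-0} and locate an extra $\frac{1}{2}$. If either $w_p$ or $w_q$ is not shortcut, rule~(2) upgrades the contribution of the corresponding $e_p$ or $e_q$ from $\frac{1}{2}$ to $1$, immediately giving the bound. Otherwise both $w_p, w_q$ are shortcut, and Lemma~\ref{lem:lower-bound-help-lemma} supplies two additional unit-edges of $R$, one from $u_2^p = w_p'$ or $u_3^p = y_p'$ into $V(H_i)$, and symmetrically one from the $H_q$ side. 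When either extra originates at a non-shortcut endpoint, rule~(2) fires on that extra and delivers the missing $\frac{1}{2}$; if either extra lands at a vertex distinct from the two degree-deficient vertices $u_1^i, u_2^i$, the remaining deficient vertex must independently carry a further unit-edge in $R$, also producing the missing $\frac{1}{2}$.

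For the additional-neighbor slack, each $x_j \in N(x_i) \setminus \{x_p, x_q\}$ is witnessed by at least one crossing edge in $R$ between $V(H_i)$ and $V(H_j)$; these edges are distinct from $e_p, e_q$ and from the two Lemma~\ref{lem:lower-bound-help-lemma} extras (which attach to $V(H_p), V(H_q)$), so each adds at least $\frac{1}{2}$ to $\load(x_i)$ via rule~(4), giving the claimed $\frac{1}{2} \cdot \max(|N(x_i)|-2, 0)$ term.

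The main obstacle is the residual tight sub-case: both $w_p, w_q$ are shortcut, both Lemma~\ref{lem:lower-bound-help-lemma} extras originate at shortcut vertices and attach precisely at $u_1^i$ and $u_2^i$, and no diagonal of $H_i$ lies in $R$. Here I exploit specialness of $S$. A case analysis on the origin pair (drawn from $\{w_p', y_p'\} \times \{w_q', y_q'\}$) together with the unit-edges of $H_p, H_q$ (cycle unit-edges or diagonals) forced on the remaining vertices by $2$-edge-connectivity of $R$ shows that $G$ then contains two vertex-disjoint spanning $6$-cycles of cost $3$ on $V(H_i) \cup V(H_p) \cup V(H_q)$, built from $e_p, e_q$, the two extras, the forced extra unit-edges inside $H_p, H_q$, and the zero-edges of $H_i, H_p, H_q$. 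This constitutes a small-to-medium merge in $S$ and contradicts specialness, ruling out the tight sub-case and completing the proof of $\load(x_i) \geq \frac{5}{2}$.
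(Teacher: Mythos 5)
Your proof follows the same skeleton as the paper's (the Lemma~\ref{lem:lb-0} baseline, Lemma~\ref{lem:lower-bound-help-lemma} for extra crossing edges, and ruling out the tight sub-case via the absence of small-to-medium/small-to-large merges), but the execution has a real gap in the additional-neighbor accounting. You obtain the base bound of $\frac{5}{2}$ by starting from Lemma~\ref{lem:lb-0}, which guarantees each of the four vertices of $H_i$ a load of $\frac12$ from \emph{some} incident edge, and then you find one extra half. But in the scenario where $u_1^i,u_2^i$ have no edges to $H_p$ or $H_q$ at all (the paper's Case~A1), the per-vertex $\frac12$'s for $u_1^i$ and $u_2^i$ necessarily come from crossing edges to a third component $H_r$. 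Those are exactly the ``witness edges'' you later want to count toward the $\frac{1}{2}\cdot\max(|N(x_i)|-2,0)$ slack, and your stated distinctness criterion (``distinct from $e_p,e_q$ and the two extras'') does not exclude them. The paper sidesteps this by a case split (A1/A2/A3) in which it establishes $\load\ge 3$ directly from the edges to $H_p,H_q,H_r$ (and $\ge 4$ when $u_1^i,u_2^i$'s edges themselves form another open $2$-augmenting path, a sub-case you don't treat), so that the per-additional-neighbor $\frac12$ is only charged to neighbors beyond those already used. Without that bookkeeping, the inequality can fail exactly when $|N(x_i)|=3$ and all of $u_1^i,u_2^i$'s crossing edges go to $H_r$.

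Two further, smaller issues: you rule out the tight sub-case solely by exhibiting a small-to-medium merge (two disjoint cost-$3$ $6$-cycles), but depending on whether the two extras land at $u_1^i,u_2^i$ in the ``same'' or ``opposite'' orientation relative to the expanded path, the forbidden structure is a small-to-large merge (one cost-$6$ $12$-cycle) rather than small-to-medium, and the paper handles both. Finally, the diagonal case (when $R$ contains a diagonal edge of $H_i$) is mentioned in passing but never worked through; the paper's Case~B treats it explicitly, using Lemma~\ref{lem:lower-bound-help-lemma} together with Fact~\ref{fact:lower-bound}(3) to force a load of $3$ from edges to $H_p,H_q$ directly.
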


\begin{proof}
Since $x_i \in U_2$, here only the definitions (1)-(4) apply, since by definition, $I$ neither contains $U_2$ nodes nor neighbors of $U_2$ nodes.
Recall the definition of a node $x_i \in U_2$.
The node $x_i$ corresponds to a small component $H_i$ such that $|| H[V(H_i)] \setminus E(R)|| = 2$, that is, $R$ does not contain any of the unit-edges in $H_i$ (but possibly a diagonal edge).
Furthermore, $x_i$ is an interior node of an open 2-augmenting path in $R / S$.

\textbf{Case A:} Assume that $R$ does not use a diagonal edge of $H_i$.
W.l.o.g. let $u_1^p u_4^i u_1^i u_2^i u_3^i u_1^q$ be the 2-augmenting path in $R / S$ when expanded, where $i, p, q$ are pairwise distinct.

We make a case distinction on which subset of vertices of $H_i$ are incident to the edge-sets $E_R(V(H_i),V(H_p))$ and $E_R(V(H_i),V(H_q))$.

\textbf{Case A1:} assume that in $R$ only the vertices $u_4^i$ and $u_3^i$ are connected to the vertices of $H_p$ or $H_q$.
Observe that in this case, by (2) and (4), we have that
$$ \sum_{e \in E_R( \{u_4^i\} ,V(H_p) \cup V(H_q)) } \xi(u_4^i, e) + \sum_{e \in E_R( \{u_3^i\} ,V(H_p) \cup V(H_q)) } \xi(u_3^i, e) \geq 2 \ . $$
To see this, note that either $\xi(u_4^i, \{u_4^i, u_1^p\}) = 1$ (by (2), if $u_1^p$ is not shortcut), or, by Lemma~\ref{lem:lower-bound-help-lemma}, there is one additional edge $e_p$ from $V(H_p)$ to $V(H_i)$, and therefore to $\{u_4^i, u_1^p\}$.
Without loss of generality assume that $e_p$ is incident to $u_4^i$.
In the latter case, we have $\xi(u_4^i, \{u_4^i, u_1^p\}) = \frac{1}{2}$, and $\xi(u_4^i, e_p) = \frac{1}{2}$.
Additionally, we have the same statement for $u_3^i$ and $H_q$.
Hence, in either case we obtain the above statement.

Furthermore, since $\{u_1^i, u_2^i\}$ is not present in $R$, we have that there is one edge $e_1$ outgoing from $u_1^i$ and one edge $e_2$ outgoing from $u_2^i$.
If $e_1$ and $e_2$ go to the same component, say $H_r$, then 
$$ \sum_{e \in E_R( \{u_1^i\} ,V(H_p) \cup V(H_q)) } \xi(u_1^i, e) + \sum_{e \in E_R( \{u_2^i\} ,V(H_p) \cup V(H_q)) } \xi(u_2^i, e) \geq 1 , $$
and hence
$$ \sum_{e \in E_R(V(H_i),V(H_p) \cup V(H_q) \cup V(H_r)) \ } \sum_{\ v \in V(H_i)} \xi(v, e) \geq 3 . $$
Furthermore, each additional neighbor of $x_i$ will contribute an additional load of at least $\frac{1}{2}$ to some vertex of $H_i$.
To see this, observe that $x_i$ is double-shortcut and hence for each neighbor either (4) applies ($x_i$ gets $\frac{1}{2}$ extra load) or (4) applies ($x_i$ gets $1$ extra load).

Hence, let us assume that $e_1$ and $e_2$ go to distinct components, say $H_1$ and $H_2$.
Then, $e_1$ and $e_2$ also form an open 2-augmenting path and by Lemma~\ref{lem:lower-bound-help-lemma} we have that there are two more edges from $H_1$ and $H_2$ to the vertices of $H_i$. 
Hence, we have that 
$$ \sum_{e \in E_R(V(H_i),V(H_p) \cup V(H_q) \cup V(H_1) \cup V(H_2)) \ } \sum_{ \ v \in V(H_i)} \xi(v, e) \geq 4 , $$ by (2) and (4) and since $x_i \in U_2$.
Furthermore, each additional neighbor of $x_i$ will contribute an additional load of at least $\frac{1}{2}$ (similar to before), which proves the claim.

\textbf{Case A2:} assume that in $R$ only the vertices $u_4^i$, $u_3^i$, and $u_2^i$ are connected to the vertices of $H_p$ or $H_q$. The case in which the vertices $u_4^i$, $u_3^i$, and $u_1^i$ are connected to the vertices of $H_p$ or $H_q$ is analogue.
Observe that in this case, similar to the previous case, by (2) and (4) and the application of Lemma~\ref{lem:lower-bound-help-lemma}, we have that
$$ \sum_{e \in E_R(V(H_i),V(H_p) \cup V(H_q))  \ } \sum_{ \ v \in V(H_i)} \xi(v, e) \geq 2 . $$
Furthermore, since $\{u_1^i, u_2^i\}$ is not present in $R$, we have that there is one edge $e_2$ outgoing from $u_2^i$ to some component $H_r$, distinct from $H_p$ and $H_q$.
Similar to before, then there is another open 2-augmenting path with $H_r$, $H_i$, and either $H_p$ or $H_q$ (depending on where the outgoing edge of $u_1^i$ is going), and hence by Lemma~\ref{lem:lower-bound-help-lemma} there has to be one more edge from $H_r$ to $H_i$.
Hence, by similar arguments as before, we have that 
$$ \sum_{e \in E_R(V(H_i),V(H_p) \cup V(H_q) \cup V(H_r))  \ } \sum_{ \ v \in V(H_i)} \xi(v, e) \geq 3 . $$
Furthermore, each additional neighbor of $x_i$ will contribute an additional load of at least $\frac{1}{2}$, which proves the claim.

\textbf{Case A3:} consider the case that in $R$ all vertices of $H_i$ are connected to the vertices of $H_p$ or $H_q$.
Note that Lemma~\ref{lem:lower-bound-help-lemma} guarantees the existence of the edges $e_p$ from $u_2^p$ to $H_i$ and/or $e_q$ from $u_2^q$ to $H_i$ if $\{ u_1^p, u_2^p \}$ and/or $\{ u_1^q, u_2^q \}$ is not present in $R$.
Furthermore, if $e_p$ is incident to $u_1^i$ and $e_q$ is incident to $u_2^i$, then $P$ together with the edges $e_p$ and $e_q$ contains a small to large merge, a contradiction to the fact that $G$ is special.
If $e_p$ is incident to $u_2^i$ and $e_q$ is incident to $u_1^i$, then $P$ together with the edges $e_p$ and $e_q$ contains a small to medium merge, again a contradiction to the fact that $G$ is special.
Hence, in any case, we directly obtain that 
$$ \sum_{e \in E_R(V(H_i),V(H_p) \cup V(H_q)) \ } \sum_{ \ v \in V(H_i)} \xi(v, e) \geq \frac{5}{2} , $$
since there is at least one vertex $v \in V(H_i)$ incident to two unit-edges, and since $x_i$ is a double-shortcut node.
Furthermore, similar as before, each additional neighbor of $x_i$ will contribute an additional load of at least $\frac{1}{2}$, which proves the claim.

\textbf{Case B:} assume that $R$ uses a diagonal edge in $H_i$.
W.l.o.g. let this edge be $\{ u_2^i, u_4^i \}$.
Note that Lemma~\ref{lem:lower-bound-help-lemma} guarantees the existence of the edges $e_p$ from $u_2^p$ to $H_i$ and/or $e_q$ from $u_2^q$ to $H_i$ if $\{ u_1^p, u_2^p \}$ and/or $\{ u_1^q, u_2^q \}$ is not present in $R$.
This directly implies that 
$$ \sum_{e \in E_R(V(H_i),V(H_p) \cup V(H_q))  \ } \sum_{ \ v \in V(H_i)} \geq 3 , $$
by similar arguments as used previously.
Furthermore, as before, each additional neighbor of $x_i$ will contribute an additional load of at least $\frac{1}{2}$, since $x_i$ is a double-shortcut node. This proves the lemma.
\end{proof}

\begin{restatable}{lemma}{lemlb5}
\label{lem:lb-5}
For each node $x_i \in I$ we have $\load(x_i) \geq \frac{7}{3}$.
\end{restatable}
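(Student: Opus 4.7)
The plan is to prove $\load(x_i) \ge 7/3$ for $x_i \in I$ by case analysis on whether $x_i \in B_1$ or $x_i \in B_2$; note that $I \subseteq B_1 \cup B_2$ since $I$ is disjoint from $Z$. In both cases the driving idea is that the defining conditions of $I$---namely that $x_i$ has no $R$-neighbor in $B_1$, no $R$-neighbor in $U_2$, and is itself not interior to any open $2$-augmenting path---tightly constrain where the $R$-edges incident to the shortcut vertices of $V(H_i)$ may go, and the property that $S$ is special (no good cycle, no open $3$-augmenting path, no small-to-medium or small-to-large merge) forces these edges to collect on a single common target vertex. This triggers case (5)c of the charging scheme, which is precisely what supplies the load beyond the baseline $1/2$ per vertex provided by Lemma~\ref{lem:lb-0}.

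For $x_i \in B_1 \cap I$, assume WLOG $\{u_1^i, u_2^i\} \in R$ so $\{u_3^i, u_4^i\}$ is shortcut; the edge $\{u_1^i, u_2^i\}$ contributes $1$ to $\load(x_i)$ by case (1), so I need the $R$-edges incident to $u_3^i$ and $u_4^i$ (other than zero-edges in $H_i$) to contribute at least $4/3$. First, all such edges must target a single common component $x_p$: two crossings from $u_3^i$ and $u_4^i$ to distinct nodes form an open $2$-augmenting path through $x_i$ via the Hamiltonian path $u_3^i - u_2^i - u_1^i - u_4^i$ with one unit-edge, contradicting $x_i \notin U_1$. Second, $x_p$ is small, since Fact~3 applied to a large target $L$ at $u_3^i$ would force $u_4^i$ to have no outgoing $G$-edge, contradicting its need for a second $R$-edge. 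Third, $x_p \notin B_1$, otherwise $x_i \in N_1^\top \cup N_1^\bot$. If $x_p$ has both unit-edges of $H_p$ in $R$, the target vertices are not shortcut, case (2) applies, and the two crossings contribute $1$ each, giving $\load(x_i) \ge 3$. Otherwise $x_p \in B_2$; if both $u_3^i, u_4^i$ terminate at the same vertex $a \in V(H_p)$, then $a$ has two edges to $H_i$ while each $u_j^i$ has one edge to $H_p$, so case (5)c gives $\xi(u_3^i,\cdot) = \xi(u_4^i,\cdot) = 2/3$, and $\load(x_i) \ge 1 + 4/3 = 7/3$.

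The remaining subcase $x_p \in B_2$ with crossings hitting distinct vertices $a \ne b$ of $V(H_p)$ has to be excluded. If $\{a,b\}$ is a unit-edge of $H_p$, the $2$-cycle $x_i - x_p$ shortcuts both $x_i$ (via $u_3^i - u_2^i - u_1^i - u_4^i$) and $x_p$ (by removing $\{a,b\}$ from the Hamiltonian walk around $H_p$), giving a good cycle and contradicting $S$ special. Otherwise the two remaining vertices $v', v'' \in V(H_p) \setminus \{a,b\}$ still need second $R$-edges; these cannot be $V(H_p)$-diagonals (each such diagonal would produce a Hamiltonian path in $G[V(H_p)]$ from $a$ to $b$ with one unit-edge, so the $2$-cycle $x_i - x_p$ would again be good) and cannot go to a component $x_q \notin \{x_i, x_p\}$ (combined with an existing crossing at $a$ or $b$, such an edge would make $x_p$ interior to an open $2$-augmenting path, so $x_p \in U_2$ and thus $x_i \in N_2$, contradicting $x_i \in I$). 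Hence the second edges from $v', v''$ must go back into $V(H_i)$, and a short enumeration of the $G/S$-$2$-cycles formed by the resulting multi-edges between $x_i$ and $x_p$ produces a good cycle in every configuration. The case $x_i \in B_2 \cap I$ is handled analogously: every vertex of $V(H_i)$ is shortcut and demands a second $R$-edge, and the same chain of reductions (using $x_i \notin U_1 \cup U_2 \cup N_2 \cup N_1^\top \cup N_1^\bot$ together with $S$ special) forces all these edges to meet at a common vertex or pair of vertices of a single target $x_p \in B_2$, triggering case (5)c on enough edges to reach $\load(x_i) \ge 7/3$.

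The main obstacle is the bookkeeping in the ``distinct endpoints'' subcase, which has to rule out three kinds of escape routes simultaneously: diagonals inside $V(H_p)$ (blocked by new Hamiltonian paths producing good cycles), external edges from the free vertices of $V(H_p)$ (blocked by placing $x_p$ on an open $2$-augmenting path, contradicting $x_i \notin N_2$), and edges back into $V(H_i)$ (blocked by an enumeration of the resulting multi-cycles in $G/S$, some of which must be good). The $B_2$ case additionally requires considering whether $R$ uses a diagonal inside $V(H_i)$, since this alters which vertices of $H_i$ count as shortcut and which Hamiltonian paths in $G[V(H_i)]$ are available for augmenting, so the good-cycle and augmenting-path arguments must be repeated with adjusted endpoints before concluding.
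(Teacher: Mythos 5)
The overall skeleton of your argument matches the paper's: both exploit $x_i\notin U_1\cup U_2$ to force the $R$-edges out of the shortcut vertices of $H_i$ to converge on a single small component $H_p$, then invoke charging rule~(5)c when they further converge on a single vertex of $H_p$. The genuine divergence is how the remaining subcase -- crossings hitting \emph{distinct} vertices $a\neq b$ of $H_p$ -- is disposed of. The paper keeps this subcase and proves a load bound for it; you instead try to \emph{rule it out entirely} by good-cycle enumeration, and this is where the argument breaks down.

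Concretely: your claim that ``a short enumeration of the $G/S$-$2$-cycles formed by the resulting multi-edges between $x_i$ and $x_p$ produces a good cycle in every configuration'' is not substantiated and is not straightforwardly true. Take $a=u_2^p,\,b=u_3^p$ (a zero-edge pair, which you never rule out), with $u_1^p\mapsto u_3^i$, $u_2^p\mapsto u_3^i$, $u_3^p\mapsto u_4^i$, $u_4^p\mapsto u_4^i$. No $x_i$--$x_p$ $2$-cycle here shortcuts both components: the two crossings hitting a unit-edge pair of $H_p$ (namely at $\{u_1^p,u_2^p\}$ or $\{u_3^p,u_4^p\}$) arrive at the \emph{same} vertex of $H_i$, so they cannot simultaneously shortcut $x_i$. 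To exclude such a configuration one has to invoke more than good cycles -- e.g., the fact that $V(H_p)$ would then have no connection to the rest of $G$ outside $V(H_i)$, contradicting the $S_{\{3,4\}}$-freeness of a structured graph -- but your proof never appeals to the separator exclusions from the preprocessing. Similarly, your assertion that a diagonal second edge at $v'$ or $v''$ ``would produce a Hamiltonian path in $G[V(H_p)]$ from $a$ to $b$ with one unit-edge'' is false when $\{a,b\}$ is a zero-edge pair of $H_p$: the diagonal $\{u_1^p,u_3^p\}$ does not yield a one-unit-edge spanning path from $u_2^p$ to $u_3^p$, so the good-cycle conclusion does not follow there either.

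Finally, the paper treats the $B_2$ case (including the possibility that $R$ uses a diagonal inside $H_i$) by redoing the case analysis, which genuinely changes which vertices of $H_i$ count as shortcut and which augmenting/Hamiltonian paths are available; your ``handled analogously'' sentence glosses over this. In short, replacing the paper's load computation in the distinct-endpoint subcase with an exclusion argument is a real strategic departure, and the exclusion, as written, has gaps that would need the structured-graph separator conditions (not just good cycles) to close.
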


\begin{proof}
Since $x_i \in I$, here only the definitions (1), (2), (3), and (5) a), b), c) apply. 
By definition of the set $I$, we know that $x_i \in B_1$ or $x_i \in B_2$ and hence there is at least one unit-edge, say $\{ u_3^i, u_4^i \}$, that is not contained in $R$. 

Let us first assume that $R$ does not use a diagonal edge in $H_i$.
Then in $R$, there have to be outgoing edges $e_3$ from $u_3^i$ and $e_4$ from $u_4^i$ that go to the same component $H_p$ (since $x_i \in I$).
Let $e_3 = \{ u_3^i, v \}$ and $e_4 = \{ u_3^i, w \}$, $v, w \in V(H_p)$.
First, assume that $v \neq w$.
Note that $v$ and $w$ can not be $u_1^p$ and $u_2^p$ or $u_3^p$ and $u_4^p$, since then $e_3$ and $e_4$ form a good cycle, a contradiction to the fact that $G$ is special. 
Furthermore, if in $G$ there is some diagonal edge $e_d$ in $H_p$, then $v$ and $w$ can not be both vertices that are not incident to $e_d$. 
Otherwise, again there is a good cycle.

Next, assume that $v$ and $w$ are any other vertices such that $v \neq w$.
If $v$ or $w$ are not shortcut, we obtain $\load(x_i) \geq \frac{5}{2}$.
To see this observe that in this case either $\sum_{z \in V(H_i)} \xi(z, e_3) = 1$ or $\sum_{z \in V(H_i)} \xi(z, e_4) = 1$ (that is, there is one vertex $z \in V(H_i)$ such that $\xi(z, e_3) = 1$, by (2)) and together with the remaining unit-edges incident to $H_i$ in $T_1$ we have the desired result.

Thus, we assume that the distinct unit-edges incident to $v$ and $w$ in $H_i$ are both shortcut.
Since $x_p$ is not a $U_2$-node (otherwise $x_i$ is not an $I$-node), there have to be at least 4 edges from the vertices of $H_p$ to the vertices of $H_i$ in $T_1$.
If all vertices of $H_i$ are incident to an edge that goes to $H_p$, then $R$ is not 2-edge-connected.
To see this, observe that both $x_i$ and $x_p$ are no $U_1$ or $U_2$ nodes, and hence none of them can be connected to some other node $x_q$, a contradiction.
Hence, at least one vertex of $H_i$ is incident to two unit-edges of $E_R(V(H_i),V(H_p))$.
Hence, by (2), (3), (4) and (5), we have that $\load(x_i) \geq \frac{5}{2}$.
To see this, observe that if $x_i \in B_1$, then the unit-edge of $R$ inside $H_i$ will give the remaining load.
Otherwise, if $x_i \in B_2$, then there has to be one more outgoing edge from $x_i$ to some other component $H_r$ (since 2 edges are shortcut in $x_i$) and the edge-set incident to this vertex and $H_r$ will give the remaining load.

Hence, w.l.o.g.\ we can assume that $v = w = u_1^p$.
In this case, observe that by (5) c) we have that $\xi(u_3^i, e_3) = \xi(u_4^i, e_4) = \frac{2}{3}$.
Furthermore, if $x_i \in B_1$, then the unit-edge of $T_1$ inside $H_i$ will give another load of 1 for $H_i$
Otherwise, if $x_i \in B_2$, then there have to be two more outgoing edges from $x_i$ to some other component (since 2 edges are shortcut in $x_i$) and these edges will give another load of 1 for $H_i$.
Hence, we have that $\load(x_i) \geq \frac{7}{3}$.

Finally, let us assume that $R$ uses a diagonal edge in $H_i$, say $\{ u_2^i, u_4^i \}$.
By doing the same case distinction as in the case in which $R$ does not use a diagonal edge, we also obtain $\load(x_i) \geq \frac{7}{3}$.
\end{proof}


\newpage
\section{The gluing algorithm of Cheriyan et al.\ (proof of Lemma~\ref{lem:glue})}
\label{sec:Glue}

In this section, we will prove the following lemma.

\lemglue*

Note that this lemma is already proven implicitly in~\cite{CheriyanCDZ21}. 
However, we still give the full proof here since our notation differs from theirs. 
Furthermore, they have some obstructions that we do not explicitly exclude in the preprocessing, but which are implicitly subsumed by our preprocessing, in particular by our contractable subgraphs.

Throughout this section, we assume that at the beginning of the algorithm we are given a special configuration $S$ of $G$. 
In particular, this means that $S$ consists only of small and large components.
During the algorithm, we always keep a 2-edge-cover $H$, where at the beginning of the algorithm we have $H = S$.

In order to prove Lemma~\ref{lem:lowerbound}, we use the same credit invariant as used in~\cite{CheriyanCDZ21}.
\begin{itemize}
    \item[(i)] Each large component receives a credit of at least 2, and 
    \item[(ii)] each small component receives a credit of $\frac{4}{3}$.
\end{itemize}
For a component $C$ of $H$, we write $\newcredit(C)$ to denote its credit.
Note that each component of $H$ corresponds to a node in $G / H$ and hence we sometimes say that a node of $G / H$ has certain credit.
Furthermore, for a 2-edge-connected subgraph $K$ in $G / H$ we keep the notation of $\savings(H, K)$ from Section~\ref{sec:ComputingSpecialConfiguration}.

The goal is to \emph{glue} the components of $H$ together while maintaining this credit invariant until eventually we only have a single component left, which is then a 2-ECSS.
It is straightforward to verify that such an algorithm satisfies the conditions of Lemma~\ref{lem:lowerbound}.
In order to define the algorithm, first recall the definitions of a good cycle and an open/closed augmenting path from Section~\ref{sec:ComputingSpecialConfiguration}.
Furthermore, we need the following definition.

\begin{definition}[stacked closed augmenting paths]
Let $H$ be a bridgeless 2-edge-cover of some structured graph $G$ and let $P_1$ and $P_2$ be two closed augmenting paths in $G / H$.
We say that $(P_1, P_2)$ is a \textbf{stacked closed augmenting path} if $|V(P_1) \cap V(P_2) | = 1$ and $V(P_1) \cap V(P_2)$ is not an interior node of $P_1$.
We say that $(P_1, P_2)$ is a \textbf{stacked closed 2-augmenting path} if we additionally have that $|E(P_1)| = |E(P_2) | = 2$.
\end{definition}

We are now ready to define Algorithm $\glue(G, S)$.
The Merge-steps are defined in Lemma~\ref{lem:glue:merge-enough-credit}.

\begin{algorithm}
\caption{Gluing Special Configuration}\label{alg:gluespecial}
\begin{algorithmic} 
\Function{Glue}{$(G, S)$} 
\State $H = S$.
\While{$H$ has at least 2 connected components}
\If{$G / H$ contains a good cycle $C$} 
\State Merge $H$ and $C$ to obtain graph $H'$.
\Return $\glue(G, H')$.
	\EndIf
\If{$G / H$ contains an open 2-augmenting path $P$} 
\State Merge $H$ and $P$ to obtain graph $H'$.
\Return $\glue(G, H')$.
	\EndIf
\If{$G / H$ contains a stacked closed 2-augmenting path $(P_1, P_2)$} 
\State Merge $H$ and $(P_1, P_2)$ to obtain graph $H'$.
\Return $\glue(G, H')$.
	\EndIf
\EndWhile
\Return $H$.
\EndFunction
\end{algorithmic}
\end{algorithm}

In order to prove that the Algorithm $\glue(G, S)$ outputs a feasible solution in polynomial time and additionally satisfies the credit invariant, we show the following lemmas.
First, we show that any of the three structures, which we are searching for in the algorithm, can be found in polynomial time.
Second, we show that whenever we are given such a structure, then we can merge them into a single component using a polynomial-time algorithm, that also satisfies the credit invariant.
Finally, we show that if there are no good cycles and no open 2-augmenting paths in $G / H$, then there is a stacked closed 2-augmenting path. 
This guarantees that the algorithm indeed outputs a feasible solution.
Note that in the last part we crucially exploit the fact that $G$ is structured.
Hence, the following three lemmas together with the new credit invariant prove Lemma~\ref{lem:glue}.

\begin{lemma}
\label{lem:glue:find-obstructions}
Let $H$ be a bridgeless 2-edge-cover of some structured graph $G$.
Then we can find in polynomial time one of the obstructions specified in Algorithm $\glue(G, S)$, namely a good cycle, an open 2-augmenting path, or a stacked closed 2-augmenting path.
\end{lemma}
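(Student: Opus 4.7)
The plan is to handle each of the three obstructions separately by direct enumeration, exploiting the fact that in every case the ``nontrivial'' nodes of $G/H$ involved in the structure (shortcut small components and interior nodes of augmenting paths) correspond to components of constant size, and appealing to Proposition~\ref{prop:cycle-finding-edges} whenever we need to complete a cycle through a constant-size edge set.

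For good cycles, I would follow the same strategy used for Lemma~\ref{lem:find-special-obstruction-poly}. A good cycle in $G/H$ is a simple cycle witnessed by two ``special'' nodes: either two large components, two shortcut small components, or one of each. For each ordered pair of such candidate nodes (polynomially many), I enumerate pairs of edges in $E(G)\setminus E(H)$ incident to each candidate; for a shortcut small component, only $O(1)$ such pairs can produce a shortcut, since $|V(C)|\leq 4$. For every choice of at most four edges, apply Proposition~\ref{prop:cycle-finding-edges} on $G/H$ to find (if it exists) a simple cycle containing them; a constant-sized edge set keeps this polynomial. If any such cycle is returned, we have our good cycle.

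For open $2$-augmenting paths, the interior node $x_2$ corresponds to a small component $X_2$ with $|V(X_2)|\leq 4$. Enumerate every small component of $H$ as a candidate $x_2$; for each, enumerate every ordered pair of distinct edges $e_1=\{u_1,u\}$ and $e_2=\{v,v_1\}$ with $u,v\in V(X_2)$ and $u_1,v_1\notin V(X_2)$; then check by brute force whether $G[V(X_2)]$ contains a spanning $u$--$v$ path with exactly one unit-edge (only $O(1)$ candidate spanning paths since $|V(X_2)|\leq 4$), and whether the endpoints $u_1,v_1$ lie in components distinct from each other and from $x_2$. Each step is polynomial.

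For stacked closed $2$-augmenting paths, observe that the single shared node cannot be interior to either path, so it must be the common head/tail of both. I enumerate the shared head $x$ and the two candidate interior small components $y_1,y_2$ (polynomially many triples). For each $y_i$, checking whether there is a closed $2$-augmenting path $x,e_1,y_i,e_2,x$ reduces to enumerating pairs of edges from $V(X)$ to $V(Y_i)$ in $E(G)\setminus E(H)$ together with a spanning path in $G[V(Y_i)]$ of exactly one unit-edge making the expanded cycle valid; since $|V(Y_i)|\leq 4$ this is $O(1)$ work per $y_i$. Finally I verify $V(P_1)\cap V(P_2)=\{x\}$. The main subtlety throughout is that the augmenting-path definition imposes a constraint on the \emph{inside} of the small component (a spanning path of the right weight yielding a path in $G$), but since the component has constant size this condition is checkable in $O(1)$ time, so the overall procedure is polynomial.
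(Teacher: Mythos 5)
Your overall strategy — enumerate the constant-size pieces and invoke Proposition~\ref{prop:cycle-finding-edges} to complete a cycle through a constant-size edge set — is exactly the paper's (terse) approach, and your treatment of good cycles and open $2$-augmenting paths is sound. However, your handling of stacked closed $2$-augmenting paths contains a genuine error rooted in a misreading of the definition. The definition requires only that $V(P_1)\cap V(P_2)$ is \emph{not an interior node of $P_1$}; it explicitly permits the shared node to be the interior node of $P_2$. Your claim that ``the single shared node cannot be interior to either path, so it must be the common head/tail of both'' is therefore false, and your enumeration over a common head $x$ with two interior components $y_1,y_2$ (so that $P_1 = x,y_1,x$ and $P_2 = x,y_2,x$) covers only the symmetric case and misses the asymmetric one.

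This is not merely a corner case you could dismiss: the paper's existence proof (Lemma~\ref{lem:glue:existence-of-stacked-path}) constructs stacked closed $2$-augmenting paths precisely of the asymmetric form. There, a length-$2$ directed path $x_a \to x_b \to x_c$ in the auxiliary digraph $D^{aux}$ yields $P_1 = x_b, x_a, x_b$ and $P_2 = x_c, x_b, x_c$, whose intersection $\{x_b\}$ is the head/tail of $P_1$ but the \emph{interior} node of $P_2$. Your enumeration would fail to find exactly the obstructions whose existence is being guaranteed, so the gluing loop could get stuck. The fix is easy and stays within your framework: enumerate ordered pairs of closed $2$-augmenting paths directly, parameterizing each by its head node and its interior small component (polynomially many of each), and then check $|V(P_1)\cap V(P_2)| = 1$ together with the condition that the shared node is not interior to $P_1$. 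Each candidate closed $2$-augmenting path is still verified in $O(1)$ per choice of the pair of connecting edges plus the constant-size inside check, so the whole procedure remains polynomial.
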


\begin{proof}
Note that the proof for good cycles is already given in Section~\ref{sec:ComputingSpecialConfiguration} in Lemma~\ref{lem:find-special-obstruction-poly}.
Furthermore, note that each of the remaining obstructions only has a constant size, and hence we can find such objects in polynomial time.
\end{proof}

We now show the second part.

\begin{lemma}
\label{lem:glue:merge-enough-credit}
Let $H$ be a bridgeless 2-edge-cover of some structured graph $G$ that satisfies the new credit invariant and has a good cycle, an open 2-augmenting path, or a stacked closed 2-augmenting path. 
Then there is a bridgeless 2-edge-cover $H'$ of $G$ that satisfies the new credit invariant and has strictly fewer components than $H$.
\end{lemma}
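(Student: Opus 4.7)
The plan is to treat the three obstructions separately, in each case constructing $H'$ explicitly and verifying the credit invariant by direct accounting. Throughout, the convenient potential to track is $\Phi(H) = \|H\| + \sum_C \newcredit(C)$: every merge will preserve $\Phi$ (i.e., the drop in credit pays for the newly bought edges), so iterating yields $\|\glue(G,S)\| \leq \|S\| + 2 n_\ell + \tfrac{4}{3} n_s - 2$, the extra $-2$ being the credit reserved for the single final large component. In every merge it will turn out that the new $2$-edge-connected component is large, so only credit $2$ needs to be reserved for it.

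For a good cycle $C$ with $k_L$ nodes corresponding to large components, $k_S$ nodes corresponding to small components, and $s = \savings(H,C)$ shortcuts, I would form $H'$ by replacing each shortcut component $C_i$ by a spanning subgraph $X_i$ of $H[V(C_i)]$ with $\|X_i\| = \|H[V(C_i)]\| - 1$ (guaranteed by the definition of shortcut) and adding $E(C)$. The resulting merged component is $2$-edge-connected by construction and has at least $4$ unit edges in each of the three defining sub-cases, so it is large. The credit inequality to check reduces to $k_L + \tfrac{1}{3} k_S + s \geq 2$, which holds because a good cycle satisfies either $k_L \geq 2$, or $s \geq 2$, or $k_L \geq 1$ and $s \geq 1$.

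For an open $2$-augmenting path $P = x_1, e_1, x_2, e_2, x_3$, the (necessarily small) interior node $x_2$ admits a spanning weight-$1$ path $P_{x_2}$ of $H[V(C_{x_2})]$; setting $H' = (H \setminus H[V(C_{x_2})]) \cup \{e_1, e_2\} \cup P_{x_2}$ produces a merged component with at least $7$ unit edges (hence large), and the credit inequality $\newcredit(x_1)+\newcredit(x_2)+\newcredit(x_3) \geq 3$ holds even in the worst case with $x_1, x_3$ both small (yielding $4 \geq 3$).

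The hard case will be the stacked closed $2$-augmenting path $(P_1, P_2)$ sharing the non-interior node $x_1$ with small interior nodes $x_2, x_4$: I would analogously replace the two small interior components by their weight-$1$ spanning paths and add the four edges of $P_1, P_2$. The merged component has at least $8$ unit edges and so is large; the credit inequality becomes
\[
 \newcredit(x_1) + \newcredit(x_2) + \newcredit(x_4) \;\geq\; (4-2) + 2 \;=\; 4,
\]
and when $x_1$ is also small the left-hand side equals $3 \cdot \tfrac{4}{3} = 4$ exactly. This tight equality is the main subtlety, and it is precisely why stacked closed $2$-augmenting paths, rather than individual closed ones, are the correct third obstruction for the gluing algorithm. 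All remaining bookkeeping---that $H'$ is still a bridgeless $2$-edge-cover and has strictly fewer components than $H$---is immediate from the construction.
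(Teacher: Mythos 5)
Your handling of good cycles and stacked closed $2$-augmenting paths matches the paper's argument (up to cosmetic differences in how the credit inequality is expressed), and your observation that the stacked case is tight at $3\cdot\tfrac{4}{3}=4$ is exactly the bottleneck. However, your construction for the open $2$-augmenting path case has a genuine gap. You set
$H' = \bigl(H \setminus H[V(C_{x_2})]\bigr) \cup \{e_1,e_2\} \cup P_{x_2}$,
but this is not a bridgeless $2$-edge-cover: $H_1$ and $H_3$ are joined only by the single $H_1$--$e_1$--$P_{x_2}$--$e_2$--$H_3$ chain, so $e_1$, $e_2$, and every edge of $P_{x_2}$ become bridges. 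An \emph{open} $2$-augmenting path is a path, not a cycle, so adding its two edges cannot $2$-edge-connect the three components by itself. The paper's proof supplies the missing idea: since $G$ is structured (no cut vertices, no $S_{\{3,4\}}$ separators), $H_2$ cannot separate $x_1$ from $x_3$ in $G/H$, so there is a path $P'$ from $x_1$ to $x_3$ in $G/H$ avoiding $x_2$; one then merges the cycle $P \cup P'$. Your credit accounting also has to change accordingly: with $n = |V(P)\cup V(P')| \geq 3$ nodes and $n$ unit edges on the cycle, plus savings $\geq 1$ from shortcutting $x_2$, the residual credit is at least $\tfrac{4}{3}n + 1 - n \geq 2$, which is the bound the paper proves. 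Your inequality $\sum \newcredit \geq 3$ is the right check for a $2$-edge merge, but that merge does not exist here.
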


\begin{proof}
We consider each obstruction one by one.
First, note that good cycles have already been handled in Section~\ref{sec:ComputingSpecialConfiguration} with a stronger credit invariant in Lemma~\ref{lem:handling-good-cycles}. 
Hence, it remains to prove the result for the remaining two obstructions.

Let $P$ be an open 2-augmenting path $x_1, x_2, x_3$ of $G / H$, where the components $H_1, H_2, H_3$ correspond to the nodes $x_1, x_2, x_3$, respectively.
In $G /H$ there has to be a path $P'$ from $x_1$ to $x_3$ such that $P' \cap x_2 = \emptyset$, as otherwise $H_2$ is a separator in $G$, a contradiction to the fact that $G$ is structured and does not contain an $S_{\{3,4\}}$.
First, observe that $P \cup P'$ is a cycle in $G/H$ and hence we can use these edges to merge all nodes in $P \cup P'$ to a single node.
Furthermore, note that $||E(P) \cup E(P') || = | V(P) \cup V(P')|$ and that, by the definition of an open 2-augmenting path, $\savings(H, E(P \cup E(P')) \geq 1$.
Let $C_S'$ be the newly created connected component containing all nodes in $P$ and $P'$.
Then we have that
\begin{align*}
\newcredit(C_S') & \geq \frac{4}{3} \cdot |V(P) \cup V(P')| + 1 - |V(P) \cup V(P')| \geq 2 \ ,
\end{align*}
since $|V(P) \cup V(P')| \geq 3$.

Now consider a stacked closed 2-augmenting path $(P_1, P_2)$ of $(G, H)$.
Note that, by definition, $E(P_1) \cup E(P_2)$ is 2-edge-connected in $G/H$.
Furthermore, $\savings(H, E(P \cup E(P')) \geq 2$ and $||E(P_1) \cup E(P_2)|| = 4$.
Let $C_S'$ be the newly created connected component containing all nodes in $P_1$ and $P_2$.
Then we have that
\begin{align*}
\newcredit(C_S') & \geq \frac{4}{3} \cdot |V(P_1) \cup V(P_2)| + 2 - |V(P_1) \cup V(P_2)| = 2 \ ,
\end{align*}
since $|V(P) \cup V(P')| = 3$.
This proves the lemma.
\end{proof}

Finally, we show the following lemma.
The proof heavily follows the proof of Lemma 28 in~\cite{CheriyanCDZ21} and Lemma 6.4 of the full version of their paper, respectively.

\begin{lemma}
\label{lem:glue:existence-of-stacked-path}
Let $H$ be a bridgeless 2-edge-cover of some structured graph $G$ such that $G / H$ does not contain a good cycle or an open 2-augmenting path.
Then $G / H$ contains a stacked closed 2-augmenting path.
\end{lemma}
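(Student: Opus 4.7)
The plan is to analyze the structure of $G/H$ under the given constraints and force the existence of a stacked closed 2-augmenting path by combining the constraints from the absence of good cycles and open 2-augmenting paths with the structural properties that $G$ inherits from preprocessing.

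First, since $G$ is 2-edge-connected, so is $G/H$ as a multigraph, so every pair of nodes lies on a common cycle. From the absence of good cycles I would immediately deduce that $G/H$ contains at most one node corresponding to a large component: two such nodes would lie on a cycle, violating condition (i) of a good cycle. The same reasoning also rules out any cycle simultaneously containing a large node while shortcutting a small one, and any cycle shortcutting two or more small nodes.

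Next, from the absence of open 2-augmenting paths, I would derive a local constraint on each small component $y$: for every shortcutting pair $v, w \in V(C_y)$, any outgoing edges at $v$ and $w$ must go to a single common other component—otherwise these two outgoing edges together with $y$ form an open 2-augmenting path. Combined with the fact that $G$ is structured, and hence no small component can be $(\frac{13}{8}, 4, 1)$-contractible, I would argue that each small component must have at least two outgoing edges at a shortcutting pair, both going to a single target component $a(y)$. If too few vertices of $V(C_y)$ had outgoing edges, the forced degree-2 structure would compel every $\ecss$ to include both unit-edges of $C_y$, making $C_y$ contractible and contradicting that $G$ is structured. This analysis yields, for each small component $y$, a closed 2-augmenting path $y \rightleftarrows a(y)$ realized by two parallel edges of $G/H$ that shortcut $y$.

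Finally, I would identify two distinct small components $y_1, y_2$ with a common target $a(y_1) = a(y_2) = a$. The two closed 2-augmenting paths $y_1 \rightleftarrows a$ and $y_2 \rightleftarrows a$ intersect only at $a$, which is an endpoint (never an interior node) of both, yielding the desired stacked closed 2-augmenting path. If $G/H$ contains the unique large node $L$, then $L$ must serve as the common target for at least two small components: any small component whose target is another small component would, together with 2-edge-connectivity of $G/H$ and the absence of small separators in $G$, force enough shared structure to create the desired pair through a pigeonhole argument on the underlying simple graph of $G/H$.

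The main obstacle will be the last step: establishing that two distinct small components share a common target. I expect the most delicate case to be when $G/H$ contains no large node, where the target-map on small components must be analyzed through the 2-edge-connectivity of $G/H$ together with the absence of $S_k$-type and $S'_k$-type small separators in $G$; the edge cases where $G/H$ has only two or three components will require separate verification, relying crucially on the forbidden configurations ruled out during preprocessing to exclude degenerate scenarios.
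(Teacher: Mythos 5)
There is a genuine gap, and it sits exactly where you flag uncertainty. Your reduction to a "target map" $a(\cdot)$ on the small components is essentially the paper's auxiliary digraph $D^{\mathrm{aux}}$ (one arc $x_i \to x_j$ per closed $2$-augmenting path $x_j x_i x_j$ belonging to $H_i$). The paper then observes two things that your plan contradicts or omits. First, no target $a(y)$ can ever be a large component: the closed $2$-augmenting path $LyL$ is a $2$-cycle in $G/H$ that contains a large node and shortcuts a small node, which is a good cycle of type (iii); so your claim that ``$L$ must serve as the common target for at least two small components'' is impossible, and all arcs of $D^{\mathrm{aux}}$ go between small nodes. Second, a common target need not exist at all: if the target map is, say, a $3$-cycle $y_1 \to y_2 \to y_3 \to y_1$, no two components share a target, yet a directed path of length $2$ ($y_1 \to y_2 \to y_3$) \emph{does} give a stacked closed $2$-augmenting path (sharing $y_2$, which is interior only to the second path); so the right object to look for is a length-$2$ directed path, not a common target.

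The remaining hard case — which both approaches must face and which the paper treats explicitly — is when $D^{\mathrm{aux}}$ decomposes entirely into mutual $2$-cycles $y_1 \rightleftarrows y_2$: there is then neither a common target nor a length-$2$ path, and the stacked structure genuinely fails to exist in $D^{\mathrm{aux}}$. The paper derives a contradiction here by a careful local analysis of such a pair $H_1, H_2$: using the absence of open $2$-augmenting paths, the absence of $S_{\{3,4\}}$-separators, and the $4$-cycle structure of small components, one forces both $H_i$ to be $4$-cycles with two ``trapped'' endpoints each, from which it follows that every $2$-ECSS must use at least $4$ unit-edges of $G[V(H_1)\cup V(H_2)]$ while $5$ suffice, i.e.\ $H_1\cup H_2$ is a contractible subgraph — contradicting structuredness. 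Your proposal gestures at ``pigeonhole'' and ``absence of small separators'' for this case but does not supply this argument, and as written the plan cannot be closed.
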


\begin{proof}
For each component $H_i$ of $H$ let $x_i$ be its corresponding node in $G/H$.
First, we observe that each small component of $H$ corresponds to a node $x$ in $G / H$ which is an interior node of some augmenting path.
Otherwise, observe that any optimal solution picks at least 2 unit edges inside such a small component.
But then such a component forms a contractable subgraph, a contradiction to the fact that $G$ is structured.
Since $G / H$ does not contain any open 2-augmenting paths by assumption, we know that each augmenting path has length 2 and is closed.
We say that a closed 2-augmenting path $P$ \emph{belongs} to some component $H_i$, if and only if $x_i$ is the interior node of $P$. 
Note that this is unique since there is only one interior node.

We now consider the following auxiliary digraph $D^{aux}$ from~\cite{CheriyanCDZ21}: we have a node for each component of $H$ and we call the nodes corresponding to the small components the \emph{red} nodes and the other nodes the \emph{green} nodes.
For each small component $H_i$ of $H$ and each closed 2-augmenting path $P = x_j x_i x_j$ that belongs to $H_i$, we add the arc $(x_i, x_j)$.
By the above discussion, we have that each red node has at least one outgoing arc.

Note that a stacked closed 2-augmenting path corresponds to a path of length 2 in $D^{aux}$.
Furthermore, if a red node has an arc to a green node, then this corresponds to a good cycle.
Therefore, let us assume that there is no such path of length 2 in $D^{aux}$ and that there is no arc from a red node to a green node.

Hence, we can assume that in $D^{aux}$ each red node has precisely one outgoing arc such that each red node is contained in a directed cycle of length precisely 2.
We will show that this contradicts the fact that $G$ is structured.
Let $x_1$ and $x_2$ be two nodes of $D^{aux}$ that form a cycle and let $H_1$ and $H_2$ be its corresponding components in $H$.
Let $P_i$ be the closed 2-augmenting path that belongs to $x_i$, $i = 1,2$.
For $i = 1,2$ let $u_i$ and $w_i$ be the vertices of $H_i$ that are incident to $P_i$ when expanded and let $T_i \subseteq V(H_i)$ be the vertices that have edges to $V(G) \setminus (V(H_1) \cup V(H_2))$.
The proof is completed via the following claims.
$$ |T_1 \cup T_2| \geq 2, T_1 \neq \emptyset, \text{ and } T_1 \cap \{u_1, w_1 \} = \emptyset.   $$
We first prove this claim. 
Since $G/H$ has more than 2 nodes and $G$ does not have cut-vertices ($G$ is structured), we have that $|T_1 \cup T_2| \geq 2$.
If $T_1$ is empty, then all outgoing edges of $H_1$ go to $H_2$.
But then $H_2$ forms a $S_{\{3,4\}}$, a contradiction.
Finally, $T_1 \cap \{u_1, w_1 \} \neq \emptyset$ is a contradiction to the fact that $(G, H)$ does not contain an open 2-augmenting path. 

We now prove the next claim.
\begin{align*}
& \text{ $H_1$ is a 4-cycle $C_1$, $u_1$ and $w_1$ are connected by a unit-edge $e_1$ of $C_1$ such that all neighbors} \\
& \text{ (in $G$) of $u_1$ and $w_1$ are in $V(H_1) \cup V(H_2)$; moreover $|T_1| = 1$, $T_1 = v_1 \neq u_1, w_1$, and the } \\
& \text{ vertex $z_1 = V(H_1) \setminus \{u_1, v_1, w_1\}$ is incident to exactly two edges in $G$.}
\end{align*} 
We now prove this claim.
There are two cases for $u_1$ and $w_1$:
\begin{itemize}
    \item[(i)] $u_1$ and $w_1$ are not adjacent in $H_1$, so $H_1$ is a 4-cycle, and the other two vertices in $H_1$ are adjacent in $G$ ($E(V(H_i))$ has a "diagonal edge"), or
    \item[(ii)] $H_i$ has a unit-edge between $u_1$ and $w_1$.
\end{itemize}
Consider case (i).
Let $v_1$ be a vertex of $(V(H_i) \setminus \{u_1, w_1 \}) \cap T_1$.
Then the unit-edge $f_1$ of $H_1$ incident to $v_1$ is part of an open 2-augmenting path, a contradiction. 
Hence, case (i) can not occur.
Next, consider case (ii).
Recap that all neighbors of $u_1$ and $w_1$ are in $V(H_1) \cup V(H_2)$.
If $T_1$ contains exactly one vertex, say $v_1$, then $H_1$ cannot be a 3-cycle, as otherwise there exists an open 2-augmenting path with interior node $x_1$ corresponding to $H_1$.
Hence, $H_1$ is a 4-cycle such that $T_1$ contains exactly one vertex and this vertex is not $u_1$ and not $w_1$.
If both nodes of $V(H_1) \setminus \{u_1, v_1 \}$ are in $T_1$.
But then $x_1$ contains two outgoing arcs in $D^{aux}$, a contradiction (then $x_1$ would be part of an open 2-augmenting path or part of a stacked closed 2-augmenting path, as discussed at the beginning of this proof).
Finally, if $z_1 = V(H_1) \setminus \{u_1, v_1, w_1\}$ is incident to more than two edges in $G$, observe that $(G, H)$ contains an open 2-augmenting path, a contradiction.
Hence, we have finished the proof of the second claim.

Note that the analogue claims for $H_2$ also hold, by symmetry.
Now observe that the subgraph induced by the vertices of $H_1$ and $H_2$ forms a contractable subgraph: 
any optimal solution must buy the unit-edge incident to $z_i$, $i = 1,2$.
Furthermore, any optimal solution must buy 2 additional unit-edges incident to $u_i$ and $w_i$, $i = 1,2$.
Hence, any optimal solution must pick at least 4 unit-edges of $E(V(H_1) \cup V(H_2)$. 
However, 5 edges suffice in order to 2-edge-connect $V(H_1) \cup V(H_2)$:
Take all unit-edges of $H_1$ and $H_2$ except $\{u_1, w_1 \}$ and additionally add the closed 2-augmenting path that is incident to the vertices $u_1$ and $w_1$.
This is a contradiction to the fact that $G$ is structured and hence does not contain contractable subgraphs.
\end{proof}


\end{document}